\title{The Cost and Complexity of Minimizing Envy\\ in House Allocation} %TODO Please add
\titlerunning{The Complexity and Cost of Minimizing Envy in House Allocation} %TODO optional, please use if title is longer than one line
\author{Jayakrishnan Madathil}{University of Glasgow, UK}{jayakrishnan.madathil@glasgow.ac.uk}{}{Supported by the Engineering and Physical Sciences Research Council [EP/V032305/1].}%TODO mandatory, please use full name; only 1 author per \author macro; first two parameters are mandatory, other parameters can be empty. Please provide at least the name of the affiliation and the country. The full address is optional. Use additional curly braces to indicate the correct name splitting when the last name consists of multiple name parts.
\author{Neeldhara Misra}{Indian Institute of Technology, Gandhinagar}{neeldhara.m@iitgn.ac.in}{}{}
\author{Aditi Sethia}{Indian Institute of Technology, Gandhinagar}{aditi.sethia@iitgn.ac.in}{}{}
\authorrunning{Madathil et al.} %TODO mandatory. First: Use abbreviated first/middle names. Second (only in severe cases): Use first author plus 'et al.'
\keywords{house allocation, envy-freeness, egalitarian, utilitarian, kernel, expansion lemma} %TODO mandatory; please add comma-separated list of keywords
\providecommand\@dotsep{5}
\def\listtodoname{}
\def\listoftodos{\@starttoc{tdo}\listtodoname}
\newcounter{nmcomment}
\patchcmd{\BR@backref}{\newblock}{\newblock($\uparrow$~}{}{}
\patchcmd{\BR@backref}{\par}{)\par}{}{}
\newcommand{\defproblem}[3]{
  \vspace{1mm}
\begin{center}
\noindent\fbox{

  \begin{minipage}{\textwidth}
  \begin{tabular*}{\textwidth}{@{\extracolsep{\fill}}lr} \textsc{#1}   \\ \end{tabular*}
  {\bf{Input:}} #2  \\
  {\bf{Question:}} #3
  \end{minipage}

  }
\end{center}
  \vspace{1mm}
}
\newcommand{\FPT}{\ensuremath{\mathsf{FPT}}\xspace}
\newcommand{\WOH}{\ensuremath{\mathsf{W}[1]}-hard\xspace}
\newcommand{\NP}{\ensuremath{\mathsf{NP}}\xspace}
\newcommand{\NPC}{\ensuremath{\mathsf{NP}}-complete\xspace}
\newcommand{\OHA}{\textsc{Optimal House Allocation}}
\newcommand{\UHA}{\textsc{Utilitarian House Allocation}}
\newcommand{\EHA}{\textsc{Egalitarian House Allocation}}
\newcommand{\card}[1]{\ensuremath{{\vert {#1} \vert }}}
\newcommand{\cO}{\mathcal{O}}
\newcommand{\fn}[3]{\ensuremath{{{#1} : {#2} \rightarrow {#3}}}}
\newcommand{\set}[1]{\ensuremath{\left\{ {#1} \right\}}}
\newcommand{\poly}{\ensuremath{\text{poly}}}
\newcommand{\opt}{\ensuremath{\mathsf{opt}}}
\newcommand{\ilpoha}{\ensuremath{\texttt{P1}(\mathcal{I})}}
\newcommand{\ilpeha}{\ensuremath{\texttt{P2}(\mathcal{I})}}
\begin{document}

\maketitle

%TODO mandatory: add short abstract of the document

\abstract{We study almost envy-freeness in house allocation, where $m$ houses are to be allocated among $n$ agents so that every agent receives exactly one house. An envy-free allocation need not exist, and therefore we may have to settle for relaxations. We study different aggregate measures of envy as markers of fairness. In particular, we define the amount of envy experienced by an agent $a$ w.r.t. an allocation to be the number of agents that agent $a$ envies under that allocation. We \emph{quantify} the envy generated by an allocation using three different metrics: 1) the number of agents who are envious; 2) the maximum amount of envy experienced by any agent; and 3) the total amount of envy experienced by all agents, and look for allocations that minimize one of the three metrics. We prove a host of algorithmic and hardness results. We also suggest practical approaches for these problems via integer linear program (ILP) formulations and report the findings of our experimental evaluation of ILPs. Finally, we study the price of fairness, which quantifies the loss of welfare we must suffer due to the fairness requirements, and present tight bounds as well as algorithms that simultaneously optimize both welfare and fairness.}
\keywords{House Allocation, Envy-Freeness, Kernel, Expansion Lemma}

\maketitle

\textbf{Funding.} Jayakrishnan Madathil was supported by the Engineering and Physical Sciences Research Council [EP/V032305/1]. For the purpose of open access, the author has applied a Creative Commons Attribution (CC BY) licence to any Author Accepted Manuscript version arising.

\section{Introduction}
\label{sec:intro}

% \AS{We state all the results as Lemma. Should we call some of major ones as Theorems?}
% \JM{Yes.} \AS{Done}

% \AS{One review was regarding the choice of names -- Egalitarian and Utilitarian for envy minimization -- but they are usually used in the context of welfare maximization. Not sure if the terminology should be changed now when the extended abstract is already out. In the price of fairness section -- I am right now refraining to use Utilitarian social welfare -- just calling it social welfare.}
% \JM{It's probably better to change the names. But this should be a low priority. If we don't change the names, then we should add a sentence (after defining these problems) that we are concerned with fairness, and these terms are not to be confused with corresponding terms for welfare. I don't think having the extended abstract out should be a concern for changing or not changing the names.}

The \emph{house allocation problem} consists of $n$ agents and $m$ houses, where the agents have preferences over the houses, and we have to allocate the houses to the agents so that each agent receives exactly one house and each house is allocated to at most one agent.\footnote{Notice that these requirements immediately imply that we must have $m \geq n$, and under any allocation, exactly $n$ houses are allocated and the remaining $m - n$ houses remain unallocated.}
%The agents' preferences may be cardinal or ordinal. 
The problem captures scenarios such as assigning clients to servers, employees to offices, families to government housing, and so on. %In this paper, we focus on the special case when the preferences are binary valuations; that is, for every agent $a$ and every house $h$, either $a$ likes or dislikes $h$. 
We may think of the house allocation problem as a one-sided matching problem. Several variants of house allocation have been studied in the matching-under-preferences literature~[\cite{DBLP:books/ws/Manlove13,HZ1979,Z1990}], where the typical objectives have been economic efficiency requirements such as Pareto-optimality~[\cite{HZ1979,abraham2004pareto}], rank maximality [\cite{10.1145/1198513.1198520}] or strategyproofness~[\cite{DBLP:journals/algorithmica/KrystaMRZ19}]. Another useful and desirable objective in any resource allocation setting is \emph{fairness}, and we may equally well think of the house allocation problem as a special case of the fair division of indivisible goods setting, with the additional requirement that each agent be allocated exactly one good.  Given an allocation, say $\Phi: A \rightarrow H$, an agent $a$ envies $a^\prime$ if she values $\Phi(a^\prime)$ more than $\Phi(a)$. Finding ``envy-free'' allocations, i.e., ones where no agent envies another, is one of the main goals in fair division. Fairness in house allocation, in particular, has been a topic of interest in recent years~[\cite{GSV2019,KMS2021,HPSVV23,10.5555/3635637.3662936}]. 

% In our work, minimizing aggregate envy is the objective of consideration, in the setting of house allocation problems.

% Achieving only efficiency in the assignment may not be `fair' and an agent may wish to swap her house with the one who she envies. A prominent fairness notion considered in this regard is Envy-Freeness. An assignment of houses is \textsc{envy-free(EF)} if every agent values her house at least as much as she values any other assigned house.

In the context of the house allocation problem, note that if $m=n$, an envy-free allocation exists if and only if there is a perfect matching in the following bipartite graph: introduce a vertex for every agent and every house, and let the vertex corresponding to an agent be adjacent to all the houses that she values no less than any other house. Since every house must be assigned when $n = m$, when an agent is assigned anything short of her best option, she will be envious. Therefore, the existence of an envy-free allocation can be determined efficiently in this situation using standard algorithms for checking if a perfect matching exists.

The question is less obvious when $n<m$, i.e., when there are more houses than agents. Indeed, one could work with the same bipartite graph, but it is possible for the house allocation instance to admit an envy-free allocation even though the bipartite graph does not have a perfect matching. Consider a situation with three houses and two agents, where both agents value one house above all else, and the other two equally. While the graph only captures the contention on the highly valued house, it does not lead us directly to the envy-free allocation that can be obtained by giving both agents the houses that they value relatively less (but equally). 
It turns out that the question of whether an envy-free allocation exists can be determined in polynomial time even when $n<m$, by an algorithm of [\cite{GSV2019}] that involves iteratively removing subsets of contentious houses.

% They also show that an envy-free assignment exists with high probability if the number of houses exceeds the number of agents by a logarithmic factor. 

When an envy-free allocation does not exist at all, the natural objective is to resort to a relaxation of the fairness objective. However, house allocation differs from the typical fair division setting, with an additional constraint that every agent receives exactly one item. This constraint renders the well-studied relaxed notions of fairness like envy-freeness up to `some' good (where an agent chooses to hypothetically ignore one good from the envied bundle) futile. Hence we resort to a different kind of relaxation of envy-freeness: We \emph{quantify} the envy involved in an allocation using different aggregate measures such as the number of envious agents and the maximum number of agents any agent envies, and look for allocations that minimize these ``measures of envy.'' We note that~\cite{10.1007/978-3-642-41575-3_21} and~\cite{shams:hal-03336026} previously studied minimizing aggregate measures of envy in resource allocation problems. In the context of house allocation,~\cite{KMS2021} studied the problem of minimizing the number of envious agents. They showed that it is \NPC{} to find allocations that minimize the number of envious agents, even for binary utilities, and this quantity is hard to approximate for general utilities. In this paper, we explore envy minimization in house allocation from a broader perspective and prove algorithmic results not only for minimizing the number of envious agents but for two other measures of envy as well---minimizing the amount of maximum envy experienced by any agent and minimizing the amount of total envy experienced by all the agents put together. We say that the \emph{amount of envy} experienced by an agent $a$ is the number of agents she is envious of.

% In resource allocation problems, \cite{10.1007/978-3-642-41575-3_21} and \cite{shams:hal-03336026} looked at minimizing the envy by aggregating it in various ways. In house allocations,~\cite{KMS2021} initiated the problem of finding allocations that minimize the number of agents who experience envy and showed that it is \NPC{} to find allocations that minimize the number of envious agents, even for binary utilities, and this quantity is hard to approximate for general utilities. In this work, we explore the realm of envy minimization in house allocation from a broader perspective and give hardness and algorithmic results not only for minimizing the number of envious agents but also for minimizing the maximum envy of any agent and minimizing the total envy experienced by all the agents put together.

We remark here that minimizing the number of envious agents may lead to a sub-optimal allocation in terms of maximum envy and vice-versa. For instance, consider an instance with $4$ agents $a, b, c, d$ and $4$ houses $h_1, h_2, h_3, h_4$ with the following rankings: 

\begin{center}
\begin{tabular}{ll}
$a:$ & $\mathbf{h_1} \succ h_2 \succ h_3 \succ h_4$ \\
$b:$ & $h_1 \succ h_2 \succ h_3 \succ \mathbf{h_4}$ \\
$c:$ & $\mathbf{h_2} \succ h_3 \succ h_4 \succ h_1$ \\
$d:$ & $\mathbf{h_3} \succ h_4 \succ h_1 \succ h_2$
\end{tabular}
\end{center}

Consider the highlighted allocation, denoted as $\Phi$. Only agent $b$ experiences envy under $\Phi$. This allocation effectively minimizes the number of envious agents. However, the envy experienced by the sole envious agent is substantial, as she envies all the other agents. Alternatively, the envy of agent $b$ could have been reduced to just $1$ under the allocation $\Phi'$, where agent $b$ receives house $h_2$ and agents $c$ and $d$ are allocated houses $h_3$ and $h_4$ respectively. While both $\Phi$ and $\Phi'$ are optimal in terms of minimizing overall total envy, $\Phi$ falls short when it comes to addressing maximum envy, whereas $\Phi'$ is not ideal for minimizing the number of agents who experience envy. This suggests that the three notions of envy minimization are not directly comparable in general and therefore demand individual scrutiny and analysis.

% \AS{Add example comparing OHA EHA UHA}

When our focus is on minimizing the envy, there can be a trade-off with regards to \textit{social welfare}, which is essentially the collective measure of individual agent utilities within any allocation. The method for aggregating these utilities can vary, including options such as the geometric mean (known as Nash), the summation of utilities, or the minimum utility of any agent, among others. We restrict our attention to the sum of the individual agent utilities as our measure of social welfare for this work. For this measure, the trade-off between welfare and envy-minimization is illustrated as follows. Consider an instance with $2n-1$ houses such that each of the $n$ agents like only the first $n-1$ houses. Then the only envy-free allocation is to allocate the last $n$ houses to everyone, resulting in zero social welfare. On the contrast, if we allow for envious agents, the above instance can achieve a social welfare of at least $n-1$. This potential loss in welfare due to fairness guarantees is captured by \textit{price of fairness} \citep{BFT11price}, which is the worst-case ratio of the maximum social welfare in any allocation to that in a \text{fair} allocation. In this work, we give tight bounds for the price of fairness. Our investigation into PoF inspired us to look at simultaneously minimizing all three envy-minimization objectives while maximizing welfare. We show that we can indeed do this for $m = n$ and binary valuations. In this setting, we establish that there is an allocation that simultaneously maximizes welfare and minimizes the number of envious agents, maximum envy, and total envy. 

% We consider the price that an envy-minimizing allocation has to pay in terms of social welfare (the sum of the individual agent utilities).

\paragraph*{Our Contributions.} We propose to study the issue of ``minimizing envy'' from a broader perspective, and to this end we consider three natural measures of the ``amount of envy'' created by an allocation: a) the total number of agents who experience envy (discussed above), b) the envy experienced by the \emph{most} envious agent, where the amount of envy experienced by an agent is simply the number of agents that she is envious of, and c) the total amount of envy experienced by all agents. We refer to the questions of finding allocations that minimize these three measures of envy the~\OHA{} (OHA),~\EHA{} (EHA), and~\UHA{} (UHA) problems, respectively. A summary of our main results is in \Cref{tab:my-table} and \Cref{tab:pof_table}. 

\textbf{Hardness Results.}
We show the (parameterized) hardness of OHA and EHA even under highly restricted input settings. We show that OHA is \NPC{} even on instances where every agent values at most two houses. Further, it is \WOH{} when parameterized by $k$, the number of agents who are allowed to be envious (which implies that it is unlikely to admit a $f(k) (n + m)^{O(1)}$ time algorithm). As for EHA, we show that it is \NPC{}, even on instances where every agent values at most two houses \emph{and} every house is approved by a constant number of agents. In fact, we achieve this hardness even when the maximum allowed envy is just \emph{one}, establishing that the problem is para-\NP-hard when parameterized by $k$, the maximum envy (which implies that it is NP-hard even for a constant value of the parameter). The (parameterized) complexity of UHA, however, remains open.

\textbf{Algorithmic and Experimental Results.} Despite the hardness results even under the restricted input settings mentioned above, we explore tractable scenarios and prove a number of positive results. Observe that in a given instance, the number of houses$(m)$ could be much larger than the number of agents($n$). But we show that all three problems admit \emph{polynomial-time pre-processing algorithms} that reduce the number of houses; in particular, after this pre-processing, we will have the guarantee that $m \leq 2(n - 1)$. This result, in parameterized complexity parlance, means that all three problems admit \emph{polynomial kernels} when parameterized by the number of agents. To prove this, we use a popular tool called the \emph{expansion lemma}. While the kernels are interesting in their own right, we also leverage them to design polynomial-time algorithms for all three problems on binary \emph{extremal} instances. 
An instance of OHA/EHA/UHA is extremal if the houses can be ordered in such a way that every agent values either the first few houses or the last few houses in the ordering; that is, there is an ordering $(h_1, h_2,\ldots, h_m)$ of the houses such that for every agent $a$, there is an index $i(a)$ with $0 \leq i(a) \leq m$ such that $a$ either values the houses $\{h_1, h_2,\ldots, h_{i(a)}\}$ or $\{h_m, h_{m - 1},\ldots, h_{m - i(a)}\}$. We note here that extremal instances, although restrictive, form a non-trivial subclass for demonstrating tractability and have been studied in the literature~[\cite{10.5555/2832415.2832529}]. The hardness of the optimization problem even in the binary setting motivates to look at the structured binary preferences in the quest of tractability. In the context of house allocations, extremal instances appear where agents approvals are, for example, influenced by the distance of a house to either a hospital or a school, and the preferences decrease as the distance increases. In fact, despite the relatively simple structure of the preference profile, the obvious greedy approaches do not work and the three problems require three different lines of arguments.

Finally, we show that both OHA and EHA are fixed-parameter tractable (\FPT)\footnote{FPT w.r.t a parameter $\ell$ means the instance can be decided in time $f(\ell) \cdot (n+m)^{\mathcal{O}(1)}$, where $f$ is an arbitrary computable function.} when parameterized by the total number of house types or agent types; two agents are of the same type if they both like the same set of houses, and two houses are of the same type if they are both liked by the same set of agents. Notice that the number of (house or agent) types could potentially be much smaller than $m$ or $n$. The \FPT\ algorithms are obtained using ILP formulations with a bounded number of variables and constraints (for UHA, we obtain an integer quadratic program). The ILPs may also be of independent practical value. We implemented our  ILPs for OHA and EHA over synthetic datasets of house allocation, generated uniformly at random. For a fixed number of houses and agents, the results show that the number of envious agents and the maximum envy decreases as the number of agent types (the maximum
number of agents with distinct valuations) increases. Instances with identical valuations seem to admit more envy, attributed to the contention on the specific subset of houses. Also, when we increase the number of houses, for a constant number of agents and agent-types, the envy decreases, which is as expected, because of the increase in the number of choices and the fact that some houses (the more contentious ones) remain unallocated.

% \JM{Again, start this paragraph with a heading (something like ``Algorithmic results'' or ``Tractability results''. Also, open the paragraph with a prefatory sentence? Something like, ``Given the hardness results even under restricted input settings mentioned above, we look for islands of tractability, and prove a number of positive results.'')} \AS{Done}

\textbf{Price of Fairness.} Along with different measures of envy, we also focus on the social welfare of an allocation, as captured by the sum of the individual agent utilities. Minimizing the measures of envy can lead to economically inefficient allocations with poor social welfare. For example, an allocation that only allocates the ``dummy houses'' (i.e., houses that no agent values) is trivially envy-free, but has zero social welfare. Quantifying this welfare loss, incurred as the cost of minimizing envy is, therefore, an imperative consideration. We quantify this trade-off between welfare and envy minimization using a metric called the \emph{price of fairness (PoF)}; each measure of envy leads to its corresponding PoF, and we defer a formal definition of PoF to~\Cref{sec:pof}. We prove several tight bounds for PoF for binary valuations. In particular, we show that when $m>n$, PoF can be as large as the number of agents and that the bound is tight, for all the three envy measures of envy. We also identify the instances where no welfare has to be sacrificed in order to minimize envy, and hence no price has to be paid. In particular, we show that the price of fairness is $1$ for $m=n$ and binary valuations and also for $m>n$ and binary doubly normalized valuations. We show in particular that when $m=n$, there is an allocation that simultaneously minimizes the number of envious agents, the maximum envy, and total envy while maximizing social welfare. Moreover, we can compute such an allocation in polynomial time.

\paragraph*{Related Work.} 
\cite {SHAPLEY197423} studied the house allocation model with existing tenants, which holds crucial applications in domains like kidney exchanges. Scenarios encompassing entirely new applicants, as well as mixed scenarios with a few existing tenants have also been studied in the literature  [\cite{HZ1979}, \cite{ABDULKADIROGLU1999233}], and their practical implementations span diverse areas such as public housing and college dormitory assignments, among others.

The notion of fairness in house allocation setting was initiated by \cite{BCGHLMW2019} who studied a local variant for an equal number of agents and houses, where an agent can envy only those who are connected to her in a given social network. \cite{GSV2019} studied envy-freeness when the number of houses can be more than that of agents and gave an efficient algorithm that returns an envy-free solution if it exists. When such solutions do not exist,~\cite{KMS2021} initiated the study of finding allocations that minimize the number of agents who experience envy and showed that it is \NPC{} to find allocations that minimize the number of envious agents, even for binary utilities, and this quantity is hard to approximate for general utilities. Further, \cite{2021} studied the relaxed variant of assigning at most one house to every agent and give an $O(m \sqrt{n})$ algorithm for finding an envy-free matching of maximum cardinality in the setting of binary utilities. \cite{Shende2020StrategyproofAE} studied envy-freeness in conjunction with strategy-proofness. In more recent work, \cite{HPSVV23, 10.5555/3635637.3662936} have considered minimizing the sum of all pairwise envy values
over all edges in a social network. They proved structural and computational
results for various classes of underlying graphs on agents. \cite{DBLP:journals/corr/abs-2407-04664} looked at the degree of fairness while maximizing the social welfare and the size of an envy-free allocation. \cite{CHOO2024107103} discussed house allocations in the context of subsidies and showed that finding envy-free allocations with minimum subsidy is hard in general but tractable if agents have identical utilities or $m$ differs from $n$ by an additive constant.

The price of fairness was first proposed by \cite{BFT11price}, following which there has been substantial progress towards finding the bounds for the price for various combinations of fairness and welfare notions, specifically in resource allocation setting [\cite{CKK+12efficiency}, \cite{BLM+21price}, \cite{BBS20optimal}, \cite{SCD23equitability}, \cite{10.1007/978-3-031-43254-5_16}]. \cite{DBLP:journals/corr/abs-2407-04664} recently looked at the degree of fairness while maximizing the social welfare and the size of an envy-free allocation.

\paragraph*{Organization of the paper.} We discuss the results for OHA, EHA, and UHA in \Cref{sec:oha}, \Cref{sec:eha}, and \Cref{sec:uha} respectively. We discuss the experiments in \Cref{sec:exp}. Finally, we discuss the price of fairness in \Cref{sec:pof}, which is largely independent of all the other sections.

% \JM{In general, I prefer stating our results as early on in the introduction as possible. So, if we can rearrange the paragraphs to move up the ``Our Contribution'' section, that would be nice. But this is also a low priority.}

% \JM{Also, we probably should add another sentence about the work of Hosseini et al., since it's directly about UHA, if I'm not wrong.}

% \JM{Another low priority comment: While discussing Our contributions, it might not be a bad idea to separate rankings and binary utilities altogether. Say, focus on binary utilities, state our hardness and algorithmic results. And then mention that we prove a host of hardness results in the rankings settings.} \AS{Can do that, but there is a certain overlap in the way results are stated -- so separating them will make it a little repetitive. In addition, we only have two results for rankings, one of which is just a little tweak in the Kamiyama's result}

% \JM{Top priority comment: Mention our experimental results in the introduction, a couple of sentences about our general experimental finding? Also, let's do a spell check and grammar check.} \AS{Added the experimental results in the intro.}

% Please add the following required packages to your document preamble:
% \usepackage{multirow}
% \usepackage{graphicx}
% \usepackage[table,xcdraw]{xcolor}
% If you use beamer only pass "xcolor=table" option, i.e. \documentclass[xcolor=table]{beamer}
\bgroup
\def\arraystretch{1.5}% 

% Please add the following required packages to your document preamble:
% \usepackage{multirow}
% \usepackage{graphicx}
% \usepackage[table,xcdraw]{xcolor}
% If you use beamer only pass "xcolor=table" option, i.e. \documentclass[xcolor=table]{beamer}
\begin{table}
\centering
\resizebox{\textwidth}{!}{%
\begin{tabular}{|c|cccccc|}
\hline
{\color[HTML]{343434} }                                            & \multicolumn{5}{c|}{{\color[HTML]{343434} \textbf{Cardinal}}}                                                                                                                                                                                                                                                                                                                                                                                                                                                                                                                                                                                                                                                                                                                                                                                 & {\color[HTML]{343434} }                                                                                                                                                               \\ \hline
{\color[HTML]{343434} }                                            & \multicolumn{1}{c|}{\multirow{-1}{*}{{\color[HTML]{343434} \textbf{General}}}}                                                                                                & \multicolumn{4}{c|}{{\color[HTML]{343434} \textbf{Binary}}}                                                                                                                                                                                                                                                                                                                                                                                                                                                                                                                                                                                                                                      & {\color[HTML]{343434} }                                                                                                                                                               \\ \hline
\multirow{-3}{*}{{\color[HTML]{343434} \textbf{}}}                 &           \multicolumn{1}{c|}{{\color[HTML]{343434} }}                                             & \multicolumn{1}{c|}{{\color[HTML]{343434} \textbf{General}}}                                                                                                                                                     & \multicolumn{1}{c|}{{\color[HTML]{343434} \textbf{Extremal Intervals}}}                                                              & \multicolumn{1}{c|}{{\color[HTML]{343434} \textbf{d = 1}}}                                                                                & \multicolumn{1}{c|}{{\color[HTML]{343434} \textbf{d = 2}}}                                                                                                                                 & \multirow{-3}{*}{{\color[HTML]{343434} \textbf{Rankings}}}                                                                                                                            \\ \hline
\multicolumn{1}{|c|}{{\color[HTML]{343434} \textbf{\textsc{OHA}}}} & \multicolumn{1}{c|}{\cellcolor[HTML]{EFEFEF}{\color[HTML]{343434} \begin{tabular}[c]{@{}c@{}}NP-Complete\\ (by implication)\end{tabular}}} & \multicolumn{1}{c|}{\cellcolor[HTML]{ECCECE}{\color[HTML]{343434} \begin{tabular}[c]{@{}c@{}}NP-Complete ($\dagger$)\\ from \textsc{Clique}\\ \\ (\Cref{lem:oha-clique})\end{tabular}}} & \multicolumn{1}{c|}{\cellcolor[HTML]{B3E7B4}{\color[HTML]{343434} \begin{tabular}[c]{@{}c@{}}P\\ (\Cref{lem:oha-ext})\end{tabular}}} & \multicolumn{1}{c|}{\cellcolor[HTML]{B3E7B4}{\color[HTML]{343434} \begin{tabular}[c]{@{}c@{}}P\\ (\Cref{lem:oha-onehouse})\end{tabular}}} & \multicolumn{1}{c|}{\cellcolor[HTML]{ECCECE}{\color[HTML]{343434} \begin{tabular}[c]{@{}c@{}}NP-Complete\\ from \textsc{Clique}\\ \\ (\Cref{lem:oha-is})\end{tabular}}}           & \cellcolor[HTML]{ECCECE}{\color[HTML]{343434} \begin{tabular}[c]{@{}c@{}} NP-Complete\\ (\Cref{lem:oha-clique})\end{tabular}}                        \\ \hline
\multicolumn{1}{|c|}{{\color[HTML]{343434} \textbf{\textsc{EHA}}}} & \multicolumn{2}{c|}{\cellcolor[HTML]{EFEFEF}\begin{tabular}[c]{@{}c@{}}NP-Complete\\ (by implication)\end{tabular}}                                                                                                                                                                                                                                           & \multicolumn{1}{c|}{\cellcolor[HTML]{B3E7B4}{\color[HTML]{343434} \begin{tabular}[c]{@{}c@{}}P\\ (\Cref{lem:eha-ext})\end{tabular}}} & \multicolumn{1}{c|}{\cellcolor[HTML]{B3E7B4}{\color[HTML]{343434} \begin{tabular}[c]{@{}c@{}}P\\ (\Cref{lem:eha-onehouse})\end{tabular}}} & \multicolumn{1}{c|}{\cellcolor[HTML]{ECCECE}{\color[HTML]{343434} \begin{tabular}[c]{@{}c@{}}NP-Complete ($\star$)\\ from \textsc{Independent Set}\\ \\ (\Cref{lem:eha-is})\end{tabular}}} & \cellcolor[HTML]{ECCECE}{\color[HTML]{343434} \begin{tabular}[c]{@{}c@{}}NP-Complete ($\star$)\\ from \textsc{Multi-Colored Independent Set}\\ \\ (\Cref{lem:eha-mcis})\end{tabular}} \\ \hline
\multicolumn{1}{|c|}{{\color[HTML]{343434} \textbf{\textsc{UHA}}}} & \multicolumn{2}{c|}{\cellcolor[HTML]{FFFFC7}{\color[HTML]{343434} \textbf{?}}}                                                                                                                                                                                                                                                                                & \multicolumn{1}{c|}{\cellcolor[HTML]{B3E7B4}{\color[HTML]{343434} \begin{tabular}[c]{@{}c@{}}P\\ (\Cref{lem:uha-ext})\end{tabular}}} & \multicolumn{1}{c|}{\cellcolor[HTML]{B3E7B4}{\color[HTML]{343434} \begin{tabular}[c]{@{}c@{}}P\\ (\Cref{lem:uha-onehouse})\end{tabular}}} & \multicolumn{2}{c|}{\cellcolor[HTML]{FFFFC7}{\color[HTML]{343434} \textbf{?}}}                                                                                                                                                                                                                                                                                                     \\ \hline
\end{tabular}%
}
\caption{A partial summary of our results. Here, $d$ denotes the maximum number of houses approved by any agent. The results marked with a $\star$ refer to reductions that imply hardness even when the standard parameter is a constant, while the result marked with a $\dagger$ is a FPT reduction and also implies \textsf{W[1]}-hardness in the standard parameter.}
\label{tab:my-table}
\end{table}
\egroup

\section{Preliminaries}
\label{sec:prelims}
Let $[k]$ denote the set $\{1,2, \ldots, k\}$ for any positive integer $k$. 

An instance $\mathcal{I}$ of the \textsc{House Allocation} problem (HA) comprises a set $A=$ $\left\{a_{1}, a_{2}, \ldots, a_{n}\right\}$ of agents, a set $H=\left\{h_{1}, h_{2}, \ldots, h_{m}\right\}$ of houses and a preference profile (rankings or cardinal utilities) that capture the preference of all agents over the houses. An assignment or house allocation is an injection $\Phi: A \rightarrow H$. Throughout this section, let $\Phi$ be an arbitrary but fixed allocation. While we make all our notation explicit with respect to $\Phi$, during future discussions, the subscript $\Phi$ may be dropped if the allocation is clear from the context. There are a few different ways in which agents may express their preferences over houses, and we focus here on both linear orders as well as cardinal utilities. 

\paragraph*{Rankings} In this setting, each agent $a \in A$ has a linear order $\succ_a$ over the set of houses $H$. We will typically use $\succ_i$ to denote\footnote{In some of the reductions, the indices of the agents in $A$ are different from $[n]$, and we continue to adopt the convention that $\succ_\circ$ is used to describe the preferences of the agent $a_\circ$.} the preferences of agent $a_i$. The \emph{rank} of a house $h$ in the order~$\succ_a$ is one plus the number of houses $h^\prime$ such that $h^\prime \succ_a h$. For example, if $m = 3$ and $\succ_a$ is given by $h_2 \succ h_3 \succ h_1$, then the houses $h_2$, $h_3$ and $h_1$ have ranks $1, 2,$ and $3$ respectively. We denote the rank of a house $h$ in an order $\succ$ by $rk(h,\succ)$. A ranking is said to have ties ($rk(h,\succeq)$) if there is an agent who ranks some pair of houses equally. 

An agent $a \in A$ \emph{envies} an agent $b \in A$ under the allocation $\Phi$ if $\Phi(a) \prec_a \Phi(b)$, which is to say that $a$ perceives $\Phi$ to have allocated a house to $b$ that she ranks more than the one allocated to her. We use $\mathcal{E}_\Phi(a)$ to denote the set of agents $b$ such that $a$ envies $b$ in the allocation $\Phi$.  

An agent $a$ is \emph{envy-free} with respect to $\Phi$ if there is no agent $b$ such that $a$ envies $b$. In other words, $\mathcal{E}_\Phi(a) = \emptyset$. An allocation $\Phi$ is said to be \emph{envy-free} if all agents $a$ are envy-free with respect to $\Phi$. The \emph{amount of envy} experienced by an agent $a$ is the number of agents she is envious of, that is, $|\mathcal{E}_\Phi(a)|$ and is denoted by $\kappa_\Phi(a)$. Note that if an agent is envy-free, then the amount of envy experienced by her is zero.

\paragraph*{Binary Preferences} The utility that an agent $a$ derives from a house $h$ is denoted by $u_a(h)$. Preferences are said to be \emph{binary} if $u_{a}(h) \in\{0,1\}$ for all $a \in A$ and $h \in H$. We note here that binary utilities are a crucial subclass with simple elicitation and several works in fair division and voting literature have paid special attention to this case  [\cite{halpern2020fair,10.5555/3237383.3237392,lackner2023multi}]. A house $h$ is called a \emph{dummy} house if the utility of every agent for it is zero, that is, $u_a(h) = 0$ for all $a \in A$. The set of dummy houses is denoted by $D$. An agent $a$ is called a \emph{dummy} agent if it values every house at zero, that is, $u_a(h) = 0$ for all $h \in H$. The set of dummy agents is denoted by $D'$.

As previously, an agent $a \in A$ \emph{envies} an agent $b \in A$ under the allocation $\Phi$ if $u_a(\Phi(a)) < u_a(\Phi(b))$, which is to say that $a$ perceives $\Phi$ to have allocated a house to $b$ that she values more than the one allocated to her. That is, $u_a(\Phi(a)) = 0$ but $u_a(\Phi(b)) = 1$. The definition of $\mathcal{E}_\Phi(a)$ and the notion of \emph{envy-freeness} is the same as before. The amount of envy experienced by an agent is $|\mathcal{E}_\Phi(a)|$ and is denoted by $\kappa_\Phi(a)$. Just as with rankings, if an agent is envy-free, then the amount of envy experienced by her is zero. 

% \AS{Flow? The following definitions may require a sub-heading?}

Let $\mathcal{P}$ be a profile of binary utilities of agents $A$ over houses $H$. For an agent $a \in A$, use $\mathcal{P}(a)$ to denote the set of houses $h$ for which $u_a(h) = 1$. We say that these are houses that are valued by the agent $a$. For a subset $S \subseteq A$, we use $\mathcal{P}(S)$ to denote $\cup_{a \in S} \mathcal{P}(a)$. Similarly, for a house $h$, we use $\mathcal{T}(h)$ to refer to the set of agents who value $h$, and for a subset $S \subseteq H$, we use $\mathcal{T}(S)$ to denote $\cup_{h \in S} \mathcal{T}(h)$.

Two agents $a_p$ and $a_q$ are said to be of the same type if $\mathcal{P}(a_p) = \mathcal{P}(a_q)$ and two houses $h_p$ and $h_q$ are said to be of the same type if $\mathcal{T}(h_p) = \mathcal{T}(h_q)$. For an instance with $n$ agents and $m$ houses, we use $n^\star$ and $m^\star$ to denote the number of distinct types of agents and houses, respectively. 

%We define the \emph{utility matrix} associated with $\mathcal{P}$ as the $n \times m$ matrix $U_{\mathcal{P}}$ where $U_{\mathcal{P}}[i][j] = u_{a_i}(h_j)$. 
The \emph{preference graph} $G$ based on $\mathcal{P}$ is a bipartite graph defined as follows: the vertex set of $G$ consists of one vertex $v_a$ corresponding to every agent $a \in A$ and one vertex $v_h$ corresponding to every house $h \in H$; and $(v_a,v_h)$ is an edge in $G$ if and only if $a$ values $h$. 

\textbf{Extremal Interval Structure.} We say that $\mathcal{P}$ has an \emph{extremal interval structure with respect to houses} if there exists an ordering $\sigma$ of the houses such that for every agent $a$, $\mathcal{P}(a)$ forms a prefix or suffix of $\sigma$ [\cite{10.5555/2832415.2832529}].  Further, we say that $\mathcal{P}$ has a \emph{\textbf{left} (respectively, \textbf{right}) extremal interval structure with respect to houses} if there exists an ordering $\sigma$ of the houses such that for every agent $a$, $\mathcal{P}(a)$ forms a prefix (respectively, suffix) of $\sigma$. \\
Analogously, $\mathcal{P}$ has an \emph{extremal interval structure with respect to agents} if there exists an ordering $\pi$ of the agents such that for every house $h$, the set of agents who value $h$ forms a prefix or suffix of $\pi$. The notions of left and right extremal interval structures here are also defined as before. In our discussions, whenever we speak of an \emph{extremal interval structure} without explicit qualification, it is with respect to houses unless mentioned otherwise.

% Note that the notions of \emph{total envy} and \emph{maximum envy} are the same as before.

\paragraph*{Optimization Objectives}

We focus on the following optimization objectives.

\begin{enumerate}
    \item The \emph{number of envious agents} in an allocation $\Phi$ is the number of agents $a \in A$ for which $\kappa_\Phi(a) \geq 1$ and will be denoted by $\kappa^\#(\Phi)$. Further, given an instance $\mathcal{I}$  of the house allocation problem, we use $\kappa^\#(\mathcal{I})$ to denote the number of envious agents in an optimal allocation, that is, $\kappa^\#(\mathcal{I}) := \min_{\Phi}(\kappa^\#(\Phi))$.
     
    \item The \emph{maximum envy} generated by $\Phi$ is $\max_{a \in A} \kappa_\Phi(a)$ and is denoted by $\kappa^\dagger(\Phi)$.  As before, given an instance $\mathcal{I}$  of the house allocation problem, we use $\kappa^\dagger(\mathcal{I})$ to denote the maximum envy in an optimal allocation, that is, $\kappa^\dagger(\mathcal{I}):= \min_{\Phi}(\kappa^\dagger(\Phi))$.

    \item The \emph{total envy} generated by $\Phi$ is $\sum_{a \in A} \kappa_\Phi(a)$ and will be denoted by $\kappa^\star(\Phi)$. Again, given an instance $\mathcal{I}$  of the house allocation problem, we use $\kappa^\star(\mathcal{I})$ to denote the total envy in an optimal allocation, that is, $\kappa^\star(\mathcal{I}):= \min_{\Phi}(\kappa^\star(\Phi))$. 
\end{enumerate}

\paragraph*{Computational Questions}

We now formulate the computational problems that we would like to address. 

\defproblem{Optimal House Allocation}{A set $A=$ $\left\{a_{1}, a_{2}, \ldots, a_{n}\right\}$ of agents and a set $H=\left\{h_{1}, h_{2}, \ldots, h_{m}\right\}$ of houses, a preference profile describing the preferences of all agents over houses, and a non-negative integer $k \in \mathbb{Z}^+$.}{Determine if there is an allocation $\Phi$ with number of envious agents at most $k$, i.e, $\kappa^\#(\Phi) \leq k$.}

\defproblem{Egalitarian House Allocation}{A set $A=$ $\left\{a_{1}, a_{2}, \ldots, a_{n}\right\}$ of agents and a set $H=\left\{h_{1}, h_{2}, \ldots, h_{m}\right\}$ of houses, a preference profile describing the preferences of all agents over houses, and a non-negative integer $k \in \mathbb{Z}^+$.}{Determine if there is an allocation $\Phi$ with  maximum envy at most $k$, i.e, $\kappa^\dagger(\Phi) \leq k$.}

\defproblem{Utilitarian House Allocation}{A set $A=$ $\left\{a_{1}, a_{2}, \ldots, a_{n}\right\}$ of agents and a set $H=\left\{h_{1}, h_{2}, \ldots, h_{m}\right\}$ of houses, a preference profile describing the preferences of all agents over houses, and a non-negative integer $k \in \mathbb{Z}^+$.}{Determine if there is an allocation $\Phi$ with total envy at most $k$, i.e, $\kappa^\star(\Phi) \leq k$.}

We use $[\succ]$-OHA, $[\succeq]$-OHA, and $[\nicefrac{0}{1}]$-OHA to denote the versions of the OHA problem when the preferences are given, respectively, by linear orders, rankings with ties, and binary utilities, respectively. We adopt this convention for EHA and UHA as well. 

We note that for all three problems, the question of finding an envy-free allocation, i.e., one for which the optimization objective attains the value zero, is a natural special case. This amounts to finding an allocation where no agent has any envy for another and is therefore resolved (for both binary valuations and rankings) by the algorithm of~\cite{GSV2019} which uses an approach based on iteratively eliminating subsets that violate Hall's condition in the preference graph. 

We also observe here that all three problems are tractable for the special case when $m = n$. We assume without loss of generality, that every agent values at least one house. Observe that since $n = m$, all valid allocations have no unallocated houses. 

\begin{proposition}[folklore]
\label{prop:mequalsn-oha} $[\nicefrac{0}{1}]$-\OHA{} and $[\succeq]$-\OHA{} can be solved in polynomial time if $m = n$.
\end{proposition}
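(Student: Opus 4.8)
The plan is to reduce both problems to a single maximum bipartite matching computation. The key structural observation is that when $m = n$ every house is allocated under any valid allocation, so an agent is envy-free precisely when she receives one of her most-preferred houses. First I would make this equivalence precise in each preference model. In the binary case, using the standing assumption that every agent values at least one house, an agent $a$ is envy-free under $\Phi$ if and only if $u_a(\Phi(a)) = 1$: if $a$ received a zero-valued house, then some house in $\mathcal{P}(a)$ would be allocated to another agent (as all houses are allocated when $m = n$), and $a$ would envy that agent. In the rankings-with-ties model, let $T_a$ be the set of houses that are maximal under $\succeq_a$; then $a$ is envy-free under $\Phi$ if and only if $\Phi(a) \in T_a$, by the same argument, since any strictly preferred house is allocated elsewhere and generates envy. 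Note that $T_a = \mathcal{P}(a)$ in the binary case, so the two settings are handled uniformly.

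Next I would form the bipartite ``top-house'' graph $G_{\mathsf{top}}$ on $A \cup H$ by placing an edge between each agent $a$ and every house in $T_a$, and let $\nu$ denote the size of a maximum matching in $G_{\mathsf{top}}$. I would then argue that the minimum number of envious agents equals $n - \nu$. For the lower bound, observe that in any allocation the envy-free agents are matched to pairwise distinct most-preferred houses (distinct because $\Phi$ is an injection), so they induce a matching in $G_{\mathsf{top}}$; hence at most $\nu$ agents can be envy-free, leaving at least $n - \nu$ envious. For the upper bound, take a maximum matching $M$, which saturates $\nu$ agents and $\nu$ houses; since $m = n$, exactly $n - \nu$ agents and $n - \nu$ houses remain unsaturated, and pairing these arbitrarily extends $M$ to a bijection $\Phi$ in which precisely the $\nu$ matched agents are envy-free. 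Thus $\kappa^\#(\mathcal{I}) = n - \nu$.

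Finally, the algorithm answers the decision question by constructing $G_{\mathsf{top}}$, computing $\nu$ via a standard maximum bipartite matching routine, and reporting ``yes'' iff $n - \nu \leq k$; both steps run in polynomial time, which settles $[\nicefrac{0}{1}]$-\OHA{} and $[\succeq]$-\OHA{} simultaneously. I do not expect a real obstacle here, as the result is folklore; the only point demanding care is the equivalence ``envy-free $\iff$ receives a most-preferred house,'' which relies essentially on the hypothesis $m = n$ (so that no preferred house can be left unallocated) and, in the binary case, on the assumption that every agent values at least one house.
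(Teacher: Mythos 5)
Your proposal is correct and follows essentially the same route as the paper: both reduce the problem to a maximum matching in the bipartite graph joining each agent to her top-ranked (or valued) houses, using the fact that when $m = n$ every house is allocated, so an agent is envy-free iff she receives a most-preferred house. The only cosmetic difference is that you phrase the optimum as $n - \nu$ envious agents rather than $\nu$ envy-free agents.
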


\begin{proof}
Let $\mathcal{I} := (A,H,\mathcal{P};k)$ be an instance of $[\nicefrac{0}{1}]$-OHA and let $G = (A \cup H; E)$ be the associated preference graph. We obtain an optimal allocation in polynomial time, and we return an appropriate output based on how the value of the optimum compares with $k$.

Let $M$ be a maximum matching in $G$.  We claim that any optimal allocation for $\mathcal{I}$ has $|M|$ envy-free agents. To see that there exists an allocation that has at least $|M|$ envy-free agents, consider the allocation that gives the house $M(a)$ to every agent $a$ saturated by $M$, and allocates the remaining houses arbitrarily among the agents not saturated by $M$. It is easy to see that this allocation has at least $|M|$ envy-free agents, namely the ones corresponding to those saturated by the matching $M$. On the other hand, suppose there is an allocation $\Phi$ with $k$ envy-free agents, then the envy-free agents must have received houses that they value---indeed, consider any agent $a$, and let $h$ be any house that $a$ values. If $a$ does not value the house $\Phi(a)$, then $a$ envies the agent who received the house $h$. Therefore, the set:
\[M := \{ (a,\Phi(a)) ~|~ a \text{ is envy-free with respect to } \Phi \}  \]
corresponds to a matching with $k$ edges in $G$, and this concludes the argument. Notice that this argument extends to weak orders by the natural extension of the notion of a preference graph: we have that an agent $a$ is adjacent to all houses $h$ that she prefers over all other houses. Therefore, we have the claim $[\succeq]$-OHA as well. 
\end{proof}

\begin{proposition}
\label{prop:mequalsn-eha} $[\nicefrac{0}{1}]$-\EHA{} and $[\succ]$-\EHA{} can be solved in polynomial time if $m = n$.
\end{proposition}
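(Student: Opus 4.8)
The plan is to exploit the fact that when $m = n$ every allocation is a bijection, which pins down the amount of envy $\kappa_\Phi(a)$ of each agent as a function of the single house she receives. In the binary setting, observe that an agent $a$ assigned a house she values has $\kappa_\Phi(a) = 0$, whereas an agent assigned a house she does not value envies \emph{every} agent holding one of the $|\mathcal{P}(a)|$ houses she values (all of which are allocated, to distinct agents, since $m = n$); hence $\kappa_\Phi(a) = |\mathcal{P}(a)|$. Thus each agent's envy is ``all-or-nothing'': it is either $0$ or exactly $|\mathcal{P}(a)|$, and the maximum envy of $\Phi$ equals $\max\{|\mathcal{P}(a)| : a \text{ is assigned a house in } H \setminus \mathcal{P}(a)\}$. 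For strict rankings the analogous statement is $\kappa_\Phi(a) = rk(\Phi(a), \succ_a) - 1$, since the houses $a$ ranks above $\Phi(a)$ are exactly those held by the agents $a$ envies.

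Given this, I would reduce $[\succ]$-\EHA{} and $[\nicefrac{0}{1}]$-\EHA{} with $m=n$ to a sequence of bipartite matching feasibility checks. For a candidate threshold $t$, achieving $\kappa^\dagger(\Phi) \le t$ is possible if and only if every agent whose envy would exceed $t$ when unsatisfied can instead be given a ``good enough'' house. Concretely, in the binary case let $S_t = \{a : |\mathcal{P}(a)| > t\}$; then $\kappa^\dagger(\Phi) \le t$ is achievable if and only if the preference graph $G$ admits a matching saturating $S_t$. In the ranking case, let $G_t$ be the bipartite graph in which $a$ is adjacent to her top $t+1$ houses; then $\kappa^\dagger(\Phi) \le t$ is achievable if and only if $G_t$ has a matching saturating all agents. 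In either case a matching that saturates the required set of agents extends for free to a full bijection, because $m = n$ lets us assign the leftover houses to the leftover agents arbitrarily, and doing so never creates envy larger than $t$ for the agents that were not required to be saturated.

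Feasibility of saturating a prescribed set of agents is a standard question, decidable in polynomial time by computing a maximum bipartite matching (e.g.\ via Hopcroft--Karp) and checking that it meets Hall's condition on the required side. Since $S_t$ shrinks and $G_t$ gains edges as $t$ increases, the achievability of $\kappa^\dagger(\Phi) \le t$ is monotone in $t$, so I would binary search over $t \in \{0, 1, \ldots, n-1\}$ to find the minimum achievable maximum envy $\kappa^\dagger(\mathcal{I})$ and compare it against the input $k$. The argument mirrors \Cref{prop:mequalsn-oha}, where a maximum matching counts envy-free agents; the only genuinely new ingredient is the clean characterization of \emph{how much} envy an unsatisfied agent incurs, namely $|\mathcal{P}(a)|$ in the binary case and $rk(\Phi(a),\succ_a)-1$ in the ranking case. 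I expect verifying this characterization --- in particular that all valued (respectively higher-ranked) houses are indeed allocated to other agents precisely because $m=n$ --- to be the only step requiring care.
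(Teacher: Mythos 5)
Your proposal is correct and follows essentially the same route as the paper: in both cases the key observation is that with $m=n$ an unsatisfied agent's envy is exactly $|\mathcal{P}(a)|$ (binary) or $rk(\Phi(a),\succ_a)-1$ (rankings), which reduces the threshold-$k$ decision to finding a matching that saturates the agents with more than $k$ valued houses (respectively, a perfect matching in the graph restricted to each agent's top $k+1$ houses). The binary search over $t$ is an unnecessary but harmless extra, since the decision version with the given $k$ already suffices.
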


\begin{proof}
Let $\mathcal{I} := (A,H,\mathcal{P};k)$ be an instance of $[\nicefrac{0}{1}]$-EHA. We show that this problem reduces to finding a perfect matching among high-degree agents in the preference graph, based on the following observations.

\begin{enumerate}
    \item Suppose $a$ is an agent who values at most $k$ houses. Then, the amount of envy experienced by $a$ is at most $k$ in \emph{any} allocation.
    \item Suppose $a$ is an agent who values at least $k + 1$ houses. Then, if $\Phi$ is a valid solution, then $a$ must value the house $\Phi(a)$. 
\end{enumerate}
% Notice that any agent who values at most $\kappa^\dagger$ houses has at most $\kappa^\dagger$. On the other hand, suppose $a$ is an agent who values at least $\kappa^\dagger + 1$ houses. Then, if $\Phi$ is a valid solution, then $a$ must value the house $\Phi(a)$. 

It follows that $\mathcal{I}$ is a~\textsc{Yes}-instance if and only if the projection of the preference graph $G$ on $(A^\star \cup H)$ admits a perfect matching, where $A^\star$ is the subset of agents whose degree in $G$ is at least $k + 1$.

Now, let $\mathcal{I} := (A,H,\succ;k)$ be an instance of $[\succ]$-EHA. Consider the bipartite graph $G = (A \cup H; E^\star)$, where $(a,h)$ is an edge if and only if the rank of $h$ is at most $k+1$ in $\succ_a$.  We claim that $\mathcal{I}$ is a \textsc{Yes}-instance if and only if $G$ has a perfect matching.

Indeed, observe that the amount of envy experienced by any agent $a$ with respect to an allocation $\Phi$ is exactly one less than the rank of $\Phi(a)$ in $\succ_a$. Therefore, if $\Phi$ is an allocation whose maximum envy is $k$, then the rank of $\Phi(a)$ in $\succ_a$ must be at most $k + 1$ for all agents $a$. It is easy to check that such allocations are in one-to-one correspondence with perfect matchings in the graph $G$.
\end{proof}

\begin{proposition}
\label{prop:mequalsn-uha} $[\nicefrac{0}{1}]$\UHA{} and $[\succ]$-\UHA{} can be solved in polynomial time if $m = n$.
\end{proposition}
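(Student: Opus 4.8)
The plan is to show that when $m = n$ the \emph{total} envy of an allocation decomposes into a sum of per-agent contributions, each depending only on the single house that agent receives, and that minimizing such a separable sum over all allocations is exactly a minimum-cost bipartite assignment problem, which is solvable in polynomial time (e.g.\ by the Hungarian algorithm). The crucial structural fact I would isolate first is the one already exploited in \Cref{prop:mequalsn-eha}: since $n = m$ and $\Phi$ is an injection, every allocation is a bijection, so \emph{every} house is allocated to some agent. Consequently, for a fixed agent $a$, the set of agents she envies is determined entirely by her own assigned house $\Phi(a)$, because whichever houses she ``prefers'' to $\Phi(a)$ are guaranteed to be held by some other agent regardless of the rest of the allocation.

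Making this precise in each setting gives an explicit per-agent cost. In the rankings case, $\kappa_\Phi(a)$ equals the number of houses ranked strictly above $\Phi(a)$ in $\succ_a$, all of which are allocated; hence $\kappa_\Phi(a) = rk(\Phi(a), \succ_a) - 1$. In the binary case, if $u_a(\Phi(a)) = 1$ then $a$ envies no one, while if $u_a(\Phi(a)) = 0$ then $a$ envies precisely the holders of the houses she values, and since all such houses are allocated (and $a$'s own house is not among them), we get $\kappa_\Phi(a) = |\mathcal{P}(a)|$. In either setting, define a cost $c(a,h)$ that records this quantity: $c(a,h) = rk(h,\succ_a) - 1$ for rankings, and $c(a,h) = 0$ if $u_a(h) = 1$ and $c(a,h) = |\mathcal{P}(a)|$ otherwise for binary utilities. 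Both are computable in polynomial time from the input.

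With these costs in hand, the total envy is
\[
\kappa^\star(\Phi) \;=\; \sum_{a \in A} \kappa_\Phi(a) \;=\; \sum_{a \in A} c(a,\Phi(a)),
\]
so minimizing $\kappa^\star$ over all allocations is exactly the problem of finding a minimum-cost perfect matching in the complete bipartite graph between $A$ and $H$ with edge costs $c(a,h)$. This is the classical assignment problem, solvable in polynomial time, and comparing the optimum value against $k$ resolves the decision question. Since the same reduction applies verbatim to both $[\nicefrac{0}{1}]$-\UHA{} and $[\succ]$-\UHA{}, this settles both parts.

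The only step requiring genuine care is the decomposition itself: one must verify that the $m = n$ condition truly makes each agent's envy independent of the rest of the allocation. I expect this to be the main (and only real) obstacle, since once separability is established the remainder is a textbook assignment problem. It is worth noting that this is precisely where $m = n$ is essential---when $m > n$ some valued or higher-ranked houses may go unallocated, so an agent's envy can depend on the global allocation and the separability argument breaks down.
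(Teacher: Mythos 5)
Your proposal is correct and follows essentially the same route as the paper: both define the identical per-agent cost ($d(a)$ on non-valued houses and $0$ otherwise for binary utilities, $rk(h,\succ_a)-1$ for rankings), observe that when $m=n$ every house is allocated so each agent's envy is separable, and solve a minimum-cost perfect matching on the complete bipartite agent--house graph. No substantive difference to report.
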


\begin{proof}
Let $\mathcal{I} := (A,H,\mathcal{P};k)$ be an instance of $[\nicefrac{0}{1}]$-UHA. To begin with, let $d(a)$ denote the degree of $a$ in the preference graph $G$ of $\mathcal{I}$. Now consider a complete bipartite graph $G^\star$ with bi-partition $(A \uplus H)$ and a cost function $c$ on the edges defined as follows:

 \begin{equation*}
    c({\color{DarkSlateGray}(a,h)}) =
    \begin{cases}
      d(a) & \text{if } a \text{ does not value } h,\\
      0 & \text{otherwise.}
    \end{cases}
\end{equation*}
Let $M$ be a minimum cost perfect matching in $G$ with total cost $t$. We claim that $\mathcal{I}$ is a \textsc{Yes}-instance if and only if $t \leq k$. In the forward direction, if $\Phi$ is an allocation with total envy at most $\kappa^\star$, then consider the following perfect matching in $G$:
\[M := \{ (a,\Phi(a)) ~|~ a \in A \}  \]

Notice that the cost of $M$ corresponds exactly to $\kappa^\star(\Phi)$, the total amount of envy in $\Phi$. This shows that there is a perfect matching in $G^\star$ with cost at most $k$. 

On the other hand, let $M$ be a perfect matching in $G^\star$ with the cost at most $k$, and let $M(a)$ denote the house $h$ such that $(a,h) \in M$. Then consider the allocation $\Phi$ given by $\Phi(a) = M(a)$ for all agents $a$. Notice that every zero-cost edge in $M$ corresponds to an envy-free agent with respect to $\Phi$, and every other edge $e = (a,h)$ corresponds to an agent in $\Phi$ who was allocated a house she did not value. Observe that the amount of envy experienced by $a$ in $\Phi$ is the number of houses she values, in other words, $d(a)$; however, this is also exactly the cost of the edge $e$. Therefore, it follows that the amount of envy in $\Phi$ is exactly equal to the cost of the matching $M$, and this concludes the proof of our claim. 

Now, let $\mathcal{I} := (A,H,\succ;k)$ be an instance of $[\succ]$-UHA.  As before, consider a complete bipartite graph $G^\star$ with bipartition $(A \uplus H)$. This time, we have the cost function $c$ on the edges defined as $c({\color{DarkSlateGray}(a,h)}) = rk(h, \succ_a) - 1$. This cost reflects the envy experienced by the agent $a$ if she were to be allocated the house $h$. Using arguments similar to the setting of binary valuations, it is easily checked that $\mathcal{I}$ is a \textsc{Yes}-instance if and only if there is a perfect matching in $G^\star$ whose cost is at most $k$. 
\end{proof}

% We also make note of the following fact, which is useful when dealing with ``sparse'' binary valuations.

% \begin{proposition}
% \label{prop:alldummyhousesallocated}
% If $(A,H,\mathcal{P})$ is an instance of~\OHA{} where every agent approves at most two houses, then there is an optimal allocation where all dummy houses are allocated. The result also holds for instances of~\EHA{} and~\UHA{}.
% \end{proposition}

% \nmtodo{Introduce the notion of types, extremal preferences, sparse preferences.}

% \defproblem{Weighted Utilitarian House Allocation}{A set $A=$ $\left\{a_{1}, a_{2}, \ldots, a_{n}\right\}$ of agents and a set $H=\left\{h_{1}, h_{2}, \ldots, h_{m}\right\}$ of houses, and a preference profile describing the preferences of all agents as linear orders over the houses.}{Find an allocation $\Phi$ that minimizes the total degree of envy, i.e, $\Delta^\star(\Phi)$.}

\paragraph*{Parameterized Complexity} 
Parameterized algorithms or multi-variate analysis is a popular perspective in the context of ``coping with computational hardness''. The key idea here is to segregate the running time of our algorithms into two parts: one that is polynomially bounded in the size of the entire input so that it is efficient on a quantity that is expected to be large in practice and the other, a computable function of a carefully chosen \emph{parameter}---and this component of the running time remains feasible in practice because the parameter is expected to be small. The parameterized perspective also allows us to formalize ideas about efficient preprocessing, and this is now an active subfield in its own right. We refer the readers to the books~\cite{DBLP:books/sp/CyganFKLMPPS15} and \cite{DBLP:series/txcs/DowneyF13} for additional background on this algorithmic paradigm, while recalling here only the key definitions relevant to our discussions. 

Formally, a parameterized problem $L$ is a subset of $\Sigma^{*} \times \mathbb{N}$ for some finite alphabet $\Sigma$. An instance of a parameterized problem consists of $(x, k)$, where $k$ is called the parameter. A central notion in parameterized complexity is fixed-parameter tractability (FPT), which means for a given instance $(x, k)$ solvability in time $f(k) \cdot p(|x|)$, where $f$ is an arbitrary function of $k$ and $p$ is a polynomial in the input size. The notion of kernelization is defined as follows.

\begin{definition} A kernelization algorithm, or in short, a kernel for a parameterized problem $Q \subseteq \Sigma^{*} \times \mathbb{N}$ is an algorithm that, given $(x, k) \in \Sigma^{*} \times \mathbb{N}$, outputs in time polynomial in $|x|+k$ a pair $\left(x^{\prime}, k^{\prime}\right) \in \Sigma^{*} \times \mathbb{N}$ such that (a) $(x, k) \in Q$ if and only if $\left(x^{\prime}, k^{\prime}\right) \in Q$ and (b) $\left|x^{\prime}\right|+k^{\prime} \leq g(k)$, where $g$ is an arbitrary computable function. The function $g$ is referred to as the size of the kernel. If $g$ is a polynomial function then we say that $Q$ admits a polynomial kernel.
\end{definition} 

On the other hand, we also have a well-developed theory of parameterized hardness. We call a problem para-NP-hard if it is NP-hard even for a constant value of the parameter. Further, we have the notion of \emph{parameterized reductions}, defined as follows.

\begin{definition}[Parameterized reduction] Let $A, B \subseteq \Sigma^{*} \times \mathbb{N}$ be two parameterized problems, A parameterized reduction from $A$ to $B$ is an algorithm that, given an instance $(x, k)$ of $A$, outputs an instance $\left(x^{\prime}, k^{\prime}\right)$ of $B$ such that:

1. $(x, k)$ is a yes-instance of $A$ if and only if $\left(x^{\prime}, k^{\prime}\right)$ is a yes-instance of $B$.

2. $k^{\prime} \leq g(k)$ for some computable function $g$, and

3. the running time is $f(k) \cdot|x|^{O(1)}$ for some computable function $f$.
\end{definition}

A parameterized reduction from a problem known to be \WOH{} is considered to be strong evidence that the target problem is not \FPT{}. The formal definition of W[1]-hardness is beyond the scope of this discussion, but we state in this section the known W[1]-hard problems from which we will perform parameterized reductions to obtain our results. 

We also note that an instance of integer linear programming is \FPT{} in the number of variables.

\begin{theorem}[\cite{DBLP:journals/mor/Lenstra83}]
\label{thm:lenstra}
An integer linear programming instance of size $L$ with $p$ variables can be solved using $\mathcal{O}\left(p^{2.5 p+o(p)} \cdot\left(L+\log M_{x}\right) \log \left(M_{x} M_{c}\right)\right)$ arithmetic operations and space polynomial in $L+\log M_{x}$, where $M_{x}$ is an upper bound on the absolute value a variable can take in a solution, and $M_{c}$ is the largest absolute value of a coefficient in the vector $c$.
\end{theorem}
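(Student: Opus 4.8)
The plan is to establish Lenstra's theorem via the classical geometry-of-numbers approach, reducing optimization to feasibility and then solving feasibility by a bounded-depth recursion on the dimension $p$. First I would reduce the optimization problem to a sequence of feasibility questions: by binary searching on the value of the objective $c \cdot x$ over its polynomially bounded range of attainable values, it suffices to decide, for a polyhedron $P = \{x \in \mathbb{R}^p : Ax \le b\}$, whether $P \cap \mathbb{Z}^p \ne \emptyset$. A standard preprocessing step lets me assume $P$ is bounded and, where nonempty, full-dimensional, since any feasible integer point can be taken to have encoding length polynomial in $L$.

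The heart of the argument is a recursion on $p$. Given $P$, I would first compute a linear transformation $\tau$ that makes $P$ ``round'': using the L\"owner--John ellipsoid (obtained via the ellipsoid method together with approximate-volume routines), I get $\tau$ such that a ball $B(c_0, r)$ is contained in $\tau(P)$ while $\tau(P) \subseteq B(c_0, R)$ with the ratio $R/r$ bounded by a function of $p$ alone. Under $\tau$ the integer lattice $\mathbb{Z}^p$ becomes a general lattice $\Lambda = \tau(\mathbb{Z}^p)$, and deciding $P \cap \mathbb{Z}^p \ne \emptyset$ is equivalent to deciding $\tau(P) \cap \Lambda \ne \emptyset$. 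I would then run lattice basis reduction (LLL) on $\Lambda$ to obtain a reduced basis $b_1, \dots, b_p$.

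Now I invoke the flatness dichotomy. Either the reduced basis certifies that $\tau(P)$ — which contains a ball of radius $r$ — must contain a lattice point (this occurs when the covering radius of $\Lambda$ is small relative to $r$, which the reduced basis bounds up to a factor depending only on $p$), and then we are done; or there is a nonzero vector $d$ in the dual lattice such that the width of $P$ in direction $d$, namely $\max_{x \in P} d \cdot x - \min_{x \in P} d \cdot x$, is bounded by a constant $w(p)$ depending only on $p$. In the latter case every integer point of $P$ satisfies $d \cdot x \in \mathbb{Z}$ and hence lies in one of at most $w(p) + 1$ parallel hyperplanes $\{x : d \cdot x = t\}$ with $t \in \mathbb{Z}$. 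Restricting to each such hyperplane reduces the problem to integer feasibility in dimension $p - 1$, on which I recurse.

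The recursion has depth at most $p$ and branching factor at most $w(p) + 1$ at each level, so the total number of subproblems is bounded by a function of $p$; each subproblem requires only polynomially many arithmetic operations (one LP solve, an ellipsoidal rounding, and one basis reduction), which yields the stated running time. The hard part will be the flatness dichotomy itself, which is the technical core making the whole scheme work: proving (Khinchine's flatness theorem) that a convex body containing no lattice point has lattice width bounded by a function of the dimension only, and making it algorithmic by extracting the flat direction $d$ from the reduced basis. Establishing the explicit bound $w(p)$ and the polynomial bookkeeping for the coordinate changes across recursion levels is where the bulk of the careful work lies, whereas the reduction from optimization to feasibility and the per-level LP and rounding computations are routine.
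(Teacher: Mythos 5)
This statement is not proved in the paper at all: it is an external result imported verbatim with a citation to Lenstra's 1983 article, so there is no in-paper argument to compare yours against. Your outline is the standard Lenstra-style strategy (binary search to reduce optimization to feasibility, L\"owner--John rounding, LLL basis reduction, the flatness dichotomy, and recursion on dimension), and as a roadmap it is the right one.

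As a proof, however, it has two genuine gaps. First, the flatness dichotomy --- that a convex body with no lattice point has lattice width bounded by a function of $p$ only, extractable algorithmically from a reduced basis --- is exactly the technical core, and you name it without establishing it; everything else in the sketch is bookkeeping around that lemma. Second, the recursion you describe, with branching factor $w(p)+1$ per level over $p$ levels, yields a bound of the form $(w(p)+1)^{p}\cdot\mathrm{poly}(L)$, i.e.\ $p^{O(p)}$ for a generic flatness constant, but it does not by itself produce the specific $p^{2.5p+o(p)}$ operation count in the statement. That constant in the exponent comes from Kannan's refinement of Lenstra's algorithm (a sharper rounding and a recursion that bounds the number of lower-dimensional slices by roughly $O(p^{3/2})$ per level), together with the Frank--Tardos preprocessing that controls the bit-length of the numbers across recursion levels. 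Without those refinements you prove a qualitatively similar but quantitatively weaker theorem, which would still suffice for the fixed-parameter tractability applications in this paper but is not the statement as written.
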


Finally, we state the problems that we will use in the reductions. We note that all the problems below are \WOH{} when parameterized by $k$.

\begin{framed}
    \textsc{Clique} (respectively, \textsc{Independent Set})\\
    \textbf{Input:} A graph $G$ and an integer $k$.\\
    \textbf{Question:} Does there exist a subset $S \subseteq V(G)$ such that $G[S]$ is a clique (respectively, independent set) and $|S| \geq k$? 
\end{framed}
\begin{framed}
    \textsc{Maximum Balanced Biclique}\\
    \textbf{Input:} A graph $G = (L \cup R, E)$ and an integer $k$.\\
    \textbf{Question:} Does there exist a subset $S \subseteq L$ and $T \subseteq R$ such that $G[S \cup T]$ is a biclique and $|S| = |T| = k$? 
\end{framed}
% \begin{framed}
%     \textsc{Independent Set}\\
%     \textbf{Input:} A graph $G$ and an integer $k$.\\
%     \textbf{Question:} Does there exist a subset $S \subseteq V(G)$ such that $G[S]$ is an independent set and $|S| \geq k$? 
% \end{framed}
\begin{framed}
    \textsc{Multi-Colored Independent Set}\\
    \textbf{Input:} A graph $G = (V_1 \uplus \cdots \uplus V_k, E)$.\\
    \textbf{Question:} Does there exist a subset $S \subseteq V(G)$ such that $G[S]$ is an independent set and $|V_i \cap S| = 1$ for all $i \in [k]$? 
\end{framed}

% \nmtodo{Add W-hardness, FPT reduction, definition of Clique, IS, MCIS, and ILP}

% Can remove the forced page breaks later :)

\section{Pre-processing using Expansion Lemma}

In this section, we introduce the expansion lemma, a powerful and popular tool for kernelization, which we will use later to design our algorithms. Let $G$ be a bipartite graph with vertex bi-partitions $(A, B)$. A set of edges $M \subseteq E(G)$ is called an \emph{expansion} of $A$ into $B$ if:

\begin{itemize}
\item every vertex of $A$ is incident to exactly one edge of $M$;
\item $M$ saturates exactly $|A|$ vertices in $B$.
\end{itemize}

Note that an expansion saturates all vertices of $A$. 

\begin{lemma}[\textbf{Expansion lemma} \citep{DBLP:books/sp/CyganFKLMPPS15}]
\label{lem:expansion} Let $G$ be a bipartite graph with vertex bi-partitions $(A, B)$ such that

\begin{enumerate}
\item $|B| \geq |A|$, and
\item there are no isolated vertices in $B$.
\end{enumerate} 

Then there exist non-empty vertex sets $X \subseteq A$ and $Y \subseteq B$ such that
\begin{itemize}
\item there is an expansion of $X$ into $Y$, and
\item no vertex in $Y$ has a neighbor outside $X$, that is, $N(Y) \subseteq X$.
\end{itemize} 

Furthermore, the sets $X$ and $Y$ can be found in time polynomial in the size of $G$ (\cite{10.1145/1721837.1721848}).
\end{lemma}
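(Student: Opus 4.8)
The plan is to prove the lemma by induction on $|A|$, using Hall's theorem to split into two cases: either $A$ satisfies Hall's condition, in which case we take $X = A$ directly, or a set violating Hall's condition lets us delete vertices and recurse on a strictly smaller instance. The measure that decreases is $|A|$, and the whole argument is constructive, which will also yield the polynomial running time.

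First I would dispose of the easy case. Suppose that $|N(S)| \geq |S|$ for every $S \subseteq A$. Then by Hall's theorem $G$ has a matching $M$ saturating all of $A$; set $X = A$ and let $Y$ be the set of $B$-endpoints of $M$. Then $M$ is an expansion of $X$ into $Y$ (it saturates $X$ and exactly $|X| = |Y|$ vertices of $B$), and because $G$ is bipartite with parts $A$ and $B$, every neighbour of a vertex of $Y$ already lies in $A = X$, so $N(Y) \subseteq X$ holds trivially. Here $X = A \neq \emptyset$ (if $A = \emptyset$, condition~2 forces $B = \emptyset$ and the statement is vacuous, so we may assume $A \neq \emptyset$). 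This case also serves as the termination of the recursion.

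Next I would treat the case where Hall's condition fails, i.e.\ there is a nonempty $S \subseteq A$ with $|N(S)| < |S|$. I would first observe that any such $S$ is a \emph{proper} subset of $A$: since $B$ has no isolated vertices, $N(A) = B$ and hence $|N(A)| = |B| \geq |A|$, so $S = A$ cannot violate Hall's condition. Setting $A'' = A \setminus S$ and $B'' = B \setminus N(S)$, I recurse on the induced subgraph $G'' = G[A'' \cup B'']$. The two hypotheses transfer: we have $|B''| = |B| - |N(S)| > |B| - |S| \geq |A| - |S| = |A''|$, giving condition~1; and every $b \in B''$ is, by definition of $B''$, non-adjacent to all of $S$, so since $b$ is non-isolated in $G$ all of its neighbours lie in $A''$, giving condition~2. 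Moreover $|A''| < |A|$ because $S \neq \emptyset$, so the inductive hypothesis applies and produces nonempty $X'' \subseteq A''$ and $Y'' \subseteq B''$ with an expansion of $X''$ into $Y''$ and $N_{G''}(Y'') \subseteq X''$.

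Finally I would lift this solution back to $G$ unchanged: since $Y'' \subseteq B''$ has no neighbour in the deleted set $S$, we get $N_G(Y'') = N_{G''}(Y'') \subseteq X''$, and the expanding matching already lives inside $G'' \subseteq G$; so $X = X''$ and $Y = Y''$ work. The part I expect to require the most care is precisely this bookkeeping after deletion — namely checking that removing $N(S)$ neither creates an isolated vertex in $B''$ nor lets some vertex of $Y''$ acquire a neighbour outside $X''$ — but both follow from the single fact that vertices of $B''$ have no neighbour in $S$. For the ``furthermore'' claim, I would note that a Hall violator $S$ (or a saturating matching when Hall holds) is computable in polynomial time from a maximum matching, and the recursion has depth at most $|A|$, so the overall procedure for finding $X$ and $Y$ runs in time polynomial in the size of $G$.
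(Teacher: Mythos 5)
The paper does not prove this lemma at all --- it is quoted verbatim from Cygan et al.\ (with the constructive/polynomial-time claim attributed to Thomass\'e), so there is no in-paper argument to compare against. Judged on its own, your induction on $|A|$ via Hall violators is a correct and complete proof of the stated ($q=1$) version of the expansion lemma: the key observations all check out, namely that $S=A$ can never violate Hall's condition (since no vertex of $B$ is isolated, $N(A)=B$ and $|B|\geq|A|$), that deleting $S\cup N(S)$ preserves both hypotheses because every surviving $B$-vertex has all its neighbours in $A\setminus S$, and that this same fact makes the lifted pair $(X'',Y'')$ satisfy $N_G(Y'')\subseteq X''$. Your handling of the degenerate case $A=\emptyset$ (where, strictly speaking, the conclusion cannot hold and one must assume the graph is nonempty) and the observation that the base case $|A|=1$ always lands in the Hall case are the right bookkeeping. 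The textbook proof of the general $q$-expansion lemma goes through an auxiliary graph with $q$ copies of each $A$-vertex and a maximum-matching/K\H{o}nig-type argument; your Hall-violator recursion is a more elementary route that suffices for $q=1$ and directly yields the polynomial-time algorithm, at the cost of not generalizing as cleanly to larger $q$.
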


We now consider an instance $\mathcal{I} = (A,H,\mathcal{P})$ of \textsc{HA} with binary valuations parameterized by $k$, where $k$ is one of $\kappa^\#$, $\kappa^\dagger$ or $\kappa^\star$. We introduce the following reduction rules here whose implementation is parameter-agnostic. 

Let $G = (A \cup H; E)$ denote the preference graph of $\mathcal{I}$. Note that we may assume that we have at most $(n-1)$ dummy houses in the instance~$\mathcal{I}$, since instances with at least $n$ dummy houses admit trivial envy-free allocations, where every agent is given a dummy house. We make this explicit in the following reduction rule:

\begin{reductionrule}
\label{rr1}
If $\mathcal{I}$ has at least as many dummy houses as agents, then return a trivial~\textsc{Yes}-instance. The parameter $k$ is unchanged.
\end{reductionrule}

Let $G^\star$ denote the preference graph induced by $(A \setminus D') \cup (H \setminus D)$, where $D$ denotes the vertices corresponding to dummy houses in $\mathcal{I}$ and $D'$ denotes the vertices corresponding to the dummy agents in $\mathcal{I}$. We now propose the following reduction rule based on the expansion lemma. The safety of the following reduction rule is argued separately for the three parameters in~\Cref{lem:oha-kernel-n,lem:eha-kernel-n,lem:uha-kernel-n}, respectively.

\begin{reductionrule}
\label{rr2}
In $\mathcal{I}$, if $|H \setminus D| \geq |A \setminus D'|$, then let $(X,Y)$ be as given by~\Cref{lem:expansion} applied to $G^\star$, and let $M \subseteq E(G^\star)$ be the associated expansion. Proceed by eliminating all agents and houses saturated by $M$.  The parameter $k$ is unchanged.
\end{reductionrule}

Note that once~Reduction Rules~\ref{rr1} and \ref{rr2} are applied exhaustively, we have that: \[|H| = |H \setminus D| + |D| \leq |A \setminus D'| + |D| \leq |A| + |A|-1 \leq 2\cdot(|A| - 1),\] where we are slightly abusing notation and using $H$ and $A$ to denote the houses and agents in the reduced instance. Thus, once the safety of these reduction rules is established, we conclude that all the three problems under consideration---\textsc{OHA}, \textsc{EHA} and \textsc{UHA}---admit polynomial kernels with $\cO(\card{A})$ houses when parameterized by the number of agents. 

After the above two reduction rules have been applied, we have $|H| \geq |A|$ and $|H \setminus D| \leq |A \setminus D'|$, we get $|H| - |H \setminus D| \geq |A|-|A \setminus D'|$ and hence $|D| \geq |D'|$. Therefore, the number of dummy houses are at least as much as the number of dummy agents. We then apply the following reduction rule.

\begin{reductionrule}
\label{rr5}
In a reduced instance $\mathcal{I}$ with respect to Reduction Rules \ref{rr1} and \ref{rr2}, proceed by allocating a dummy house to each of the dummy agents and eliminate $|D'|$ dummy agents and $|D'|$ dummy houses. The parameter $k$ is unchanged.
\end{reductionrule}

Towards establishing the safety of the above reduction rule, we argue that in the reduced instance, there is an optimal allocation where all the dummy agents receive a dummy house each. Indeed, if not, say a dummy agent $a_d$ receives a house $h \in H \setminus D$ in an optimal allocation. Since $|A \setminus D'|  > |H \setminus D|$, there is an agent $a \in A \setminus D'$ who received a dummy house $h_d$. We re-allocate $h_d$ to $a_d$ and allocate all the houses in $|H \setminus D|$ to agents in $|A \setminus D|$ by finding a maximum matching in the associated preference graph $G$, restricted to these houses and agents. This re-allocation either does not create any new envy ($a_d$ is indifferent and no one else if worse-off) or makes an additional agent in $A \setminus D$ envy-free (because of the allocation of $h$ to an agent who values it). Hence, either we get the desired optimal allocation under which all the dummy agents receive a dummy house or we contradict the fact that we started with an optimal allocation. So we can allocate dummy houses to the dummy agents and assume going forward that there are no dummy agents. That is, $|D'| =\phi$.

We now make a claim here that will be useful in the arguments that we make later about the safety of~\Cref{rr2}.

\begin{definition}
\label{def:nice}
Let $(A,H,\mathcal{P})$ and $X,Y$ and $M$ be as in the premise of Reduction Rule~\ref{rr2}. An allocation $\Phi$ is said to be \emph{good} if $\Phi(a) = M(a)$ for all $a \in X$, where $M(a)$ denotes the unique vertex $h \in Y$ such that $(a,h) \in M$. 
\end{definition}

\begin{claim}
\label{claim:niceallocalongexpansion}
Let $(A,H,\mathcal{P})$ and $X,Y$ and $M$ be as in the premise of~\Cref{rr2}. There is a good allocation that minimizes the number of envious agents.
\end{claim}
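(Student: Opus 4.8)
The plan is to take an arbitrary optimal allocation $\Phi^\ast$ (one minimizing $\kappa^\#$) and transform it into a good allocation $\Phi$ without increasing the number of envious agents, using an exchange argument driven by the structural properties of $X$, $Y$, and the expansion $M$. The key facts I would exploit are the two guarantees from the expansion lemma: the expansion $M$ saturates all of $X$ into $Y$ with $|X| = |Y|$, and crucially $N(Y) \subseteq X$, meaning every house in $Y$ is valued \emph{only} by agents in $X$. This last property is what makes reallocating houses in $Y$ among agents outside $X$ harmless: no agent in $A \setminus X$ envies (or cares about) a house in $Y$, since such an agent does not value any house in $Y$.

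First I would consider how $\Phi^\ast$ allocates the houses in $Y$. Some houses of $Y$ may be given to agents in $X$, and some may be given to agents outside $X$; symmetrically, some agents in $X$ may receive houses outside $Y$. The goal is to \emph{reroute} the allocation so that every agent in $X$ receives its $M$-partner in $Y$, while the agents and houses involved in the rerouting are consolidated in a way that cannot raise the envy count. I would argue that, because $|X| = |Y|$ and $M$ is an expansion, the agents in $X$ can absorb exactly the houses of $Y$: perform the swap that assigns $M(a)$ to each $a \in X$. The houses in $Y$ that $\Phi^\ast$ had given to agents outside $X$ are thereby freed; I reassign to each such displaced agent the house that the corresponding $X$-agent had been holding under $\Phi^\ast$. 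Since $N(Y) \subseteq X$, any agent outside $X$ who previously held a $Y$-house valued it at zero (it was a dummy-valued house for them), so handing them a different house cannot make them newly envious \emph{because of losing a $Y$-house}, and I must check the house they now receive does not create fresh envy either.

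The heart of the argument is to verify that after this rerouting: (i) every agent in $X$ is envy-free, and (ii) no agent outside $X$ becomes newly envious. For (i), since each $a \in X$ now holds $M(a)$, a house it values, and every house it values lies in $Y$ (again by $N(Y) \subseteq X$ combined with the expansion structure, or more carefully by the fact that agents in $X$ value houses that the lemma's expansion guarantees are available within $Y$), $a$ receives one of its top-valued houses and hence envies no one — at worst $a$ is as envy-free as before. For (ii), the agents outside $X$ see their allocation changed only by receiving houses that were previously held by $X$-agents; I would show these are houses the outside agents do not value (or the swap is arranged so that only dummy-valued houses are exchanged), so their envy sets do not grow. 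Counting envious agents: the envious agents in $\Phi$ form a subset of the envious agents in $\Phi^\ast$, so $\kappa^\#(\Phi) \le \kappa^\#(\Phi^\ast)$, and by optimality of $\Phi^\ast$ equality holds, making $\Phi$ optimal and good.

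The main obstacle I anticipate is step (ii): controlling the envy of agents outside $X$ after they are handed houses vacated by $X$-agents. It is not immediate that the house an outside agent receives in the rerouting is one it does not value; if it did value that house, it might already have been envy-free and a careful swap preserves this, but I need to ensure no outside agent strictly loses. The clean way to handle this is to observe that the rerouting only permutes houses \emph{among a set where the exchanges are between $Y$-houses and the specific houses $X$-agents held}, and to invoke $N(Y) \subseteq X$ to guarantee outside agents are indifferent to every $Y$-house, then argue the remaining houses can be matched to outside agents via a maximum-matching or direct-substitution argument (mirroring the dummy-agent reasoning used for~\Cref{rr5}) so that no outside agent's valued-house status deteriorates. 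Pinning down this substitution rigorously — rather than hand-waving that ``no new envy is created'' — is the delicate part, and I expect the proof to hinge on carefully casing on whether the displaced $Y$-houses were valued or dummy for the outside agents receiving the compensating houses.
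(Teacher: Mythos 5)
Your overall strategy is the same as the paper's: start from an optimal allocation, reroute houses so that every $a \in X$ receives $M(a)$, and use $N(Y) \subseteq X$ to argue that agents outside $X$ cannot be hurt. However, you explicitly leave open the verification you yourself identify as the crux --- that no agent outside $X$ becomes newly envious after being handed the house vacated by an $X$-agent --- and you suggest it may require a maximum-matching or substitution argument in the style of Reduction Rule~\ref{rr5}. No such machinery is needed. The closing observation is simply that the rerouting never changes the \emph{set} of allocated houses; it only permutes them among agents. Under binary valuations, an agent who holds a house she does not value envies exactly the agents holding allocated houses she does value, so her envy is determined entirely by the set of allocated houses she values. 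An outside agent $b$ displaced from a $Y$-house valued that house at zero (by $N(Y) \subseteq X$), so after the exchange she holds either a house she values (her envy drops to zero) or another house she does not value (her envy is unchanged, because the allocated set is unchanged). The paper carries this out as a sequence of single swaps with a three-case analysis (house $M(a)$ unassigned; swap not involving $c$; swap involving $c = b$), which also sidesteps the bookkeeping your one-shot rerouting would need when $M(a)$ is currently held by another agent of $X$ (chains and cycles inside $X$).

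One smaller point: your justification for why each $a \in X$ ends up envy-free leans on the assertion that every house $a$ values lies in $Y$. That does not follow from $N(Y) \subseteq X$ (which constrains who values $Y$-houses, not what $X$-agents value), and it is false in general. It is also unnecessary: in the binary setting, $a$ receives $M(a)$, a house she values, so her utility is $1$ and she envies no one regardless of what else she values. With these two repairs your argument coincides with the paper's proof.
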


\begin{proof}
Let $\Phi$ be an allocation that minimizes the number of envious agents. If $\Phi$ is already good then there is nothing to prove. Otherwise, suppose $\Phi(a) \neq M(a)$ for some $a \in X$. If $\Phi$ does not assign the house corresponding to $M(a)$ to any agent, then consider the modified allocation $\Phi^\prime$ where we assign $M(a)$ to $a$ while letting $\Phi(a)$ become an unassigned house, that is:

 \begin{equation*}
    \Phi^\prime(c) =
    \begin{cases}
      M(a) & \text{if } c = a,\\
      \Phi(c) & \text{otherwise.}
    \end{cases}
\end{equation*}

On the other hand, suppose $b$ is such that $\Phi(b) = M(a)$. Then consider the modified allocation $\Phi^\prime$ where we swap the houses of $a$ and $b$, that is:

 \begin{equation*}
    \Phi^\prime(c) =
    \begin{cases}
      M(a) & \text{if } c = a,\\
      \Phi(a) & \text{if } c = b,\\
      \Phi(c) & \text{otherwise.}
    \end{cases}
\end{equation*}

We keep modifying the original allocation $\Phi$ in the manner described above until we arrive at a good allocation. Let $\Phi^\star$ denote this final allocation.  

% Observe that in every step, an agent  outside $X$ who was not envious with respect to $\Phi$ continues to be non-envious with respect to our final allocation, since in each step, they either did not value house that was added to the set of assigned houses ($M(a)$ in the first case), or the set of assigned houses remained unchanged from their perspective (if the agent was not involved in the exchange in the second case), or the house they had to relinquish in the exchange was a house that they did not value (if $b \notin X$ in the second case). 

Now consider an agent $c \notin X$. We claim that the amount of envy experienced by $c$ does not increase at any step of the process of morphing $\Phi$ to $\Phi^\star$. Consider the following cases that arise at any step, where $a \in X$ and by a slight abuse of notation, we use $\Phi$ to denote the allocation that is being modified:

\begin{enumerate}
    \item Suppose $M(a)$ is assigned to $a$ and $\Phi(a)$ is unassigned. We know that $c$ does not value $M(a)$ since $c \notin X$. If $c$ valued $\Phi(a)$, then her envy with respect to the new allocation will be one less than her envy with respect to the previous allocation. 
    \item Suppose $M(a)$ and $\Phi(a)$ are swapped between agents $a$ and $b$; and $c \neq b$. Then the amount of envy experienced by $c$ does not change.
    \item Suppose $M(a)$ and $\Phi(a)$ are swapped between agents $a$ and $b$; and $c = b$. If $c$ valued $\Phi(a)$, then her envy with respect to the new allocation will be less than her envy with respect to the previous allocation. On the other hand, if $c$ does not value $\Phi(a)$ then the amount of envy experienced by $c$ does not change.
\end{enumerate}

Also, in the final allocation $\Phi^\star$, all agents in $X$ are envy-free, since they are assigned houses that they value via the expansion $M$. Therefore, the total number of envious agents in the final allocation $\Phi^\star$ is the same as the number of envious agents in the original allocation $\Phi$---recall that $\Phi$ minimized the number of envious agents. 
\end{proof}

% \nmtodo{can omit the explanations below.}

The following claims can be shown by the same argument that was used for~\Cref{claim:niceallocalongexpansion}, since for any agent $a$, the amount of envy experienced by $a$ respect to in $\Phi^\star$ is at most the amount of envy experienced by $a$ with respect to $\Phi$.

\begin{claim}
\label{claim:niceallocmaxenvy}
Let $(A,H,\mathcal{P})$ and $X,Y$ and $M$ be as in the premise of~\Cref{rr2}. There is a good allocation that minimizes the maximum envy.
\end{claim}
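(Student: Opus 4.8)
The plan is to recycle the morphing procedure from the proof of~\Cref{claim:niceallocalongexpansion} almost verbatim, and to observe that it already delivers a stronger \emph{per-agent} guarantee than the aggregate one that was needed there. Write $\mathcal{I} = (A,H,\mathcal{P})$ and let $\Phi$ be an allocation that minimizes the maximum envy, so that $\kappa^\dagger(\Phi) = \kappa^\dagger(\mathcal{I})$. Starting from $\Phi$, I would repeatedly apply the same two local modifications as before---reassigning $M(a)$ to $a$ when that house is currently unallocated, or swapping the houses of $a$ and the agent who currently holds $M(a)$---for each $a \in X$ with $\Phi(a) \neq M(a)$, until a good allocation $\Phi^\star$ is reached.

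The key observation to extract from the earlier case analysis is that each local modification never \emph{increases} the envy of any single agent. For an agent $c \notin X$ this is precisely the three-case argument already carried out: in every case $\kappa_\Phi(c)$ either stays put or decreases by one, so composing the finitely many steps yields $\kappa_{\Phi^\star}(c) \le \kappa_{\Phi}(c)$. For an agent $a \in X$, the procedure terminates with $a$ holding $M(a) \in Y$, a house it values; hence $a$ is envy-free in $\Phi^\star$ and trivially $\kappa_{\Phi^\star}(a) = 0 \le \kappa_\Phi(a)$. Together these give the pointwise inequality $\kappa_{\Phi^\star}(a) \le \kappa_\Phi(a)$ for every $a \in A$.

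The claim then follows at once by taking the maximum over all agents: $\kappa^\dagger(\Phi^\star) = \max_{a \in A} \kappa_{\Phi^\star}(a) \le \max_{a \in A} \kappa_\Phi(a) = \kappa^\dagger(\Phi) = \kappa^\dagger(\mathcal{I})$. Since $\Phi^\star$ is good by construction and cannot beat the optimum, it is a good allocation that minimizes the maximum envy.

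There is no genuine obstacle here; the only thing to be careful about is that the bound imported from~\Cref{claim:niceallocalongexpansion} is truly pointwise rather than merely a statement about the \emph{count} of envious agents. I would therefore re-read the three cases to confirm that they bound $\kappa_\Phi(c)$ itself (they do), and separately check the bookkeeping that the morphing terminates and leaves every agent of $X$ holding its $M$-partner---but these are exactly the facts already established for the previous claim, so no new idea is required.
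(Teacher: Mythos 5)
Your proposal is correct and matches the paper's own argument: the paper likewise reuses the morphing procedure from~\Cref{claim:niceallocalongexpansion} and justifies the claim by noting that the amount of envy of every individual agent in $\Phi^\star$ is at most that in $\Phi$, from which the bound on the maximum follows. Your extra care in checking that the three-case analysis is genuinely pointwise (and that agents of $X$ end up envy-free) is exactly the observation the paper relies on.
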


% \begin{proof}
% Let $\Phi$ be an allocation that minimizes the maximum envy among agents. We proceed exactly as in the proof of the previous claim --- letting $\Phi^\star$ denote final allocation. From the previous argument, it follows that for any agent $a$, the amount of envy experienced by $a$ respect to in $\Phi^\star$ is at most the amount of envy experienced by $a$ with respect to $\Phi$. Therefore, the maximum envy experienced by any agent in $\Phi^\star$ is the same as the number of envious agents in the original allocation $\Phi$ --- recall that $\Phi$ minimized the maximum envy among agents. 
% \end{proof}

\begin{claim}
\label{claim:nicealloctotalenvy}
Let $(A,H,\mathcal{P})$ and $X,Y$ and $M$ be as in the premise of~\Cref{rr2}. There is a good allocation that minimizes total envy.
\end{claim}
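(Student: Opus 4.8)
The plan is to reuse the morphing argument from the proof of Claim~\ref{claim:niceallocalongexpansion} essentially verbatim, observing that it establishes something stronger than what was stated there. Specifically, I would begin by noting that the construction of $\Phi^\star$ from an arbitrary allocation $\Phi$ via the swap/reassignment operations did not depend on $\Phi$ being optimal for the number of envious agents; the construction is purely syntactic, driven by the expansion $M$. What the case analysis actually shows is that for \emph{every} agent $c \notin X$, the amount of envy $\kappa_{\Phi^\star}(c) \leq \kappa_\Phi(c)$, and moreover every agent $a \in X$ becomes envy-free in $\Phi^\star$ (so $\kappa_{\Phi^\star}(a) = 0 \leq \kappa_\Phi(a)$). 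Hence the per-agent envy never increases under the morphing, for \emph{any} starting allocation.

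With that strengthened observation in hand, both Claim~\ref{claim:nicealloctotalenvy} and Claim~\ref{claim:niceallocmaxenvy} follow in one line each. First I would take an allocation $\Phi$ that minimizes total envy $\kappa^\star$, apply the morphing to obtain a good allocation $\Phi^\star$, and conclude
\[
\kappa^\star(\Phi^\star) = \sum_{c \in A} \kappa_{\Phi^\star}(c) \leq \sum_{c \in A} \kappa_{\Phi}(c) = \kappa^\star(\Phi),
\]
using the per-agent inequality term by term. Since $\Phi$ was optimal, $\Phi^\star$ is an optimal good allocation for total envy. The argument for maximum envy is identical with $\max$ in place of $\sum$: starting from a $\kappa^\dagger$-optimal $\Phi$, the per-agent domination gives $\max_{c} \kappa_{\Phi^\star}(c) \leq \max_{c} \kappa_{\Phi}(c)$, so $\Phi^\star$ is an optimal good allocation for maximum envy.

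I do not anticipate a genuine obstacle here, since the technical work was already carried out in Claim~\ref{claim:niceallocalongexpansion}; the only thing to be careful about is to make the \emph{per-agent} (rather than merely \emph{aggregate}) nature of the envy-domination explicit, because that is precisely the property both the $\sum$ and $\max$ aggregations inherit. The cleanest presentation is to state once and for all that the morphing never increases any individual agent's envy — which the three-case analysis already proves — and then remark that any monotone aggregation over the agents' envy values (in particular the sum and the maximum) is therefore preserved or improved. This is exactly the ``same argument'' that the paper appeals to in the sentence preceding the two claims, so the proofs amount to invoking that per-agent inequality and applying the relevant aggregation operator.
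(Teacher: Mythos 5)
Your proposal is correct and matches the paper's own argument: the paper likewise justifies Claims~\ref{claim:niceallocmaxenvy} and~\ref{claim:nicealloctotalenvy} by observing that the morphing of Claim~\ref{claim:niceallocalongexpansion} never increases any individual agent's envy, so any monotone aggregation (sum or max) is preserved or improved. Your explicit emphasis on the per-agent (rather than aggregate) domination is exactly the right point to make, and nothing further is needed.
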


% \begin{proof}
% Let $\Phi$ be an allocation that minimizes the total envy. We proceed exactly as in the proof of the previous claim --- letting $\Phi^\star$ denote final allocation. From the previous argument, it follows that for any agent $a$, the amount of envy experienced by $a$ respect to in $\Phi^\star$ is at most the amount of envy experienced by $a$ with respect to $\Phi$. Therefore, the total envy experienced by all agents in $\Phi^\star$ is the same as the number of envious agents in the original allocation $\Phi$ --- recall that $\Phi$ minimized the total envy among agents. 
% \end{proof}

\paragraph*{Applications to Extremal Instances}
We observe several properties of instances $\mathcal{I} = (A, H, \mathcal{P}; k)$ of $[\nicefrac{0}{1}]$-HA, where $\mathcal{P}$ has an extremal interval structure (with respect to the houses).  The properties are parameter-agnostic, and therefore, hold for instances of $[\nicefrac{0}{1}]$-OHA, $[\nicefrac{0}{1}]$-EHA and $[\nicefrac{0}{1}]$-UHA. As a shorthand, we say that $\mathcal{I} = (A, H, \mathcal{P}; k)$ is an extremal instance (or simply extremal) if $\mathcal{P}$ has the extremal interval structure.  

\begin{table}
\begin{tabular}{@{}c|c|ccc|cccc|ccc@{}}
\multicolumn{1}{l|}{}                                                           & \multicolumn{1}{l|}{} & \multicolumn{3}{c|}{Left-houses $(H_L)$} & \multicolumn{4}{c|}{Dummy houses $(D)$} & \multicolumn{3}{c}{Right-houses $(H_R)$} \\ \midrule
\multicolumn{1}{l|}{}                                                           &                       & $h_1$        & $h_2$       & $h_3$       & $h_4$    & $h_5$    & $h_6$   & $h_7$   & $h_8$       & $h_9$       & $h_{10}$       \\ \midrule
\multirow{5}{*}{\begin{tabular}[c]{@{}c@{}}Left-agents\\ $(A_L)$\end{tabular}}  & $a_1$                 & 1            & 0           & 0           & 0        & 0        & 0       & 0       & 0           & 0           & 0            \\
                                                                                & $a_2$                 & 1            & 1           & 0           & 0        & 0        & 0       & 0       & 0           & 0           & 0            \\
                                                                                & $a_3$                 & 1            & 1           & 0           & 0        & 0        & 0       & 0       & 0           & 0           & 0            \\
                                                                                & $a_4$                 & 1            & 1           & 1           & 0        & 0        & 0       & 0       & 0           & 0           & 0            \\
                                                                                & $a_5$                 & 1            & 1           & 1           & 0        & 0        & 0       & 0       & 0           & 0           & 0            \\ \midrule
\multirow{4}{*}{\begin{tabular}[c]{@{}c@{}}Right-agents\\ $(A_R)$\end{tabular}} & $a_6$                 & 0            & 0           & 0           & 0        & 0        & 0       & 0       & 1           & 1           & 1            \\
                                                                                & $a_7$                 & 0            & 0           & 0           & 0        & 0        & 0       & 0       & 1           & 1           & 1            \\
                                                                                & $a_8$                 & 0            & 0           & 0           & 0        & 0        & 0       & 0       & 0           & 1           & 1            \\
                                                                                & $a_9$                 & 0            & 0           & 0           & 0        & 0        & 0       & 0       & 0           & 0           & 1           
\end{tabular}
\caption{An example of an extremal instance $\mathcal{I} = (A, H, \mathcal{P}; k)$, where $D \subseteq H$ denotes the dummy houses. Once Reduction Rules~\ref{rr1}, \ref{rr2} and \ref{rr5} are no longer applicable, then there are no dummy agents, but dummy houses must necessarily exist, i.e., $D \neq \emptyset$; and we must have $\card{H_L} < \card{A_L}$ and $\card{H_R} < \card{A_R}$.}
\label{table:extremal}
\end{table}

Consider an extremal instance $\mathcal{I} = (A, H, \mathcal{P}; k)$ of $[\nicefrac{0}{1}]$-HA. That is, there is an ordering $\sigma$ of the houses, say, $\sigma = (h_1,\ldots,h_m)$ such that for every agent $a \in A$, either $\mathcal{P}(a) = \phi$ or there exists an index $i(a)$ such that $1 \leq i(a) \leq m$ and either $\mathcal{P}(a) = \set{h_1, h_2,\ldots, h_{i(a)}}$ or $\mathcal{P}(a) = \set{h_m, h_{m-1},\ldots, h_{{i(a)}}}$. (See~\Cref{table:extremal} for an example.) If $\mathcal{P}(a) = \phi ~\text{or}~ \set{h_1, h_2,\ldots, h_{i(a)}}$ for every $a \in A$, then we say that the instance $\mathcal{I}$ is left-extremal. If $\mathcal{P}(a) = \phi ~\text{or}~ \set{h_m, h_{m-1},\ldots, h_{i(a)}}$ for every $a \in A$, then we say that $\mathcal{I}$ is right-extremal. 

We can check in polynomial time whether a given instance is (left/right)-extremal, and if so, then find the ordering $\sigma$ on the houses. Also, note that removing a subset of houses and agents from an extremal instance does not destroy the extremal property. That is, if $\mathcal{I} = (A, H, \mathcal{P}, k)$ is extremal, then so is $\mathcal{I'} = (A', H', \mathcal{P}')$, where $A' = A \setminus X$ and $H' = H \setminus Y$ and $\mathcal{P}'$ is the restriction of $\mathcal{P}$ to $(A \cup H) \setminus (X \cup Y)$. So, in particular, we can safely apply Reduction Rules~\ref{rr2} and \ref{rr5} to extremal instances. 

Now, consider an extremal instance $\mathcal{I} = (A, H, \mathcal{P}; k)$, which is irreducible with respect to Reduction Rules~\ref{rr1},~\ref{rr2} and \ref{rr5}. Let $\sigma = (h_1, h_2,\ldots, h_m)$ be an extremal ordering on the houses. Let $D'$ be the set of dummy agents. Due to \Cref{rr5}, we have $|D'| = \phi$ in the $\mathcal{I}$.  Let $D
$ be the set of dummy houses in $\mathcal{I}$. Then, we have seen that $\card{H \setminus D} \leq n-1$. Therefore, $D \neq \emptyset$. Let $h_d, h_{d'} \in D$ be such that $h_d$ is the first dummy house and $h_{d'}$ is the last dummy house in the ordering $\sigma$. It may be the case that $d = d'$. Then, for every $i$, where $d \leq i \leq d'$, the house $h_{i}$ is a dummy house. Let $H_L = \set{h_1, h_2, \ldots, h_{d - 1}}$ and $H_R =\set{h_{d'+1}, h_{d'+2}, \ldots, h_{m}}$. We call the houses in $H_L$ the left-houses and the houses in $H_R$ the right-houses. Assume that the reduced instance contains both left and right houses, that is, $|H_L| > 0$ and $H_R > 0$. For every agent $a \in A$, either $\mathcal{P}(a) \subseteq H_L$, in which case we call the agent $a$ a left-agent, or $\mathcal{P}(a) \subseteq H_R$, in which case we call the agent $a$ a right-agent. Let $A_L$ and $A_R$ respectively denote the set of left and right agents. See~\Cref{table:extremal}. Thus, we have a partition of $A$ into $A_L$ and $A_R$ and a partition of $H$ into $H_L, H_R$ and $D$. Notice now that if $\card{H_L} \geq \card{A_L}$ or $\card{H_R} \geq \card{A_R}$, then~\Cref{rr2} would apply. We thus have $\card{A_L} > \card{H_L}$ and $\card{A_R} > \card{H_R}$. For an allocation $\fn{\Phi}{A}{H}$ and an agent $a \in A$, we say that $a$ is extremality-respecting under $\Phi$ if either (a) $a \in A_L$ and $\Phi(a) \in H_L \cup D$ or (b) $a \in A_R$ and $\Phi(a) \in H_R \cup D$. We say that $\Phi$ is extremality-respecting if every agent in $A_L \cup A_R$ is extremality-respecting under $\Phi$. We now claim that there exists an extremality-respecting optimal allocation, irrespective of whether we are dealing with an instance of $[\nicefrac{0}{1}]$-OHA, $[\nicefrac{0}{1}]$-EHA or $[\nicefrac{0}{1}]$-UHA. 

\begin{claim}
\label{claim:extremal-respect}
There exists an extremality-respecting optimal allocation for any instance irreducible with respect to the Reduction Rules \ref{rr1}, \ref{rr2} and \ref{rr5}. 
\end{claim}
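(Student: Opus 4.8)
The plan is to start from an arbitrary optimal allocation $\Phi$ (for whichever of the three objectives we are considering) and repeatedly perform local, envy-non-increasing swaps until the allocation becomes extremality-respecting, using the number of agents that are \emph{not} extremality-respecting as a potential that strictly decreases at each step. Call such an agent a \emph{violating} agent; by the definition above, a violating agent is either a left-agent holding a right-house or a right-agent holding a left-house. The central observation driving every swap is that we only ever exchange two already-allocated houses between two agents, so the multiset of allocated houses is unchanged. Consequently, for every agent $c$ not directly involved in the swap, the set of houses that $c$ values and that are allocated is unchanged, and hence $\kappa_\Phi(c)$ is unchanged; only the two swapped agents can see their envy change, and we will arrange that it never increases.

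Concretely, suppose $\Phi$ has a violating agent; by symmetry assume it is a left-agent $a_L$ with $\Phi(a_L) = h_R \in H_R$, and note that $a_L$ does not value $h_R$ since $\mathcal{P}(a_L) \subseteq H_L$. I would split into two cases. First, if some right-agent $a_R$ holds a left-house $h_L \in H_L$ (which it does not value, as $\mathcal{P}(a_R) \subseteq H_R$), I swap these houses: set $\Phi'(a_L) = h_L$ and $\Phi'(a_R) = h_R$. Each of $a_L, a_R$ moves from an unvalued house to a possibly-valued one, so neither amount of envy increases, both agents become extremality-respecting, and the potential drops by two. Second, if no right-agent holds a left-house, I invoke the counting guarantee $\card{A_R} > \card{H_R}$ established just before the claim: the right-agents then receive only right-houses or dummy houses, but at most $\card{H_R}-1$ right-houses are available to them (since $a_L$ occupies one), so at least $\card{A_R} - \card{H_R} + 1 \ge 2$ right-agents receive dummy houses. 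Picking such a right-agent $a_R$ with $\Phi(a_R) = h_d \in D$, I swap $h_R$ and $h_d$: now $a_L$ moves between two unvalued houses (its envy is unchanged) while $a_R$ moves from a dummy house to a possibly-valued right-house (its envy does not increase), $a_L$ becomes extremality-respecting, and the potential drops by at least one. The symmetric situation, where the violating agent is a right-agent holding a left-house, is handled identically using $\card{A_L} > \card{H_L}$.

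Since the potential is a non-negative integer that strictly decreases at each step, the process terminates at an extremality-respecting allocation whose objective value is no larger than that of the optimal $\Phi$, and hence is itself optimal. The argument is uniform across OHA, EHA, and UHA precisely because every swap leaves each agent's individual envy $\kappa_\Phi(\cdot)$ non-increased, and the three aggregate measures---the number of envious agents $\kappa^\#$, the maximum envy $\kappa^\dagger$, and the total envy $\kappa^\star$---are all monotone in the vector of individual envies. I expect the main obstacle to be the bookkeeping in the second case: verifying the counting bound that forces some right-agent onto a dummy house, and checking carefully that each swap genuinely never increases the envy of any third agent, which rests entirely on the invariant that the set of allocated houses is preserved throughout the whole sequence of swaps.
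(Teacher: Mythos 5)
Your proposal is correct and follows essentially the same route as the paper: both arguments fix a violating left-agent holding a right-house, use the bound $\card{A_R} > \card{H_R}$ to find a right-agent whose house lies in $H_L \cup D$, and swap, observing that the set of allocated houses is preserved so no third party's envy changes and neither swapped agent's envy increases. The paper folds your two cases into one (the partner's house is simply ``in $H_L \cup D$'') and phrases termination as choosing an optimal allocation maximizing the number of extremality-respecting agents rather than running your decreasing potential, but these are cosmetic differences.
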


\begin{proof}
Let $\fn{\Phi}{A}{H}$ be an optimal allocation that maximizes the number of extremality-respecting agents.  If $\Phi$ is extremality-respecting, then the claim trivially holds. So, assume not. Then, assume without loss of generality that there exists $a \in A_L$ with $\Phi(a) \in H_R$. (The case when $a \in A_R$ with $\Phi(a) \in H_L$ is symmetric.)

% We will construct an allocation $\Phi'$ where the number of extremality-respecting agents is strictly greater than that in $\Phi$, which will contradict the assumption. 
% If $|H_L| \geq |A_L|$ ($|H_R| \geq |A_R|$), then we apply the reduction rule \ref{rr2} to $|H_L| \geq |A_L|$ ($|H_R| \geq |A_R|$). This gives us $|H_L| < |A_L|$ and $|H_R| < |A_R|$.

Since $\card{H_R} < \card{A_R}$, there exists an agent $a' \in A_R$ such that $\Phi(a') \notin H_R$. Then, $\Phi(a') \in H_L \cup D$. Let $\Phi'$ be the allocation obtained by swapping the houses of $a$ and $a'$. This house-swapping between $a$ and $a'$ does not cause any increase in the number of envious agents or the envy experienced by any agent. So we have $\kappa^{\#}(\Phi') \leq \kappa^{\#}(\Phi)$, $\kappa^{\dagger}(\Phi') \leq \kappa^{\dagger}(\Phi)$ and $\kappa^{\star}(\Phi') \leq \kappa^{\star}(\Phi)$. Therefore, $\Phi'$ is optimal. In addition, note that $\Phi'(a) = \Phi(a') \in H_L \cup D$, and $\Phi'(a') = \Phi(a) \in H_R$, and hence, both $a$ and $a'$ are extremality-respecting under $\Phi$. That is, the number of extremality-respecting agents under $\Phi'$ is strictly greater than that in $\Phi$, a contradiction. 
\end{proof}

\begin{remark}
\label{remark:extremal}
\Cref{claim:extremal-respect} shows that whenever dealing with an instance $\mathcal{I} = (A, H, \mathcal{P}, k)$ of $[\nicefrac{0}{1}]$-HA, where $\mathcal{I}$ is extremal, we only need to look for an extremality-respecting optimal allocation, say $\Phi$. Note that there are no dummy agents in the reduced instance (\Cref{rr5}). Now suppose $n_L$ dummy houses get allocated to left-agents under $\Phi$, and $n_R$ dummy houses get allocated to right-agents under $\Phi$. Hence, we can guess the pair $(n_L, n_R)$ and split $\mathcal{I}$ into two instances, $\mathcal{I}_L$ and $\mathcal{I_R}$, where $\mathcal{I}_L$ consists of the left-agents, the left-houses and $n_L$ dummy houses, and $\mathcal{I}_R$ consists of the right-agents, the right-houses and $n_R$ dummy houses. Thus, $\mathcal{I}_L$ is left-extremal and $\mathcal{I}_R$ is right-extremal. We only need to solve the problem separately on $\mathcal{I}_L$ and $\mathcal{I}_R$. Notice that the number of guesses for the pair $(n_L, n_R)$ is at most $n^2$. By reversing the ordering on the houses in the instance $\mathcal{I}_R$, we can turn $\mathcal{I}_R$ into a left-extremal instance as well. So, it suffices to solve the problem for left-extremal instances. Hence, whenever dealing with an extremal instance $\mathcal{I}$, we assume without loss of generality that $\mathcal{I}$ is left-extremal. 
\end{remark}

\begin{remark}
\label{remark:left-extremal}
Consider a left-extremal instance $\mathcal{I} = (A, h, \mathcal{P}, k)$. Let $\sigma = (h_1, h_2,\ldots, h_m)$ be a left-extremal ordering on the houses. So for every agent $a \in A$, there exists $i(a) \in [n]$ such that $\mathcal{P}(a) = \set{h_1, h_2,\ldots,h_{i(a)}}$. Notice that $\sigma$ imposes an ordering on the agents, say $\sigma_A = (a_1, a_2,\ldots,a_n)$ so that $\mathcal{P}(a_1) \subseteq \mathcal{P}(a_2) \subseteq \cdots \subseteq \mathcal{P}(a_n)$. Notice also that we can check in polynomial time if a given instance is left-extremal, and if so, find a left-extremal ordering $\sigma$ on the houses and then the ordering $\sigma_A$ on the agents. So, whenever dealing with a left-extremal instance $\mathcal{I}$, we assume without loss of generality that $\sigma$ and $\sigma_A$ are given. Whenever we talk about, for example, the ``first/last house,'' we always mean the first/last house with respect to the ordering $\sigma$. Same with the ordering $\sigma_A$ and the ``first/last agent.''
\end{remark}

\section{Optimal House Allocation}
\label{sec:oha}
In this section, we deal with the \OHA\ problems, where the goal is to minimize the number of envious agents. We start by discussing the cases for which we have polynomial time algorithms for \OHA. 
\subsection{Polynomial Time Algorithms for OHA}
We first prove that $[\nicefrac{0}{1}]$-\OHA{} is polynomial-time solvable on instances with an extremal structure. 
\begin{theorem}
\label{lem:oha-ext} There is a polynomial-time algorithm for~$[\nicefrac{0}{1}]$-\OHA{} when the agent valuations have an extremal interval structure.
\end{theorem}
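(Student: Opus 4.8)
The plan is to reduce to left-extremal instances and then exploit the nested valuation structure via a threshold-plus-matching argument. By \Cref{claim:extremal-respect} and \Cref{remark:extremal}, after applying Reduction Rules~\ref{rr1}, \ref{rr2}, and \ref{rr5} it suffices to solve the problem on left-extremal instances: guessing the split $(n_L,n_R)$ of the dummy houses costs only a factor of $n^2$, and the number of envious agents is additive across the two sides since a left-agent can only envy over left-houses. So by \Cref{remark:left-extremal} I may assume the agents are ordered $a_1,\ldots,a_n$ with nested valuations $\mathcal{P}(a_1)\subseteq\cdots\subseteq\mathcal{P}(a_n)$, where $\mathcal{P}(a_j)=\{h_1,\ldots,h_{i(a_j)}\}$; let $h_1,\ldots,h_t$ be the non-dummy houses. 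Minimizing the number of envious agents is the same as maximizing the number of envy-free agents, and in the binary setting an agent $a$ is envy-free under $\Phi$ exactly when either $a$ receives a house she values, or none of the houses in $\mathcal{P}(a)$ is allocated at all.

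The key point---and the reason a plain maximum-matching approach fails---is this second route to envy-freeness: an agent can be made envy-free ``for free'' by leaving all of her valued houses unallocated. First I would observe that, because valuations are nested prefixes, all that matters about the unallocated valued houses is captured by a single threshold: let $q$ be the largest index such that $h_1,\ldots,h_q$ are all unallocated under a given allocation. Then every agent with $i(a)\le q$ is automatically envy-free, while every agent with $i(a)\ge q+1$ values $h_{q+1}$ (which is allocated, as $h_1,\ldots,h_q$ is the largest unallocated prefix) and hence is envy-free only if she herself receives one of her valued houses among $h_{q+1},\ldots,h_{i(a)}$. This motivates, for each threshold $q$, defining $f(q)=|\{a : i(a)\le q\}|$ and $\mu(q)$ as the size of a maximum matching of the agents with $i(a)\ge q+1$ into the houses $\{h_{q+1},\ldots,h_t\}$, where $a$ is eligible for $h_s$ iff $q+1\le s\le i(a)$.

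The algorithm then iterates over all feasible thresholds $q$ with $0\le q\le\min(t,m-n)$ and returns $\max_q\,[f(q)+\mu(q)]$ as the maximum number of envy-free agents, reporting $n$ minus this value as $\kappa^\#(\mathcal{I})$. I would prove correctness in two directions. For the upper bound, given any allocation I take its threshold $q$ and note that its envy-free agents split into those with $i(a)\le q$ (at most $f(q)$ of them, since such agents cannot hold a valued house) and those receiving a valued house among $h_{q+1},\ldots,h_{i(a)}$ (a matching of size at most $\mu(q)$); moreover, leaving $h_1,\ldots,h_q$ unallocated while allocating $n$ houses forces $q\le m-n$. For the lower bound, for each feasible $q$ I construct an allocation that leaves $h_1,\ldots,h_q$ unallocated, realizes a maximum matching of size $\mu(q)$ on the high-index agents, and hands out dummy (and leftover) houses to everyone else; feasibility is exactly the condition $m-q\ge n$, and one checks that protecting the prefix keeps every low-index agent envy-free and that allocating leftover valued houses to the remaining agents harms no one already counted as envy-free.

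Finally I would bound the running time: there are $O(m)$ thresholds, each requiring a single bipartite maximum-matching computation (the prefix counts $f(q)$ are immediate from the ordering $\sigma_A$), so the procedure is polynomial, and the outer $O(n^2)$ guessing of $(n_L,n_R)$ preserves polynomiality. The main obstacle I anticipate is precisely isolating and handling the ``envy-free for free'' route: one must argue that, thanks to the nested structure, leaving valued houses unallocated is summarized by a single prefix threshold, and that the feasibility bound $q\le m-n$ (rather than $q\le t$) is the exact limit on how large this protected prefix can be.
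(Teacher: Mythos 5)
Your proposal is correct, and it yields the same overall algorithm shape as the paper's proof --- reduce to left-extremal instances via \Cref{remark:extremal}, enumerate $O(m)$ prefix boundaries, and solve one bipartite matching per boundary --- but it rests on a different key lemma. The paper first proves, by an exchange argument, that some optimal allocation allocates a \emph{contiguous interval} of houses; it then guesses the first allocated house and invokes the $m=n$ subroutine of \Cref{prop:mequalsn-oha} on the next $n$ consecutive houses. You instead characterize the optimum directly as $\max_q\,[f(q)+\mu(q)]$ over the length $q$ of the unallocated prefix, using the observation that under nested binary valuations an agent is envy-free exactly when she receives a valued house or all of her valued houses lie in the unallocated prefix. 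This sidesteps the interval-structure claim entirely: your constructed allocations need not allocate an interval, since the matching $\mu(q)$ ranges over the whole suffix $h_{q+1},\ldots,h_t$ and the leftover houses are assigned arbitrarily, and your upper-bound argument works for \emph{every} allocation rather than only for a normalized one. A further small advantage is that your $f(q)$ term explicitly accounts for agents all of whose valued houses are unallocated; in the paper's reduction these agents become dummy agents of the $m=n$ sub-instance and are envy-free without being matched, a case that \Cref{prop:mequalsn-oha} (which assumes every agent values some house) handles only implicitly. Both proofs give the same polynomial running time; yours is somewhat more self-contained, while the paper's reuses the $m=n$ machinery.
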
 

\begin{proof} In light of~\Cref{remark:extremal}, it suffices to prove for the case when agent valuations are left extremal. Let $\mathcal{I}:= (A,H,\mathcal{P}; k)$ denote an instance of HA with left extremal valuations, which is irreducible with respect to~\Cref{rr1}, \ref{rr2} and \ref{rr5}. 

We first make the following claim:

\begin{claim}
Given an instance of~$[\nicefrac{0}{1}]$- \OHA{} when the agent valuations are left extremal, there exists an optimal allocation where the set of allocated houses form an interval.
\end{claim}

\begin{proof} Suppose we start with an optimal allocation $\Phi$ under which the set of allocated houses does not form an interval. Let $h_u$ be an unallocated house such that $h_l$ and $h_r$ are allocated, where $l<u<r$. We will show that we can allocate $h_u$ instead of $h_r$ without any increase in the envy, and hence iteratively convert $\Phi$ to another optimal allocation $\Phi'$ under which the set of allocated houses do form an interval. 

Suppose no agent is envious of the allocation of $h_r$. Then, we can allocate $h_u$ to the recipient of $h_r$ without increasing the envy. Indeed, everyone who values $h_r$ also values $h_u$ because of the interval structure, so the envy of the recipient does not increase.
Since no one envies the allocation of $h_r$,
no one should become newly envious of the allocation of $h_u$ as well, since it lies to the left of $h_r$. If any agent is envious in this re-allocation because of $h_u$, it must be already envious due to the allocation of $h_l$ which lies to the left of $h_u$. So, whoever gets $h_r$ can get $h_u$ without any increase in the envy.

On the other hand, suppose we have an agent $a$ who is envious due to the allocation of $h_r$. Since $a$ values $h_r$, due to the interval structure, she must value $h_u$ as $u<r$. If we swap $\Phi(a)$ with $h_u$, then $a$ ceases to be envious. Also, note that allocation of $h_u$ does not create any new envious agent---indeed if an agent $a^\prime$
was not envious before $h_u$ was allocated, it means either she got a house she likes or her interval ended before $h_l$. In either case, the allocation of $h_u$ can not be a cause of envy to her. This contradicts the fact that we started with an optimal allocation. 
Also, notice that all the dummy houses (if any), in the reduced instance, lie to the extreme right of all the houses that are valued. The allocation of a dummy house can always be done respecting the interval property, as agents do not distinguish between any two dummy houses.
\end{proof}

Based on the above claim, our algorithm $\textsc{Alg}$ works as follows. It enumerates over all possibilities of the first allocated house in the optimal allocation where the set of allocated houses forms an interval. There are at most $m-n$ such choices, in particular, the first $m-n$ houses. For each such $h_i$, $\textsc{Alg}$ chooses the next consecutive $n$ houses to be allocated. This reduces the instance to the one where $m=n$ and by \Cref{prop:mequalsn-oha}, this can be done in polynomial time.
 
To see the correctness, in one direction, if under any iteration $i$, the allocation $\Phi$ constructed by $\textsc{Alg}$ has at most $k$ envious agents, then it returns~\textsc{Yes}, and the allocation is the witness that $\mathcal{I}$ is a~\textsc{Yes}~instance. 
In the other direction, if $\mathcal{I}$ is a~\textsc{Yes}~instance, then there exists an optimal allocation~\textsc{Opt}~ such that $\kappa^\#(\textsc{Opt}) \leq k$ and it allocates the consecutive houses, say $[h_i, h_{i+n}]$. The algorithm captures this optimal allocation when it iterates over the house $h_i$, and hence returns~\textsc{Yes}. 
\end{proof}

We now present our algorithms for the cases when (a)every agent approves exactly one house and (b)every house is approved by almost two agents. To that end, we first present a reduction rule that will be helpful in the following two results.   
\begin{reductionrule}
\label{rr3}
For every house $h$ such that $d(h) = 1$, we allocate $h$ to $N(h$). 
\end{reductionrule}
The above reduction is safe.\footnote{This rule is subsumed by \Cref{rr2} when $|H \setminus D| \geq |A|$.} Indeed, if $d(h)=1$, then no one except exactly one agent $a$ values the house $h$, and so allocating $h$ to $a$ does not generate any envy.

\begin{theorem}
\label{lem:oha-onehouse} There is a polynomial-time algorithm for~$[\nicefrac{0}{1}]$-\OHA{} when every agent approves exactly one house.
\end{theorem}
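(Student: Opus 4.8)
The plan is to exploit the fact that, since every agent approves exactly one house, the preference graph $G$ is a disjoint union of stars: each non-dummy house $h$ together with its approver set $\mathcal{T}(h)$ forms a star, and these stars are vertex-disjoint on the agent side (an agent lies in exactly one $\mathcal{T}(h)$). The first step is to record the resulting characterization of envy. Writing $h(a)$ for the unique house that agent $a$ values, an agent $a$ is envious under $\Phi$ if and only if $\Phi(a) \neq h(a)$ and $h(a)$ is allocated to some other agent; in that case $a$ envies exactly the one recipient of $h(a)$. Consequently, for a single non-dummy house $h$ the number of its approvers that end up envious is $0$ if $h$ is left unallocated, exactly $|\mathcal{T}(h)| - 1$ if $h$ is given to one of its approvers, and $|\mathcal{T}(h)|$ if $h$ is given to a non-approver. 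Because the approver sets are pairwise disjoint, these contributions are independent across houses and each allocated non-dummy house can be routed to one of its own approvers with no conflict; hence we may assume every allocated non-dummy house goes to an approver.

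With this in hand, I would argue that choosing an allocation amounts to choosing which $m-n$ houses to leave unallocated. For an unallocated set $U$ with $|U| = m-n$, the best achievable number of envious agents is $\sum_{h \in (H \setminus D)\setminus U}(|\mathcal{T}(h)|-1)$. The next step is a counting check that every such $U$ is realizable: give each allocated non-dummy house to a distinct approver (possible by disjointness of the $\mathcal{T}(h)$), and match the remaining ``displaced'' agents to the allocated dummy houses. A short computation using $\sum_{h} |\mathcal{T}(h)| = n$ shows that for every $U$ with $|U| = m-n$ the number of displaced agents equals the number of allocated dummy houses, so this bijection always exists and the allocation is valid.

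It then remains to minimize $\sum_{h \in (H\setminus D)\setminus U}(|\mathcal{T}(h)|-1)$ over $U$. Since $\sum_{h\in H\setminus D}(|\mathcal{T}(h)|-1)$ is a constant, this is equivalent to maximizing $\sum_{h\in (H\setminus D)\cap U}(|\mathcal{T}(h)|-1)$ subject to leaving at most $\min\{m-n,\,|H\setminus D|\}$ non-dummy houses unallocated (any residual budget is spent on dummy houses). As every term $|\mathcal{T}(h)|-1$ is non-negative, the optimum is obtained greedily: sort the non-dummy houses by $|\mathcal{T}(h)|$ and leave unallocated those with the largest $|\mathcal{T}(h)|$, up to the allowed count; then allocate each remaining non-dummy house to an approver and hand the displaced agents the leftover (dummy) houses arbitrarily. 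The algorithm outputs this allocation, reports its number of envious agents, and answers \textsc{Yes} if and only if this value is at most $k$. Computing the sets $\mathcal{T}(h)$, sorting, and assembling the allocation all run in polynomial time.

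I expect the only genuine obstacle to be the feasibility/counting step, namely guaranteeing that the greedy choice of unallocated houses can always be completed to an injective allocation, and in particular that there are always exactly enough dummy houses to absorb the displaced agents. The optimization itself is a transparent greedy once the star structure reduces each house's contribution to the independent quantity $|\mathcal{T}(h)|-1$.
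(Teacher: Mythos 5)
Your proposal is correct and follows essentially the same route as the paper: both rest on the observation that, because the approver sets $\mathcal{T}(h)$ are pairwise disjoint, an allocated non-dummy house $h$ routed to one of its approvers contributes exactly $|\mathcal{T}(h)|-1$ envious agents independently of all other houses, and both then greedily leave the houses with the largest approver sets unallocated while allocating the least contentious ones to approvers and absorbing the remaining agents with dummy houses. The only difference is presentational: the paper sets up feasibility via its expansion-lemma reduction rules, whereas you verify it directly with the counting argument that the number of displaced agents equals the number of allocated dummy houses.
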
 

\begin{proof} 

Let $\mathcal{I}:= (A,H,\mathcal{P}; k)$ denote an instance of HA such that every agent approves exactly one house. Consider the associated preference graph $G = (A \cup H, E)$. We first apply the reduction rules  \ref{rr1}, \ref{rr2} and \ref{rr3}. In the reduced instance, we have $|A| \leq |H| \leq 2(|A|-1)$ and the degree of any house $h$ in $G$ is strictly greater than $1$, that is, $d(h)>1$.

We order the houses in the reduced instance in $H \setminus D$ as $\{h_1, h_2, \ldots h_t\}$ such that $d(h_1) \geq d(h_2) \geq \ldots d(h_{t})$, where $t$ denotes the number of houses in $H \setminus D$. (Note that $t \geq 1$, since $|D| \leq n-1$.) The last $|A|-|D|$ of these houses are then allocated to their neighbors (chosen arbitrarily). The remaining $|D|$ agents get the $|D|$ dummy houses. 

We argue the correctness of the above algorithm $\textsc{Alg}$. We show that the number of envious agents under the allocation $\Phi$ returned by $\textsc{Alg}$ is equal to the number of envious agents under some optimal allocation.
We first claim that in any optimal allocation~\textsc{Opt}, all the dummy houses must be allocated. Suppose $h^\star \in D$ is unallocated. Consider a house $h_i$ allocated to some agent $a$ who approves it. This causes $d(h_i)-1$ agents to have envy, and since every agent likes exactly one house, there is no way these $d(h_i)-1$ agents can become envy-free once $h_i$ is allocated. Since $h^\star$ is available, we can allocate it to $a$ and add $h_i$ to the set of unallocated houses. This decreases the number of envious agents by $d(h_i)-1$, without adding to the envy of anyone else. This contradicts the fact that we started with an optimal allocation. Therefore all the dummy houses must be allocated.  \\
Now, consider the set $S$ of houses in $H \setminus D$ that are allocated by~\textsc{Opt}. Let $T$ be the set of such houses allocated under $\Phi$. $S$ and $T$ both contain exactly $|A|-|D|$ many houses from $H \setminus D$. Consider any two house $h_i$ and $h_j$ in $H \setminus D$ such that $d(h_i)>d(h_j)$. Note that the allocation of $h_i$ creates $d(h_i)-1$ many envious agents, strictly greater than the number of envious agents created by the allocation of $h_j$, which is $d(h_j)-1$. Therefore, $S$ contains $|A|-|D|$ houses of the least degree from among the set of $H \setminus D$ houses. Since $T$ also contain the $|A|-|D|$ houses of the least degree, therefore $\kappa^\#(\Phi) = \kappa^\#(\textsc{Opt})$. 
\end{proof}

We now state the algorithm when the house degree is bounded, while the agent degree is not.

\begin{theorem}
\label{lem:oha-twoagents} There is a polynomial-time algorithm for~$[\nicefrac{0}{1}]$-\OHA{} when every house is approved by at most two agents.
\end{theorem}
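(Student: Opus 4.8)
The plan is to reduce the problem to a purely graph-theoretic optimization on an auxiliary multigraph on the agents, and then solve that optimization greedily.

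First I would preprocess using Reduction Rules~\ref{rr1}, \ref{rr2}, \ref{rr3} (and \ref{rr5}) exactly as in the proof of~\Cref{lem:oha-onehouse}, so that every non-dummy house has degree exactly $2$, there are no dummy agents, $\card{D} \le n-1$, and $\card{H \setminus D} \le n-1 < n$. Since $\card{H} \ge n$, this already forces $\card{D} \ge 1$. I would then build the auxiliary multigraph $G'$ on vertex set $A$ with one edge $e_h = \{a,a'\}$ for every non-dummy house $h$, where $a,a'$ are the two agents who approve $h$ (parallel edges allowed, no loops). Write $q := n - \card{D} \ge 1$ and note $q \le \card{H \setminus D}$, i.e.\ $G'$ has at least $q$ edges.

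The heart of the argument is the following translation. For any allocation $\Phi$, let $S \subseteq E(G')$ be the set of non-dummy houses it allocates; since $\Phi$ allocates $n$ houses of which at most $\card{D}$ are dummies, $\card{S} \ge q$. I claim the minimum number of envious agents among allocations with this allocated set equals the number of \emph{tree components} $t(S)$ of the subgraph $(V(S),S)$. The two observations are: (i) an agent is envy-free either because she receives a house she values (she is the ``head'' of an edge of $S$ oriented to her) or because none of her valued houses is allocated to anyone else (she is not incident to $S$); hence the envious agents are exactly the vertices incident to $S$ that are not heads, numbering $\card{V(S)} - (\text{number of distinct heads})$; and (ii) the maximum number of distinct heads realisable is the size of a maximum pseudoforest (each component at most unicyclic) contained in $S$, which equals $\card{V(S)} - t(S)$. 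So the best number of envious agents for a fixed $S$ is $t(S)$. Conversely, any $S$ with $\card{S}\ge q$ is realisable: orient a maximum pseudoforest of $S$, hand the leftover houses $S\setminus(\text{pseudoforest})$ to distinct \emph{non}-approving unsatisfied agents, and give dummies to the rest; the only thing to verify is that this leftover (mis)allocation completes, which follows from a routine Hall-type argument using $\card{S}\le n$ and the fact that each house forbids only its at most two approvers. Hence
\[ \kappa^\#(\mathcal{I}) \;=\; \min\{\, t(S) \;:\; S \subseteq E(G'),\ \card{S} \ge q \,\}. \]

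Finally I would solve this optimization greedily. Decompose $G'$ into connected components and classify each as \emph{cyclic} (edges $\ge$ vertices) or a \emph{tree}. A whole cyclic component contributes $0$ to $t(S)$, so all cyclic components are ``free''; let $E_c$ be their total edge count. If $E_c \ge q$, then $\kappa^\#(\mathcal{I}) = 0$. Otherwise every edge taken from a tree component opens at least one tree component, and a tree on $v$ vertices supplies at most $v-1$ edges, so I would take all cyclic components and then open tree components largest-first until the edge total reaches $q$, reporting the number opened. An exchange argument shows this is both achievable (a connected sub-forest opens exactly one tree component, and $q \le \card{H\setminus D}$ guarantees the demand is met) and optimal (covering the residual demand $q-E_c$ with fewest capacitated trees is exactly largest-first). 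All steps run in polynomial time, and comparing the optimum to $k$ decides the instance.

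The main obstacle is establishing the equivalence $\kappa^\#(\mathcal{I}) = \min_{\card{S}\ge q} t(S)$ cleanly, in particular handling allocations that give a non-dummy house to a non-approver (these genuinely occur when $\card{D}$ is small) and checking, via the maximum-pseudoforest identity, that such ``wasted'' houses never help beyond what $t(S)$ already records. Once this structural lemma is in place, the greedy and its correctness are routine.
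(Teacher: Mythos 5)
Your proposal is correct and is essentially the paper's algorithm in different clothing: your auxiliary multigraph $G'$ is just the agent-side view of the preference graph, ``take all cyclic components for free'' is the paper's \Cref{rr4} (allocating houses along cycles), and opening tree components largest-first to cover the demand $q = n - \card{D}$ is exactly complementary to the paper's step of sorting trees by increasing size and giving entire small trees dummy houses until the dummy budget is exhausted, with the answer in both cases being the number of trees whose houses get allocated. The one genuine difference is how correctness is packaged: you prove the identity $\kappa^\#(\mathcal{I}) = \min\{t(S) : \card{S} \ge q\}$ once and for all via the matching-deficiency/pseudoforest bound (max heads $= \card{V(S)} - t(S)$), whereas the paper argues by a per-tree exchange argument that an optimal allocation has zero or exactly one envious agent per tree; your version is arguably tidier, and the completion step you flag as the ``main obstacle'' is in fact immediate, since by maximality of the edge-to-endpoint matching no leftover house can have an unmatched approver, so handing leftovers to arbitrary non-head agents never changes the envy count.
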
 

\begin{proof}
Consider the agent-house bipartite preference graph $G$. We first apply the reduction rules \ref{rr1}, \ref{rr2}, \ref{rr5} and \ref{rr3}. Then, we apply the following reduction rule.

\begin{reductionrule}
\label{rr4} 
For every cycle $C = (h_1, a_1, h_2, a_2, \ldots a_i, h_1)$ in $G$, allocate $h_i$ to the agent $a_i$. 
\end{reductionrule}
The above reduction is safe. Indeed, since $d(h) \leq 2 ~\forall~ h \in G$, a house that participates in a cycle $C$ is valued \emph{only} by the agents participating in the same cycle. Since $G$ is a bipartite graph, all cycles are of even length, so the number of agents in $C$($=|A_C|$), is equal to the number of houses in $C$ ($=|H_C|)$. This implies that every agent in $C$ can get a house she values from $C$. This does not make any agent outside $C$ envious. 

Consider the remaining graph $G$ after the application of reduction rules. Note that $G$ is a collection of trees, that is, $G = T_1, T_2, \ldots T_r$. Let $D$ be the set of dummy houses in $G$. We now describe the algorithm $\textsc{Alg}$.
First, sort the trees in increasing order of sizes, that is, $|T_1| \leq  |T_2|, \ldots \leq |T_r|$. For each tree $T_i$, we root $T_i$ at some leaf agent $a_j$ such that $d(a_j) = 1$. (After \Cref{rr3}, such a leaf agent in $T_i$ always exists, since every leaf house is allocated to the parent agent by the above reduction rule.) Let $n_1, n_2, \ldots n_r$ be the number of agents in the trees $T_1, T_2, \ldots T_r$ respectively. Let $j$ be the first index such that $(n_1+n_2 \ldots + n_j) + (r-j)> |D|$. Then, $(n_1+n_2 \ldots +n_{j-1})+(r-(j-1)) \leq |D|$. We allocate all the agents in $T_1, T_2, \ldots T_{j-1}$ a dummy house.
Then, the number of dummy houses that remain is $|D| - (n_1+n_2+\ldots+n_{j-1}) \geq  (r-(j-1))$. 
For the remaining trees, $T_j, \ldots T_r$, we match the non-root agents to their parent house and allocate the root agent a dummy house. There are $r-j+1$ root agents and there are at least so many dummy houses.

Note that only the root agents in the tree $T_j, T_{j+1}, \ldots T_r$ are the envious ones — indeed, such a root agent gets a dummy house but a house valued by her is allocated. Notice that every non-root agent in the above trees got a house she valued, and all the agents in the trees $T_1, T_2, \ldots T_{j-1}$ got a dummy house, and none of the houses they valued got allocated.

Therefore, under $\textsc{Alg}$, the number of envious agents = $r-j+1$ =  total amount of envy.

\begin{claim}$\textsc{Alg}$ returns an allocation that minimizes the number of envious agents.
\end{claim}

\begin{proof} Let $G$ be the reduced graph after the reduction rules. Note that $|H_{T_i}| = |A_{T_i}| - 1$ for any $T_i \in G$. Indeed, root $T_i$ at a house say $h_1$. Let $N(h_1) = a_1~\text{and}~a_2$. Since there is no leaf house, and $d(h) = 2 ~\forall~ h$, every $h \neq h_1$ is a parent to a unique agent $a \neq a_1, a_2$. This gives a bijection from $H_i \setminus h_1$ to $A_i \setminus \{a_1, a_2\}$. 

Let $\textsc{Opt}$ be the allocation that minimizes the number of envious agents in the reduced graph $G$. Let $l$ and $l^\prime$ be the number of envious agents under $\textsc{Opt}$ and the allocation returned by $\textsc{Alg}$ respectively.

If $l=l^\prime$, we are done. We will now show that $l \nless l^\prime$. Suppose, $l < l^\prime$. Notice that every tree $T_i$ has either no envious agents (in the case when every agent in $T_i$ gets a dummy house) or exactly one envious agent (in the case when houses from $T_i$ are allocated). Indeed, every tree is rooted at a house vertex which is of degree at most two. Also, if any of the leaf vertex in the tree $T_i$ is a house $h$, then by the structure of the tree, there is only one agent $a$ (namely, the parent of $h$ in the tree $T_i$) that values the house $h$. Therefore, $h$ can be safely assigned to $a$, without causing envy to anyone else. Therefore, we can assume that every leaf vertex in the tree is an agent. This implies that in any tree, there are more agents than houses. That is why, in case, any of the house from $T_i$ is allocated, then at least one agent will be envious. The only case where no agent from $T_i$ is envious is when none of the houses from $T_i$ are allocated and every agent in $T_i$ gets a dummy house. 

Now consider that there are at least two envious agents in a tree $T_i$ under OPT. Then, houses from $T_i$ must have been allocated. But if so, then consider the following re-allocation where every agent vertex receives its parent house vertex and one agent from the two that are incident to the root house $h$, say $a$, receives a dummy house. Note that $a$ is now the only envious agent in $T_i$. The agents, say in $T_j$, ( $\neq T_i$) who previously received houses from $T_i$ are re-allocated houses from $T_j$. If $T_j$ has more houses than agents, then everyone is allocated a house they value from $T_j$. Otherwise, agents receive their respective parent houses, and either one of the two agents incident to the root vertex ends up with a dummy house. 

The agents not in $T_i$ who might have previously received houses from $T_i$ under OPT are now allocated their respective parent houses, and either one of the two agents incident to the root vertex ends up with a dummy house. Note that such agents who are not in $T_i$ do not value the houses in $T_i$ and therefore do not differentiate between them and dummy houses. This implies that there is an allocation where at most one agent is envious in every tree. Therefore, if there are at least two envious agents in a tree $T_i$ in the OPT, then one of them can be made envy-free without an increase in the envy of any other agent. This would contradict the fact that the allocation was OPT to begin with.

So, under $\textsc{Opt}$, there are exactly $l$ trees that have exactly one envious agent. Since these $l$ agents did not get what they value, they must have got a dummy house because of the reduction rules \ref{rr3}, \ref{rr4}, and the fact that every tree has one agent more than the number of houses. Also, since there are no envious agents in the remaining $r-l$ trees (say, $T_{i_1}, T_{i_2}, \ldots T_{i_{r-l}})$, all the agents in these trees must have got dummy houses each. Therefore the number of dummy houses allocated under $\textsc{Opt}$ are: 
\begin{equation}
(n_{i_1} + n_{i_2}+ \ldots n_{i_{r-l}}) + l \geq (n_1+n_2+ \ldots + n_{r-l}) + l                       \end{equation}                     

Note that $l^\prime = r-j+1$ where $j$ is the first index such that $n_1+n_2 \ldots + n_j+ r-j> |D|$. Now since $l < l^\prime$, therefore, 
\begin{equation}
(r-l)>(r-l^\prime) \Rightarrow (r-l)\geq(r-l^\prime+1)=j \Rightarrow (r-l)\geq j
\end{equation}

This implies that there are at least $j$ envy-free trees under $\textsc{Opt}$. Suppose WLOG there are exactly $j$ envy-free trees under $\textsc{Opt}$. Then the number of dummy houses allocated under $\textsc{Opt}$: 

$(n_{i_1}+n_{i_2}+ \ldots n_{i_j})+ (r-j)\geq(n_1+n_2+ \ldots +n_j) + (r-j)> |D|$ which is a contradiction.
\end{proof}

This concludes the argument.
\end{proof}

% Notice that finding a maximum matching in $G$ and allocating the dummy houses to the remaining agents may not optimise the number of envy-free agents. Consider an instance three agents $\{a, b, c\}$ and four houses $\{1,2,3,4\}$, with $a$ and $b$ approving house $3$ and $c$ approving house $4$. Then maximum matching ends up with the allocation $\{(a, 3), (b, 1), (c, 4)\}$, with agent $b$ being the envious one, but there is an optimal allocation that has no envious agents, under which the contentious house $3$ remains unallocated. 

\subsection{Hardness Results for OHA}
In this section, we design three different reductions that will establish the hardness of \OHA\ (both $\nicefrac{0}{1}$-OHA and $[\succeq]$-OHA), including in restricted settings. The first is a parameterized reduction from \textsc{Clique} to $\nicefrac{0}{1}$-OHA. 

\begin{theorem}
\label{lem:oha-clique} $[\nicefrac{0}{1}]$-\OHA{} is \NPC{} when every house is approved by at most three agents.
\end{theorem}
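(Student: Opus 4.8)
The plan is to establish membership in \NP{} and then prove hardness by a parameterized reduction from \textsc{Clique}. Membership is immediate: given a candidate allocation $\Phi$, one computes $\kappa^\#(\Phi)$ in polynomial time, since by the binary definition an agent $a$ is envy-free iff either $u_a(\Phi(a)) = 1$ or none of the houses $a$ values is allocated to anyone. For hardness I would design an \FPT{} reduction from \textsc{Clique} parameterized by the clique size $\kappa$; because \textsc{Clique} is \WOH{} and \NP-hard, such a reduction yields both \NPC{}ness and (as recorded by the $\dagger$ in \Cref{tab:my-table}) \WO-hardness of $[\nicefrac{0}{1}]$-\OHA{} in the standard parameter $k$, the number of envious agents.

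The gadget I would use keeps $d(h) \le 3$ for every house by routing all contention through ``edge houses.'' Given $G = (V,E)$, introduce a vertex agent $a_v$ for each $v \in V$ and an edge agent $b_e$ for each $e \in E$. For each edge $e = \{u,v\}$ create an edge house $q_e$ valued by exactly the three agents $a_u$, $a_v$ and $b_e$; this is the only configuration in which three agents meet a common house, so the degree bound holds by construction. On top of this I would add degree-$1$ private/selection houses (valued by a single agent) together with a carefully chosen number of dummy houses. The dummy count is the only tuning knob for \emph{scarcity}: it dictates how many agents are forced onto houses they value, and hence how much envy any allocation must incur. The aim is to fix these counts so that the minimum number of envious agents equals a fixed function $f(\kappa)$ of $\kappa$ (morally $\binom{\kappa}{2}$, the number of internal edges of a $\kappa$-clique), attained precisely when the ``selected'' vertices—those whose agents are steered onto edge houses—induce a clique.

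For correctness I would argue two directions using the envy-freeness characterisation above. Given a $\kappa$-clique $S$, I would exhibit an explicit allocation that routes the agents associated with $S$ and its internal edges onto the corresponding edge houses, hands private and dummy houses to everyone else, and then verify that only the intended $f(\kappa)$ agents remain envious. Conversely, from an allocation with at most $f(\kappa)$ envious agents I would read off a candidate vertex set $S$ of the right size and use the envy accounting to show that any pair of selected vertices that is \emph{non-adjacent} would force an additional envious agent beyond the budget; this forces $S$ to be a clique. I would finally remark that extending the $0/1$ valuations to linear orders that rank valued houses above unvalued ones transfers the hardness to $[\succeq]$-\OHA{}, consistent with \Cref{tab:my-table}.

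The main obstacle is the simultaneous enforcement, under the stringent degree-$3$ cap, of two coupled constraints: a \emph{cardinality} constraint (exactly $\kappa$ vertices must be selected), which the dummy-house count has to pin down without over- or under-shooting; and a \emph{consistency} constraint tying the chosen vertices to the chosen internal edges, so that the incurred envy charges precisely for non-adjacency among selected vertices. Since no house may serve more than three agents, neither constraint can be realized by a single high-degree ``contention'' house; each must be decomposed into many local low-degree gadgets. The delicate part of the argument is therefore proving that the global optimum of the envy objective is exactly $f(\kappa)$ on clique instances and strictly larger otherwise—i.e. that no clever allocation can beat the clique bound by exploiting the distributed structure of the edge houses.
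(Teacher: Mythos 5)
Your gadget is essentially the paper's: one edge house per edge $e=\{u,v\}$ valued by exactly $a_u$, $a_v$ and the edge agent, which is what keeps every house at degree at most three, plus dummy houses to control scarcity. But the accounting you propose does not match what this gadget actually produces, and the step you defer as ``the delicate part'' is precisely the argument that makes the reduction work. If the $\binom{\kappa}{2}$ internal edge houses of a $\kappa$-clique $S$ are allocated to their edge agents and everyone else gets a dummy, the envious agents are the $\kappa$ \emph{vertex} agents of $S$ (each values some allocated edge house but holds a dummy), not $\binom{\kappa}{2}$ agents; so the correct budget is $k$, not a quantity ``morally $\binom{\kappa}{2}$'' that charges for non-adjacency. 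Your framing (envy counting non-adjacent selected pairs) belongs to a different reduction in the paper --- the one from \textsc{Clique} on regular graphs with target $kd-\binom{k}{2}$ in \Cref{lem:oha-is} --- and that construction inherently needs houses of degree $d+1$, so it is not available under the degree-$3$ cap. Adding degree-$1$ private houses for vertex agents, as you suggest, would actually destroy the reduction: a privately valued house can always be handed to its unique admirer, making every vertex agent envy-free for free.

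The missing ideas are: (i) fix the dummy count at $m'+n'-\binom{k}{2}$; (ii) a normalization (``niceness'') showing one may assume every edge house is either unallocated or given to its own edge agent, by swaps that never increase the number of envious agents --- this forces all $n'$ vertex agents onto dummies and hence forces at least $\binom{k}{2}$ edge agents onto edge houses; and (iii) the extremal observation that the set $S$ of endpoints of the allocated edges consists entirely of envious vertex agents, so $|S|\leq k$, while $S$ spans at least $\binom{k}{2}$ edges, so $|S|\geq k$, whence $S$ is a $k$-clique. Without (ii) and (iii) the reverse direction does not go through, and with your target value the forward direction is already off. (Your NP-membership remark and the observation that the reduction is parameter-preserving, giving \WO-hardness in $k$, are fine and agree with the paper.)
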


\begin{proof} We sketch a reduction from \textsc{Clique}.
Let $\mathcal{I} := (G = (V,E);k)$ be an instance of \textsc{Clique}. Let $|V|=n'$ and $|E|=m'$. We first describe the construction of an instance of OHA based on $G$:

  \begin{itemize}
      \item We introduce a house $h_e$ for every edge $e \in E$. We call these the \emph{edge houses}.
      
      \item We also introduce $m' + n' - \binom{k}{2}$ dummy houses.
      
      \item We introduce an agent $a_v$ for every vertex $v \in V$ and an agent $a_e$ for every edge $e \in E$. We refer to these as the vertex and edge agents, respectively.
      
      \item Every edge agent $a_e$ values the house $h_e$.
      
      \item For every vertex $v \in V$, the vertex agent $a_v$ values the edge house $h_e$ if and only if $e$ is incident to $v$ in $G$.
  \end{itemize}

Note that in this instance of OHA, there are $m'+n'$ agents, $m'$ edge houses and $m'+n'-\binom{k}{2}$ dummy houses, and every house is approved by at most three agents. We let $k$ be the target number of envious agents. This completes the construction of the reduced instance. We now turn to a proof of equivalence. 

\paragraph*{The forward direction.} Let $S \subseteq V$ be a clique of size $k$. Then consider the allocation $\Phi$ that assigns $h_e$ to $a_e$ for all $e \in E(G[S])$ and dummy houses to all other agents. Note that no edge agent is envious in this allocation, and the only vertex agents who are envious are those that correspond to vertices of $S$. Since $|S| = k$, this establishes the claim in the forward direction. 

\paragraph*{The reverse direction.} Let $\Phi$ be an allocation that has at most $k$ envious agents. We say that $\Phi$ is nice if every edge house is either unallocated by $\Phi$ or allocated to an edge agent who values it. If $\Phi$ is not nice to begin with, notice that it can be converted to a nice allocation by a sequence of exchanges that does not increase the number of envious agents. In particular, suppose $h_e$ is allocated to an agent $a \neq a_e$. Then we obtain a new allocation by swapping the houses $h_e$ and $\Phi(a_e)$ between agents $a$ and $a_e$. This causes at least one envious agent to become envy-free (i.e., $a_e$) and at most one envy-free agent to become envious (i.e., $a$), and therefore the number of envious agents does not increase. Based on this, we assume without loss of generality, that $\Phi$ is a nice allocation.

Now note that any nice allocation $\Phi$ is compelled to assign $n$ dummy houses among the $n'$ vertex agents, and this leaves us with $m'-\binom{k}{2}$  dummy houses that can be allocated among $m'$ edge agents. Therefore, at least $\binom{k}{2}$ edge agents are assigned edge houses. 
% We may assume that exactly ${k \choose 2}$ edge agents are assigned edge houses by~\Cref{prop:alldummyhousesallocated}.

Let $F \subseteq E$ be the subset of edges corresponding to edge agents who were assigned edge houses by $\Phi$. Let $S \subseteq V$ be the set of vertices in the span of $F$, that is:

$$S := \bigcup_{e = (u,v) \in F} \{u,v\}.$$

Note that for all $v \in S$, $a_v$ is envious with respect to $\Phi$, since---by the definitions we have so far---$a_v$ valued an assigned house and was assigned a dummy house. Since $\Phi$ admits at most $k$ envious agents, we have that $|S| \leq k$. However, $S$ is also the span of at least $\binom{k}{2}$ distinct edges, so it is also true\footnote{Intuitively, a smaller set of vertices would not be able to accommodate as many edges; and specifically a subset of at most $k-1$ vertices can account for at most $\binom{k-1}{2} < \binom{k}{2}$ edges.} that $|S| \geq k$. Therefore, we conclude that $|S| = k$, and since every edge in $F$ belongs to $G[S]$ and $F$ has $\binom{k}{2}$ edges, it follows that $S$ is a clique of size $k$ in $G$. This concludes the argument in the reverse direction.
\end{proof}

\begin{theorem}
\label{lem:oha-is} $[\nicefrac{0}{1}]$-\OHA{} is \NPC{} even when every agent approves at most two houses.
\end{theorem}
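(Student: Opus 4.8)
The plan is to reduce from \textsc{Clique}, exploiting the fact that when every agent approves at most two houses the preference graph is essentially a graph whose vertices are the houses and whose edges are the agents. Let $(G,k)$ be a \textsc{Clique} instance with $G = (V,E)$, $n' = |V|$, $m' = |E|$. Since \textsc{Clique} is polynomial-time solvable for $k \le 3$, I assume $k \ge 4$; and since every vertex of a clique of size at least $3$ has degree at least $2$, I may pass to the $2$-core of $G$ and assume without loss of generality that $G$ has minimum degree at least $2$ (this preserves all cliques of size $\ge 3$). The reduced instance has a \emph{vertex house} $h_v$ for every $v \in V$, an \emph{edge agent} $a_e$ for every $e = \{u,v\} \in E$ that approves exactly the two houses $h_u, h_v$, and $D := m' - n' + k$ dummy houses; the target number of envious agents is set to $t := m' - n' - \binom{k}{2} + k$. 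Every agent approves exactly two houses, as required, and membership in \NP{} is immediate.

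Next I would establish the identity $\kappa^{\#}(\mathcal{I}) = m' - n' - \max_{U \subseteq V,\, |U| \le k}\big(|E(G[U])| - |U|\big)$ by bounding the number of envy-free agents. Fix an allocation, let $A \subseteq V$ collect the vertices whose house is allocated, and put $U := V \setminus A$; since exactly $m-n = k$ houses go unallocated, $|U| \le k$ always. An envy-free edge agent either receives one of its two valued houses---so that house lies in $A$, and distinct such agents receive distinct houses, bounding their number by $|A|$---or receives a house it does not value while \emph{both} its endpoints remain unallocated, i.e.\ its edge lies in $G[U]$. These two families are disjoint, so the number of envy-free agents is at most $|A| + |E(G[U])| = n' - |U| + |E(G[U])|$. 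This bound is attained whenever $A$ admits a system of distinct representatives into its incident edges; minimum degree at least $2$ guarantees this for every $A$, since any $A' \subseteq V$ is incident to at least $\tfrac12\sum_{v \in A'}\deg(v) \ge |A'|$ edges and hence Hall's condition holds. Maximizing over the at most $k$ unallocated vertex houses yields the identity.

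Finally I would solve the optimization $\max_{|U|\le k}\big(|E(G[U])| - |U|\big)$. Using $|E(G[U])| \le \binom{|U|}{2}$, the quantity is at most $\binom{|U|}{2} - |U| = \tfrac{|U|(|U|-3)}{2}$, which for $k \ge 4$ is strictly increasing in $|U|$ over $\{2,\dots,k\}$ and strictly exceeds its values at $|U|\in\{0,1\}$; hence the maximum over $|U|\le k$ equals $\binom{k}{2}-k$ if and only if $G$ contains $k$ vertices inducing $\binom{k}{2}$ edges, i.e.\ a $k$-clique. Substituting gives $\kappa^{\#}(\mathcal{I}) \le t$ iff $G$ has a clique of size $k$. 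For the forward direction the witnessing allocation takes $U$ to be the clique $U_0$, assigns the houses of $V\setminus U_0$ to distinct incident edge agents via the system of distinct representatives, and hands the clique-interior agents dummy houses; since the $\binom{k}{2}$ clique edges are disjoint from the $n'-k$ representative edges, one checks that exactly $D$ dummies are used and exactly $t$ agents are envious.

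The main obstacle, I expect, lies in the forward direction's \emph{realizability} rather than the bound: the reverse direction needs only that $|A| + |E(G[U])|$ is an \emph{upper} bound on the envy-free agents, which holds for every allocation, but exhibiting an allocation that meets the target requires genuinely assigning the allocated vertex houses to distinct edge agents---exactly where the minimum-degree-$2$ (equivalently, Hall) argument and the $2$-core preprocessing are needed. A secondary subtlety is the threshold $k \ge 4$: for $k = 3$ the empty set already achieves $\binom{k}{2}-k = 0$, so a triangle would be indistinguishable from the trivial allocation, and the monotonicity of $\tfrac{t(t-3)}{2}$ must be invoked carefully to guarantee that no vertex set of size less than $k$ can match the value produced by a genuine $k$-clique.
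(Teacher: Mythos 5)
Your proof is correct, but it takes a genuinely different route from the paper's. The paper reduces from \textsc{Clique} restricted to $d$-regular graphs and keeps \emph{both} a vertex agent $a_v$ (approving only $h_v$) and an edge agent $a_e$ (approving $h_u$ and $h_v$) for each vertex and edge, together with $m'+n'-k$ dummy houses and target $kd-\binom{k}{2}$; correctness there is argued by first normalizing to ``nice'' allocations in which vertex houses go only to their own vertex agents, after which exactly $k$ vertex houses are allocated and the envious agents are precisely the edge agents incident to the chosen vertex set. You instead drop the vertex agents entirely, encode the graph purely through edge agents, and prove the exact identity $\kappa^{\#}(\mathcal{I}) = m'-n'-\max_{|U|\le k}\left(|E(G[U])|-|U|\right)$, with realizability supplied by Hall's theorem on the vertex--incident-edge system (this is where your $2$-core preprocessing plays the role that regularity plays in the paper). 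Your approach buys a closed form for the optimum and avoids invoking NP-completeness of \textsc{Clique} on regular graphs; the paper's buys a shorter exchange argument with no Hall/SDR machinery. Two small details to tidy up: (i) when $G$ has no $k$-clique your target $t=m'-n'+k-\binom{k}{2}$ can be negative (e.g.\ $G$ a single cycle), and since the problem requires a non-negative target you should output a canonical no-instance in that case --- your own edge count shows $t\ge 0$ whenever a $k$-clique exists, so this is safe; and (ii) you should state explicitly that \textsc{Clique} remains NP-complete when restricted to $k\ge 4$, which justifies discarding the small values of $k$.
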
 

\begin{proof}
Let $\mathcal{I} := (G = (V,E);k)$ be an instance of \textsc{Clique} where $G$ is a $d$-regular graph. Let $|V|=n'$ and $|E|=m'$. (The problem of finding a clique restricted to regular graphs is also \NPC{} [\cite{MATHIESON2012179}].) We first describe the construction of an instance of OHA based on $G$:

\begin{itemize}
    \item We introduce a house $h_v$ for every vertex $v \in V$. We call these the \emph{vertex houses}.
    
    \item We also introduce $m' + n' - k$ dummy houses.
    
    \item We introduce an agent $a_v$ for every vertex $v \in V$ and an agent $a_e$ for every edge $e \in E$. We refer to these as the vertex and edge agents, respectively.
    
    \item Every vertex agent $a_v$ values the house $h_v$.
    
    \item For every edge $e = (u,v)$ in $E$, the edge agent $a_e$ values the houses $h_u$ and $h_v$.
\end{itemize}

Note that in this instance of OHA, there are $n'$ vertex houses, $m'+n'-k$ dummy houses, and every agent approves at most two houses. We let $kd - \binom{k}{2}$ be the target number of envious agents. This completes the construction of the reduced instance. We now turn to a proof of equivalence.

\paragraph*{The forward direction.} Let $S \subseteq V$ be a clique of size $k$. Then consider the allocation $\Phi$ that assigns $h_v$ to $a_v$ for all $v \in S$ and dummy houses to all other agents. Note that no vertex agent is envious in this allocation, and the only edge agents that are envious are those that correspond to edges in $G$ that have at least one of their endpoints in $S$. The total number of distinct edges incident on $S$ is at most $kd$, but since $G[S]$ induces a clique, the exact number of distinct edges incident on $S$ is $kd - \binom{k}{2}$, and this establishes the claim in the forward direction. 

\paragraph*{The reverse direction.} Let $\Phi$ be an allocation that has at most $kd -\binom{k}{2}$ envious agents. We say that $\Phi$ is nice if every vertex house is either unallocated by $\Phi$ or allocated to a vertex agent who values it. If $\Phi$ is not nice to begin with, notice that it can be converted to a nice allocation by a sequence of exchanges that does not increase the number of envious agents. In particular, suppose $h_v$ is allocated to an agent $a \neq a_v$. Then we obtain a new allocation by swapping the houses $h_v$ and $\Phi(a_v)$ between agents $a$ and $a_v$. This causes at least one envious agent to become envy-free (i.e., $a_v$) and at most one envy-free agent to become envious (i.e., $a$), and therefore the number of envious agents does not increase. Based on this, we assume without loss of generality, that $\Phi$ is a nice allocation.

Now note that any nice allocation $\Phi$ is compelled to assign $m'$ dummy houses among the $m'$ edge agents, and this leaves us with $n'-k$  dummy houses that can be allocated among $n'$ vertex agents. Therefore, at least $k$ vertex agents are assigned vertex houses. We may assume that exactly $k$ vertex agents are assigned vertex houses---indeed if more than $k$ vertex agents are assigned vertex houses, these houses can be swapped with dummy houses without increasing the number of envious agents, and we perform these swaps until we run out of dummy houses to swap with. Finally, observe that the set of $k$ vertex agents (say, $S$) who are assigned vertex houses induce a clique of size $k$ in $G$.  Indeed, if not: 

$$\text{\# of edges incident on } S = kd - {\color{IndianRed}|E(G[S])|} > kd - {\color{IndianRed}\binom{k}{2}}.$$

The claim follows from the fact that every edge incident on $S$ in $G$ corresponds to a distinct edge agent who experiences envy in the reduced instance, and this would contradict our assumption that the number of agents envious with respect to $\Phi$ was at most $kd -\binom{k}{2}$.
\end{proof}

We now exhibit the hardness of approximation of finding the maximum number of envy-free agents. Computing a maximum balanced biclique is known to be hard to approximate within a factor of $n^{(1-\gamma)}$ for any constant $\gamma>0$, where $n$ is the number of vertices \citep{a11010010}, assuming the Small Set Expansion Hypothesis \citep{10.1145/1806689.1806792}. The reduction below shows that any $f(n)$ approximation to the maximum number of envy-free agents (where $n$ is the number of agents) implies a $2(1+\epsilon) \cdot f(|L|)$ approximation to the maximum balanced biclique where $|L|$ is the size of the left bi-partition. The argument is similar to that of \cite{KMS2021}, but we produce here for the sake of completeness.

\begin{theorem}
\label{lem:oha-bbc} If the Small Set Expansion Hypothesis holds, then finding the maximum number of envy-free agents for instances with weak rankings is hard to
approximate within a factor of $n^{1-\gamma}$
for any constant $\gamma > 0$.
\end{theorem}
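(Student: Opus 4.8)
The plan is to design a gap-preserving reduction from \textsc{Maximum Balanced Biclique} (MBB), whose $N^{1-\gamma}$-inapproximability under the Small Set Expansion Hypothesis is the quoted starting point, to the problem of \emph{maximizing} the number of envy-free agents in a house allocation instance with weak rankings. Given a bipartite graph $G = (L \cup R, E)$ (which, after padding with isolated vertices, we may assume satisfies $|L| = |R| = \Theta(N)$), I would introduce exactly one agent $a_\ell$ per vertex $\ell \in L$, so that the number of agents is $n = |L|$, one ``valuable'' house $h_r$ per vertex $r \in R$, and a controlled number of dummy houses. The crucial design choice is a \emph{complementary} encoding of adjacency: each agent $a_\ell$ places the houses corresponding to its \emph{non-neighbours}, namely $\{h_r : (\ell,r) \notin E\}$, in a single top indifference class, and all remaining houses (its neighbour-houses together with every dummy) in a single bottom class. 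This yields the clean characterization that drives everything: if $T \subseteq R$ denotes the set of valuable houses that end up allocated, then an agent $a_\ell$ receiving a bottom-tier house is envy-free exactly when no allocated house is a non-neighbour of $\ell$, i.e.\ exactly when $T \subseteq N(\ell)$. Thus the envy-free ``dummy-receivers'' form precisely the common neighbourhood $\bigcap_{r \in T} N(r)$ of the allocated set, which is the left side of a biclique whose right side is $T$.

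To make the correspondence quantitative I would use the standard geometric-guessing trick: run the hypothetical $f(n)$-approximation once for each target side-size $k$ taken from the powers of $(1+\epsilon)$, and in the instance for a given $k$ include exactly $|L| - k$ dummy houses, so that at least $k$ valuable houses are forced into the allocation and, after discarding surplus, $|T| = k$. In the completeness direction, a $k \times k$ biclique $S \times T_0$ lets me allocate the houses $\{h_r : r \in T_0\}$ and dummies so that every common neighbour of $T_0$ (in particular all of $S$) receives a bottom-tier house with $T = T_0 \subseteq N(\ell)$ and is therefore envy-free, producing at least $k$ envy-free agents. In the soundness direction, the characterization above lets me read off from any returned allocation a set $T$ with $|T| = k$ whose common neighbourhood is essentially as large as the envy-free count, i.e.\ a biclique with right side $T$ and left side of comparable size.

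Chaining the guarantees is then mechanical: selecting the guess $k^\ast$ closest to the true optimum loses only a $(1+\epsilon)$ factor, balancing the extracted biclique to its smaller side loses at most a further factor of $2$, and so an $f(n) = f(|L|)$-approximation for maximum envy-free agents becomes a $2(1+\epsilon)\cdot f(|L|)$-approximation for MBB. Since $|L| \le N$, instantiating $f(n) = n^{1-\gamma}$ would contradict the $N^{1-\gamma}$-inapproximability of MBB, giving the claimed $n^{1-\gamma}$ lower bound for weak rankings.

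I expect the main obstacle to be the soundness analysis, specifically keeping the ``extract a \emph{balanced} biclique from an arbitrary near-optimal allocation'' step both correct and loss-bounded. The difficulty is that a near-optimal allocation need not place its envy-free agents tidily inside $\bigcap_{r\in T} N(r)$: an agent can instead be envy-free because it received a top-tier (non-neighbour) valuable house, and the $k$ agents consuming valuable houses can otherwise eat into the biclique signal. I would address this by first normalizing the allocation---arguing, in the style of the ``nice allocation'' exchange arguments used for \Cref{lem:oha-clique,lem:oha-is}, that valuable houses can be shifted onto agents outside the intended common neighbourhood without decreasing the envy-free count---so that the envy-free agents that matter are exactly the common neighbours of $T$. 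Controlling the interplay between this normalization, the forced value $|T| = k$, and the granularity of the $(1+\epsilon)$ guessing, so that the cumulative loss provably stays within the factor $2(1+\epsilon)$, is the crux of the proof.
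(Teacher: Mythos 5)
There is a genuine gap in the soundness direction, and it is exactly at the point you flagged as the crux --- but the normalization you propose cannot repair it. With your two-indifference-class encoding, any agent who receives a house from her \emph{top} class (a non-neighbour house $h_r$ with $(\ell,r)\notin E$) is automatically envy-free, contributing nothing to the biclique structure. Since your instance forces at least $k$ valuable houses to be allocated, one can typically build an allocation with $k$ envy-free agents by simply giving each allocated $h_r$ to some non-neighbour of $r$ (a system of distinct representatives that exists in essentially every interesting instance). Concretely, in a random bipartite graph with edge density $1/2$ the maximum balanced biclique has size $O(\log N)$, yet your reduced instance with guess $k=\Theta(N)$ admits $k$ envy-free agents this way. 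So the optimum of your envy-free objective is decoupled from the biclique optimum: $\mathrm{OPT}_{EF}\geq k$ regardless of whether a $k\times k$ biclique exists, and an $f$-approximate allocation with $E\geq k/f$ envy-free agents yields only $|C|\geq E-k$ common neighbours, which is negative for any $f>1$. An additive loss of $k$ (the guess, which is comparable to the optimum) is fatal to any multiplicative gap preservation, so no post-processing of the returned allocation can recover a biclique guarantee; the completeness direction and the $2(1+\epsilon)f$ chaining are fine, but the gadget itself does not work.

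The paper's construction avoids this precisely by \emph{not} flattening the non-neighbours into one indifference class. Each agent ranks her non-neighbour houses strictly, in increasing order of index, \emph{above} her neighbour houses (which form the single tie class), with low-ranked starred houses at the bottom. The consequence is that an envy-free agent holding house $h_{j_l}$ --- whether or not she values it as a neighbour --- must be adjacent to every higher-indexed house $h_{j_{l'}}$ held by another envy-free agent, since otherwise $h_{j_{l'}}$ would be a non-neighbour ranked strictly above $h_{j_l}$ and she would envy its owner. Every envy-free agent therefore certifies biclique edges to all later houses, and splitting the $k$ envy-free agents into the first half (as left vertices) and the last half of their houses (as right vertices) yields a balanced biclique of size $k/2$, i.e., only a factor-$2$ loss rather than an additive $-k$. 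If you want to keep your complementary-encoding intuition, you would need to import this strict, index-monotone ordering of the non-neighbours so that top-tier recipients still carry biclique information; as written, they are free riders and the reduction is not gap-preserving.
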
 

\begin{proof} 

We will first show that there is a polynomial-time reduction that takes an instance $G = (L, R, E)$ of maximum balanced biclique and produces an allocation instance such that if there is a bi-clique of size at least $k$ in $G$, then there exists an allocation $\Phi$ in the reduced allocation instance such that $\Phi$ admits at least $k$ envy-free agents. On the other hand, given any allocation $\Phi$ with at least $k$ envy-free agents in the reduced instance, then there is a bi-clique of size $\nicefrac{k}{2}$ in $G$, which can be found in polynomial time. Additionally, the number of agents $n$ in the reduced instance is exactly $|L|$.

Consider an instance $G = (L, R, E)$ of maximum balanced biclique such that $L = \{b_1, \ldots, b_{n'}\}$ and $R = \{c_1, \ldots, c_{m'}\}$, we create an allocation instance as follows: an agent $a_i$ for each $b_i \in L$ and a house $h_j$ for each $c_j \in R$. We also have $n'$ additional starred houses $\{h_1^\star, \ldots, h_{n'}^\star\}$. This amounts to a total of $n'$ agents and $m'+n'$ houses. 
An agent $a_i$ ranks the houses as follows:
    
$$
       \succeq_{a_i} =
        \begin{cases}
        h_j \succ h_l, ~\text{if}~ j>l; (b_i, c_j) \notin E ~\text{and}~ (b_i, c_l) \notin E\\
        h_j \succ h_l, ~\text{if}~ (b_i, c_j) \notin E ~\text{and}~ (b_i, c_l) \in E\\
        h_j = h_l, ~\text{if}~ (b_i, c_j) \in E ~\text{and}~ (b_i, c_l) \in E\\
        h_j^\star \succ h_l^\star, ~\text{if}~ j>l\\
        h_i \succ h_j^\star, ~\forall~ i \in [m'], j \in [n']
        \end{cases}
$$
Essentially, every agent ranks his non-neighbor houses first in a fixed strict order, then ranks all his neighbors equally, and at last, ranks all the additional starred houses, again, in some fixed strict order.

We now establish the correctness of the above reduction. In the forward direction,  suppose there is a $k$-sized balanced biclique in $G$, consisting of vertices $\{b_{i_1}, b_{i_2}, \ldots b_{i_k}\}$ from $L$ and $\{c_{i_1^\prime}, \ldots c_{i_k^\prime}\}$ from $R$. Then consider the following assignment: 
$$
       \Phi(a_i) =
        \begin{cases}
        h_{i_l^\prime} ~\text{if}~ i=i_l ~\text{for some}~ l \in [k] \\
        h_{i^\star} ~\text{otherwise}
        \end{cases}
$$

 That is, the agents corresponding to the clique vertices get one of their adjacent houses, again corresponding to the clique vertices. Note that each of the $\{a_{i_1}, \ldots {a_{i_k}}\}$ rank the houses $\{h_{i_1^\prime}, \ldots h_{i_k^\prime}\}$ equally, so they do not envy each other. Also, they do not envy the remaining agents who get $h_j^\star$, as they all have the ranking $h_{i_l} \succ h_j^\star$ for all $i$ and $j$.

For the reverse direction, suppose there exists an assignment $\Phi$ of houses such that there are $k$ envy-free agents. We claim that there is a balanced biclique of size $\frac{k}{2}$ in $G$. Let $A_{EF}$ denote the set of envy-free agents with respect to $\Phi$. Notice that none of the agents in $A_{EF}$ owns a starred house under $\Phi$.
If not, suppose some $a \in A_{EF}$ gets $h_j^\star$. Consider another agent $a^\prime$ in $A_{EF}$ such that $a^\prime \neq a$. In case $a^\prime$ gets a house $h_j$ for some $j\in [m']$, then $a$ would be envious as she ranks all $h_j$ better than the starred houses. Else, if $a^\prime$ gets a starred house, say $h_l^\star$, then depending on whether $j > l$ or not, one of these two agents experiences envy, as they both rank all the starred houses in the same manner. Hence, $\Phi(A_{EF}) \subseteq \{h_1, \ldots, h_{m'}\}$. 

Now, let the $k$ houses under $\Phi(A_{EF})$ be $\{h_{j_1}, \ldots h_{j_k}\}$ such that $j_1<j_2< \ldots <j_k$. Let $a_{i_l}= \Phi^{-1}(h_{j_l})$. Then consider the set $S$, consisting of the vertices in $L$ corresponding to the first half agents in $\Phi^{-1}(A_{EF})$ and the set $T$, consisting of the vertices in $R$, corresponding to the remaining half house vertices in $\Phi(A_{EF})$. Precisely, $S = \{b_{i_1}, \ldots b_{i_\frac{k}{2}}\}$ and $T= \{c_{j_{\frac{k}{2}+1}}, \ldots, c_{j_k} \}$. We claim that $S$ and $T$ together induce a biclique of size $\frac{k}{2}$ in $G$. 
Suppose $(b_{i_l}, c_{j_{l'}}) \notin E$ for some $b_{i_l} \in S$ and some $c_{j_{l'}} \in T$. By the choice of $S$ and $T$, notice that $l < l'$. This means that $a_{i_l}$ ranks the house $h_{j_{l'}}$ strictly better that $h_{j_l}$, therefore envies $a_{i_{l'}}$, who is assigned the house $h_{j_{l'}}$. In order for $a_{i_l}$ to not envy the owner of $h_{j_{l'}}$, it must be the case that $(b_{i_l}, c_{j_{l'}}) \in E$. Therefore, $S$ and $T$ form a biclique of size $\frac{k}{2}$. This completes the correctness of the reduction.

With this reduction in hand, we will now argue that a polynomial time $f(n)$-approximation to finding the maximum number of envy-free agents gives a $2(1+\epsilon) f(|L|)$-approximation algorithm for maximum balanced biclique. But assuming the Small Set Expansion Hypothesis, this contradicts the inapproximability result for maximum balanced biclique by \cite{a11010010}. 

Consider an instance $G$ of maximum balanced biclique. We first construct the reduced allocation instance $\mathcal{I}$. Suppose there is a polynomial time $f(n)$-approximation to finding the maximum number of envy-free agents that outputs an allocation $\Phi$ for $\mathcal{I}$. We first find the biclique ($S, T$) corresponding to $\Phi$ in $G$. Let $\beta = 2(1+\frac{1}{\epsilon})$. We enumerate all subsets of size $2\beta$ in $G$ and consider the largest biclique ($S', T'$) (of size at most $2\beta$). We then output the largest of the two bicliques ($S, T$) and $(S', T')$. Let $Opt$ be the size of optimal biclique in $G$. If $Opt \leq \beta\cdot f(|L|)$, then we have that the brute force biclique $(S', T')$ has size at least $\frac{Opt}{f(|L|)}$ and we are done. Otherwise, suppose  $Opt > \beta\cdot f(|L|)$. By the reduction and $f(n)$-approximation, we have that the number of envy-free agents under $\Phi$ is at least $\frac{Opt}{f(n)} = \frac{Opt}{f(|L|)}$. Then we have the size of the biclique ($S, T$) as $$|S| = |T| = \lfloor \frac{Opt}{2f(|L|)} \rfloor > \frac{Opt}{2f(|L|)} -1 > \frac{Opt}{2f(|L|)} - \frac{Opt}{\beta f(|L|)} = \frac{(\beta-1)Opt}{\beta(2 f(|L|))} = \frac{Opt}{2f(|L|)(1+\epsilon)}$$ Therefore,  we get an approximation ratio of $2f(|L|)(1+\epsilon)$. This settles the claim.
\end{proof}

Note that $[\succeq]$-OHA remains NP-Complete from as a corollary to \Cref{lem:oha-clique}, which establishes the hardness for binary valuations, a specific case of rankings with ties. We mention here that the complexity of $[\succ]$-OHA remains open.

\subsection{Parameterized Results for OHA}
We now turn to the parameterized complexity of OHA and first present a linear kernel parameterized by the number of agents.

\begin{theorem}
\label{lem:oha-kernel-n} $[\nicefrac{0}{1}]$-\OHA{} admits a linear kernel parameterized by the number of agents. In particular, given an instance of $[\nicefrac{0}{1}]$-\OHA{}, there is a polynomial time algorithm that returns an equivalent instance of $[\nicefrac{0}{1}]$-\OHA{} with at most twice as many houses as agents. 
\end{theorem}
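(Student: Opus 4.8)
The plan is to establish the kernel by invoking the reduction rules already developed in the excerpt, namely \Cref{rr1}, \Cref{rr2}, and \Cref{rr5}, and then bounding the number of houses that survive. First I would observe that \Cref{rr1} disposes of the case of many dummy houses: if there are at least $|A|$ dummy houses, we return a trivial \textsc{Yes}-instance, so we may assume there are at most $n-1$ dummy houses. The substantive work is proving the safety of \Cref{rr2}, which relies on the \textbf{expansion lemma} (\Cref{lem:expansion}) applied to the graph $G^\star$ obtained by deleting dummy agents and dummy houses. Since that graph has no isolated vertices in the house side (we deleted the dummy houses) and we are in the regime $|H \setminus D| \geq |A \setminus D'|$, the lemma yields sets $X \subseteq A \setminus D'$ and $Y \subseteq H \setminus D$ with an expansion $M$ of $X$ into $Y$ and $N(Y) \subseteq X$.

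The heart of the argument is to show that eliminating the agents and houses saturated by $M$ is safe for the number-of-envious-agents objective, and this is exactly the content of \Cref{claim:niceallocalongexpansion}: there is an optimal allocation (minimizing $\kappa^\#$) that is \emph{good}, i.e., assigns $M(a)$ to every $a \in X$. I would argue that given such a good allocation $\Phi$, the agents in $X$ are all envy-free (they receive houses they value), and crucially, because $N(Y) \subseteq X$, no agent outside $X$ values any house in $Y$; hence removing $X$ together with $Y$ does not affect the envy status of any remaining agent. This gives the equivalence $\kappa^\#(\mathcal{I}) = \kappa^\#(\mathcal{I}')$ where $\mathcal{I}'$ is the reduced instance, establishing safety. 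The remaining rule, \Cref{rr5}, matches dummy agents to dummy houses and is safe by the explicit re-allocation argument already given in the excerpt.

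Finally, I would put together the size bound. Once \Cref{rr1} and \Cref{rr2} can no longer be applied, we necessarily have $|H \setminus D| \leq |A \setminus D'|$ (otherwise \Cref{rr2} would still apply), and $|D| \leq |A| - 1$ from \Cref{rr1}. Therefore
\[
|H| = |H \setminus D| + |D| \leq |A \setminus D'| + |D| \leq |A| + (|A| - 1) = 2|A| - 1 \leq 2(|A|-1) + 1,
\]
so the reduced instance has at most twice as many houses as agents, while the number of agents only decreases under the rules. Since each rule runs in polynomial time and strictly decreases the instance size (so they are applied at most polynomially many times), the whole procedure is a polynomial-time kernelization, yielding a linear kernel in the number of agents. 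The main obstacle is the safety proof of \Cref{rr2}, but this is precisely what \Cref{claim:niceallocalongexpansion} delivers, so the remaining task is the bookkeeping of the size bound and confirming that agents outside $X$ are unaffected by the deletion of $Y$.
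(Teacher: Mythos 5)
Your proposal is correct and follows essentially the same route as the paper: it reduces the theorem to the safety of Reduction Rule~\ref{rr2}, which is delivered by \Cref{claim:niceallocalongexpansion} (a good optimal allocation makes every agent in $X$ envy-free, and $N(Y) \subseteq X$ ensures deleting $X \cup Y$ leaves all other agents' envy unchanged), and then combines the bounds $|H \setminus D| \leq |A \setminus D'|$ and $|D| \leq |A| - 1$ to get the linear size bound. Your bookkeeping $|H| \leq 2|A| - 1$ is in fact slightly more careful than the paper's final inequality, and still gives the claimed bound of at most twice as many houses as agents.
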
 

\begin{proof}
It suffices to prove the safety of~\Cref{rr2}. Let $\mathcal{I} := (A,H,\mathcal{P}; k)$ denote an instance of HA with parameter $k$. Further, let  $\mathcal{I}^\prime = (H^\prime := H \setminus X, A^\prime := A \setminus Y, \mathcal{P}^\prime; k)$ denote the reduced instance corresponding to $\mathcal{I}$. Note that the parameter for the reduced instance is $k$ as well. 

If $\mathcal{I}$ is a~\textsc{Yes}-instance of OHA, then there is an allocation $\Phi: A \rightarrow H$ with at most $k$ envious agents. By \Cref{claim:niceallocalongexpansion}, we may assume that $\Phi$ is a good allocation. This implies that the projection of $\Phi$ on $H^\prime \cup A^\prime$ is well-defined, and it is easily checked that this gives an allocation with at most $k$ envious agents in $\mathcal{I}^\prime$. 

On the other hand, if $\mathcal{I}^\prime$ is a~\textsc{Yes}-instance of OHA, then there is an allocation $\Phi^\prime: A^\prime \rightarrow H^\prime$ with at most $k$ envious agents. We may extend this allocation to $\Phi: A \rightarrow H$ by allocating the houses in $Y$ to agents in $X$ along the expansion $M$, that is:
 \begin{equation*}
    \Phi(a) =
    \begin{cases}
      \Phi^\prime(a) & \text{if } a \notin X,\\
      M(a) & \text{if } a \in X.
    \end{cases}
\end{equation*}

Since all the newly allocated houses are not valued by any of the agents outside $X$ and all agents in $X$ are envy-free with respect to $\Phi$, it is easily checked that $\Phi$ also has at most $k$ envious agents. 
\end{proof}

The following results follow from using the algorithm described in~\Cref{prop:mequalsn-oha} after guessing the allocated houses, which adds a multiplicative overhead of $\binom{m}{n} \leq 2^m$ to the running time.

\begin{corollary}
\label{lem:oha-fpt-m} $[\nicefrac{0}{1}]$-\OHA{} is fixed-parameter tractable when parameterized either by the number of houses or the number of agents. In particular, $[\nicefrac{0}{1}]$-\OHA{} can be solved in time $O^\star(2^m)$.
\end{corollary}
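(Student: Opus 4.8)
The plan is to leverage the linear kernel from~\Cref{lem:oha-kernel-n} together with the polynomial-time algorithm for the $m = n$ case given in~\Cref{prop:mequalsn-oha}. The key observation is that once we fix \emph{which} $n$ houses are going to be allocated, the remaining $m - n$ houses can simply be discarded, and what is left is effectively an $m = n$ instance solvable in polynomial time. The only catch is that we do not know in advance which subset of houses an optimal allocation uses, so we brute-force over all such subsets.

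Concretely, I would first argue that a given instance $\mathcal{I} = (A, H, \mathcal{P}; k)$ of $[\nicefrac{0}{1}]$-\OHA{} is a \textsc{Yes}-instance if and only if there is some subset $H' \subseteq H$ with $|H'| = n$ such that the instance $(A, H', \mathcal{P}|_{H'}; k)$ — obtained by restricting attention to the houses in $H'$ — admits an allocation with at most $k$ envious agents. The forward direction is immediate: any valid allocation $\Phi : A \to H$ assigns exactly $n$ distinct houses, and taking $H' = \Phi(A)$ gives a subset of the required size on which $\Phi$ witnesses a \textsc{Yes}-instance. The reverse direction is equally clear, since any allocation on $H' \subseteq H$ is already a valid allocation on $H$ and the envy of each agent is computed identically (an agent's envy depends only on the houses actually allocated). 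Thus the algorithm enumerates all $\binom{m}{n}$ choices of $H'$, and for each one invokes the polynomial-time procedure of~\Cref{prop:mequalsn-oha}; we return \textsc{Yes} if and only if at least one of these subinstances is a \textsc{Yes}-instance.

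For the running time, this naive enumeration costs $\binom{m}{n} \le 2^m$ calls to a polynomial-time subroutine, giving an $O^\star(2^m)$ algorithm and hence fixed-parameter tractability when parameterized by the number of houses $m$. To obtain fixed-parameter tractability in the number of agents $n$, I would first apply~\Cref{lem:oha-kernel-n} to reduce, in polynomial time, to an equivalent instance in which $m \le 2(n-1) < 2n$. Running the above $O^\star(2^m)$ procedure on the kernelized instance then yields a running time of $O^\star(2^{2n}) = O^\star(4^n)$, which is of the form $f(n) \cdot (n + m)^{O(1)}$, establishing the result for the parameter $n$ as well.

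I do not anticipate any genuine obstacle here, as the statement is explicitly flagged as a corollary and the argument is a routine composition of a kernel with a brute-force guessing step layered on top of the $m = n$ base case. The one place that warrants a line of care is verifying that restricting the house set does not alter any agent's envy relation — but this is immediate from the definition of envy under binary valuations, where agent $a$ envies agent $b$ precisely when $u_a(\Phi(a)) = 0$ and $u_a(\Phi(b)) = 1$, a condition that references only the houses actually handed out. Hence the reduction to the $m = n$ case is faithful, and the bound $\binom{m}{n} \le 2^m$ accounts for the stated multiplicative overhead.
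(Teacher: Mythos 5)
Your proposal is correct and matches the paper's argument exactly: the paper also obtains this corollary by guessing the set of allocated houses (a $\binom{m}{n} \leq 2^m$ overhead) and invoking the polynomial-time algorithm for the $m=n$ case, with the parameterization by $n$ following from the linear kernel that bounds $m \leq 2(n-1)$. No issues.
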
 

\begin{corollary}
\label{lem:oha-fpt-m-rankings} $[\succ]$-\OHA{} is fixed-parameter tractable when parameterized by the number of houses and can be solved in time $O^\star(2^m)$.
\end{corollary}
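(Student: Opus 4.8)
The statement to prove is \Cref{lem:oha-fpt-m-rankings}: that $[\succ]$-\OHA{} is fixed-parameter tractable when parameterized by the number of houses $m$, solvable in time $O^\star(2^m)$.

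The plan is to mirror the strategy already announced for the binary case in \Cref{lem:oha-fpt-m}, namely ``guess the allocated houses, then solve the resulting $m=n$ instance in polynomial time.'' The key observation is that in any allocation, exactly $n$ of the $m$ houses are allocated and the remaining $m-n$ stay unallocated. So first I would iterate over all subsets $H' \subseteq H$ with $|H'| = n$; there are $\binom{m}{n} \leq 2^m$ such subsets, giving the claimed multiplicative overhead. For each guess $H'$, I would restrict attention to allocations that allocate exactly the houses in $H'$, which is precisely an instance of $[\succ]$-\OHA{} on $n$ agents and $n$ houses.

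The crucial point is that for the restricted instance we need a polynomial-time subroutine for $[\succ]$-\OHA{} with $m = n$. Here lies the one gap I would need to fill: \Cref{prop:mequalsn-oha} as stated covers only $[\nicefrac{0}{1}]$-\OHA{} and $[\succeq]$-\OHA{} (binary and weak orders), whereas strict rankings $[\succ]$ are a special case of weak orders $[\succeq]$. So I would invoke \Cref{prop:mequalsn-oha} for the $[\succeq]$ case, which applies verbatim since every strict linear order is in particular a weak order. Concretely, for a fixed guess $H'$, form the preference graph in which each agent is adjacent to exactly those houses in $H'$ that she ranks at least as highly as every other house in $H'$; a maximum matching in this graph yields the maximum number of envy-free agents, hence the minimum number of envious agents, for allocations using exactly the houses $H'$.

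Putting the pieces together: over all $\binom{m}{n}$ guesses of $H'$, compute the minimum number of envious agents via the matching subroutine, take the best value over all guesses, and compare it against $k$ to answer \textsc{Yes}/\textsc{No}. The total running time is $\binom{m}{n} \cdot (n+m)^{O(1)} = O^\star(2^m)$, and since the bound $2^m$ depends only on $m$, this establishes fixed-parameter tractability parameterized by the number of houses. I do not expect any genuine obstacle here; the only subtlety worth stating explicitly is the reduction of strict orders to the weak-order subroutine and the correctness argument that restricting to a fixed allocated set $H'$ does not lose optimality, since every allocation allocates some size-$n$ subset and we enumerate all of them.
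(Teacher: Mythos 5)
Your proposal is correct and follows essentially the same route as the paper, which likewise obtains \Cref{lem:oha-fpt-m-rankings} by enumerating the $\binom{m}{n} \leq 2^m$ choices of allocated houses and then invoking the polynomial-time $m=n$ algorithm of \Cref{prop:mequalsn-oha}. Your explicit remarks---that strict orders are a special case of the weak orders handled by \Cref{prop:mequalsn-oha}, and that the preference graph must be built relative to the guessed set $H'$ (each agent adjacent to her most-preferred houses \emph{within} $H'$)---are exactly the details the paper leaves implicit.
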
 

% Guess unallocated houses. m \choose (m-n)
% Maximum matching?

% Unsure about agents though.
% For binary allocations houses <= f(agents)
% Since #house types was 2^n and we only needed n copies of each type
% So FPT in m => FPT in n
% But doesn't seem to be the case here?

% \begin{proof}
% In any optimal allocation, exactly $n$ houses are allocated. We first guess the $(m-n)$ unallocated houses, and remove them. For the remaining $n$ houses and $n$ agents, let $G$ be the preference graph with an edge between an agent vertex $a_i$ and a house vertex $h_j$ if $h_j$ is the top preferred house of $a_i$ among the houses that remain. Then, since each of the remaining $n$ houses needs to be allocated, we find the maximum matching $M$ in $G$, and allocate the matched houses to the corresponding agents, who are now envy-free. The unmatched agents will be envious anyway, so the remaining houses can be allocated among them arbitrarily. In at least one of the guesses of $n$ houses that were kept, $\kappa^\#(\Phi)$ is minimized and hence, that corresponds to the optimal solution. The number of total guesses are bounded by ${m \choose m-n} \leq 2^m$.
% \end{proof}
The next two results follow from \Cref{lem:oha-clique}.

\begin{corollary}
\label{cor:ohawhard} $[\nicefrac{0}{1}]$-\OHA{} is \textsf{W[1]}-hard when parameterized by the solution size, i.e, the number of envious agents, even when every house is approved by at most three agents. 
\end{corollary}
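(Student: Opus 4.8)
The plan is to derive this corollary directly from the parameterized reduction established in~\Cref{lem:oha-clique}. Recall that \textsc{Clique} is \WOH{} when parameterized by the solution size $k$, as stated in the preliminaries. The reduction in~\Cref{lem:oha-clique} takes an instance $(G,k)$ of \textsc{Clique} and produces an instance of $[\nicefrac{0}{1}]$-\OHA{} whose target number of envious agents is exactly $k$, and in which every house is approved by at most three agents. So the first thing to verify is that this is genuinely a \emph{parameterized} reduction and not merely a polynomial-time (Karp) reduction: the new parameter must be bounded by a computable function of the old parameter.

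First I would observe that in the reduced instance the solution parameter (the target number of envious agents) is set to $k$, which is precisely the \textsc{Clique} parameter. Hence the requirement $k' \le g(k)$ of a parameterized reduction is satisfied trivially with $g$ the identity function. Second, the construction runs in polynomial time in $|V| + |E|$, so the $f(k)\cdot|x|^{O(1)}$ time bound holds (in fact the running time is fully polynomial). Third, the equivalence of yes-instances was already proven in both directions in~\Cref{lem:oha-clique}: a $k$-clique yields an allocation with exactly $k$ envious agents, and conversely any allocation with at most $k$ envious agents forces a $k$-clique. Together these three points show the map is a valid parameterized reduction from $\langle\textsc{Clique},k\rangle$ to $\langle[\nicefrac{0}{1}]\text{-}\OHA,\,\#\text{envious agents}\rangle$.

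Putting these together, since \textsc{Clique} parameterized by $k$ is \WOH{} and parameterized reductions preserve \WO-hardness, I would conclude that $[\nicefrac{0}{1}]$-\OHA{} parameterized by the number of envious agents is \WOH{}. Finally, because the reduced instance has the additional structural property that every house is approved by at most three agents (this is immediate from the construction, where each edge house $h_e$ is valued only by the edge agent $a_e$ and the two vertex agents for the endpoints of $e$), the hardness holds even under this degree restriction, which is exactly the statement of the corollary.

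The only subtlety worth flagging is ensuring that nothing in the reduction inflates the parameter beyond a function of $k$. In particular, the number of dummy houses introduced is $m' + n' - \binom{k}{2}$, which depends on the \emph{size} of $G$ and not just on $k$; one must be careful that this enlarges only the instance size $|x|$ and not the parameter $k'$. Since the parameter is the target number of envious agents, which remains $k$, this poses no difficulty. Thus the main (and really the only) obstacle is the bookkeeping of confirming the reduction meets all three formal conditions of a parameterized reduction, all of which follow essentially for free from the already-established proof of~\Cref{lem:oha-clique}.
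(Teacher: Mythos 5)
Your proposal is correct and matches the paper's approach exactly: the paper also obtains this corollary by observing that the reduction in \Cref{lem:oha-clique} is a parameterized reduction from \textsc{Clique} (the target number of envious agents equals the \textsc{Clique} parameter $k$, the construction is polynomial-time, and every house is valued by at most the edge agent plus the two endpoint vertex agents). Your additional bookkeeping about the dummy houses affecting only the instance size, not the parameter, is a fair and accurate clarification.
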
 

\begin{corollary}
\label{cor:ohawhard-rankings} $[\succeq]$-\OHA{} is \textsf{W[1]}-hard when parameterized by the solution size, i.e., the number of envious agents. 
\end{corollary}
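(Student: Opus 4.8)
The plan is to observe that $[\nicefrac{0}{1}]$-OHA is a syntactic special case of $[\succeq]$-OHA, so that the parameterized reduction already established for binary valuations transfers verbatim to weak rankings. First I would fix the natural embedding of a binary instance into a weak-ranking instance: given $(A,H,\mathcal{P};k)$ with binary utilities, I build the weak order $\succeq_a$ for each agent $a$ by placing every house in $\mathcal{P}(a)$ into a single top indifference class and every remaining house into a single bottom indifference class, so that each $\succeq_a$ has at most two tie-groups. The target value $k$ is left unchanged.

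Next I would check that this embedding preserves the envy relation exactly. Under binary utilities, $a$ envies $b$ with respect to an allocation $\Phi$ precisely when $u_a(\Phi(a)) = 0$ and $u_a(\Phi(b)) = 1$; under the weak order just defined, $\Phi(a) \prec_a \Phi(b)$ holds precisely when $\Phi(a)$ lies in the bottom class and $\Phi(b)$ in the top class, which is the identical condition. Hence $\mathcal{E}_\Phi(a)$ coincides in the two instances for \emph{every} allocation $\Phi$, and so does the count $\kappa^\#(\Phi)$ of envious agents. This makes the embedding a polynomial-time, parameter-identical reduction from $[\nicefrac{0}{1}]$-OHA to $[\succeq]$-OHA.

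Finally I would chain this embedding with \Cref{cor:ohawhard}; equivalently, one can compose directly with the \textsc{Clique} reduction of \Cref{lem:oha-clique}, whose output is already a binary—and therefore weak-ranking—instance in which the target number of envious agents equals the clique size $k$. Since \textsc{Clique} parameterized by $k$ is \WOH, and the entire chain keeps the parameter equal to $k$ while running in polynomial time, it follows that $[\succeq]$-OHA is \WOH\ when parameterized by the number of envious agents. I expect no genuine obstacle here: the single point needing care is verifying that the two-class tie structure reproduces binary envy exactly, and this is immediate once the top and bottom indifference classes are set up as above.
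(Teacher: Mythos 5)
Your proposal is correct and follows essentially the same route as the paper: the authors derive this corollary directly from \Cref{lem:oha-clique} by observing that binary valuations are a special case of rankings with ties (exactly your two-indifference-class embedding), with the parameter $k$ unchanged. Your explicit verification that the two-class weak order reproduces binary envy is a careful spelling-out of what the paper leaves implicit, but there is no substantive difference in the argument.
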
 

We now show that $[\nicefrac{0}{1}]$-OHA is fixed-parameter tractable when parameterized by the number of types of houses or the number of types of agents. To that end, we formulate $[\nicefrac{0}{1}]$-OHA as an integer linear program and then invoke \Cref{thm:lenstra}.

\begin{theorem}
\label{lem:oha-fpt-mstar} $[\nicefrac{0}{1}]$-\OHA{} is fixed-parameter tractable when parameterized by either the number of houses types or the number of agents types.
\end{theorem}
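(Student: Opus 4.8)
The plan is to model the problem as an integer linear program whose number of variables is bounded by a function of the parameter, and then invoke \Cref{thm:lenstra}. I will describe the formulation when we parameterize by the number of house types $m^\star$; the agent-type case is entirely symmetric, as explained below. The first thing I would record is the structural consequence of the type definitions: since any two houses of the same type are valued by exactly the same agents, the set of houses valued by a given agent is a union of complete house types. Hence each agent type corresponds to a subset of the $m^\star$ house types, so $n^\star \le 2^{m^\star}$. Dually, each house type is a union of agent types, so $m^\star \le 2^{n^\star}$, which is exactly what makes the two parameterizations symmetric: in both cases the number of (agent type, house type) pairs is bounded by a function of the parameter.

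The crucial observation for binary valuations is that whether an agent is envious depends only on (i) her type $\rho$, (ii) the type $\tau$ of the house she receives---she is automatically envy-free if $\tau \in \rho$, i.e.\ she values it---and (iii) whether \emph{some} house of a type in $\rho$ is allocated to anybody at all: if she receives an unvalued house while some house type she values is in use, she is envious, and otherwise she is envy-free. In particular, agents of a common type are interchangeable, so an allocation is fully described, as far as envy is concerned, by the integer counts $x_{\rho,\tau}$ = number of type-$\rho$ agents that receive a type-$\tau$ house. I would therefore take the $x_{\rho,\tau}$ (at most $m^\star \cdot 2^{m^\star}$ of them) as the variables, with feasibility constraints $\sum_{\tau} x_{\rho,\tau} = n_\rho$ for each agent type $\rho$ (each agent gets one house, where $n_\rho$ is the number of type-$\rho$ agents) and $\sum_{\rho} x_{\rho,\tau} \le c_\tau$ for each house type $\tau$ (each house used at most once, where $c_\tau$ is the number of type-$\tau$ houses); the $m-n$ unallocated houses are absorbed by the slack. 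The number of envious type-$\rho$ agents is then $\sum_{\tau \notin \rho} x_{\rho,\tau}$ \emph{if} some house type in $\rho$ is allocated, and $0$ otherwise.

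The main obstacle is precisely this conditional term: the objective is not linear in the $x_{\rho,\tau}$, because the clause ``some house type in $\rho$ is allocated'' is a logical OR across house types. I would resolve it in one of two equivalent ways. The simplest is to guess the set $W \subseteq \set{\text{house types}}$ of house types that are actually allocated---at most $2^{m^\star}$ guesses. For a fixed $W$ we add the constraints $\sum_\rho x_{\rho,\tau} \ge 1$ for $\tau \in W$ and $\sum_\rho x_{\rho,\tau} = 0$ for $\tau \notin W$; then, for every agent type $\rho$, the indicator $V_\rho := [\,\rho \cap W \neq \emptyset\,]$ is a \emph{constant}, so the objective $\sum_\rho V_\rho \sum_{\tau \notin \rho} x_{\rho,\tau}$ becomes linear, and we accept iff for some guess the minimum is at most $k$. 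Alternatively one can keep a single ILP by introducing binary indicators $y_\tau$ with $\sum_\rho x_{\rho,\tau} \le c_\tau\, y_\tau$, binary $V_\rho$ with $V_\rho \ge y_\tau$ for all $\tau \in \rho$, and envy counters $w_\rho$ with $w_\rho \ge \sum_{\tau \notin \rho} x_{\rho,\tau} - n_\rho(1 - V_\rho)$ and $w_\rho \ge 0$, minimizing $\sum_\rho w_\rho$; a routine check shows the minimization forces each auxiliary variable to its intended value.

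In either realization the number of variables is bounded by a function of $m^\star$, all coefficients and variable upper bounds are polynomial in $n+m$, and at most $2^{m^\star}$ ILPs are solved, so \Cref{thm:lenstra} yields an algorithm running in time $f(m^\star)\cdot(n+m)^{\cO(1)}$. The agent-type case is identical, with the bounds $n^\star \le 2^{m^\star}$ and $m^\star \le 2^{n^\star}$ playing swapped roles, giving the analogous algorithm parameterized by $n^\star$. The remaining work---checking that feasible integer solutions of the program are in correspondence with allocations whose number of envious agents equals the objective value---is then routine, since agents of a common type (and houses of a common type) are interchangeable.
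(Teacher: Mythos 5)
Your proposal is correct and follows essentially the same route as the paper: reduce to an integer program over the counts $x_{\rho,\tau}$ of (agent type, house type) pairs, use the observation that agents of a common type receiving houses of a common type are uniformly envious or uniformly envy-free, bound the number of variables by $n^\star m^\star \le \min\set{n^\star 2^{n^\star}, m^\star 2^{m^\star}}$, and invoke \Cref{thm:lenstra}. The only cosmetic difference is that the paper linearizes the disjunction ``some valued house type is in use'' with auxiliary binary indicator variables inside a single ILP (your second variant), whereas your primary variant guesses the set of allocated house types across $2^{m^\star}$ ILPs; both are valid and yield the same FPT bound.
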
 
% I think there is a ILP solution here, need to discuss. 

% \begin{proof}
% \end{proof}
 
%\begin{proposition}[\cite{DBLP:journals/mor/Lenstra83}]\label{prop:lenstra}There exist a function $f$ and an algorithm that solves an ILP instance $\mathtt{P}$ with $\ell$ variables in time $f(\ell) L^{\cO(1)}$, where $L$ is the number of bits required to encode the ILP $\mathtt{P}$. \end{proposition}

Consider an instance $\mathcal{I} = (A, h, \mathcal{P}, k)$ of $[\nicefrac{0}{1}]$-OHA. Recall that we use $n^*$ to denote the number of types of agents in $\mathcal{I}$ and the $m^*$ to denote the number of types of houses in $\mathcal{I}$. With a slight abuse of notation, for $i \in [n^*]$, we use $\mathcal{P}(i) (\subseteq [m^*])$ to denote the set of types of houses that each agent of type $i$ values. Also, for $i \in [n^*], j \in [m^*]$ and an allocation $\fn{\Phi}{A}{H}$, let $A(\Phi, i, j) \subseteq A$ be the set of agents of type $i$ who receive a house of type $j$ under $\Phi$. 

\begin{observation}\label{obs:typebounds}
Consider an instance $\mathcal{I} = (A, h, \mathcal{P}, k)$ of $[\nicefrac{0}{1}]$-OHA. Then, (1) $n^* \leq 2^{m^*}$ and (2) $m^* \leq 2^{n^*}$. To see (1), for each $i \in [n^*]$, the agents of type $i$ are uniquely identified by the types of houses they prefer, and the number of distinct choices for the types of houses is at most $2^{m^*}$. Similarly, to see (2), observe that for each $j \in [m^*]$, the houses of type $j$ are uniquely identified by the types of agents who prefer houses of type $j$; and the number of distinct choices for the types of agents is at most $2^{n^*}$. 
\end{observation}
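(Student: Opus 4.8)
The plan is to prove both inequalities by a single symmetric counting argument, whose only real content is the observation that an agent's type depends only on which \emph{house-types} it values (not on the individual houses), and dually that a house's type depends only on which \emph{agent-types} value it. Once this ``factoring through the coarser type set'' is established, each bound reduces to an immediate pigeonhole count.

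For inequality (1), I would first argue that any two houses of the same type are indistinguishable to every agent. Indeed, if $h_p$ and $h_q$ are of the same type then $\mathcal{T}(h_p) = \mathcal{T}(h_q)$, so for every agent $a$ we have $a \in \mathcal{T}(h_p)$ if and only if $a \in \mathcal{T}(h_q)$; that is, $a$ values $h_p$ exactly when it values $h_q$. Consequently, for every agent $a$ the set $\mathcal{P}(a)$ is a union of complete house-type classes, and hence the type of $a$ (namely $\mathcal{P}(a)$) is fully determined by the subset of $[m^*]$ consisting of the house-types that $a$ values. Distinct agent types induce distinct such subsets, so the number of agent types is at most the number of subsets of $[m^*]$, giving $n^* \leq 2^{m^*}$.

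For inequality (2), I would run the symmetric argument obtained by exchanging the roles of agents and houses and of $\mathcal{P}$ and $\mathcal{T}$: if $a_p$ and $a_q$ are of the same type then $\mathcal{P}(a_p) = \mathcal{P}(a_q)$, so every house is valued by $a_p$ exactly when it is valued by $a_q$; hence each set $\mathcal{T}(h)$ is a union of complete agent-type classes, the type of a house $h$ is determined by the subset of $[n^*]$ of agent-types that value it, and therefore $m^* \leq 2^{n^*}$.

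I expect the argument to be essentially routine, and the one point requiring care --- the only place the bound is nontrivial --- is the factoring-through step. A naive count would give only $n^* \leq 2^{m}$ (subsets of the $m$ individual houses); the improvement to $2^{m^*}$ hinges precisely on verifying that houses of a common type are treated identically by \emph{every} agent, so that the type-determining map lands in the power set of the $m^*$ house-types rather than of the full house set. Checking this equivalence, and its dual for houses, directly from the definitions of $\mathcal{T}$ and $\mathcal{P}$ is the crux, after which both bounds are immediate.
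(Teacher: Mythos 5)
Your proof is correct and follows essentially the same route as the paper: the paper's argument is exactly that an agent's type is uniquely identified by the subset of house-types it values (and dually for houses), yielding the $2^{m^*}$ and $2^{n^*}$ bounds by counting subsets. You merely spell out the implicit ``factoring-through'' step --- that $\mathcal{P}(a)$ is a union of complete house-type classes, so the identifying map into the power set of $[m^*]$ is well-defined and injective --- which the paper states without elaboration.
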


\begin{lemma}\label{lem:allornone}
Consider an instance $\mathcal{I} = (A, h, \mathcal{P}, k)$ of $[\nicefrac{0}{1}]$-OHA and an allocation $\fn{\Phi}{A}{H}$. Consider any fixed pair $(i, j)$, where $i \in [n^*]$ and $j \in [m^*]$. Then, for every $a \in A$ and for every $a', a'' \in A(\Phi, i, j)$, either both $a'$ and $a''$ envy $a$, or neither $a'$ nor $a''$ envies $a$. 
\end{lemma}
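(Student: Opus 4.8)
The plan is to observe that whether an agent $a$ is envied by some agent $b$ depends only on the \emph{type} of $b$ together with the house that $b$ receives under $\Phi$, and on the house that $a$ receives under $\Phi$. Concretely, in the binary setting agent $b$ envies $a$ precisely when $u_b(\Phi(b)) = 0$ and $u_b(\Phi(a)) = 1$. Both of these conditions are determined entirely by $b$'s valuation function (i.e. $b$'s type) and by the two houses $\Phi(b)$ and $\Phi(a)$. Hence if I fix the pair $(i,j)$ and take two agents $a', a'' \in A(\Phi, i, j)$, then by definition they are both of type $i$ and both receive a house of type $j$ under $\Phi$.

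The key step is to unpack what ``same type'' and ``same house type'' buy us. Since $a'$ and $a''$ are both of type $i$, they have identical valuation functions: $u_{a'} = u_{a''}$, and in particular $\mathcal{P}(a') = \mathcal{P}(a'')$. Since $a'$ and $a''$ both receive a house of type $j$, the houses $\Phi(a')$ and $\Phi(a'')$ are of the same type, meaning $\mathcal{T}(\Phi(a')) = \mathcal{T}(\Phi(a''))$, i.e. exactly the same set of agents values $\Phi(a')$ as values $\Phi(a'')$. The first fact gives $u_{a'}(\Phi(a')) = u_{a''}(\Phi(a''))$ (both equal $u_i$ evaluated on a type-$j$ house, which is a well-defined value since houses of the same type are valued identically by any fixed agent). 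The second fact gives, for the arbitrary agent $a$, that $u_{a'}(\Phi(a)) = u_{a''}(\Phi(a))$, because these two agents share a valuation function.

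Putting these together, I would argue directly: agent $a'$ envies $a$ iff $u_{a'}(\Phi(a')) = 0$ and $u_{a'}(\Phi(a)) = 1$; agent $a''$ envies $a$ iff $u_{a''}(\Phi(a'')) = 0$ and $u_{a''}(\Phi(a)) = 1$. By the two equalities just established, $u_{a'}(\Phi(a')) = u_{a''}(\Phi(a''))$ and $u_{a'}(\Phi(a)) = u_{a''}(\Phi(a))$, so the two envy conditions are literally the same predicate. Therefore $a'$ envies $a$ if and only if $a''$ envies $a$, which is exactly the dichotomy in the statement. I do not anticipate a genuine obstacle here; the lemma is essentially a definitional unfolding. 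The one point to state carefully is \emph{why} two houses of the same type are valued identically by a common agent: by definition houses $h_p, h_q$ of the same type satisfy $\mathcal{T}(h_p) = \mathcal{T}(h_q)$, so for any agent $a$ we have $a \in \mathcal{T}(h_p) \iff a \in \mathcal{T}(h_q)$, i.e. $u_a(h_p) = u_a(h_q)$. This is the only slightly non-automatic step, and it should be made explicit so that the equality $u_{a'}(\Phi(a')) = u_{a''}(\Phi(a''))$ is properly justified.
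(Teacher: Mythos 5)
Your proof is correct and follows essentially the same route as the paper's: both arguments reduce to the observation that the envy condition for an agent depends only on her type and on the types of the two houses involved, so two agents of type $i$ receiving type-$j$ houses satisfy the identical predicate. The paper phrases this as a short case analysis on whether $j \in \mathcal{P}(i)$ and whether the type of $\Phi(a)$ lies in $\mathcal{P}(i)$, whereas you unfold the definitions directly, but the content is the same.
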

\begin{proof}
Consider $a \in A$ and $a', a'' \in A(\Phi, i, j)$. First, if $j \in \mathcal{P}(i)$, then neither $a'$ nor $a''$ envies any agent. So, assume that $j \notin \mathcal{P}(i)$. Let $\Phi(a)$ be of type $\ell$, for some $\ell \in [m^*]$. If $\ell \notin \mathcal{P}(i)$, then, neither $a'$ nor $a''$ envies $a$. If $\ell \in \mathcal{P}(i)$, then both $a'$ and $a''$ envy $a$. 
\end{proof}

We now move to formulate the $[\nicefrac{0}{1}]$-OHA problem as an integer linear program (ILP). The number of variables in our ILP will be $\cO(n^* \cdot m^*)$. By \Cref{obs:typebounds}, the number of variables will be bounded separately by both $\cO(n^* \cdot 2^{n^*})$ and $\cO(m^{*} \cdot 2^{m^*})$. That is, the number of variables will be bounded separately by both the number of house types and the number of agent types. The result will then follow from~\Cref{thm:lenstra}. 

Consider an instance $\mathcal{I} = (A, h, \mathcal{P}, k)$ of $[\nicefrac{0}{1}]$-\OHA. For each $i \in [n^*]$ and $j \in [m^*]$, let $n_i$ be the number of agents of type $i$ and $m_j$ the houses of type $j$. To define our ILP, we introduce the following variables. For each $i \in [n^*]$, $j \in [m^*]$, we introduce four variables: $x_{ij}, z_{ij}, d_{ij}$ and $d'_{ij}$. Here, $x_{ij}$ and $z_{ij}$ are integer variables, and $d_{ij}$ and $d'_{ij}$ are binary variables. The semantics of the variables are as follows. (1) We want $x_{ij}$ to be the number of agents of type $i$ who receive houses of type $j$. Equivalently, we want $x_{ij}$ to be the number of houses of type $j$ that are allocated to agents of type $i$. (2) And we want $z_{ij}$ to be the number of \emph{envious} agents of type $i$ who receive houses of type $j$. By~\Cref{lem:allornone}, either all type $i$ agents who receive type $j$ houses are envious, or none of them is envious. That is, we must have either $z_{ij} = x_{ij}$ or $z_{ij} = 0$. Notice that type $i$ agents who receive type $j$ houses are envious if and only if $j \notin \mathcal{P}(i)$, and for some $j' \in \mathcal{P}(i)$, at least one house of type $j'$ has been allocated (to, say, an agent of type $i'$ for some $i' \in [n^*]$). That is, we must have $z_{ij} > 0$ if and only if $j \notin \mathcal{P}(i)$, $x_{ij} > 0$ and $x_{i',j'} > 0$ for some $i' \in [n^*]$ and $j' \in \mathcal{P}(i)$. Hence, for $j \notin \mathcal{P}(i)$, either $x_{ij} = 0$ or $z_{ij} > 0$ if $\sum_{i' \in [n]} \sum_{j' \in \mathcal{P}(i)}x_{i'j'} > 0$. (3) The variables $d_{ij}, d'_{ij}$ are only dummy variables that we use to enforce the ``either or'' constraints. %(3) The variable $d_{ij}$ is only a dummy varaible that we use to enforce the constraint $z_{ij} = 0$ or $z_{ij} = x_{ij}$. 

We now formally describe our ILP. Minimize $\sum_{i \in [n^*]} \sum_{j \in [m^*]} z_{ij}$ subject to the constraints in~\Cref{table:ilpoha}.  

% Please add the following required packages to your document preamble:
% \usepackage{multirow}
\begin{table}
\centering
\begin{tabular}{lll}
(C1.$i$).           & $\sum_{j \in [m^*]} x_{ij} = n_i$                   & $ \forall i \in [n^*]$                                                         \\
                    &                                                     &                                                                                 \\
(C2.$j$).           & $\sum_{i \in [n^*]} x_{ij} \leq m_j$                & $\forall j \in [m^*]$                                                         \\
                    &                                                     &                                                                                 \\
(C3.a.$i.j$).       & $x_{ij} \leq nd'_{ij}$                           & \multirow{3}{*}{$\forall i \in [n^*]$, $j \in [m^*]\setminus \mathcal{P}(i)$}                         \\
(C3.b.$i.j$).       & $\sum_{i' \in [n^*]} \sum_{j' \in \mathcal{P}(i)} x_{i'j'} \leq nmz_{ij} + nm(1-d'_{ij})$                       &                                                                                 \\
(C3.c.$i.j$).       & $z_{ij} \leq n_i \sum_{i' \in [n^*]} \sum_{j' \in \mathcal{P}(i)} x_{i'j'}$ & \\
                    &                                                     &                                                                                 \\

(C4.a.$i.j$).       & $z_{ij} \leq n_id_{ij}$                             & \multirow{3}{*}{$\forall i \in [n^*]$, $j \in [m^*]$}                         \\
(C4.b.$i.j$).       & $x_{ij} - z_{ij} \leq n_i(1-d_{ij})$                &                                                                                 \\
(C4.c.$i.j$).       & $z_{ij} \leq x_{ij}$                                &                                                                                 \\
                    &                                                     &                                                                                 \\
(C5.$i.j$).         & $z_{ij} = 0$                                        & $\forall i \in [n^*], j \in \mathcal{P}(i)$                                         \\
                    &                                                     &                                                                                 \\
(C6.a.$i.j$).       & $x_{ij} \geq 0$                                     & \multirow{4}{*}{$\forall i \in [n^*]$, $j \in [m^*]$}                         \\
(C6.b.$i.j$).       & $z_{ij} \geq 0$                                     &                                                                                 \\
(C6.c.$i.j$).       & $d_{ij} \in \{0, 1\}$                               &                                                                                 \\
(C6.d.$i.j$).       & $d'_{ij} \in \{0, 1\}$                               &   
\end{tabular}
\caption{The constraints of the ILP \ilpoha.}
\label{table:ilpoha}
\end{table}

For convenience, we name this ILP \ilpoha, and denote the set of variables of $\ilpoha$ by $Var(\ilpoha)$ and the optimum value of \ilpoha\ by $opt(\ilpoha)$.  %By a solution for $\ilpoha$, we mean a function $\fn{f}{Var{\ilpoha}}{\mathbb{Z}}$.  
Constraint C1.$i$ ensures that for each $i \in [n^*]$, the number of houses allocated to agents of type $i$ is exactly $n_i$. In other words, all agents of type $i$ receive houses. Constraint C2.$j$ ensures that for each $j \in [m^*]$, the number of houses of type $j$ that are allocated does not exceed $m_j$. 
For $i \in [n^*]$ and $j \in [m] \setminus \mathcal{P}(i)$, constraints C3.a.$i.j$ and C3.b.$i.j$ together ensure that depending on whether $d'_{ij} = 0$ or $d'_{ij} =1$, we have either $x_{ij} = 0$ or $z_{ij} > 0$ if $\sum_{i' \in [n^*]} \sum_{j' \in [\mathcal{P}(i)]} x_{i'j'} > 0$. Constraint C3.c.$i.j$ ensures that if $x_{i'j'} = 0$ for every $i' \in [n^*], j' \in [m] \setminus \mathcal{P}(i)$, then $z_{ij} = 0$. Constraints C4.a.$i.j$-C4.c.$i.j$ together ensure that depending on $d_{ij} = 0$ or $d_{ij} =1$, we have either $z_{ij} = 0$ or $z_{ij} = x_{ij}$. 

To establish the correctness of $\ilpoha$, we prove the following two claims. 

\begin{claim}\label{claim:ilp-oha-1}
For any allocation $\fn{\Phi}{A}{H}$, there exists a feasible solution $\fn{f_{\Phi}}{Var(\ilpoha)}{\mathbb{Z}}$ for \ilpoha\ such that $\kappa^{\#}(\Phi) = \sum_{i \in [n^*]} \sum_{j \in [m^*]} f_{\Phi}(z_{ij})$. 
\end{claim}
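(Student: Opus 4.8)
The plan is to construct the feasible solution $f_\Phi$ directly by reading off the relevant counts from $\Phi$, and then to verify each constraint of \ilpoha\ in turn. Concretely, for every $i \in [n^*]$ and $j \in [m^*]$ I would set $f_\Phi(x_{ij}) := |A(\Phi, i, j)|$, the number of type-$i$ agents receiving a type-$j$ house. For the envy counters I would distinguish two cases: if $j \in \mathcal{P}(i)$, or if $j \notin \mathcal{P}(i)$ but no house of a type in $\mathcal{P}(i)$ is allocated (i.e. $\sum_{i' \in [n^*]}\sum_{j' \in \mathcal{P}(i)} x_{i'j'} = 0$), then set $f_\Phi(z_{ij}) := 0$; otherwise (so $j \notin \mathcal{P}(i)$ and some house valued by type-$i$ agents is allocated) set $f_\Phi(z_{ij}) := f_\Phi(x_{ij})$. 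The justification that this is the ``right'' choice is exactly~\Cref{lem:allornone} together with the observation that a type-$i$ agent holding a type-$j$ house is envious precisely when $j \notin \mathcal{P}(i)$ and at least one house of some type in $\mathcal{P}(i)$ has been handed out to someone.

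For the dummy variables I would set $f_\Phi(d'_{ij}) := 1$ iff $x_{ij} > 0$ (and arbitrarily $0$ when $j \in \mathcal{P}(i)$, where it is unconstrained), and $f_\Phi(d_{ij}) := 1$ iff $z_{ij} > 0$. With these assignments fixed, I would check the constraints one block at a time. Constraints (C1.$i$) and (C2.$j$) are immediate from counting: every type-$i$ agent gets exactly one house, so the $x_{ij}$ sum to $n_i$, and at most $m_j$ houses of type $j$ exist. Constraint (C5.$i.j$) holds by construction, and the (C6.*) nonnegativity and $\{0,1\}$ memberships are clear. The interesting work is in the ``big-$M$''-style blocks (C3.*) and (C4.*), which encode logical implications; here I would argue each implication by splitting on the value of the relevant dummy variable. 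For (C3) the two representative sub-cases are $d'_{ij} = 0$ (forcing $x_{ij} = 0$, so both sides behave trivially since the additive $nm$ slack dominates) and $d'_{ij} = 1$ (where $z_{ij} = x_{ij} \ge 1$ makes the coefficient $nm z_{ij} \ge nm$ dominate the left-hand side, which counts at most $nm$ agents); (C3.c.$i.j$) then pins $z_{ij} = 0$ whenever no valued house is allocated. For (C4) I would use $z_{ij} \in \{0, x_{ij}\}$: when $d_{ij} = 1$ we have $z_{ij} = x_{ij}$, so $x_{ij} - z_{ij} = 0$, and when $d_{ij} = 0$ we have $z_{ij} = 0$, so $x_{ij} \le n_i$ suffices.

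Finally I would verify the objective identity $\sum_{i \in [n^*]}\sum_{j \in [m^*]} f_\Phi(z_{ij}) = \kappa^\#(\Phi)$. This follows because the nonzero $z_{ij}$ occur exactly at the pairs $(i,j)$ with $j \notin \mathcal{P}(i)$ and some house valued by type-$i$ agents allocated, and at each such pair $z_{ij} = x_{ij}$ equals the number of type-$i$ agents holding a type-$j$ house, every one of whom is envious (by~\Cref{lem:allornone}); conversely every envious agent falls into exactly one such pair. The main obstacle I anticipate is not conceptual but bookkeeping: getting the case analysis for the (C3) block exactly right, since those constraints simultaneously encode ``$x_{ij}>0 \Rightarrow d'_{ij}=1$'', ``$d'_{ij}=1$ and a valued house allocated $\Rightarrow z_{ij}>0$'', and ``no valued house allocated $\Rightarrow z_{ij}=0$'', and one must confirm that the chosen constants ($n$, $nm$, $n_i$) genuinely make each implication go through in every sub-case without conflicting with the (C4) block.
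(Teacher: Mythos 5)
Your proposal is correct and follows essentially the same route as the paper: the identical assignment $f_\Phi(x_{ij}) = \card{A(\Phi,i,j)}$, the same all-or-nothing choice of $z_{ij}$ justified by \Cref{lem:allornone}, the same settings of the indicator variables, and the same case-by-case verification of the big-$M$ constraints followed by the objective identity. The only difference is cosmetic (you organize the case split around the dummy-variable values rather than around $x_{ij}=0$ versus $x_{ij}>0$), so no further comment is needed.
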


\begin{claim}\label{claim:ilp-oha-2}
For every optimal solution $\fn{f}{Var(\ilpoha)}{\mathbb{Z}}$ for \ilpoha, there exists an allocation $\fn{\Phi_f}{A}{H}$ such that $\kappa^{\#}(\Phi_f) = \sum_{i \in [n^*]}\sum_{j \in [m^*]}f(z_{ij})$.  
\end{claim}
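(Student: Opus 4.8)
The plan is to read an allocation directly off the $x$-variables of an optimal solution $f$ and then verify, constraint by constraint, that the objective $\sum_{i,j} f(z_{ij})$ counts exactly the envious agents under that allocation. First I would define $\Phi_f$ by assigning, for each pair $(i,j)$, some $f(x_{ij})$ agents of type $i$ to $f(x_{ij})$ arbitrarily chosen houses of type $j$; the particular choice is immaterial since agents of a common type are interchangeable, as are houses of a common type. This yields a well-defined allocation: constraint C1.$i$ guarantees $\sum_{j} f(x_{ij}) = n_i$, so every type-$i$ agent receives exactly one house, while C2.$j$ guarantees $\sum_i f(x_{ij}) \leq m_j$, so no house type is over-subscribed. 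Hence $\Phi_f$ is an injection $A \to H$.

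Next I would pin down which agents are envious under $\Phi_f$. By the binary envy definition, an agent of type $i$ receiving a house of type $j$ is envious precisely when $j \notin \mathcal{P}(i)$ and at least one house of some type $j' \in \mathcal{P}(i)$ is allocated to someone, i.e.\ when $\sum_{i' \in [n^*]}\sum_{j' \in \mathcal{P}(i)} f(x_{i'j'}) > 0$. \Cref{lem:allornone} tells us this status is uniform across the set $A(\Phi_f, i, j)$, so for each such \emph{envious pair} $(i,j)$ all $f(x_{ij})$ agents are envious, and for every other pair none is. Writing $\mathsf{Env}$ for the set of envious pairs, this gives $\kappa^{\#}(\Phi_f) = \sum_{(i,j) \in \mathsf{Env}} f(x_{ij})$.

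It then remains to show $f(z_{ij}) = f(x_{ij})$ for $(i,j) \in \mathsf{Env}$ and $f(z_{ij}) = 0$ otherwise, which I would do by a four-way case analysis driven by the constraints. If $j \in \mathcal{P}(i)$, then C5.$i.j$ forces $z_{ij} = 0$. If $j \notin \mathcal{P}(i)$ and $f(x_{ij}) = 0$, then C4.c.$i.j$ forces $z_{ij} = 0$. If $j \notin \mathcal{P}(i)$, $f(x_{ij}) > 0$, but $\sum_{i'}\sum_{j' \in \mathcal{P}(i)} f(x_{i'j'}) = 0$ (a non-envious pair), then C3.c.$i.j$ forces $z_{ij} = 0$. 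Finally, on an envious pair, $f(x_{ij}) > 0$ forces $d'_{ij} = 1$ through C3.a.$i.j$; then C3.b.$i.j$ together with $\sum_{i'}\sum_{j' \in \mathcal{P}(i)} f(x_{i'j'}) > 0$ forces $z_{ij} > 0$; this forces $d_{ij} = 1$ via C4.a.$i.j$, and C4.b.$i.j$ with C4.c.$i.j$ then squeeze $z_{ij} = x_{ij}$. Summing over all pairs yields $\sum_{i,j} f(z_{ij}) = \kappa^{\#}(\Phi_f)$, as required.

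The main obstacle will be the last case: verifying that the big-$M$ ``either/or'' gadgets built from the binary dummies $d_{ij}, d'_{ij}$ really do pin $z_{ij}$ to $x_{ij}$ with nothing in between, and checking that the chosen constants ($n$ in C3.a and $nm$ in C3.b) dominate the quantities they bound, so that no spurious feasibility or infeasibility is introduced. I would note that this argument uses only feasibility of $f$, not optimality, so it in fact establishes the correspondence for every feasible solution; combined with \Cref{claim:ilp-oha-1} this will give $opt(\ilpoha) = \min_\Phi \kappa^{\#}(\Phi)$.
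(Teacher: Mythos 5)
Your proposal is correct and follows essentially the same route as the paper: define $\Phi_f$ from the $x$-values (validity via C1 and C2), invoke \Cref{lem:allornone} to reduce envy-counting to a per-pair $(i,j)$ question, and then use the constraints to show $f(z_{ij})$ equals $f(x_{ij})$ on envious pairs and $0$ otherwise. The only difference is presentational: the paper factors your four-way case analysis into the two preparatory claims (\Cref{claim:zij} giving $f(z_{ij}) \in \{0, f(x_{ij})\}$ and \Cref{claim:envy} characterizing when $f(z_{ij}) > 0$), and, as you observe, both arguments use only feasibility of $f$, not optimality.
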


\begin{remark}
Notice that~\Cref{claim:ilp-oha-1}, in fact, proves that ILP \ilpoha\ is always feasible as there is always an allocation. Moreover, for an allocation $\Phi$, since $0 \leq \kappa^{\#}(\Phi) \leq n$, and since $\kappa^{\#}(\Phi) = \sum_{i \in [n^*]}\sum_{j \in [m^*]} f_{\Phi}(z_{ij})$, we can conclude that \ilpoha\ has a bounded solution. Therefore, $\opt(\ilpoha)$ is well-defined.  %Moreover, in Claim~\ref{claim:ilp-oha-1}, if $\Phi$ is an optimal allocation, then since $\kappa^{\#}(\Phi) = \sum_{i \in [n^*]} \sum_{j \in [m^*]} f_{\Phi}(z_{ij})$, we can conclude that  
\end{remark}
Assuming Claims~\ref{claim:ilp-oha-1} and \ref{claim:ilp-oha-2} hold, we now prove the following claim. 
\begin{claim}\label{claim:ilp-oha-main}
We have $\kappa^{\#}(\mathcal{I}) = \opt(\ilpoha)$. 
\end{claim}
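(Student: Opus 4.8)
The plan is to establish the claim $\kappa^{\#}(\mathcal{I}) = \opt(\ilpoha)$ as a direct consequence of the two preceding claims, Claims~\ref{claim:ilp-oha-1} and \ref{claim:ilp-oha-2}, which together set up a value-preserving correspondence between allocations and feasible solutions of the ILP. The argument is a short two-inequality sandwich, so I would not introduce any new machinery here.

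First I would prove $\opt(\ilpoha) \leq \kappa^{\#}(\mathcal{I})$. Let $\Phi^\star$ be an optimal allocation for $\mathcal{I}$, so that $\kappa^{\#}(\Phi^\star) = \kappa^{\#}(\mathcal{I})$. By~\Cref{claim:ilp-oha-1}, there is a feasible solution $f_{\Phi^\star}$ of \ilpoha\ whose objective value $\sum_{i \in [n^*]}\sum_{j \in [m^*]} f_{\Phi^\star}(z_{ij})$ equals $\kappa^{\#}(\Phi^\star) = \kappa^{\#}(\mathcal{I})$. Since $\opt(\ilpoha)$ is the minimum objective value over all feasible solutions, it is at most the value achieved by this particular feasible solution, giving $\opt(\ilpoha) \leq \kappa^{\#}(\mathcal{I})$.

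Next I would prove the reverse inequality $\kappa^{\#}(\mathcal{I}) \leq \opt(\ilpoha)$. Let $f$ be an optimal solution of \ilpoha, so its objective value is exactly $\opt(\ilpoha)$; this is well-defined by the Remark noting feasibility and boundedness. By~\Cref{claim:ilp-oha-2}, there is an allocation $\Phi_f$ with $\kappa^{\#}(\Phi_f) = \sum_{i \in [n^*]}\sum_{j \in [m^*]} f(z_{ij}) = \opt(\ilpoha)$. Since $\kappa^{\#}(\mathcal{I}) = \min_{\Phi} \kappa^{\#}(\Phi)$ is the minimum number of envious agents over all allocations, it is at most the value realized by $\Phi_f$, so $\kappa^{\#}(\mathcal{I}) \leq \opt(\ilpoha)$. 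Combining the two inequalities yields $\kappa^{\#}(\mathcal{I}) = \opt(\ilpoha)$, as required.

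I do not expect any genuine obstacle at this level, since all the substantive work is deferred to Claims~\ref{claim:ilp-oha-1} and \ref{claim:ilp-oha-2}; the only point requiring care is to invoke each claim in the correct direction (an \emph{arbitrary} allocation, in particular an optimal one, maps to a feasible solution, whereas only an \emph{optimal} ILP solution is guaranteed to map back to an allocation of matching value). The real difficulty lies one level down, in proving~\Cref{claim:ilp-oha-2}: one must verify that the ``either-or'' gadget constraints C3 and C4, together with the dummy binary variables $d_{ij}, d'_{ij}$, correctly force $z_{ij} \in \{0, x_{ij}\}$ and pin $z_{ij}$ to the true envy count, so that the reconstructed allocation $\Phi_f$ actually incurs the claimed number of envious agents rather than merely an upper bound on it.
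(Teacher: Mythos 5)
Your proposal is correct and matches the paper's proof essentially verbatim: both directions are obtained by the same two-inequality sandwich, invoking Claim~\ref{claim:ilp-oha-1} on an optimal allocation and Claim~\ref{claim:ilp-oha-2} on an optimal ILP solution. Your closing observation about which direction each claim is stated for (arbitrary allocations vs.\ only optimal ILP solutions) is also consistent with how the paper uses them.
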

\begin{proof}
To prove the claim, we will prove that (1) $\kappa^{\#}(\mathcal{I}) \geq  \opt(\ilpoha)$ and (2) $\opt(\ilpoha) \geq \kappa^{\#}(\mathcal{I})$. 

To prove (1), consider an optimal allocation $\fn{\Phi}{A}{H}$. That is, $\kappa^{\#}(\mathcal{I}) =  \kappa^{\#}(\Phi)$. By~\Cref{claim:ilp-oha-1}, we have $\kappa^{\#}(\Phi) = \sum_{i \in [n^*]} \sum_{j \in [m^*]} f_{\Phi}(z_{ij}) \geq \opt(\ilpoha)$, where $f_{\Phi}$ is as defined in~\Cref{claim:ilp-oha-1}. We thus have $\kappa^{\#}(\mathcal{I}) \geq \opt(\ilpoha)$.  

Now, to prove (2), consider an optimal solution $f$ for \ilpoha. That is,

$\opt(\ilpoha) = \sum_{i \in [n^*]} \sum_{j \in [m^*]} f(z_{ij})$. By~\Cref{claim:ilp-oha-2}, we have 

$\opt(\ilpoha) = \sum_{i \in [n^*]} \sum_{j \in [m^*]} f(z_{ij}) = \kappa^{\#}(\Phi_f) \geq \kappa^{\#}(\mathcal{I})$, where $\Phi_f$ is as defined in~\Cref{claim:ilp-oha-2}. We thus have $\opt(\ilpoha) \geq \kappa^{\#}(\mathcal{I})$. 
\end{proof}

We are now ready to prove~\Cref{lem:oha-fpt-mstar}. 
\begin{proof}[Proof of~\Cref{lem:oha-fpt-mstar}]
Given an instance $\mathcal{I}$ of $[\nicefrac{0}{1}]$-OHA, observe that we can construct the ILP \ilpoha in polynomial time. The number of variables in \ilpoha\ is bounded by $4n^* m^*$. The number of constraints in \ilpoha\ is also bounded by $\cO(n^* m^*)$. The maximum value of any coefficient or constant term in \ilpoha\ is bounded by $nm$. So, \ilpoha\ can be encoded using $\poly(n^*, m^*) \cdot \cO(\log(nm))$ bits. The result then follows from~\Cref{thm:lenstra} and \Cref{claim:ilp-oha-main}. 
\end{proof}
We now only have to prove Claims~\ref{claim:ilp-oha-1} and \ref{claim:ilp-oha-2}. 
\begin{proof}[Proof of \Cref{claim:ilp-oha-1}]
Consider any allocation $\fn{\Phi}{A}{H}$. Recall that $A(\Phi, i, j)$ is the set of agents of type $i$ who receive houses to type $j$ under $\Phi$. 

We define a solution $\fn{f_{\Phi}}{Var(\ilpoha)}{\mathbb{Z}}$ for \ilpoha\ as follows. For each $i \in [n^*], j \in [m^*]$, we set (1) $f_{\Phi}(x_{ij}) = \card{A(\Phi, i, j)}$;  (2) $f_{\Phi}(z_{ij}) = f_{\Phi}(x_{ij})$ if there exists an envious agent $a \in A(\Phi, i, j)$ and $f_{\Phi}(z_{ij}) = 0$ otherwise; (3) $f_{\Phi}(d_{ij}) = 0$ if  $f_{\Phi}(z_{ij}) = 0$ and $f_{\Phi}(d_{ij}) = 1$ otherwise; and (4) $f_{\Phi}(d'_{ij}) = 0$ if $f_{\Phi}(x_{ij}) = 0$ and $f_{\Phi}(d'_{ij}) =1$ otherwise. 

%We can verify that $f_{\Phi}$ satisfies all the constraints of \ilpoha, and hence is a feasible solution for \ilpoha. 
To see that $f_{\Phi}$ satisfies all the constraints of \ilpoha, observe first that $\card{A(\Phi, i, j)}$ is the number of agents of type $i$ who receive houses of type $j$ under $\Phi$; equivalently, $\card{A(\Phi, i, j)}$ is the number of houses of type $j$ that have been allocated to agents of type $i$ under $\Phi$. Therefore, $\sum_{j \in [m^*]} \card{A(\Phi, i, j)} = n_i$ and $\sum_{i \in [n^*]} \card{A(\Phi, i, j)} \leq m_j$. Hence, (1) $\sum_{j \in [m^*]} f_{\Phi}(x_{ij}) = \sum_{j \in [m^*]} \card{A(\Phi, i, j)} = n_i$ and $\sum_{i \in [n^*]} f_{\Phi}(x_{ij}) = \sum_{i \in [n^*]} \card{A(\Phi, i, j)} \leq m_j$. Thus $f_{\Phi}$ satisfies constraints C1.$i$ and C2.$j$ for every $i \in [n^*]$ and $j \in [m^*]$. Also, note that  \\ $\sum_{i' \in [n^*]} \sum_{j' \in \mathcal{P}(i)} f_{\Phi}(x_{i' j'}) \leq \sum_{i' \in [n^*]} \sum_{j' \in [m^*]} \card{A(\Phi, i, j)} \leq n$. 

Now, consider $i \in [n^*]$, $j \in [m^*]$. Suppose first that $f_{\Phi}(x_{ij}) = 0$. Then, by the definition of $f_{\Phi}$, we have $\card{A(\Phi, i, j)} = 0$, and hence $f_{\Phi}(z_{ij}) = 0$, which implies $f_{\Phi}(d_{ij}) = 0$; and $f_{\Phi}(x_{ij}) = 0$ implies that $f_{\Phi}(d'_{ij}) = 0$. Note that in this case, $f_{\Phi}$ satisfies all the constraints. In particular, constraint C3.b.$i.j$ is satisfied because $f_{\Phi}(z_{ij}) = f_{\Phi}(d'_{ij}) = 0$ implies that the right side of constraint C3.b.$i.j$ is exactly equal to $nm$, and the left side of the constraint is at most $n$. Since $f_{\Phi}(x_{ij}) = f_{\Phi}(z_{ij}) = 0$, all the other constraints corresponding to the pair $(i, j)$ are also satisfied. 

Suppose now that $f_{\Phi}(x_{ij}) > 0$. Then by the definition of $f_{\Phi}$, we have $f_{\Phi}(d'_{ij}) = 1$. There are two possibilities: (1) $f_{\Phi}(z_{ij}) = 0$ and (2) $f_{\Phi}(z_{ij}) > 0$. 

Assume first that $f_{\Phi}(z_{ij}) = 0$. Again, by the definition of $f_{\Phi}$, we have $f_{\Phi}(d_{ij}) = 0$. Notice that this choice of values satisfies all the constraints corresponding to the pair $(i, j)$, except possibly C3.b.$i.j$. To see that C3.c.$i.j$ is also satisfied, assume that $j \in [m^*] \setminus \mathcal{P}(i)$. Since  $f_{\Phi}(z_{ij}) = 0$, the definition of $f_{\Phi}$ implies that the agents of type $i$ who receive houses of type $j$ are not envious. Hence we can conclude that none of the houses of type $j'$ have been allocated under $\Phi$, for any $j' \in \mathcal{P}(i)$. That is, $A(\Phi, i', j') = \emptyset$ for every $i' \in [n^*]$ and $[j'] \in \mathcal{P}(i)$. Thus the left side of constraint C3.b.$i.j$ is $0$; and the right side is $0$ as well, as $f_{\Phi}(z_{ij}) = 1$ and $f_{\Phi}(d'_{ij}) = 1$. 

Finally, assume that $f_{\Phi}(z_{ij}) > 0$. Then, by the definition of $f_{\Phi}$, we have $f_{\Phi}(z_{ij}) = f_{\Phi}(x_{ij}) = \card{A(\Phi, i, j)} \leq n_i$ and $f_{\Phi}(d_{ij}) = 1$. Notice that this choice of values satisfies constraints C4.a.$i.j$-C4.c.$i.j$. From the definition of $f_{\Phi}$, we can also conclude that the agents of type $i$ who receive houses of type $j$ under $\Phi$ are envious, which implies that $j \notin \mathcal{P}(i)$ and $\card{A(\Phi, i', j')} = f_{\Phi}(x_{i' j'}) > 0$ for some $i' \in [n^*]$, $j' \in \mathcal{P}(i)$. Thus the right side of constraint C3.c.$i.j$ is strictly positive; and since $f_{\Phi}(z_{ij}) \leq n_i$, constraint C3.c.$i.j$ is satisfied. Since $f_{\Phi}(d'_{ij}) = 1$ and $f_{\Phi}(x_{ij}) \leq n$, constraint C3.a.$i.j$ is satisfied. Finally, constraint C3.b.$i.j$ is satisfied because its left side is at most $n$, and the right side is at least $nm$ as $f_{\Phi}(z_{ij}) > 0$. Notice also that in this case \ilpoha\ does not contain constraint C5.$i.j$ as $j \notin \mathcal{P}(i)$. 

We have thus shown that $f_{\Phi}$ satisfies all the constraints of \ilpoha. 

Consider $i \in [n^*], j \in [m^*]$. Suppose that $A(\Phi, i, j) \neq \emptyset$. By \Cref{lem:allornone}, either all agents in $A(\Phi, i, j)$ are envious or none of them are. By the definition of $f_{\Phi}$, we have $f_{\Phi}(z_{ij}) = f_{\Phi}(x_{ij}) = \card{A(\Phi, i, j)}$ if and only if the agents in $A(\Phi, i, j)$ are envious; and $f_{\Phi}(z_{ij}) = 0$ otherwise. Therefore, the number of envious agents under $\Phi$, $\kappa^{\#}(\Phi) = \sum_{i \in [n^*]}\sum_{j \in [m^*]} f_{\Phi}(z_{ij})$. 
\end{proof}

To prove~\Cref{claim:ilp-oha-2}, we first prove two preparatory claims below. In both these claims, $\fn{f}{Var(\ilpoha)}{\mathbb{Z}}$ is a feasible solution for \ilpoha. 

\begin{claim}\label{claim:zij}
For every $i \in [n^*], j \in [m^*]$, either $f(z_{ij}) = 0$ or $f(z_{ij}) = f(x_{ij})$. 
\end{claim}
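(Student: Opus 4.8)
The plan is to prove Claim~\ref{claim:zij} by a simple case distinction on the binary dummy variable $f(d_{ij})$, which is precisely the variable that constraints C4.a.$i.j$--C4.c.$i.j$ were designed to use in enforcing the intended ``either--or'' behaviour of $z_{ij}$. First I would isolate the only constraints that couple $z_{ij}$ to $x_{ij}$ for the fixed pair $(i,j)$, namely C4.a.$i.j$ ($z_{ij} \leq n_i d_{ij}$), C4.b.$i.j$ ($x_{ij} - z_{ij} \leq n_i(1 - d_{ij})$), and C4.c.$i.j$ ($z_{ij} \leq x_{ij}$), together with the nonnegativity constraint C6.b.$i.j$ ($z_{ij} \geq 0$). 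Since $f$ is feasible, $f(d_{ij}) \in \{0,1\}$, and these two values split the argument cleanly.

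In the case $f(d_{ij}) = 0$, constraint C4.a.$i.j$ gives $f(z_{ij}) \leq n_i \cdot 0 = 0$, while C6.b.$i.j$ gives $f(z_{ij}) \geq 0$; hence $f(z_{ij}) = 0$. In the case $f(d_{ij}) = 1$, constraint C4.b.$i.j$ becomes $f(x_{ij}) - f(z_{ij}) \leq n_i(1 - 1) = 0$, i.e. $f(x_{ij}) \leq f(z_{ij})$, and combining this with C4.c.$i.j$, which asserts $f(z_{ij}) \leq f(x_{ij})$, yields $f(z_{ij}) = f(x_{ij})$. Since one of the two cases must hold, we conclude that $f(z_{ij}) = 0$ or $f(z_{ij}) = f(x_{ij})$ for every $i \in [n^*]$ and $j \in [m^*]$, as required.

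The claim is essentially an immediate consequence of how the indicator $d_{ij}$ was wired into the constraints, so there is no genuine obstacle here; the only ``work'' is recognising that C4.a.$i.j$--C4.c.$i.j$ and nonnegativity are exactly the constraints needed and that the integrality of $d_{ij}$ reduces everything to two mechanical sub-cases. I expect this claim to serve as a lemma feeding into the proof of~\Cref{claim:ilp-oha-2}, where it guarantees that an optimal ILP solution can be read off as a genuine allocation in which, for each type pair $(i,j)$, either every type-$i$ agent holding a type-$j$ house is counted as envious or none is---mirroring the dichotomy established combinatorially in~\Cref{lem:allornone}.
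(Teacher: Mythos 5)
Your proof is correct and follows exactly the paper's own argument: a case split on $f(d_{ij}) \in \{0,1\}$, using C4.a.$i.j$ together with nonnegativity (C6.b.$i.j$) in the first case and C4.b.$i.j$ together with C4.c.$i.j$ in the second. Nothing is missing and nothing differs in substance.
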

\begin{proof}
Fix $i \in [n^*], j \in [m^*]$. Since constraint C6.c.$i.j$ is satisfied, we have $f(d_{ij}) \in \{0,1\}$. If $f(d_{ij}) = 0$, then constraints C4.a.$i.j$ implies that $f(z_{ij}) \leq 0$ and then constraint C6.b.$i.j$ implies that $z_{ij} =0$. Instead, if $f(d_{ij}) = 1$, then, constraint C4.b.$i.j$ implies that $f(x_{ij}) - f(z_{ij}) \leq 0$, which implies that $f(x_{ij}) \leq f(z_{ij})$. But then constraint C4.c.$i.j$ implies that $f(z_{ij}) = f(x_{ij})$. 
\end{proof}

\begin{claim}\label{claim:envy}
For every $i \in [n^*], j \in [m^*]$, $f(z_{ij}) > 0$ if and only if $j \notin \mathcal{P}(i)$, $f(x_{ij})> 0$ and $f(x_{i'j'}) > 0$ for some $i' \in [n^*]$ and $j' \in \mathcal{P}(i)$. 
\end{claim}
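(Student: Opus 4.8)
The plan is to establish the biconditional by proving the two implications separately; in both directions the argument is a routine unwinding of the group C3 constraints together with the previously established \Cref{claim:zij}. The one structural point to keep in mind throughout is that constraints C3.a.$i.j$, C3.b.$i.j$ and C3.c.$i.j$ are imposed only for $j \in [m^*] \setminus \mathcal{P}(i)$, whereas constraint C5.$i.j$ pins $z_{ij} = 0$ exactly when $j \in \mathcal{P}(i)$; so the case split on whether $j \in \mathcal{P}(i)$ is handled by two disjoint families of constraints.

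For the forward direction, I would assume $f(z_{ij}) > 0$. First, constraint C5.$i.j$ forces $j \notin \mathcal{P}(i)$, since for $j \in \mathcal{P}(i)$ that constraint sets $f(z_{ij}) = 0$; this yields the first conjunct. Next, \Cref{claim:zij} gives $f(z_{ij}) = f(x_{ij})$ (the alternative $f(z_{ij}) = 0$ being excluded), and hence $f(x_{ij}) > 0$, the second conjunct. Finally, constraint C3.c.$i.j$, namely $z_{ij} \leq n_i \sum_{i' \in [n^*]} \sum_{j' \in \mathcal{P}(i)} x_{i'j'}$, together with $f(z_{ij}) > 0$ forces $\sum_{i' \in [n^*]} \sum_{j' \in \mathcal{P}(i)} f(x_{i'j'}) > 0$; nonnegativity of the $x$-variables (constraint C6.a.$i.j$) then produces a witness pair $i' \in [n^*]$, $j' \in \mathcal{P}(i)$ with $f(x_{i'j'}) > 0$, the third conjunct.

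For the reverse direction, I would assume $j \notin \mathcal{P}(i)$, $f(x_{ij}) > 0$, and that $f(x_{i'j'}) > 0$ for some $i' \in [n^*]$, $j' \in \mathcal{P}(i)$. Because $j \notin \mathcal{P}(i)$, the group C3 constraints are available. From constraint C3.a.$i.j$, which reads $x_{ij} \leq n d'_{ij}$, and $f(x_{ij}) > 0$, together with $f(d'_{ij}) \in \{0,1\}$ (constraint C6.d.$i.j$), I conclude $f(d'_{ij}) = 1$. Substituting this into constraint C3.b.$i.j$ annihilates the term $nm(1 - d'_{ij})$, leaving $\sum_{i' \in [n^*]} \sum_{j' \in \mathcal{P}(i)} f(x_{i'j'}) \leq nm \cdot f(z_{ij})$. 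The left-hand side is strictly positive by the assumed witness (all summands being nonnegative), and since $nm > 0$, this forces $f(z_{ij}) > 0$, as desired.

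I do not expect a genuine obstacle here: the constraints C3.a--C3.c were engineered precisely to encode this logical equivalence, so the proof is bookkeeping. The only points requiring care are (i) remembering that the C3 constraints exist solely for $j \notin \mathcal{P}(i)$, so the forward direction must invoke C5.$i.j$ to dispose of the $j \in \mathcal{P}(i)$ case before using them, and (ii) deploying integrality and nonnegativity of the auxiliary variables ($d'_{ij}$ binary, $x_{ij}, z_{ij} \geq 0$) at the right moments to convert the linear inequalities into the sharp positivity statements.
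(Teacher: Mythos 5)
Your proof is correct and follows essentially the same route as the paper's: C5 to get $j \notin \mathcal{P}(i)$, positivity of $x_{ij}$ from the C4 group, C3.c for the witness pair in the forward direction, and C3.a plus C3.b (after forcing $d'_{ij}=1$) in the reverse direction. The only cosmetic difference is that you route the second conjunct through \Cref{claim:zij} where the paper cites constraint C4.c.$i.j$ directly; both are immediate.
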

\begin{proof}
Fix $i \in [n^*], j \in [m^*]$. Assume first that $f(z_{ij}) > 0$. Then, constraint C5.$i.j$ implies that $j \notin \mathcal{P}(i)$. Constraint C4.c.$i.j$ implies that $f(x_{ij}) \geq f(z_{ij}) > 0$. Now, if $f(x_{i'j'}) = 0$ for every $i' \in [n], j' \in \mathcal{P}(i)$, then constraint C3.c.i.j would imply that $f(z_{ij}) \leq 0$, which is not possible. Hence, we have $f(x_{i'j'}) > 0$ for some $i' \in [n], j' \in \mathcal{P}(i)$. 

Assume now that $j \notin \mathcal{P}(i)$, $f(x_{ij}) > 0$, and $f(x_{i'j'}) > 0$ for some $i' \in [n^*]$ and $j' \in \mathcal{P}(i)$.  Since $f(x_{ij}) > 0$, constraint C3.a.$i.j$ implies that $f(d'_{ij}) > 0$. By constraint C6.d.$i.j$, we then have $f(d'_{ij}) = 1$. Then, constraint C3.b.$i.j$ ,along with the fact that $f(x_{i'j'}) > 0$ for some $i' \in [n^*]$ and $j' \in \mathcal{P}(i)$, implies that $0 < \sum_{i' \in [n^*]} \sum_{j' \in \mathcal{P}(i)} f(x_{i'j'}) \leq nmf(z_{ij})$, which, then implies that $f(z_{ij}) > 0$.  
\end{proof}

\begin{proof}[Proof of \Cref{claim:ilp-oha-2}]
Given $\fn{f}{Var(\ilpoha)}{\mathbb{Z}}$, we define $\fn{\Phi_f}{A}{H}$ as follows. For each $i \in [n^*], j \in [m^*]$, we allocate $f(x_{ij})$ houses of type $j$ to agents of type $i$ (one house per agent). Thus, we have $\card{A(\Phi_f, i, j)} = f(x_{ij})$. Notice that as $f$ satisfies constraints C1.$i$ and C2.$j$, the allocation $\Phi_f$ is valid. 

To complete the proof of the claim, we only need to prove that for every $i \in [n^*], j \in [m^*]$, the number of envious agents of type $i$ who received houses of type $j$ under $\Phi_f$ is exactly equal to $f(z_{ij})$. Fix $i \in [n^*], j \in [m^*]$. 
%If $\card{A(\Phi_f, i, j)} = f(x_{ij}) = 0$, then constraint C6.c.$i.j$ implies that $f(z_{ij}) \leq 0$; and then constraint C8.b.$i.j$ implies that $f(z_{ij}) = 0$. So, suppose that $\card{A(\Phi_f, i, j)} = f(x_{ij}) > 0$. 
By \Cref{lem:allornone}, either all agents in $A(\Phi_f, i, j)$ are envious or none of them is envious. Notice that the agents in $A(\Phi_f, i, j)$ (if they exist) are envious if and only if $\card{A(\Phi_f, i, j)} > 0$, $j \notin \mathcal{P}(i)$ and $A(\Phi_f, i', j') \neq \emptyset$ for some $i' \in [n^*], j' \in [m^*]\setminus \mathcal{P}(i)$. That is, the agents in $A(\Phi_f, i, j)$ are envious if and only if $f(x_{ij}) = \card{A(\Phi_f, i, j)} > 0$,  $j \notin \mathcal{P}(i)$ and $f(x_{i'j'}) = \card{A(\Phi_f, i', j')} > 0$ for some $i' \in [n^*], j' \in [m^*]\setminus \mathcal{P}(i)$. On the other hand, by \Cref{claim:envy}, $f(z_{ij}) > 0$ if and only if $f(x_{ij}) > 0$, $j \notin \mathcal{P}(i)$ and $f(x_{i'j'}) > 0$ for some $i' \in [n^*], j' \in [m^*] \setminus \mathcal{P}(i)$. We can thus conclude that the agents in $A(\Phi_f, i, j)$ are envious if and only if $f(z_{ij}) > 0$. By \Cref{claim:zij}, we also have $f(z_{ij}) = 0$ or $f(z_{ij}) = f(x_{ij})$. This implies that the agents in $A(\Phi_f, i, j)$ are envious if and only if $f(z_{ij}) = f(x_{ij}) >0$. By \Cref{lem:allornone}, if the agents in $A(\Phi_f, i, j)$ are envious, then the number of envious agents in $A(\Phi_f, i, j)$ is exactly $\card{A(\Phi_f, i, j)} = f(x_{ij}) = f(z_{ij})$. We thus have $\kappa^{\#}(\Phi_f) = \sum_{i \in [n^*]} \sum_{j \in [m^*]} f(z_{ij})$. 
\end{proof}

% Can remove the forced page breaks later :)
% \newpage

\section{Egalitarian House Allocation}
\label{sec:eha}

In this section, we deal with the EHA problems, where the goal is to minimize the maximum envy experienced by any agent. We first discuss the polynomial time algorithms for EHA. 

\subsection{Polynomial Time Algorithms for EHA}

\begin{theorem}
\label{lem:eha-ext} There is a polynomial-time algorithm for~$[\nicefrac{0}{1}]$-\EHA{} when the agent valuations have an extremal interval structure.
\end{theorem}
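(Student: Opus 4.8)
The plan is to reduce the problem to a small family of bipartite-matching computations that exploit the nested-prefix structure. By \Cref{remark:extremal,remark:left-extremal} it suffices to handle a left-extremal instance that is irreducible under \Cref{rr1,rr2,rr5}; I would work with agents $a_1, \ldots, a_n$ ordered so that $\mathcal{P}(a_1) \subseteq \cdots \subseteq \mathcal{P}(a_n)$, valued houses $h_1, \ldots, h_t$ (with $t \le n-1$), and $|D| \le n-1$ dummy houses. In the binary setting an agent that receives a house it values is envy-free, while a \emph{deprived} agent $a$ (one assigned a house outside $\mathcal{P}(a)$) envies precisely the holders of the houses of $\mathcal{P}(a)$ that are allocated; hence $\kappa_\Phi(a) = |Q \cap \mathcal{P}(a)|$, where $Q$ is the set of \emph{valued} houses allocated by $\Phi$. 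So the whole objective is governed by the set $Q$ together with the choice of which agents are deprived.

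The key structural observation I would establish is that there is an optimal allocation in which $Q$ is a \emph{suffix} of the valued houses, say $Q = \{h_{t-c+1}, \ldots, h_t\}$, and in which the envy-free agents are those with the largest prefixes. Under such an allocation the envy of a deprived agent $a$ collapses to the clean formula $\kappa_\Phi(a) = \max\{0,\, i(a) - (t - c)\}$, which is monotone in the prefix length $i(a)$; thus the maximum envy is dictated by the largest-prefix agent left deprived. I would prove both properties by local exchange arguments: replacing an allocated low house by a higher unallocated one can only decrease the quantity $|Q \cap \mathcal{P}(a)|$ for every agent $a$, and, keeping $Q$ fixed, swapping the houses of an envy-free agent $a_p$ and a deprived agent $a_q$ with $p<q$ keeps the larger-prefix agent $a_q$ envy-free while giving $a_p$ envy $|Q \cap \mathcal{P}(a_p)| \le |Q \cap \mathcal{P}(a_q)|$, so the maximum never grows.

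With these normalizations in place, the algorithm enumerates $c$, the number of allocated valued houses, over the admissible range $\max\{0, n-|D|\} \le c \le t$ (the lower bound coming from the requirement that the $n - c$ remaining agents can receive dummy houses). For each $c$ it sets $Q = \{h_{t-c+1}, \ldots, h_t\}$, computes a maximum matching $\mu(c)$ between the eligible agents (those with prefix at least $t-c+1$) and the houses of $Q$ that saturates the highest-prefix eligible agents first, declares those matched agents envy-free, and evaluates the resulting maximum envy $\max\{0,\, i(\hat a) + c - t\}$, where $\hat a$ is the highest-prefix agent left deprived. Because the bipartite graph inherits the nested interval structure, this matching can be computed greedily in polynomial time, and the algorithm returns the minimum of these values over all $c$, comparing it with $k$ to answer the decision problem.

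The main obstacle I anticipate is the lower-bound half of correctness, i.e.\ arguing that the canonical form loses nothing. The delicate point is the interaction between the choice of $Q$ and the matching: pushing $Q$ to the right reduces how many agents are \emph{eligible} to be made envy-free, and when dummies are scarce some valued houses are forced onto agents that do not value them (``wasted'' houses that still contribute to others' envy). I would need to show that, for each value of $c$, saturating the largest-prefix eligible agents with the suffix $Q$ is simultaneously feasible and optimal among all allocations that commit exactly $c$ valued houses --- a Hall-type argument on nested intervals that must handle ties among equal-prefix agents with care.
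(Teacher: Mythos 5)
Your algorithm is genuinely different from the paper's: the paper guesses the envious agent $a_l$ with the longest interval $[h_1,h_j]$ together with the number $i\le k$ of allocated houses from that interval, allocates the \emph{prefix} $h_1,\dots,h_i$ to the agents immediately following $a_l$, and matches every later agent into $[h_{j+1},h_m]$; you instead enumerate the total number $c$ of allocated valued houses and normalize the allocated set $Q$ to a \emph{suffix}. Your reduction of a deprived agent's envy to $|Q\cap\mathcal{P}(a)|$ and the swap argument making the largest-prefix agents envy-free are both fine.

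However, the lemma you say you would prove at the step you yourself flag as the main obstacle --- that for each fixed $c$ the suffix $Q=\{h_{t-c+1},\dots,h_t\}$ with the top-agent matching is optimal among all allocations committing exactly $c$ valued houses --- is false, and the size-preserving exchange you propose (replace an allocated low house by a higher unallocated one) does not establish it, because the exchange can turn a matched, envy-free agent into a deprived one. Concretely, take $n=6$ agents with $\mathcal{P}(a_1)=\mathcal{P}(a_2)=\mathcal{P}(a_3)=\{h_1\}$, $\mathcal{P}(a_4)=\mathcal{P}(a_5)=\{h_1,h_2,h_3\}$, $\mathcal{P}(a_6)=\{h_1,\dots,h_5\}$, and three dummy houses (so $m=8$, $t=5$, and the instance is left-extremal and irreducible under the reduction rules). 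For $c=3$ the gapped set $Q'=\{h_2,h_3,h_5\}$ yields an envy-free allocation ($a_4\mapsto h_2$, $a_5\mapsto h_3$, $a_6\mapsto h_5$, dummies to $a_1,a_2,a_3$), whereas the suffix $\{h_3,h_4,h_5\}$ forces one of $a_4,a_5$ to be deprived with envy $1$, since $h_4$ and $h_5$ are valued only by $a_6$. So the within-$c$ comparison you plan to make cannot work. The algorithm itself is salvageable --- in this example the suffix of size $c=4$, namely $\{h_2,h_3,h_4,h_5\}$, again achieves envy $0$ with $h_4$ wasted harmlessly --- but the correctness proof must compare the optimal allocation against a suffix of a \emph{different, generally larger} size: if the optimum has value $v$ and its largest-index deprived agent $a^*$ has prefix length $w^*$, then the suffix with threshold $w^*-v$ (of size $t-w^*+v\ge|Q^{\mathrm{opt}}|$) can still match every agent with prefix longer than $w^*$ and leaves every deprived agent with envy at most $\max(0,\,i(a)-w^*+v)\le v$. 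As written, your proof plan has a genuine gap at exactly this point.
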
 

\begin{proof} In light of  \Cref{remark:extremal}, it suffices to prove for the case when agent valuations are left-extremal. Let $\mathcal{I}:= (A, H,\mathcal{P}; k)$ be an instance of EHA with left-extremal valuations. Consider an envious agent $a_l$ in the allocation, who approves the interval $[h_1, h_i]$. Note that if $\mathcal{I}$ is a~\textsc{Yes}~instance, no more than $k$ houses can be allocated from $[h_1, h_i]$, else the envy experienced by $a_l$ will be more than $k$. 

Based on this observation, the algorithm works as follows. We order the agents in the increasing order of the length of their intervals, that is, $a$ appears before $a^\prime$ if $\mathcal{P}(a) \subseteq \mathcal{P}(a^\prime)$. We guess the last envious agent $a_l$. There are at most $n$ such guesses. 
Suppose the interval that $a_l$ approves is $[h_1, h_j]$ for some $j \in [m]$. Since any valid allocation is bound to allocate at most $k$ houses from $[h_1, h_j]$, we iterate over the number of houses $i$ allocated from $[h_1, h_j]$. (Note that $i \leq k$.) For each such $i$, we construct the associated preference graph $G_i = (A_{\geq l+i+1} \cup H_{\geq j+1}; E)$ where $A_{\geq l+i+1} = \{a_{l+i+1}, a_{l+i+2}, \ldots a_{n}\}$ and $H_{\geq j+1} = \{h_{j+1}, \ldots h_m\}$. We then find a matching in $G_i$ that saturates $A_{\geq l+i+1}$, and if it does not exist, then the iteration is discarded. The algorithm then constructs the allocation $\Phi_i$ as follows. The matched houses under $M_i$ are allocated to the matched agents. The first $i$ houses from $[h_1, h_j]$ are allocated to the first $i$ agents whose interval ends after $h_j$, particularly, to $\{a_{l+1}, a_{l+2}, \ldots, a_{l+i}\}$. The remaining unmatched houses are assigned arbitrarily to the unmatched agents. If the number of remaining houses is less than the remaining agents, then the iteration is discarded and the algorithm moves to the next iteration. The algorithm returns~\textsc{Yes} if for some iteration $i$, $\Phi_i$ is a complete allocation, that is, every agent gets a house under $\Phi_i$. Else, it returns~\textsc{No}.

To argue the correctness of the above algorithm, we show that if $\mathcal{I}$ is a~\textsc{Yes}~instance, if and only if for some iteration $i$, there exists a complete allocation $\Phi_i$.
In the forward direction, suppose $\mathcal{I}$ is a~\textsc{Yes}~instance. There must exist some optimal allocation~\textsc{Opt}~such that the maximum envy under~\textsc{Opt}~is at most $k$. Among all the envious agents under~\textsc{Opt}, consider the agent $a$ whose interval $[h_1, h]$ is longest, that is, for all envious agents $a^\prime$, $\mathcal{P}(a^\prime) \subseteq \mathcal{P}(a)$. Suppose~\textsc{Opt}~allocates $k^\prime$ houses from $[h_1, h]$. Then consider the iteration in $A$ that iterates over the agent $a_l = a$ and $i = k^\prime$. (This implies $[h_1, h_j] = [h_1, h]$.)  Note that under~\textsc{Opt}, all the agents whose interval ends after $h$ must get a house that they like. This implies that for all the agents $\{a_{l+1}, a_{l+2}, \ldots a_{n}\}$, there exists a house that they like and that can be allocated to them. As ~\textsc{Opt}~ allocates exactly $k^\prime (=i)$ houses from $[h_1, h_j]$, at most $i$ of the agents among $\{a_{l+1}, a_{l+2}, \ldots a_{n}\}$ can get a house they like from $[h_1, h_j]$. For the remaining ones, there must exist at least one house $[h_{j+1}, h_m]$ that they like and can be allocated to them, which implies that there must exist a matching saturating the said agents under the said iteration $i$. Therefore all the agents $\{a_{l+1}, a_{l+2}, \ldots a_{n}\}$ get a house that they like under $\Phi_i$. Also, since~\textsc{OPT}~is a complete allocation, so there are enough remaining houses for the agents $\{a_1, \ldots a_l\}$ that can be allocated to them, once $\{a_{l+1}, a_{l+2}, \ldots a_{n}\}$ get what they value. This implies that there are enough houses remaining under $\Phi_i$ as well to be allocated to the remaining agents. (Note that even if all the agents $\{a_1, \ldots a_l\}$ are envious, their envy is bounded by at most $i \leq k$.) This implies that the allocation $\Phi_i$ constructed in the iteration $i$ is indeed complete, and the algorithm returns~\textsc{Yes}.

In the reverse direction, suppose there exists a complete allocation $\Phi_i$. Note that all the envious agents under $\Phi_i$ do not value any house outside $[h_1, h_j]$ and exactly $i (\leq k)$ houses are allocated from $\{h_1, h_j\}$. Therefore the maximum envy is at most $k$, and $\mathcal{I}$ is a~\textsc{Yes}~instance.
This concludes the proof.
\end{proof}

We now state the algorithm for the restricted setting when every agent approves exactly one house. 
\begin{theorem}
\label{lem:eha-onehouse} There is a polynomial-time algorithm for~$[\nicefrac{0}{1}]$-\EHA{} when every agent approves exactly one house.
\end{theorem}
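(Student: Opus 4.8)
The plan is to exploit the fact that, when each agent approves exactly one house, the maximum envy can never exceed one. Fix an arbitrary allocation $\Phi$ and an agent $a$, and let $h_a$ be the unique house that $a$ approves. Agent $a$ envies $b$ only if $u_a(\Phi(b)) = 1$, i.e.\ only if $\Phi(b) = h_a$; since $\Phi$ is an injection, at most one agent can be assigned $h_a$. Hence $\card{\mathcal{E}_\Phi(a)} \leq 1$ for every agent $a$, and so $\kappa^\dagger(\Phi) \in \{0,1\}$ for \emph{every} allocation $\Phi$. This immediately disposes of the case $k \geq 1$: any injection $A \to H$ (which exists because $m \geq n$) witnesses a \textsc{Yes}-instance, so in this case the algorithm simply returns \textsc{Yes}.

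The only remaining case is $k = 0$, where we must decide whether an envy-free allocation exists. I would characterize this directly. Call a non-dummy house $h$ \emph{private} if $\card{\mathcal{T}(h)} = 1$ and \emph{contested} if $\card{\mathcal{T}(h)} \geq 2$. In any envy-free allocation, a contested house cannot be allocated at all: any recipient of $h$ is envied by each of the (at least two) agents in $\mathcal{T}(h)$ who did not receive it. Similarly, a private house $h$ with $\mathcal{T}(h) = \{a\}$ can only be left unallocated or given to $a$, since giving it to any other agent leaves $a$ without her approved house and hence envious of that agent. Consequently, every agent whose approved house is contested must receive a dummy house, whereas every agent whose approved house is private may take either that house or a dummy house.

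Writing $C$ for the set of agents whose unique approved house is contested, this shows that an envy-free allocation exists if and only if $\card{D} \geq \card{C}$. If the inequality holds, assign each agent in $C$ a distinct dummy house and give every remaining agent its own private approved house; this is envy-free and allocates exactly $\card{C} + \card{A \setminus C} = n$ houses. Conversely, in any envy-free allocation each agent of $C$ must receive a distinct dummy house---its contested house is unallocated, and it cannot receive a private house either, since that would leave the owner of the private house envious---so no envy-free allocation exists when $\card{D} < \card{C}$. Both $\card{C}$ and $\card{D}$ are read off the preference graph in polynomial time, so the $k = 0$ case is decided efficiently (alternatively one may invoke the envy-freeness test of~\cite{GSV2019}). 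I expect the only point requiring care to be this last case analysis, namely verifying that a $C$-agent can never be safely assigned another agent's private house; once that is pinned down, correctness and the polynomial running time follow.
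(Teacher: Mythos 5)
Your proposal is correct and follows essentially the same approach as the paper: both rest on the observation that when each agent approves exactly one house, every allocation has $\kappa^\dagger(\Phi) \leq 1$, so the only nontrivial case is $k=0$, which the paper handles by invoking the envy-freeness algorithm of~\cite{GSV2019} (the fallback you also mention). Your additional self-contained characterization of the $k=0$ case---an envy-free allocation exists if and only if the number of dummy houses is at least the number of agents whose approved house is contested---is a correct and slightly more explicit substitute for that black box.
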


\begin{proof} Notice that when every agent approves exactly one house, then in any allocation, the maximum envy $\kappa^\dagger(\Phi)$ is bounded by $1$. Given an instance $\mathcal{I}:= (A,H,\mathcal{P}; k)$ of HA, we invoke the Hall's Violators Algorithm by \cite{GSV2019}. The above algorithm checks whether there is an envy-free allocation and returns one if it exists. Therefore, if it returns an allocation, then $\kappa^\dagger(\Phi) = 0$, else it has to be at least $1$ under any allocation.
\end{proof}

\subsection{Hardness Results for EHA}

In this section, we establish the hardness of EHA for both binary and linear orders, even under restricted settings.

\begin{theorem}
\label{lem:eha-is} There is a polynomial-time reduction from~\textsc{Independent Set} to~$[\nicefrac{0}{1}]$-\EHA{}, where every agent approves at most two houses and every house is approved by at most ten agents. This reduction shows that~$[\nicefrac{0}{1}]$-\EHA{} is \NPC{} even when the target value of the maximum envy is one.
\end{theorem}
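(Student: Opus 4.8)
The plan is to reduce from \textsc{Independent Set} restricted to graphs of maximum degree three, for which the problem remains \NPH{}. Given such an instance $(G = (V, E), k)$ with $|V| = n'$ and $|E| = m'$, I would build a binary \EHA{} instance as follows. For every vertex $v$ introduce one \emph{vertex house} $h_v$ and one \emph{vertex agent} $a_v$ that values only $h_v$. For every edge $e = (u,v)$ introduce \emph{three} copies $a_e^1, a_e^2, a_e^3$ of an \emph{edge agent}, each valuing exactly the two houses $h_u$ and $h_v$. Finally add $D := (n' - k) + 3m'$ \emph{dummy houses} valued by nobody, and set the target maximum envy to $1$. Every agent values at most two houses by construction, and a house $h_v$ is valued only by $a_v$ together with the $3\deg(v) \le 9$ edge-agent copies of edges incident to $v$, for a total of at most $1 + 9 = 10$ agents; dummy houses are valued by nobody. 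Thus the instance meets the stated degree restrictions and is constructible in polynomial time. Note it has $n' + 3m'$ agents and $n' + D = 2n' - k + 3m'$ houses, so any allocation leaves exactly $n' - k$ houses unassigned.

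For the forward direction, given an independent set $S$ with $|S| = k$, I would allocate $h_v$ to $a_v$ for each $v \in S$, leave $h_v$ unassigned for $v \notin S$, and hand out dummy houses to the remaining $n' - k$ vertex agents and to all $3m'$ edge-agent copies (the count works out exactly, using all $D$ dummies). Each selected vertex agent is envy-free; each unselected vertex agent $a_v$ receives a dummy but $h_v$ is unassigned, so she envies no one. For an edge $e = (u,v)$, independence of $S$ guarantees that at most one of $h_u, h_v$ is assigned, so every copy $a_e^i$ (holding a dummy) envies at most the single agent holding that one house. Hence the maximum envy is at most $1$.

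The reverse direction is where the main obstacle lies, and it is precisely what the three copies are designed to overcome. The natural worry is \emph{absorption}: a single edge agent per edge who simply takes one of $h_u, h_v$ for herself becomes envy-free and stops ``guarding'' the edge, so both endpoint houses could be assigned without violating the envy bound, destroying the independent-set correspondence. With three copies this cannot happen: the two houses $h_u, h_v$ can satisfy at most two of the copies, so at least one copy $a_e^i$ is forced onto a house she does not value; if \emph{both} $h_u$ and $h_v$ were assigned (to anyone), this copy would envy the two distinct holders and incur envy $2 > 1$. Therefore, in any allocation of maximum envy $1$, the set $S^\star := \{v : h_v \text{ is assigned}\}$ contains at most one endpoint of every edge, i.e.\ $S^\star$ is independent. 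Finally, since at most $D = (n'-k)+3m'$ dummies can appear among the $n' + 3m'$ assigned houses, at least $(n'+3m') - D = k$ vertex houses are assigned, giving $|S^\star| \ge k$. This establishes the equivalence, and together with the obvious membership of \EHA{} in \NP{} it shows the problem is \NPC{} even when the target maximum envy is one.
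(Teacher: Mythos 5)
Your proposal is correct and uses essentially the same reduction as the paper: the identical construction (one vertex house and vertex agent per vertex, a cohort of three edge agents per edge valuing both endpoint houses, and $3m'+n'-k$ dummy houses), with the same key observation that three copies force at most one endpoint house of each edge to be allocated. Your reverse direction is in fact slightly more streamlined, since you extract the independent set directly from the set of \emph{allocated} vertex houses via a counting argument, whereas the paper first normalizes to a ``nice'' allocation in which vertex houses go only to their own vertex agents.
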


\begin{proof}
Let $\mathcal{I} := (G = (V,E);k)$ be an instance of \textsc{Independent Set}. Let $|V|=n'$ and $|E|=m'$. Since \textsc{Independent Set} is known to be \NPC{} even on subcubic graphs, we assume that $G$ is subcubic, i.e, that the degree of every vertex in $G$ is at most three. We construct an instance of~$[\nicefrac{0}{1}]$-EHA as follows.

\begin{itemize}
  \item We introduce a house $h_v$ for every vertex $v$ in $V$. We call these the \emph{vertex houses}. 
  \item We also introduce $(3m'+n')-k$ dummy houses.
  \item We introduce an agent $a_v$ for every vertex $v \in V$ and three agents $a_e^{1},a_e^{2},a_e^{3}$ for every edge $e \in E$. We refer to these as the vertex and edge agents, respectively. We refer to the three agents corresponding to a single edge as the \emph{cohort} around $e$.
  \item All three edge agents corresponding to the edge $e = (u,v)$ value the houses $h_u$ and $h_v$.
  \item A vertex agent $a_v$ values the house $h_v$. 
\end{itemize}

Note that in this instance of EHA, there are $3m'+n'$ agents, $n'$ vertex houses, and $3m'+n'-k$ dummy houses, and every agent approves at most two houses. We set the maximum allowed envy at one, that is, the reduced instance asks for an allocation where every agent envies at most one other agent. This completes the construction of the reduced instance. We now turn to a proof of equivalence. 

\paragraph*{The forward direction.}

Let $S \subseteq V$ be an independent set of size $k$. Consider the allocation $\Phi$ that assigns $h_v$ to $a_v$ for all $v \in S$ and dummy houses to all remaining agents. Note that all vertex agents are envy-free under this allocation. If an edge agent envies two other agents, it must be two agents who received vertex houses, since recipients of dummy houses are never a cause for envy. So suppose an edge agent $a_e^\circ$ who approves, say, the houses $h_u$ and $h_v$ is envious, then it implies that both $a_u$ and $a_v$ belong to $S$ while it is also true that $(u,v) \in E$; contradicting our assumption that $G[S]$ induces an independent set. This completes the argument in the forward direction.

\paragraph*{The reverse direction.}

Let $\Phi$ be an allocation with respect to which every agent envies at most one other agent. First, note that if $e = (u,v)$ is an edge in $G$, then note that $\Phi$ can allocate at most one of the houses $h_u$ and $h_v$. Suppose not. Then notice that at least one of the agents among the cohort of edge agents corresponding to $e$, i.e, $a_e^1, a_e^2$ and $a_e^3$ envy the two agents who were assigned the houses $h_u$ and $h_v$. 

We say that $\Phi$ is nice if every vertex house is either unallocated by $\Phi$ or allocated to a vertex agent who values it. If $\Phi$ is not nice to begin with, notice that it can be converted to a nice allocation by a sequence of exchanges that does not increase the maximum envy of the allocation without changing the set of allocated houses. In particular, suppose $h_v$ is allocated to an agent $a \neq a_v$. Then we obtain a new allocation by swapping the houses $h_v$ and $\Phi(a_v)$ between agents $a$ and $a_v$. This causes $a_v$ to become envy-free. If $a$ is a vertex agent, then she experiences the same envy as before. If $a$ is an edge agent, then we have two possible scenarios. Suppose $a$ did not value $h_v$: then the amount of envy experienced by $a$ is either the same or less than, the amount of envy she had in the original allocation. On the other hand, if $a$ did value $h_v$, then $a$ envies $a_v$ in the new allocation but nobody else, since $h_u$ is unallocated in $\Phi$. 

Based on this, we assume without loss of generality, that $\Phi$ is a nice allocation. Now note that any nice allocation $\Phi$ is compelled to assign $3m'$ dummy houses among the $3m'$ edge agents, and this leaves us with $n'-k$  dummy houses that can be allocated among $n'$ vertex agents. Therefore, at least $k$ vertex agents are assigned vertex houses. Finally, observe that the corresponding vertices induce an independent set in $G$ of size at least $k$. Indeed, this follows from a fact that we have already argued: any two houses corresponding to vertices that are endpoints of an edge cannot be both allocated by $\Phi$. This concludes the argument in the reverse direction. 
\end{proof}

\begin{theorem}
\label{lem:eha-mcis} There is a polynomial-time reduction from~\textsc{Multi-Colored Independent Set} to~$[\succ]$-\EHA{}. This reduction shows that~$[\succ]$-\EHA{} is \NPC{} even when the target value of the maximum envy is one.
\end{theorem}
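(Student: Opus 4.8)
The plan is to reduce from \textsc{Multi-Colored Independent Set} with colour classes $V_1,\dots,V_k$, where $|V|=n'$ and $|E|=m'$, producing an instance of $[\succ]$-\EHA{} whose target maximum envy is $1$. For every vertex $v$ I introduce one \emph{vertex house} $h_v$, and I add $3m'+2k$ \emph{dummy houses}. The agents come in two families of constant-size \emph{cohorts}. For each edge $e=(u,v)$ I create three \emph{edge agents} $a_e^1,a_e^2,a_e^3$, each ranking $h_u\succ h_v\succ(\text{its own private dummy})\succ(\text{all remaining houses in a fixed order})$. For each colour class $i$ I create three \emph{class guards} $w_i^1,w_i^2,w_i^3$, each ranking all class-$i$ vertex houses (in a fixed order) at the top, then its own private dummy, then everything else. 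The design choice forced by strict orders is that every agent's private dummy sits \emph{immediately} below the only houses it genuinely cares about, so that whenever an agent holds its private dummy it can envy \emph{only} those few meaningful houses and none of the remaining allocated houses.

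There are $3m'+3k$ agents and $n'+3m'+2k$ houses, so exactly $n'-k\ge 0$ houses stay unallocated (note $k\le n'$). The dummy count $3m'+2k$ is tuned so that, in \emph{any} allocation, at least $k$ vertex houses are allocated: since every agent receives a house and there are only $3m'+2k$ dummies, at least $(3m'+3k)-(3m'+2k)=k$ agents must receive vertex houses, hence at least $k$ distinct vertex houses are allocated.

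For the forward direction, given a multi-coloured independent set $S=\{s_1,\dots,s_k\}$ with $s_i\in V_i$, I assign $h_{s_i}$ to the guard $w_i^1$ and give every other agent its own private dummy. Then $w_i^1$ sees no higher-ranked class-$i$ house allocated (only $h_{s_i}$ is), so it has envy $0$; the two remaining class-$i$ guards envy exactly the holder of $h_{s_i}$, i.e.\ envy $1$; and each edge agent for $e=(u,v)$ envies only the allocated houses among $\{h_u,h_v\}$, of which at most one is allocated because $S$ is independent. Hence the maximum envy is $1$.

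For the reverse direction I take any allocation $\Phi$ with maximum envy at most $1$ and read off $T=\{v:\ h_v\ \text{is allocated}\}$. The cohorts of size three make the two key implications robust without any ``niceness'' preprocessing: if both endpoints of an edge $e=(u,v)$ lay in $T$, then among $a_e^1,a_e^2,a_e^3$ at least one holds neither $h_u$ nor $h_v$ (there are only two such houses), so its house is ranked below both and it envies two agents---impossible; thus $T$ is independent. Symmetrically, the three class-$i$ guards force at most one class-$i$ vertex house to be allocated, so $|T\cap V_i|\le 1$ for every $i$. Combined with the counting bound $|T|\ge k$, this yields $|T\cap V_i|=1$ for all $i$, so $T$ is a multi-coloured independent set. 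I expect the main obstacle to be precisely the interaction between the strict linear orders and the envy-$\le 1$ target: because every agent must rank \emph{all} houses, the argument only goes through if the private-dummy placement guarantees that an agent holding a dummy never envies more than its designated meaningful houses, and if the dummy count is set so that the per-class ``at most one'' bounds and the global ``at least $k$'' bound meet exactly at one-per-class. Verifying the envy bounds cohort-by-cohort---rather than via the swapping/niceness argument used for $[\nicefrac{0}{1}]$-\EHA{} in \Cref{lem:eha-is}, which can \emph{increase} an agent's envy under strict orders---is what makes the constant-size cohorts necessary.
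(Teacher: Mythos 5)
Your reduction is a genuinely different, and in fact leaner, construction than the paper's. The paper keeps one edge agent per edge and therefore must force every allocated vertex house into the hands of a vertex agent (otherwise an edge agent holding $h_u$ or $h_v$ silently kills the independence argument); this is what necessitates its vertex agents with the rotated rankings over the houses $d^i_1,\ldots,d^i_{n-1}$, the edge houses, and the $3\binom{n}{2}$ guards \emph{per pair} of same-class vertices. You sidestep all of that by using cohorts of three edge agents per edge and three guards per colour class: with only two ``meaningful'' houses per edge cohort (resp.\ at most two top-ranked \emph{allocated} class-$i$ houses), some member of each cohort must hold a house ranked below both, and so witnesses envy $2$. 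Both the per-class ``at most one'' bound and the independence bound then hold for \emph{arbitrary} allocations, with no niceness/swapping preprocessing, and the counting bound (agents minus dummies $= k$) closes the argument exactly as in the paper. The forward and reverse directions as you outline them are correct.

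There is, however, one concrete inconsistency you must repair: you create $3m'+3k$ agents, state that \emph{every} edge agent and \emph{every} guard ranks ``its own private dummy'' just below its valued houses, yet you only introduce $3m'+2k$ dummy houses. These two choices are incompatible: injective private dummies for all agents would require $3m'+3k$ dummies, but raising the dummy count to $3m'+3k$ destroys the reduction outright (the allocation giving every agent its private dummy would then have maximum envy $0$ regardless of whether $G$ has a multicoloured independent set). The count $3m'+2k$ is the right one, so the construction must instead designate only two of the three guards in each class (say $w_i^2,w_i^3$) as owning private dummies, with $w_i^1$ ranking the class-$i$ vertex houses followed directly by all remaining houses in a fixed order. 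With that amendment the forward direction goes through as you describe ($w_i^1$ receives $h_{s_i}$ and has envy $0$; everyone else holds its own private dummy and envies at most one agent), and the reverse direction is unaffected since it never relied on where the private dummies sit.
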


\begin{proof}
  Let $\mathcal{I} := (G = (V,E);k)$ be an instance of \textsc{Multi-Colored Independent Set} with color classes $V_1, \ldots, V_k$. By adding dummy global vertices if required, we assume that $|V_i| = n \geq 3$ for all $i \in [k]$, and denote the vertices in $V_i$ by $\{u_1^i, \ldots, u_n^i\}$. The global dummy vertices added in any partition $V_i$ are adjacent to all the other vertices not in $V_i$.

  We now construct an instance of~$[\succ]$-EHA as follows.
  
  \begin{itemize}
    \item We introduce a house for every vertex $v$ in $V$. We call these the \emph{vertex houses} and they are denoted by: 
    \[H_V := \{h_1^{1}, \ldots, h_{n}^{1}\} \uplus \cdots \uplus \{h_1^{k}, \ldots, h_{n}^{k}\}.\]
    \item We introduce a house $h_e$ for every edge $e$ in $E$. We call these the \emph{edge houses} and denote this set of houses by $H_E$.
    \item For each $1 \leq i \leq k$, and for every $1 \leq p \neq q \leq n$, we introduce three houses $h_{[i;p,q]}^1$, $h_{[i;p,q]}^2$ and $h_{[i;p,q]}^3$. We call them \emph{special houses}. 
    \item We also introduce $k \cdot (n-1)$ additional houses, denoted by:
    \[H_D := \{d_1^{1}, \ldots, d_{n-1}^{1}\} \uplus \cdots \uplus \{d_1^{k}, \ldots, d_{n-1}^{k}\}.\]    
    \item We introduce an agent for every vertex $v \in V$, denoted by $a_{[1,i]}, \ldots, a_{[n,i]}$ for $1 \leq i \leq k$. We call them \emph{vertex agents}.
    \item We introduce an agent $a_e$ for every $e \in E$.  We call them \emph{edge agents}.
    \item For each $1 \leq i \leq k$, and for every $1 \leq p \neq q \leq n$, we introduce three agents $a_{[i;p,q,1]}$, $a_{[i;p,q,2]}$ and $a_{[i;p,q,3]}$. We call them \emph{guards}. 
  \end{itemize}

  For a set $X$, we use $\overline{X}$ to denote an arbitrary order on the set $X$. Also, for a fixed order, say $\sigma$, we use $[[\sigma]]_i$ to denote the order $\sigma$ rotated $i$ times. For example, $[[x \succ y \succ z]]_2 = z \succ x \succ y$. Note that $\sigma$ is an order over $n$ elements, then $[[\sigma]]_n = \sigma$. We also use $H$ to denote the set of houses in the reduced instance. We are now ready to describe the preferences of the agents.

  \begin{itemize}
    \item An edge agent corresponding to an edge $e = (u_p^i,u_q^j)$ that has endpoints in $V_i$ and $V_j$ (with $i < j$) ranks the houses as follows:

  $$\succ_e: h_p^i \succ h_q^j \succ h_e \succ \overline{H \setminus \{h_p^i, h_q^j, h_e\}}$$

  \item For each $1 \leq i \leq k$, for every $1 \leq p \neq q \leq n$, and $\ell \in \{1,2,3\}$ the guard agent $a_{[i;p,q,\ell]}$ has the following preference:
  
  $$\succ_{[i;p,q,\ell]}: h_p^i \succ h_q^i \succ h_{[i;p,q]}^\ell \succ \overline{H \setminus \{h_p^i,h_q^i,h_{[i;p,q]}^\ell\}}.$$

  \item For $1 \leq i \leq k$, the vertex agents $a_{[1,i]}, \ldots, a_{[n,i]}$ rank the houses as follows
  
  \begin{align*}
    {\color{SteelBlue}\succ_{[1,i]}}: ~~&  h^i_1 \succ [[d^i_1 \succ d^i_2 \succ d^i_3 \succ \cdots \succ d^i_{n-1}]]_0 & \succ  \overline{H \setminus \{h^i_1, d^i_1, d^i_2, d^i_3, \ldots, d^i_{n-1}\}}  \\ 
    {\color{SteelBlue}\succ_{[2,i]}}: ~~& h^i_2 \succ [[d^i_1 \succ d^i_2 \succ d^i_3 \succ \cdots \succ d^i_{n-1}]]_0 & \succ  \overline{H \setminus \{h^i_2, d^i_1, d^i_2, d^i_3, \ldots, d^i_{n-1}\}}  \\ 
     & ~~~~~~~~~~~~~~~~~~~~ \vdots  &  \\ 
    {\color{SteelBlue}\succ_{[j,i]}}: ~~&  h^i_j \succ [[d^i_1 \succ d^i_2 \succ d^i_3 \succ \cdots \succ d^i_{n-1}]]_{j-2} & \succ  \overline{H \setminus \{h^i_j, d^i_1, d^i_2, d^i_3, \ldots, d^i_{n-1}\}}  \\ 
     & ~~~~~~~~~~~~~~~~~~~~ \vdots  &  \\ 
    {\color{SteelBlue}\succ_{[n-1,i]}}: ~~& h^i_{n-1} \succ  [[d^i_1 \succ d^i_2 \succ d^i_3 \succ \cdots \succ d^i_{n-1}]]_{n-3} & \succ  \overline{H \setminus \{h^i_{n-1}, d^i_1, d^i_2, d^i_3, \ldots, d^i_{n-1}\}}  \\ 
    {\color{SteelBlue}\succ_{[n,i]}}: ~~&  h^i_n \succ [[d^i_1 \succ d^i_2 \succ d^i_3 \succ \cdots \succ d^i_{n-1}]]_{n-2} & \succ  \overline{H \setminus \{h^i_n, d^i_1, d^i_2, d^i_3, \ldots, d^i_{n-1}\}}  \\
  \end{align*}

  \end{itemize} 

  Note that in this instance of EHA, there are $k \cdot (n-1)$ extra houses. We set the maximum allowed envy at one, that is, the reduced instance asks for an allocation where every agent envies at most one other agent. This completes the construction of the reduced instance. We now turn to a proof of equivalence. 
  
  \paragraph*{The forward direction.} 
  
  Let $S \subseteq V$ be a multicolored independent set. Let $s: [k] \rightarrow [n]$ be such that:
  $$S = \{u^1_{s(1)}, u^2_{s(2)}, \ldots, u^k_{s(k)}\}.$$ 
  
  We now describe an allocation $\Phi$ based on $S$. First, we let $\Phi(a_e) = h_e$ for all $e \in E$. Also, for each $1 \leq i \leq k$, and for every $1 \leq p \neq q \leq n$, we have $\Phi(a_{[i;p,q,1]}) = h_{[i;p,q]}^1$, $\Phi(a_{[i;p,q,2]}) = h_{[i;p,q]}^2$ and  $\Phi(a_{[i;p,q,3]}) = h_{[i;p,q]}^3$.
  
  Now, for the vertex agents corresponding to the vertices of $V_i$, we have the following if $s(i) \geq 2$:

  \begin{equation*}
    \Phi(a_{[j,i]}) =
    \begin{cases}
      d_{j}^i & \text{if } 1 \leq j < s(i),\\
      h^i_j & \text{if } j = s(i),\\
      d_{j-1}^i & \text{if } s(i) < j \leq n,\\
    \end{cases}
  \end{equation*}

  and if $s(i) = 1$, then we proceed as follows instead:
  
  \begin{equation*}
    \Phi(a_{[j,i]}) =
    \begin{cases}
      h^i_1 & \text{if } j = 1,\\
      d_{j-1}^i & \text{if } j > 1,\\
    \end{cases}
  \end{equation*}

  Note that every house corresponding to a vertex not in $S$ remains unallocated in $\Phi$, implying, in particular, that exactly one vertex from each color class corresponds to an allocated vertex house in $\Phi$. We now argue that every agent envies at most one other agent with respect to this allocation. 
  
  First, consider an edge agent $a_e$ corresponding to an edge $e = (u_p^i,u_q^j)$ that has endpoints in $V_i$ and $V_j$ (with $i < j$). Recall that $a_e$ gets her third-ranked house with respect to $\Phi$. Since at most one of $h_p^i$ or $h_q^j$ is allocated with respect to $\Phi$, we have that $a_e$ envies at most one agent. 
  
  Similarly, every guard agent receives the special house that she ranks third, and at most one of the two top-ranked houses is allocated in $\Phi$, since both of the top-ranked houses belong to the same color class by construction. 
  
  Now we turn to the vertex agents. It is easily verified that every vertex agent gets a house that they rank first (if they correspond to a vertex from $S$), second, or third. Thus, vertex agents corresponding to vertices in $S$ are envy-free, and all other vertex agents envy at most one other agent. (If an agent $a_{[j, i]}$ receives its third-ranked house $d_j^i$, then notice that the first-ranked house $h_j^i$ remains unallocated. Therefore, $a_{[j, i]}$ envies at most one other agent who might have received $d_{j-1}^i$, its second-ranked house.) This concludes the proof in the forward direction. 

  \paragraph*{The reverse direction.}

  Let $\Phi$ be an allocation for the reduced instance where every agent envies at most one other agent. We make a series of claims about the allocation $\Phi$ that allows us to observe that $\Phi$ has the following properties: it allocates exactly one vertex house from the houses corresponding to vertices in a common color class, and further, it allocates such a house to a vertex agent. Such vertex houses are then easily seen to correspond to a multi-colored independent set in $G$: indeed, if not, then the pair of adjacent vertices would correspond to an edge agent who is envious of at least two vertex agents, contradicting our assumption about $\Phi$. 
  
  We first observe that we cannot allocate more than one house from among vertex houses corresponding to vertices from a common color class of $G$.

  \begin{claim}
  \label{claim:5}
    Let $i \in [k]$ be arbitrary but fixed. Among the vertex houses $\{h_1^{i}, \ldots, h_n^{i}\}$, $\Phi$ leaves at least $(n-1)$ houses unallocated; in other words, $\Phi$ allocates at most one house from among these houses.
  \end{claim}

  \begin{proof}
    Suppose not, and in particular, suppose $\Phi$ allocates the houses $h_p^{i}$ and $h_q^{i}$ for some $1 \leq p \neq q \leq n$. Then at least one of the three guard agents $a_{i;p,q,\ell}$ for $\ell \in \{1,2,3\}$ will envy the two agents who receive these two houses, which contradicts the assumption that every agent envies at most one other agent in the allocation $\Phi$. 
  \end{proof}
  
 The total number of houses are $nk$ vertex houses, $(n-1)k$ dummy houses, $m$ edge houses and $3k \binom{n}{2}$ special houses. The total number of agents are $nk$ vertex agents, $m$ edge agents, and $3k \binom{n}{2}$ guards. Now since at least $(n-1)k$ vertex houses remain unallocated by \Cref{claim:5}, to ensure that every agent gets a house, at least one vertex house from each of the $n$ vertex partitions must be allocated. This implies that exactly one house is allocated from one color class.

Since $(n-1)k$ vertex houses are unallocated, all the remaining houses must be allocated by $\Phi$. In particular, all additional houses are allocated, and we use this fact in our next claim. 

  \begin{claim}
    If a vertex house is allocated in $\Phi$, then it is assigned to a vertex agent. 
  \end{claim}

  \begin{proof}
    Suppose not. Let $i$ be such that the vertex house under consideration corresponds to a vertex from $V_i$. Note that the additional houses $\{d_1^{i}, \ldots, d_{n-1}^{i}\}$ can be allocated among at most $(n-1)$ of the agents corresponding to the vertices in $V_i$. Therefore, there is at least one agent $a$ among the agents $a_{[j,i]}$, $j \in [n]$ who does not receive any of the houses among her top $n$-ranked houses. Further, she was also not assigned her top-ranked house, by the assumption we made for the sake of contradiction. Since $\Phi$ allocates all the houses in $\{d_1^{i}, \ldots, d_{n-1}^{i}\}$, the agent $a$ envies at least $(n-1) \geq 2$ agents, and this is the desired contradiction. 
  \end{proof}

  The previous two claims imply the desired structure on $\Phi$, and as argued earlier, the subset of vertices corresponding to allocated vertex houses induces a multicolored independent set in $G$, and this concludes the argument in the reverse direction.    
\end{proof}

\subsection{Parameterized Results for EHA}
In this section, we present a linear kernel for $[\nicefrac{0}{1}]$-EHA and discuss the tractable and the hard cases in the parameterized setting. 

\begin{theorem}
\label{lem:eha-kernel-n} $[\nicefrac{0}{1}]$-\EHA{} admits a linear kernel parameterized by the number of agents. In particular, given an instance of $[\nicefrac{0}{1}]$-\EHA{}, there is a polynomial time algorithm that returns an equivalent instance of $[\nicefrac{0}{1}]$-\EHA{} with at most twice as many houses as agents. 
\end{theorem}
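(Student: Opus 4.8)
The plan is to establish the safety of \Cref{rr2} for the maximum-envy objective; the claimed bound $|H| \leq 2(|A|-1)$ then follows verbatim from the counting argument already recorded after the statement of \Cref{rr2}. The argument will mirror the proof of \Cref{lem:oha-kernel-n}, with \Cref{claim:niceallocmaxenvy} (there is a good allocation minimizing the maximum envy) playing the role that \Cref{claim:niceallocalongexpansion} played there. Let $\mathcal{I} := (A, H, \mathcal{P}; k)$ be the input instance and let $\mathcal{I}' := (A' := A \setminus X,\, H' := H \setminus Y,\, \mathcal{P}'; k)$ be the reduced instance, where $(X, Y)$ and the expansion $M$ are as in the premise of \Cref{rr2} and the parameter is unchanged.

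For the forward direction, I would start from a \textsc{Yes}-instance, i.e., an allocation $\Phi$ with $\kappa^\dagger(\Phi) \leq k$, and invoke \Cref{claim:niceallocmaxenvy} to assume without loss of generality that $\Phi$ is good, so that $\Phi(a) = M(a) \in Y$ for every $a \in X$. Because $M$ saturates exactly $X$ in $Y$, the agents of $A'$ receive precisely houses of $H'$, so the restriction of $\Phi$ to $A'$ is a valid allocation for $\mathcal{I}'$. I would then argue this restriction has maximum envy at most $k$: the decisive point is that $N(Y) \subseteq X$, so no agent of $A'$ values any house of $Y$. Consequently an agent $a \in A'$ never envies any agent of $X$ (each of whom holds a house of $Y$), and its envy toward the remaining agents of $A'$ is identical in $\mathcal{I}$ and in $\mathcal{I}'$. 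Hence $\kappa_\Phi(a)$ is unchanged for every surviving $a$, and the restriction witnesses that $\mathcal{I}'$ is a \textsc{Yes}-instance.

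For the reverse direction, given an allocation $\Phi' : A' \to H'$ with $\kappa^\dagger(\Phi') \leq k$, I would extend it to $\Phi : A \to H$ by setting $\Phi(a) = \Phi'(a)$ for $a \notin X$ and $\Phi(a) = M(a)$ for $a \in X$, exactly as in \Cref{lem:oha-kernel-n}. Every agent of $X$ is envy-free since it receives a house it values along $M$. For an agent $a \in A'$, the newly present agents of $X$ all hold houses of $Y$, none of which $a$ values (again by $N(Y) \subseteq X$), so they cannot become a new source of envy, while its envy toward other agents of $A'$ is unchanged. Therefore $\kappa_\Phi(a) = \kappa_{\Phi'}(a) \leq k$ for every $a$, giving $\kappa^\dagger(\Phi) \leq k$, so $\mathcal{I}$ is a \textsc{Yes}-instance.

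The only genuine subtlety—and the step I would double-check—is the claim that the per-agent envy $\kappa_\Phi(a)$ of every surviving agent is preserved in both directions. This hinges entirely on the invariant $N(Y) \subseteq X$ supplied by \Cref{lem:expansion}: it is what guarantees that deleting $Y$ removes no house any surviving agent cares about, and that filling the $X$-agents with $Y$-houses introduces no envy among the surviving population. Once this invariant is in hand the maximum-envy threshold transfers cleanly in both directions; since the reduction rules run in polynomial time and leave an instance with $|H| \leq 2(|A|-1)$, we obtain the linear kernel.
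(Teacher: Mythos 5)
Your proposal is correct and follows essentially the same route as the paper: both reduce the theorem to the safety of \Cref{rr2}, use \Cref{claim:niceallocmaxenvy} to pass to a good allocation in the forward direction, extend along the expansion $M$ in the reverse direction, and rely on the invariant $N(Y) \subseteq X$ to argue that no surviving agent's envy changes. Your write-up merely makes explicit the per-agent envy-preservation argument that the paper leaves as "easily checked."
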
 

\begin{proof}
It suffices to prove the safety of~\Cref{rr2}. Let $\mathcal{I} := (A,H,\mathcal{P};k)$ denote an instance of HA with parameter $k$. Further, let  $\mathcal{I}^\prime = (H^\prime := H \setminus X, A^\prime := A \setminus Y, \mathcal{P}^\prime; k)$ denote the reduced instance corresponding to $\mathcal{I}$. Recall that the parameter for the reduced instance is $k$ as well. 

If $\mathcal{I}$ is a~\textsc{Yes}-instance of EHA, then there is an allocation $\Phi: A \rightarrow H$ with maximum envy at most $k$. By \Cref{claim:niceallocmaxenvy}, we may assume that $\Phi$ is a good allocation. This implies that the projection of $\Phi$ on $H^\prime \cup A^\prime$ is well-defined, and it is easily checked that this gives an allocation with maximum envy at most $k$

On the other hand, if $\mathcal{I}^\prime$ is a~\textsc{Yes}-instance of EHA, then there is an allocation $\Phi^\prime: A^\prime \rightarrow H^\prime$ with maximum envy $k$. We may extend this allocation to $\Phi: A \rightarrow H$ by allocating the houses in $Y$ to agents in $X$ along the expansion $M$, that is:

 \begin{equation*}
    \Phi(a) =
    \begin{cases}
      \Phi^\prime(a) & \text{if } a \notin X,\\
      M(a) & \text{if } a \in X.
    \end{cases}
\end{equation*}

Since all the newly allocated houses are not valued by any of the agents outside $X$ and all agents in $X$ are envy-free with respect to $\Phi$, it is easily checked that $\Phi$ also has maximum envy $k$.
\end{proof}

The following results follow from using the algorithm described in~\Cref{prop:mequalsn-eha} after guessing the allocated houses, which adds a multiplicative overhead of $\binom{m}{n} \leq 2^m$ to the running time.

\begin{corollary}
\label{lem:eha-fpt-m} $[\nicefrac{0}{1}]$-\EHA{} is fixed-parameter tractable when parameterized either by the number of houses or the number of agents. In particular, $[\nicefrac{0}{1}]$-\EHA{} can be solved in time $O^\star(2^m)$.
\end{corollary}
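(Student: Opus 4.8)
The plan is to leverage the polynomial-time algorithm for the equal-size case (\Cref{prop:mequalsn-eha}) by brute-forcing over the set of houses that an allocation actually uses. The starting observation is that any allocation $\Phi$ is an injection from the $n$ agents into the $m$ houses, so its image is a set of exactly $n$ houses; conversely, the $m-n$ houses left unallocated are completely irrelevant to every agent's envy. Indeed, in the binary setting agent $a$ envies agent $b$ precisely when $u_a(\Phi(a)) = 0$ and $u_a(\Phi(b)) = 1$, a condition that depends only on the houses that are handed out.

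Concretely, I would enumerate every subset $S \subseteq H$ with $\card{S} = n$. For each such $S$, restrict the instance to the agents $A$ and the houses $S$; this is an instance of $[\nicefrac{0}{1}]$-\EHA{} with $\card{A} = \card{S} = n$, i.e., an equal-size instance, to which \Cref{prop:mequalsn-eha} applies directly and decides in polynomial time whether a maximum envy of at most $k$ is achievable. I return \textsc{Yes} iff at least one of these restricted instances is a \textsc{Yes}-instance. Correctness is immediate from the observation above: the maximum envy of an allocation of the original instance equals its maximum envy when viewed inside the restricted instance cut out by its image, and conversely every allocation of a restricted instance is an allocation of the original. The number of subsets is $\binom{m}{n} \le 2^m$ and each call runs in polynomial time, so the total running time is $O^\star(2^m)$, which is fixed-parameter tractable in the number of houses.

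For the parameterization by the number of agents, I would first apply the linear kernel of \Cref{lem:eha-kernel-n}, which in polynomial time produces an equivalent instance with at most $2(n-1)$ houses (and no more agents than before). Running the $O^\star(2^m)$ routine above on this kernelized instance then costs $O^\star(2^{2(n-1)}) = O^\star(4^n)$, giving fixed-parameter tractability in $n$ as well.

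The proof carries essentially no technical difficulty; the single point that warrants explicit mention (and the closest thing to an obstacle) is the \emph{envy-invariance} claim, namely that discarding the unallocated houses changes no agent's envy count. This is transparent for binary valuations, since envy is witnessed solely by allocated houses, but it is exactly the fact that makes the ``guess the allocated set and fall back to $m=n$'' strategy correct.
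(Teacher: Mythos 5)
Your proposal is correct and matches the paper's argument exactly: the paper also obtains this corollary by guessing the set of allocated houses (a $\binom{m}{n} \leq 2^m$ overhead) and invoking \Cref{prop:mequalsn-eha} on each resulting equal-size instance, with the parameterization by $n$ following from the linear kernel of \Cref{lem:eha-kernel-n}. Your explicit justification of envy-invariance under discarding unallocated houses is a welcome detail the paper leaves implicit.
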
 

% guess the max envy: L
% focus only on agents whose degree is L+1 or more
% then find a perfect matching for these agents
% if it doesn't exist say NO
% otherwise match the rest arbitrarily

% \begin{proof}

% \end{proof}

\begin{corollary}
\label{lem:eha-fpt-m-rankings} \sloppypar $[\succ]$-\EHA{} is fixed-parameter tractable when parameterized by the number of houses and can be solved in time $O^\star(2^m)$. \endsloppypar
\end{corollary}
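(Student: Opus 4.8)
The plan is to reduce the general $m \geq n$ case to the $m = n$ case handled by~\Cref{prop:mequalsn-eha} by brute-forcing over the set of houses that actually get allocated. The key observation is that, under any allocation $\Phi$, exactly $n$ houses are allocated, and the amount of envy experienced by any agent depends \emph{only} on this allocated set: agent $a$ envies agent $b$ precisely when $\Phi(b) \succ_a \Phi(a)$, and both $\Phi(a)$ and $\Phi(b)$ lie in $H' := \Phi(A)$. Consequently, once $H'$ is fixed, the envy of $a$ is determined entirely by the ranking of her house relative to the other houses of $H'$; the houses in $H \setminus H'$ are irrelevant.

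Concretely, I would iterate over all subsets $H' \subseteq H$ with $\card{H'} = n$, of which there are $\binom{m}{n} \leq 2^m$. For each guess $H'$, I restrict every agent's linear order $\succ_a$ to the houses of $H'$ (the restriction of a strict order is again a strict order, with ranks recomputed within $H'$), obtaining an instance $\mathcal{I}_{H'}$ of $[\succ]$-\EHA{} with exactly $n$ agents and $n$ houses. By~\Cref{prop:mequalsn-eha}, each $\mathcal{I}_{H'}$ is decidable in polynomial time, and I return~\textsc{Yes} if and only if $\mathcal{I}_{H'}$ is a~\textsc{Yes}-instance for at least one guess. The total running time is $\binom{m}{n} \cdot (n+m)^{O(1)} = O^\star(2^m)$, as claimed.

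For correctness, in the forward direction, if $\mathcal{I}$ is a~\textsc{Yes}-instance witnessed by an allocation $\Phi$ with $\kappa^\dagger(\Phi) \leq k$, then for $H' := \Phi(A)$ the same map $\Phi$ is an allocation of $H'$ to $A$ whose maximum envy in $\mathcal{I}_{H'}$ equals $\kappa^\dagger(\Phi) \leq k$, so the guess $H' = \Phi(A)$ succeeds. Conversely, any allocation of $\mathcal{I}_{H'}$ with maximum envy at most $k$ is itself a valid allocation of $\mathcal{I}$ (simply leaving $H \setminus H'$ unallocated), and by the observation above its maximum envy in $\mathcal{I}$ is unchanged. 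The argument is essentially routine; the only point requiring care---and the crux of the equivalence---is verifying that the envy relation is invariant under passing to the restriction of the preferences onto the allocated set $H'$, which is exactly what justifies treating each guess as an independent $m = n$ instance and incurring only the multiplicative $\binom{m}{n}$ overhead. No genuine obstacle arises beyond establishing this invariance.
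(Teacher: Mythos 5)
Your proposal is correct and matches the paper's argument exactly: the paper also obtains this corollary by guessing the set of $n$ allocated houses (a $\binom{m}{n} \leq 2^m$ overhead) and invoking the polynomial-time $m=n$ algorithm of~\Cref{prop:mequalsn-eha} on each guess. Your explicit verification that the envy relation is invariant under restricting the rankings to the allocated set is the right justification, and it is implicit in the paper's one-line derivation.
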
 
% Guess unallocated houses. m \choose (m-n)
% Guess the threshold for max envy, say l.
% Perfect matching on top l choices among what remains.
% Again, unclear if this extends to agents for rankings.

% \begin{proof}
% Similar to \Cref{lem:oha-fpt-m-rankings}, we first guess the $(m-n)$ unallocated houses, and remove them. 
% For the remaining instance where $m = n$, we first guess the amount of maximum envy, say $l$. Notice that $l \leq n-1$, and every agent must get one of its top $l$ houses among the $n$ houses that remain. Indeed, if she does not, then, as all these $n$ houses are allocated, the envy experienced by her will be greater than $l$.

% Now, let $G$ be the preference graph with an edge between an agent vertex $a_i$ and a house vertex $h_j$ if and only if $h_j$ lies in the top $l$ houses in the preference order of $a_i$ over the $n$ houses that remain.
% We then find the perfect matching $M$ in $G$, and allocate the houses along the matched edges. In at least one of the guesses of $n$ houses that were kept, $\kappa^\dagger(\Phi)$ is minimized and hence correspond to the optimal solution. The number of total guesses are bounded by ${m \choose m-n} \leq 2^m$.

% \end{proof}
The next two results follow respectively from \Cref{lem:eha-is} and \Cref{lem:eha-mcis} respectively.

\begin{corollary}
\label{cor:ehawhard} $[\nicefrac{0}{1}]$-\EHA{} is para-\textsf{NP}-hard when parameterized by the solution size, i.e, the maximum envy, even when every agent approves at most two houses.
\end{corollary}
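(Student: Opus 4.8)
The plan is to derive this directly from the reduction established in \Cref{lem:eha-is}, with essentially no additional work. Recall from the preliminaries that a parameterized problem is \PARANPH{} precisely when it remains \NPH{} for some fixed constant value of the parameter. Here the relevant parameter is the solution size, i.e., the maximum envy $k$, so it suffices to exhibit NP-hardness at a single constant value of $k$.

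First I would invoke \Cref{lem:eha-is}, which gives a polynomial-time reduction from \textsc{Independent Set} to $[\nicefrac{0}{1}]$-\EHA{}. The crucial features of that reduction are that (a) every instance it outputs fixes the maximum-envy target to $k = 1$, and (b) every agent in the constructed instance approves at most two houses. Since \textsc{Independent Set} is \NPH{}, this reduction witnesses that $[\nicefrac{0}{1}]$-\EHA{} is \NPH{} already when restricted to instances whose parameter equals the constant $1$ and in which every agent approves at most two houses.

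By the definition of para-\NP-hardness, hardness at a fixed constant value of the parameter is exactly what is demanded, so the corollary follows immediately by reinterpreting \Cref{lem:eha-is} in the parameterized vocabulary. I do not anticipate any genuine obstacle: all the combinatorial content is already carried by the reduction of \Cref{lem:eha-is}, and this statement is merely its restatement. The only point worth a sentence of verification is that the structural restriction---each agent approving at most two houses---is preserved by that construction, which holds by design, since every vertex agent values a single vertex house and every edge agent values exactly the two vertex houses corresponding to the endpoints of its edge.
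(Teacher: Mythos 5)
Your proposal is correct and matches the paper exactly: the paper also obtains this corollary as an immediate consequence of \Cref{lem:eha-is}, observing that the reduction from \textsc{Independent Set} already fixes the maximum-envy target at the constant $1$ while keeping every agent's approval set of size at most two, which is precisely the definition of para-\NP-hardness. No further argument is needed.
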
 

\begin{corollary}
\label{cor:ehawhard-rankings} $[\succ]$-\EHA{} is para-\textsf{NP}-hard when parameterized by the solution size, i.e., the maximum envy.
\end{corollary}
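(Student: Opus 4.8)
The plan is to derive this corollary directly from~\Cref{lem:eha-mcis} by appealing to the definition of \PARANPH{}. Recall that a parameterized problem is \PARANPH{} precisely when it is \NPH{} for some fixed constant value of the parameter. Here the parameter is the solution size, namely the maximum envy $k$.

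First I would observe that~\Cref{lem:eha-mcis} provides a polynomial-time reduction from \textsc{Multi-Colored Independent Set}---which is \NPH{}---to $[\succ]$-\EHA{}, and that this reduction always outputs an instance whose target maximum envy equals exactly one. Hence the reduction witnesses that $[\succ]$-\EHA{} is \NPH{} even when restricted to instances with $k = 1$. Since $k = 1$ is a constant, this is exactly the assertion that $[\succ]$-\EHA{} is \NPH{} for a fixed value of the parameter, which by definition means the problem is \PARANPH{} when parameterized by the maximum envy.

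I do not anticipate any genuine obstacle, since the corollary is merely a restatement of~\Cref{lem:eha-mcis} in the vocabulary of parameterized complexity. The only point worth checking is that the reduction indeed pins the parameter to the constant value one, and this is already guaranteed by the conclusion of~\Cref{lem:eha-mcis}; no further argument is required.
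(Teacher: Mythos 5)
Your proposal is correct and matches the paper's own derivation: the paper likewise obtains this corollary directly from \Cref{lem:eha-mcis}, using the fact that the reduction fixes the target maximum envy at the constant value one, which is precisely the definition of para-\NP-hardness. No further argument is needed.
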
 

We now formulate EHA as an integer linear program, as in the case of $[\nicefrac{0}{1}]$-OHA and establish the fixed-parameter traceability parameterized by the number of house types or agent types. The number of variables in our ILP will be $\cO(n^* \cdot m^*)$, where $n^*$ is the number of types of agents and $m^*$ the number of types of houses. Again, by~\Cref{obs:typebounds}, the number of variables will then be bounded separately by $2^{\cO(m^*)}$ and $2^{\cO(n^*)}$.

\begin{theorem}\label{lem:eha-fpt-mstar}
$[\nicefrac{0}{1}]$-\EHA{} is fixed-parameter tractable when parameterized either by the number of house types or the number of agent types.
\end{theorem}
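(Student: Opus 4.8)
The plan is to reuse the type-based ILP methodology of~\Cref{lem:oha-fpt-mstar}, replacing the ``count of envious agents'' objective by a bound on the \emph{maximum} envy. As in \ilpoha, for every agent type $i \in [n^*]$ and house type $j \in [m^*]$ I introduce an integer variable $x_{ij}$ standing for the number of agents of type $i$ who receive a house of type $j$, and I retain the validity constraints $\sum_{j \in [m^*]} x_{ij} = n_i$ (every agent gets a house) and $\sum_{i \in [n^*]} x_{ij} \leq m_j$ (no house type is over-allocated), verbatim from constraints C1 and C2.

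The structural fact that drives the formulation is an egalitarian analogue of~\Cref{lem:allornone}: the amount of envy of an envious agent is determined by its type and the aggregate allocation alone. Indeed, if an agent $a$ of type $i$ receives a house of type $j \notin \mathcal{P}(i)$, then $a$ envies exactly those agents who received a house of some type in $\mathcal{P}(i)$, so $\kappa_\Phi(a) = E_i := \sum_{i' \in [n^*]} \sum_{j' \in \mathcal{P}(i)} x_{i'j'}$, independently of $j$. Thus $\kappa^\dagger(\Phi) = \max\{E_i : \text{type } i \text{ has an envious agent}\}$, and a type $i$ has an envious agent precisely when $x_{ij} > 0$ for some $j \notin \mathcal{P}(i)$ and $E_i > 0$.

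To encode $\kappa^\dagger(\Phi) \leq k$ I would add, for each $i \in [n^*]$, a binary indicator $y_i$ with the constraints $\sum_{j \in [m^*] \setminus \mathcal{P}(i)} x_{ij} \leq n\, y_i$ and $E_i \leq k + n(1 - y_i)$, where $E_i$ abbreviates the linear form above. The first forces $y_i = 1$ whenever some type-$i$ agent receives a non-valued house; the second then imposes $E_i \leq k$, while being vacuous when $y_i = 0$ because $E_i \leq n$ always. The decision procedure declares a \textsc{Yes}-instance exactly when this system is feasible. Correctness follows in two directions mirroring Claims~\ref{claim:ilp-oha-1} and~\ref{claim:ilp-oha-2}: from an allocation with $\kappa^\dagger(\Phi) \leq k$ one sets $x_{ij} = \card{A(\Phi,i,j)}$ and $y_i = 1$ iff some type-$i$ agent gets a non-valued house, and checks feasibility; conversely a feasible solution yields an allocation that hands $x_{ij}$ houses of type $j$ to type-$i$ agents, in which every envious agent (necessarily of some type $i$ with $y_i = 1$) suffers envy exactly $E_i$, which is at most $k$. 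One useful simplification over \ilpoha\ is that forcing $y_i = 1$ ``too eagerly'' is harmless: when a type-$i$ agent receives a non-valued house but $E_i = 0$, the agent is not actually envious, yet the bound $E_i = 0 \leq k$ holds trivially, so no spurious infeasibility is introduced.

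Finally, the ILP has $\cO(n^* m^*)$ variables and constraints, with all coefficients bounded by $nm$, so by~\Cref{obs:typebounds} the variable count is bounded both by $2^{\cO(m^*)}$ and by $2^{\cO(n^*)}$; \Cref{thm:lenstra} then yields fixed-parameter tractability in either the number of house types or the number of agent types. The main obstacle is faithfully translating the \emph{maximum} of the per-type envy values into linear constraints: the indicators $y_i$ must detect envy tightly enough that the envy bound is enforced exactly on the types that contribute an envious agent, and the verification that this big-$M$ gadget is correct in both directions---though simpler than its OHA counterpart---is where the attention is needed.
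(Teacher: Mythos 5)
Your proposal is correct and follows essentially the same route as the paper: a type-indexed ILP with $\cO(n^* m^*)$ variables built on the key observation that an envious agent's envy equals $\sum_{i' \in [n^*]} \sum_{j' \in \mathcal{P}(i)} x_{i'j'}$ and depends only on its type, followed by an appeal to \Cref{obs:typebounds} and \Cref{thm:lenstra}. The only difference is cosmetic: the paper keeps the $z_{ij}, d_{ij}, d'_{ij}$ machinery from \ilpoha\ and minimizes an auxiliary variable $w$ with constraints $z_{ij} \leq w$, whereas you encode the decision version directly with one per-type indicator $y_i$ and a big-$M$ bound $E_i \leq k + n(1-y_i)$ --- a slightly leaner but equivalent formulation.
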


Given an instance $\mathcal{I} = (A, h, \mathcal{P}, k)$ of $[\nicefrac{0}{1}]$-\EHA, we define an ILP \ilpeha\ that encodes the instance $\mathcal{I}$. The ILP \ilpeha\ is very similar to \ilpoha\ with exactly two distinctions. (1) The ILP \ilpeha\ has all the variables of \ilpoha. In addition, \ilpeha\ has an integer variable $w$ that encodes the maximum envy experienced by an agent. (2) In \ilpeha, the variable $z_{ij}$ for $i \in [n^*], j \in [m^*]$ encodes the envy experienced by each agent of type $i$ who receives a house of type $j$. Note that the envy experienced by such an agent is always either $0$, or $\sum_{i' \in [n^*]} \sum_{j' \in \mathcal{P}(i)} x_{i'j'}$. Since $w$ is the maximum envy experienced by an agent, we must also have $z_{ij} \leq w$ for every $i \in [n^*], j \in [m^*]$. 

\begin{table}
\begin{tabular}{lll}
(C1.$i$).           & $\sum_{j \in [m^*]} x_{ij} = n_i$                   & for every $i \in [n^*]$                                                         \\
                    &                                                     &                                                                                 \\
(C2.$j$).           & $\sum_{i \in [n^*]} x_{ij} \leq m_j$                & for every $j \in [m^*]$                                                         \\
                    &                                                     &                                                                                 \\
(C3.a.$i.j$).       & $x_{ij} \leq nd'_{ij}$                           & \multirow{2}{*}{for every $i \in [n^*]$, $j \in [m^*]\setminus \mathcal{P}(i)$}                         \\
(C3.b.$i.j$).       & $\sum_{i' \in [n^*]} \sum_{j' \in [\mathcal{P}(i)]} x_{i'j'} \leq nmz_{ij} + nm(1-d'_{ij})$                       &                                                                                 \\

                    &                                                     &                                                                                 \\
(C4.a.$i.j$).       & $z_{ij} \leq nd_{ij}$                             & \multirow{3}{*}{for every $i \in [n^*]$, $j \in [m^*]$}                         \\
(C4.b.$i.j$).       & $\left(\sum_{i' \in [n^*]} \sum_{j' \in \mathcal{P}(i)}x_{i'j'}\right) - z_{ij} \leq n(1-d_{ij})$                &                                                                                 \\
(C4.c.$i.j$).       & $z_{ij} \leq n\sum_{i' \in [n^*]} \sum_{j' \in \mathcal{P}(i)} x_{i'j'}$ & \\
                    &                                                     &                                                                                 \\
(C5.$i.j$).         & $z_{ij} = 0$                                        & for every $i \in [n^*], j \in \mathcal{P}(i)$                                         \\
                    &                                                     &                                                                                 \\
(C6.a.$i.j$).       & $x_{ij} \geq 0$                                     & \multirow{4}{*}{for every $i \in [n^*]$, $j \in [m^*]$}                         \\
(C6.b.$i.j$).       & $z_{ij} \geq 0$                                     &                                                                                 \\
(C6.c.$i.j$).       & $d_{ij} \in \{0, 1\}$                          &                                                                                 \\
(C6.d.$i.j$).       & $d'_{ij} \in \{0, 1\}$                               &
                       \\
                    &                                                     &
                       \\
     
(C7).              & $w \geq 0$                                          &
                      \\
                    &                                                     &
                      \\
(C8.$i.j$).        & $z_{ij} \leq w$                                     & for every $i\in [n^*]$, $j \in [m^*]$
\end{tabular}
\caption{The constraints of the ILP \ilpeha.}
\label{table:ilpeha}
\end{table}
We now formally describe the ILP. Minimize $w$ subject to the constraints in \Cref{table:ilpeha}. 
\begin{proof}[Proof Outline of \Cref{lem:eha-fpt-mstar}]

Observe that \ilpeha differs from \ilpoha in constraints C4.a.$i.j$, C4.b.$i.j$ and C4.c.$i.j$. These three constraints together now ensure that for any feasible solution $f$ for \ilpeha, we either have $f(z_{ij}) = 0$ or $f(z_{ij}) = \sum_{i' \in [n^*]} \sum_{j' \in \mathcal{P}(i)}f(x_{i'j'})$. 
(Also, constraint C4.c.$i.j$ subsumes the constraint C3.c$i.j$ in \ilpoha.) The only other difference is the addition of constraints C7 and C8.$i.j$. We can show that appropriate counterparts of Claims \ref{claim:zij} and \ref{claim:envy} hold for \ilpeha.  So do appropriate counterparts of Claims~\ref{claim:ilp-oha-1}, \ref{claim:ilp-oha-2} and \ref{claim:ilp-oha-main}. In particular, we have $\kappa^{\dagger}(\mathcal{I}) = \opt(\ilpeha)$. \Cref{lem:eha-fpt-mstar} will then follow from~\Cref{thm:lenstra}. 
\end{proof}

\begin{remark}
By modifying \ilpeha, we can formulate an integer program for $[\nicefrac{0}{1}]$-UHA. We only need to remove the variable $w$ and the constraints C7 and C8.$i.j$ for $i \in [n^*], j \in [m^*]$ and replace the objective function with $\sum_{i \in [n^*]} \sum_{j \in [m^*]} x_{ij} z_{ij}$. Notice that while all the constraints in this integer program are linear, the objective function is quadratic. We thus have an integer quadratic program (IQP). The value of the largest coefficient in the constraints and the objective function is $nm$. It is known that IQP is fixed-parameter tractable when parameterized by the number of variables plus the value of the largest coefficient~[\cite{DBLP:journals/corr/Lokshtanov15}]. Fixed-parameter tractability results for IQP w.r.t. other parameters are also known~[\cite{DBLP:conf/aaai/EibenGKO19}].  
\end{remark}
% Can remove the forced page breaks later :)
% \newpage
\section{Utilitarian House Allocation}
\label{sec:uha}
We now deal with the UHA problems, where the goal is to minimize total envy. We first discuss the polynomial time algorithms for \UHA. 

\subsection{Polynomial Time Algorithms for UHA}

\begin{theorem}
\label{lem:uha-ext} There is a polynomial-time algorithm for~$[\nicefrac{0}{1}]$-\UHA{} when the agent valuations have an extremal interval structure.
\end{theorem}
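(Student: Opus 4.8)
The plan is to follow the same template as the proofs of \Cref{lem:oha-ext} and \Cref{lem:eha-ext}: invoke \Cref{remark:extremal} to reduce to a left-extremal, reduced instance, sort the agents $a_1, \ldots, a_n$ so that the interval lengths satisfy $i(a_1) \le \cdots \le i(a_n)$ (so $\mathcal{P}(a) = \{h_1, \ldots, h_{i(a)}\}$), and let $h_1, \ldots, h_t$ be the valued houses with $t \le n-1$, the remaining houses being dummies. The first thing I would record is the envy formula specific to this structure: an agent that receives a house it values is envy-free, while an agent that receives a house it does not value (a dummy, say) envies exactly those agents that hold a house it values, so its amount of envy equals the number of \emph{allocated} valued houses with index at most $i(a)$. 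Summing the other way, $\kappa^\star(\Phi) = \sum_{h_j \text{ allocated, valued}} |\{a : a \text{ envious and } i(a) \ge j\}|$; that is, each allocated valued house $h_j$ contributes the number of envious agents whose interval reaches $h_j$.

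Next I would prove two structural reductions by exchange arguments. \emph{(Niceness.)} There is an optimal allocation in which every allocated valued house is held by an agent that values it: if $h_j$ is held by a non-valuer $b$, then either some agent valuing $h_j$ is itself envious (swap the two, which does not increase envy and hands $h_j$ to a valuer), or no envious agent values $h_j$ (in which case $h_j$ contributes nothing and can be treated as a dummy for $b$). \emph{(Suffix structure.)} Let $T$ be the set of agents receiving a house they value; there is an optimal allocation in which $T = \{a_{n-s+1}, \ldots, a_n\}$ is a suffix of the interval order. Indeed, if $a_p \in T$ and $a_q \notin T$ with $p < q$ (so $i(a_p) \le i(a_q)$), then $a_q$ values $a_p$'s house $h_{\mu(a_p)}$; handing $h_{\mu(a_p)}$ to $a_q$ and $a_q$'s non-valued house to $a_p$ leaves the set of allocated valued houses unchanged, makes $a_q$ envy-free, and gives $a_p$ an amount of envy equal to $|\{$allocated valued houses of index $\le i(a_p)\}| \le |\{$allocated valued houses of index $\le i(a_q)\}|$, which was $a_q$'s old envy; hence total envy does not increase. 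Iterating removes all such inversions.

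With these in hand, the algorithm enumerates $s$, the number of allocated valued houses, over its feasible range $\max(0, n-|D|) \le s \le t$. For each $s$ the envy-free agents are forced to be $T = \{a_{n-s+1}, \ldots, a_n\}$ and the envious agents are $a_1, \ldots, a_{n-s}$; I then compute the best way to assign valued houses to $T$ by a minimum-cost bipartite matching in which $a \in T$ may be matched to $h_j$ iff $j \le i(a)$ and the cost of using $h_j$ is $g(j) := |\{k \le n-s : i(a_k) \ge j\}|$, the number of envious agents that value $h_j$. By the envy formula, the cost of a perfect matching of $T$ equals the total envy of the corresponding nice, suffix-structured allocation, so the minimum-cost matching gives the least total envy achievable for this $s$; taking the minimum over all feasible $s$ and comparing with $k$ solves UHA. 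Everything here is polynomial: $O(n)$ values of $s$, each requiring one matching computation and a dummy-availability check.

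The main obstacle I anticipate is the pair of exchange arguments, and in particular getting the interaction between the three ingredients exactly right: the suffix claim must be argued so that reassigning houses never increases the envy of the untouched agents (which is why it is crucial that the \emph{set} of allocated valued houses is preserved by each swap, so that only the two swapped agents change their envy), and the niceness claim has to handle the corner case in which all dummy houses are already used, where the left-over valued houses held by non-valuers are exactly the ones of largest index and hence contribute no envy. I would also double-check that for a fixed $T$ the min-cost matching really captures the optimal choice of \emph{which} valued houses to allocate — this is where formulating the cost as $g(j)$, a quantity non-increasing in $j$ (so that rightmost houses are cheapest), makes the matching, rather than a naive greedy, the safe route.
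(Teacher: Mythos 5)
Your structural analysis (the suffix/rogue-free ordering of the envy-free agents, the exchange arguments, and the per-house envy formula) matches the paper's proof, which establishes exactly these facts via ``rogue pairs'' and a niceness condition $\Phi(S_\Phi)\cap\mathcal{P}(S_\Phi)=\emptyset$ before reducing to a min-cost matching. But your algorithm has a genuine gap: you force every agent outside $T$ to receive a dummy house, by restricting the enumeration to $s \geq n - |D|$ and by building a matching only between $T$ and the valued houses. This is not safe on reduced instances. Concrete counterexample (irreducible with respect to Reduction Rules 1, 2 and 5, since $|H\setminus D| = 3 < 4 = |A|$): four agents with $\mathcal{P}(a_1)=\mathcal{P}(a_2)=\mathcal{P}(a_3)=\{h_1\}$ and $\mathcal{P}(a_4)=\{h_1,h_2,h_3\}$, and one dummy house $h_4$. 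The optimum is total envy $2$, attained by $a_3 \mapsto h_1$, $a_4 \mapsto h_2$, $a_1 \mapsto h_3$, $a_2 \mapsto h_4$; here $T=\{a_3,a_4\}$ has size $2 < n-|D| = 3$, and the envious agent $a_1$ must absorb the valued house $h_3$, which costs nothing because no envious agent values it. Your enumeration only tries $s=3$, where no matching saturates $\{a_2,a_3,a_4\}$ (both $a_2$ and $a_3$ need $h_1$), so your algorithm reports infeasibility and answers \textsc{No} even for $k=2$. You flagged exactly this corner case as an ``obstacle,'' but the algorithm as stated does not resolve it; also note that in this regime your identification of $s$ with both $|T|$ and the number of allocated valued houses breaks down (three valued houses are allocated but $|T|=2$).

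The fix is what the paper does: enumerate the split point $t=|S|$ over all of $\{0,1,\ldots,n\}$ with no dummy-availability restriction, and run the min-cost perfect matching on the \emph{complete} bipartite graph $A\times H$ with a three-case cost: $(a,h)$ with $a\in T$ valuing $h$ costs $|\{a'\in S : h\in\mathcal{P}(a')\}|$; $(a,h)$ with $a\in S$ and $h\notin\mathcal{P}(S)$ costs $0$ (this is the clause your formulation is missing --- it lets agents of $S$ take dummies \emph{or} valued houses of index beyond $i(a_{|S|})$ for free); everything else costs $k+1$. With that modification the rest of your argument --- the two exchange lemmas and the cost-equals-envy accounting --- goes through and recovers the paper's proof.
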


\begin{proof}
Consider an instance $\mathcal{I} = (H, A, \mathcal{P}; k)$ of $[\nicefrac{0}{1}]$-UHA. In light of \Cref{remark:extremal}, assume that the valuations have a left-extremal structure.  Consider the ordering on the agents such that $i < j$ if $\mathcal{P}(a_i) \subseteq \mathcal{P}(a_j)$. Our algorithm relies on the existence of an optimal allocation with some desirable properties. To that end, consider an allocation $\fn{\Phi}{A}{H}$. We say that an ordered pair of agents $(a_i, a_j) \in A \times A$ is rogue under $\Phi$ if $i < j$, $\Phi(a_i) \in \mathcal{P}(a_i)$ and $\Phi(a_j) \notin \mathcal{P}(a_j)$. That is, for $i < j$, $(a_i, a_j)$ is a rogue pair if the $a_i$ values the house that she receives and $a_j$ does not value that she receives. Consider a rogue pair $(a_i, a_j).$ Recall that we are in the left-extremal setting, and hence $\mathcal{P}(a_i) \subseteq \mathcal{P}(a_j)$, which implies that $\Phi(a_i) \in \mathcal{P}(a_j)$. Thus $a_j$ is envious under $\Phi$.  

We say that $\Phi$ is rogue-free if there does not exist any rogue pair under $\Phi$.  Notice that if $\Phi$ is rogue-free, then there exists $t(\Phi) \in [n] \cup \set{0}$ such that for every $i \in [n]$ with $i > t(\Phi)$, we have $\Phi(a_i) \in \mathcal{P}(a_i)$, and hence the agent $a_i$ is envy-free. For $i \leq t(\Phi)$, the agent $a_i$ may or may not be envy-free. 

\begin{claim}
\label{claim:rogue}
There exists a rogue-free optimal allocation. 
\end{claim}
\begin{proof}
Consider an optimal allocation $\fn{\Phi}{A}{H}$ that minimizes the number of rogue pairs. By optimal, we mean that $\kappa^{\star}(\mathcal{I}) = \kappa^{\star}(\Phi)$. If $\Phi$ is rogue-free, then the claim trivially holds. So, assume that $\Phi$ is not rogue-free. Then there exists a rogue pair under $\Phi$. We fix a rogue pair $(a_i, a_j)$ as follows. Let $a_i$ be the first agent such that $(a_i, a_p)$ is a rogue pair for some $p \in [n]$. Then choose $j$ such that $a_j$ is the last agent such that $(a_i, a_j)$ is a rouge pair. 
Since $(a_i, a_j)$ is a rogue pair, we have $i < j$, $\Phi(a_i) \in \mathcal{P}(a_i)$ and $\Phi(a_j) \notin \mathcal{P}(a_j)$. Since $i < j$, we have $\mathcal{P}(a_i) \subseteq \mathcal{P}(a_j)$, which implies that $\Phi(A) \cap \mathcal{P}(a_i) \subseteq \Phi(A) \cap \mathcal{P}(a_j)$. Notice first that $a_j$ is envious as $\Phi(a_i) \in \mathcal{P}(a_i) \subseteq \mathcal{P}(a_j)$. Also, the number of agents that $a_j$ envies, $\mathcal{E}_{\Phi}(a_j) = \card{\Phi(A) \cap \mathcal{P}(a_j)}$. 

Let $\Phi'$ be the allocation obtained from $\Phi$ by swapping the houses of $a_i$ and $a_j$. That is, $\Phi'(a_i) = \Phi(a_j)$, $\Phi'(a_j) = \Phi(a_i)$ and $\Phi'(a_r) = \Phi(a_r)$ for every $r \in [n] \setminus \set{i, j}$. Then, $a_i$ is envious under $\Phi'$ as $\Phi'(a_i) = \Phi(a_j) \notin \mathcal{P}(a_j) \supseteq \mathcal{P}(a_i)$. The number of agents $a_i$ envies, $\card{\mathcal{E}_{\Phi'}(a_i)} = \card{\Phi'(A) \cap \mathcal{P}(a_i)}$. Now, $a_j$ is not envious under $\Phi'$ as $\Phi'(a_j) = \Phi(a_i) \in \mathcal{P}(a_i) \subseteq \mathcal{P}(a_j)$. 

Notice that $\Phi(A) = \Phi(A')$.
We thus have $\card{\mathcal{E}_{\Phi'}(a_i)} = \card{\Phi'(A) \cap \mathcal{P}(a_i)} \leq \card{\Phi(A) \cap \mathcal{P}(a_j)} = \card{\mathcal{E}_{\Phi}(a_j)}$. 
Therefore, $\kappa^{\star}(\Phi') = \kappa^{\star}(\Phi) - \card{\mathcal{E}_{\Phi}(a_j)} + \card{\mathcal{E}_{\Phi'}(a_i)} \leq \kappa^{\star}(\Phi)$. Since $\Phi$ is optimal, we can conclude that $\Phi'$ is optimal as well. 

Now, we claim that the number of rogue pairs under $\Phi'$ is strictly less than that under $\Phi$, which will contradict the definition of $\Phi$. Notice first that $(a_i, a_j)$ is a rogue-pair under $\Phi$ but not under $\Phi'$. Consider $p, q \in [n]$ such that $(a_p, a_q)$ is a rogue pair under $\Phi'$, but not under $\Phi$. Then, either $q = i$ or $p = j$. If $q = i$, then $p < q = i < j$ and $(a_p, a_j)$ is a rogue pair under $\Phi$, which contradicts our choice of $i$. If $p = j$, then $j = p < q$, then $(a_i, a_q)$ is a rogue pair, which contradicts our choice of $j$. Thus the number of rogue pairs under $\Phi'$ is strictly less than that under $\Phi$, a contradiction. Hence, we conclude that $\Phi$ is rogue-free. 
\end{proof}

For an allocation $\fn{\Phi}{A}{H}$, let $S_{\Phi} = \set{a \in A ~|~ \Phi(a) \notin \mathcal{P}(a)}$ and $T_{\Phi} = \{a \in A ~|~ \Phi(a) \in \mathcal{P}(a)\}$. Note that $\set{S_{\Phi}, T_{\Phi}}$ is a partition of $A$ (with one of the parts possibly being empty). We say that $\Phi$ is nice if no agent in $S_{\Phi}$ envies any other agent in $S_{\Phi}$. Equivalently, $\Phi$ is nice if $\Phi(S_{\Phi}) \cap \mathcal{P}(S_{\Phi}) = \emptyset$. 

\begin{claim}
\label{claim:rogue-nice}
There exists an optimal rogue-free allocation that is also nice. %That is, such that $\Phi(S_{\Phi}) \cap \mathcal{P}(S_{\Phi}) = \emptyset$. 
\end{claim}
\begin{proof}
Let $\Phi$ be an optimal (i.e., $\kappa^{\star}(\mathcal{I}) = \kappa^{\star}(\Phi)$) rogue-free allocation that minimizes $\card{S_{\Phi}}$. Then, there exists $t_{\Phi} \in [n] \cup \set{0}$ such that $S_{\Phi} = \set{a_1,\ldots, a_{t(\Phi)}}$ and $T_{\Phi} = \set{a_{t(\Phi) + 1},\ldots,a_n}$.  Note that $S_\Phi$ is indeed contiguous. If not, then there exists indices $i$ and $j$ such that $i < j-1$ and $a_i, a_j \in S_{\Phi}$ and $a_p \notin S_\Phi$ for every index $p$ with $i<p<j$. Then $(a_{j-1}, a_j)$ is a rogue-pair.

If $\card{S_{\Phi}} \leq 1$, then the claim trivially holds. So, assume that $\card{S_{\Phi}} \geq 2$. Suppose that there exist $a_i, a_j \in S_\Phi$ such that $a_j$ envies $a_i$. Then, as $\mathcal{P}(a_j) \subseteq \mathcal{P}(a_{t(\Phi)})$, $a_{t(\Phi)}$ envies $a_i$ as well. Let $\Phi'$ be the allocation obtained from $\Phi$ by swapping the houses of $a_i$ and $a_{t(\Phi)}$. Then we have $S_{\Phi'} = S_{\Phi} \setminus \set{a_{t(\Phi)}}$ and $T_{\Phi'} = T_{\Phi} \cup \set{a_{t(\Phi)}}$. Thus, $\card{S_{\Phi'}} < \card{S_{\Phi}}$. Note that we constructed $\Phi'$ from $\Phi$ without introducing any new rogue-pairs. Additionally, we converted an envious agent under $\Phi$ (in particular, $a_{t(\Phi)}$) to an envy-free agent under $\Phi'$. This contradicts the optimality of $\Phi$ and the fact that $\Phi$ minimizes $|S_\Phi|$. 
\end{proof}

\begin{claim}
\label{claim:rogue-nice-envy}
Let $\Phi$ be a nice rogue-free allocation. Consider a house $h \in \Phi(T_{\Phi})$. Then, (1) the number of agents who envy $\Phi^{-1}(h)$ is exactly $|\{a ' \in S_{\Phi} ~|~ h \in \mathcal{P}(a')\}|$, and (2) $\kappa^{\star}(\Phi) = \sum_{h \in T_{\Phi}} \card{ \{a ' \in S_{\Phi} ~|~ h \in \mathcal{P}(a')\}}$. 
\end{claim}
\begin{proof}
By the definition of $T_{\Phi}$, no agent in $T_{\Phi}$ envies $\Phi^{-1}(h)$. Hence the number of agents who envy $\Phi^{-1}(h)$ is exactly equal to the number of agents in $S_{\Phi}$ who value $h$. This is precisely what assertion (1) says. Now, to compute $\kappa^{\star}(\Phi)$, for all $h \in \Phi(T_{\Phi})$, we only need to sum the number of agents in $S'$ value $h$. This is precisely what assertion (2) says.  
\end{proof}

{\bf Informal description of our algorithm:} Based on Claims~\ref{claim:rogue}-\ref{claim:rogue-nice-envy}, we are now ready to describe our algorithm. Informally, our algorithm works as follows. We are given an instance $\mathcal{I} = (A, H, \mathcal{P}; k)$. Suppose that $\Phi$ is the optimal allocation that we are looking for. By \Cref{claim:rogue-nice}, we can assume that $\Phi$ is rogue-free and nice. We guess $t_{\Phi}$. There are at most $n+1$ guesses. For the correct guess, we correctly identify $S_{\Phi}$ and $T_{\Phi}$. Then, $\Phi$ must allocate to each $a \in T_{\Phi}$ a house that $a$ values. To each agent $a' \in S_{\Phi}$, $\Phi$ must allocate a house that no agent in $S_{\Phi}$ values. For a house $h \in H$, the envy generated by allocating $h$ is precisely the number of agents in $S_{\Phi}$ who value $h$. We can thus reduce the problem to a minimum cost maximum matching problem, where the cost of matching each $h \in H$ to (1) an agent $a \in T_{\Phi}$ who values $h$ is precisely $\card{ \{a ' \in S_{\Phi} ~|~ a' \text{ values } h\}}$; (2) an agent $a \in T_{\Phi}$ who does not value $h$ is prohibitively high; (3) an agent $a' \in S_{\Phi}$ is $0$ if no agent in $S_{\Phi}$ values $h$, and prohibitively high otherwise. We can compute a minimum cost maximum matching in polynomial time.  

{\bf Algorithm:} We are given an instance $\mathcal{I} = (A, H, \mathcal{P}; k)$ as input. For each fixed $t \in [n] \cup \set{0}$, we do as follows. We partition $A$ into two sets $S$ and $T$ as follows:   $S = \set{a_1,\ldots,a_t}$ and $T = \set{a_{t+1},\ldots, a_n}$. We construct a complete bipartite graph $G_t^{\star}$, with vertex bipartition $A \uplus H$ and a cost function $c_t$ on the edges defined as follows: 

 \begin{equation*}
    c_t({\color{DarkSlateGray}(a,h)}) =
    \begin{cases}
      \card{\{a ' \in S ~|~ h \in \mathcal{P}(a')\}} & \text{if } a \in T \text{ and } a \text{ values } h,\\
      0 & \text{if } a \in S \text{ and no agent } a' \in S \text{ values } h, \\ 
      k+1 & \text{otherwise.}
    \end{cases}
\end{equation*}

If $G_t^{\star}$ contains a matching of size $n$ and cost at most $k$ for any $t \in [n] \cup \set{0}$, then we return that $\mathcal{I}$ is a yes-instance of [\nicefrac{0}{1}]-UHA. If $G_t^{\star}$ does not contain such a matching for any choice of $t \in [n] \cup \set{0}$, then we return that $\mathcal{I}$ is a no-instance of [\nicefrac{0}{1}]-UHA. 

{\bf Correctness:} To see the correctness of our algorithm, assume first that there exists $t \in [n] \cup \set{0}$ for which $G_t^{\star}$ contains a matching, say $M$, of size $n$ and cost at most $k$. Then, since $\card{M} = n$, $M$ saturates $A$. Consider an allocation $\fn{\Phi_M}{A}{H}$ defined as follows: for each $(a, h) \in M$, $\Phi_M$ allocates $h$ to $a$. We claim that $\kappa^{\star}(\Phi_M) = c_t(M) \leq k$. First, each $a \in T$ values $\Phi_M(a)$, for otherwise, $c_t((a, \Phi_M(a))) = k+1$, which is not possible. So, $a \in T$ does not envy any agent. Similarly, each $a' \in S$ does not value $\Phi_M(a'')$ for any $a'' \in S$, for otherwise, $c_t((a'', \Phi_M(a''))) = k+1$, which is not possible. 
So, for $a', a'' \in S$, $a'$ does not envy $a''$.  Also, for $(a', h) \in M$ with $a' \in S$, we have $c_t((a', h)) = 0$. Now, an agent $a' \in S$ may envy an agent $a \in T$. 
But note that every $h \in \Phi_M(T)$ contributes exactly $\card{\{a' \in S ~|~ h \in \mathcal{P}(a')\}} = c_t((a, h))$ , where $\Phi_M(a) = h$, to $\kappa^{\star}(\Phi_M)$. Hence, $\kappa^{\star}(\Phi_M) = \sum_{\substack{h \in \Phi_M(T) \\ (a, h) \in M}} c_t((a, h)) = \sum_{\substack{(a, h) \in M \\ a \in T}} c_t((a, h)) \leq k$. 

Conversely, assume that $\mathcal{I} = (A, H, \mathcal{P}; k)$ is a yes-instance. Let $\fn{\Phi}{A}{H}$ be an optimal allocation. By \Cref{claim:rogue-nice}, we assume without loss of generality that $\Phi$ is rogue-free and nice. Consider the iteration of our algorithm for which $t = t_{\Phi}$. Consider the matching $M_{\Phi}$ in $G_t^{\star}$ defined as $M_{\Phi}= \set{(a, \Phi(a)) ~|~ a \in A}$. We claim that $c_t(M_{\Phi}) = \kappa^{\star}(\Phi) \leq k$. First, consider $a \in S$. Since $\Phi$ is nice, no $a' \in S$ values $\Phi(a)$, which implies that $c_t(a, \Phi(a)) = 0$. Now, consider $a \in T$. Since $\Phi$ is rogue-free, $a$ values $\Phi(a)$. By \Cref{claim:rogue-nice-envy}, $\Phi(a)$ contributes exactly $\card{\{a' \in S ~|~ a' \text{ values } \Phi(a)\}} = c_t(a, \Phi(a))$ to $\kappa^{\star}(\Phi)$. Thus, $c_t(M_{\Phi}) = \sum_{(a, \Phi(a)) \in M_{\Phi}} c_t(a, \Phi(a)) = \sum_{\substack{(a, \Phi(a)) \in M_{\Phi}\\ a \in T}} c_t((a, \Phi(a))) = \sum_{a \in T} \card{\{a' \in S ~|~ a' \text{ values } \Phi(a)\}} = \kappa^{\star}(\Phi) \leq k$. 
\end{proof}

We now turn to the restricted setting where every agent likes exactly one house. In this case, the total envy is equal to the number of envious agents, that is, $\kappa^\star(\Phi) = \kappa^\#(\Phi)$, so the following result follows from~\Cref{lem:oha-onehouse}.

\begin{corollary}
\label{lem:uha-onehouse} There is a polynomial-time algorithm for~$[\nicefrac{0}{1}]$-\UHA{} when every agent approves exactly one house.
\end{corollary}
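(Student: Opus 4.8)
The plan is to reduce this directly to the minimization of the number of envious agents, for which \Cref{lem:oha-onehouse} already supplies a polynomial-time algorithm. The crucial observation I would establish first is that, when every agent approves exactly one house, the amount of envy experienced by any single agent is at most one under \emph{any} allocation. Indeed, fix an allocation $\Phi$ and an agent $a$, and let $h_a$ be the unique house with $\mathcal{P}(a) = \set{h_a}$. By definition $a$ envies $b$ only if $u_a(\Phi(b)) = 1$, which forces $\Phi(b) = h_a$; since $\Phi$ is injective, at most one agent receives $h_a$, and so $\card{\mathcal{E}_\Phi(a)} \le 1$. Concretely, $a$ is envious (with $\kappa_\Phi(a) = 1$) precisely when $h_a$ is allocated to some agent other than $a$, and is envy-free otherwise (whether $h_a$ is given to $a$ or left unallocated).

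From this per-agent bound the two objectives collapse into one: summing over all agents gives $\kappa^\star(\Phi) = \sum_{a \in A} \kappa_\Phi(a) = \card{\set{a \in A ~|~ \kappa_\Phi(a) \ge 1}} = \kappa^\#(\Phi)$ for every allocation $\Phi$. Consequently an allocation minimizes total envy if and only if it minimizes the number of envious agents, so $\kappa^\star(\mathcal{I}) = \kappa^\#(\mathcal{I})$, and an instance is a \textsc{Yes}-instance of $[\nicefrac{0}{1}]$-\UHA{} with bound $k$ exactly when it is a \textsc{Yes}-instance of $[\nicefrac{0}{1}]$-\OHA{} with the same bound. The algorithm is therefore just to run the procedure of \Cref{lem:oha-onehouse} and compare its optimum against $k$.

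I do not anticipate a genuine obstacle here; the only point requiring any care is the verification of the per-agent envy bound, in particular making sure that unallocated houses and the agent's own house are correctly classified as non-envy-inducing. Everything else is an immediate consequence of the identity $\kappa^\star(\Phi) = \kappa^\#(\Phi)$ together with the previously established tractability of \OHA{} in this regime.
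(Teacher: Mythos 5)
Your proposal is correct and follows exactly the paper's route: the paper likewise observes that when every agent approves exactly one house, $\kappa^\star(\Phi) = \kappa^\#(\Phi)$ for every allocation, and deduces the result from \Cref{lem:oha-onehouse}. Your write-up simply makes explicit the per-agent bound $\card{\mathcal{E}_\Phi(a)} \le 1$ underlying that identity, which the paper leaves implicit.
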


\subsection{Parameterized Results}
In this section, we discuss the parameterized results for UHA.
First, we design a linear kernel for $[\nicefrac{0}{1}]$-UHA.

\begin{theorem}
\label{lem:uha-kernel-n} $[\nicefrac{0}{1}]$-\UHA{} admits a linear kernel parameterized by the number of agents. In particular, given an instance of $[\nicefrac{0}{1}]$-\UHA{}, there is a polynomial time algorithm that returns an equivalent instance of $[\nicefrac{0}{1}]$-\UHA{} with at most twice as many houses as agents. 
\end{theorem}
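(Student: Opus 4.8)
The plan is to mirror the structure of the kernelization proofs for the other two objectives (\Cref{lem:oha-kernel-n} and \Cref{lem:eha-kernel-n}), since Reduction Rules~\ref{rr1} and \ref{rr5} are parameter-agnostic and already justified; it therefore suffices to establish the safety of~\Cref{rr2} for the total-envy objective. Let $\mathcal{I} := (A,H,\mathcal{P};k)$ be the input instance and let $\mathcal{I}' := (A' := A \setminus X, H' := H \setminus Y, \mathcal{P}';k)$ be the reduced instance, where $(X,Y)$ and the expansion $M$ are as in the premise of~\Cref{rr2}. I would prove that $\mathcal{I}$ is a~\textsc{Yes}-instance if and only if $\mathcal{I}'$ is.

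For the forward direction, suppose $\mathcal{I}$ admits an allocation of total envy at most $k$. By~\Cref{claim:nicealloctotalenvy}, I may assume this optimal allocation $\Phi$ is good, i.e.\ $\Phi(a) = M(a)$ for every $a \in X$. The crucial structural fact supplied by the expansion lemma is that $N(Y) \subseteq X$, so no agent outside $X$ values any house in $Y$. Consequently, for every $a \in A'$ we have $\mathcal{E}_\Phi(a) \subseteq A'$: an agent in $A'$ cannot envy any agent in $X$, since the latter received houses from $Y$ that $a$ does not value. Moreover every agent in $X$ is envy-free, as it receives a house it values via $M$. Hence the total envy of $\Phi$ equals $\sum_{a \in A'}\card{\mathcal{E}_\Phi(a)}$, which is exactly the total envy of the projection of $\Phi$ onto $A' \cup H'$; this projection therefore witnesses that $\mathcal{I}'$ is a~\textsc{Yes}-instance.

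For the reverse direction, given an allocation $\Phi'$ of $\mathcal{I}'$ with total envy at most $k$, I would extend it to $\Phi$ by setting $\Phi(a) = \Phi'(a)$ for $a \notin X$ and $\Phi(a) = M(a)$ for $a \in X$. The newly assigned houses all lie in $Y$ and are valued by no agent outside $X$, so no agent in $A'$ gains any new envy; and each agent in $X$ is envy-free. The total envy of $\Phi$ thus coincides with that of $\Phi'$, so $\mathcal{I}$ is a~\textsc{Yes}-instance. Combining this equivalence with the bound $\card{H} \leq 2(\card{A}-1)$ that holds once Reduction Rules~\ref{rr1}, \ref{rr2} and \ref{rr5} are applied exhaustively yields the claimed linear kernel.

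The only genuinely content-bearing step is verifying that the total-envy count is preserved \emph{exactly} (not merely bounded) under projection and extension. The potential subtlety is that total envy is a sum over all agents, so I must confirm both that agents in $X$ contribute zero and that no cross-envy from $A'$ into $X$ is lost or created; both follow cleanly from $N(Y) \subseteq X$ together with goodness. Since the same two facts already drove the OHA and EHA arguments, I expect no new obstacle beyond carefully writing the containment $\mathcal{E}_\Phi(a) \subseteq A'$ for $a \in A'$.
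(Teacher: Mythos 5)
Your proposal is correct and follows essentially the same route as the paper: reduce to the safety of~\Cref{rr2}, invoke~\Cref{claim:nicealloctotalenvy} to assume a good allocation, project onto the reduced instance in the forward direction, and extend along the expansion $M$ in the reverse direction. Your explicit verification that $N(Y) \subseteq X$ forces $\mathcal{E}_\Phi(a) \subseteq A'$ for every $a \in A'$, so that the total envy is preserved exactly, is precisely the "easily checked" step the paper leaves implicit.
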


\begin{proof}
It suffices to prove the safety of~\Cref{rr2}. Let $\mathcal{I} := (A,H,\mathcal{P}; k)$ denote an instance of HA. Further, let  $\mathcal{I}^\prime = (H^\prime := H \setminus X, A^\prime := A \setminus Y, \mathcal{P}^\prime; k)$ denote the reduced instance corresponding to $\mathcal{I}$. Note that the parameter for the reduced instance is $k$ as well. 

If $\mathcal{I}$ is a~\textsc{Yes}-instance of UHA, then there is an allocation $\Phi: A \rightarrow H$ with total envy at most $k$. By \Cref{claim:nicealloctotalenvy}, we may assume that $\Phi$ is a good allocation. This implies that the projection of $\Phi$ on $H^\prime \cup A^\prime$ is well-defined, and it is easily checked that this gives an allocation with total envy at most $k$.

On the other hand, if $\mathcal{I}^\prime$ is a~\textsc{Yes}-instance of UHA, then there is an allocation $\Phi^\prime: A^\prime \rightarrow H^\prime$ with total envy at most $k$. We may extend this allocation to $\Phi: A \rightarrow H$ by allocating the houses in $Y$ to agents in $X$ along the expansion $M$, that is:
 \begin{equation*}
    \Phi(a) =
    \begin{cases}
      \Phi^\prime(a) & \text{if } a \notin X,\\
      M(a) & \text{if } a \in X.
    \end{cases}
\end{equation*}

Since all the newly allocated houses are not valued by any of the agents outside $X$ and all agents in $X$ are envy-free with respect to $\Phi$, it is easily checked that $\Phi$ also has total envy at most $k$. 
\end{proof}

The following results follow from the algorithm described in~\Cref{prop:mequalsn-uha} after guessing the allocated houses, which adds a multiplicative overhead of $\binom{m}{n}\leq 2^m$ to the running time.

\begin{corollary}
\label{lem:uha-fpt-m} $[\nicefrac{0}{1}]$-\UHA{} is fixed-parameter tractable when parameterized either by the number of houses or the number of agents. In particular, $[\nicefrac{0}{1}]$-\UHA{} can be solved in time $O^\star(2^m)$.
\end{corollary}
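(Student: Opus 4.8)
The plan is to derive this as a corollary by combining Proposition~\ref{prop:mequalsn-uha} (the polynomial-time algorithm for the $m=n$ case) with a brute-force guess of the allocated houses for the $m$-parameterization, and then to chain this with the linear kernel of Theorem~\ref{lem:uha-kernel-n} for the $n$-parameterization. First I would treat the parameter $m$. The key observation is that every allocation $\Phi$ allocates exactly $n$ of the $m$ houses and leaves $m-n$ of them unallocated, and that the total envy $\kappa^\star(\Phi)$ depends only on which set of houses is actually handed out and how they are assigned, since an agent's envy is determined solely by comparisons among allocated houses. So I would enumerate every $n$-element subset $S \subseteq H$; for each such $S$, form the restricted instance $(A, S, \mathcal{P}|_S)$, which is an $m=n$ instance of $[\nicefrac{0}{1}]$-\UHA, and invoke Proposition~\ref{prop:mequalsn-uha} to compute in polynomial time the minimum total envy over all allocations using exactly the houses of $S$; finally return the minimum of these values over all $S$. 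Since there are $\binom{m}{n} \le 2^m$ choices of $S$, the total running time is $\binom{m}{n}\cdot (n+m)^{O(1)} = O^\star(2^m)$.

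The correctness rests on one point that I would verify carefully, and which I regard as the only delicate step: that the total envy computed in the restricted instance $(A, S, \mathcal{P}|_S)$ equals the total envy of the corresponding allocation in the original instance. This holds because restricting the house set to $S$ does not alter which \emph{allocated} houses any agent values---in the subinstance all houses of $S$ are allocated, and the minimum-cost-matching formulation of Proposition~\ref{prop:mequalsn-uha} assigns to an edge $(a,h)$ with $a$ not valuing $h$ the cost equal to the degree of $a$ in the restricted preference graph, namely $\card{\mathcal{P}(a)\cap S}$, which is exactly the number of allocated houses that $a$ values and hence precisely the envy of $a$. Consequently, if $\Phi^\star$ is an optimal allocation for $\mathcal{I}$ using some $n$-set $S^\star$ of houses, then the iteration with $S = S^\star$ recovers $\kappa^\star(\Phi^\star) = \kappa^\star(\mathcal{I})$, and no iteration can report a value below the optimum; this establishes the $O^\star(2^m)$ bound.

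Finally, for the parameter $n$, I would first apply the kernelization of Theorem~\ref{lem:uha-kernel-n} to obtain, in polynomial time, an equivalent instance with at most $2(n-1)$ houses, and then run the $O^\star(2^m)$ algorithm above on this reduced instance. Since now $m \le 2(n-1)$, the running time becomes $O^\star\!\left(2^{2(n-1)}\right) = O^\star(4^n)$, which is fixed-parameter tractable in the number of agents. I do not anticipate a substantive obstacle here, as the statement is genuinely a corollary; the sole point requiring care is the consistency of the $\kappa^\star$ objective under restriction to the guessed set $S$, which the preceding paragraph settles.
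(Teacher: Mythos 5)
Your proposal is correct and follows essentially the same route as the paper: the paper likewise obtains this corollary by guessing the set of allocated houses (a $\binom{m}{n}\leq 2^m$ overhead) and invoking the $m=n$ algorithm of \Cref{prop:mequalsn-uha}, with the kernel of \Cref{lem:uha-kernel-n} supplying the bound $m\leq 2(n-1)$ needed for the agent parameterization. Your careful check that the $\kappa^\star$ objective is preserved under restriction to the guessed house set is the right (and only) point of delicacy, and it holds for exactly the reason you give.
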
 

% Guess unallocated houses. m \choose (m-n)
% Can do min-cost max-matching here

% \begin{proof}

% \end{proof}

\begin{corollary}
\label{lem:uha-fpt-m-rankings} \sloppypar $[\succ]$-\UHA{} is fixed-parameter tractable when parameterized by the number of houses and can be solved in time $O^\star(2^m)$.\endsloppypar
\end{corollary}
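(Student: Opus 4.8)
The plan is to reduce to the balanced case $m = n$ that is already handled by \Cref{prop:mequalsn-uha}, paying for the reduction by brute-forcing which $n$ of the $m$ houses are actually handed out. First I would record that under any allocation $\Phi \colon A \to H$ exactly $n = |A|$ houses are allocated (one per agent) while the remaining $m - n$ houses stay unallocated; write $S := \Phi(A)$, so $|S| = n$. The crucial observation is that the total envy of $\Phi$ is determined entirely by $S$ together with the bijection $\Phi \colon A \to S$ and is oblivious to the unallocated houses: since an agent only ever envies another \emph{agent}, and every agent holds a house of $S$, we have
\[
\kappa_\Phi(a) \;=\; \bigl|\{\, b \in A : \Phi(a) \prec_a \Phi(b)\,\}\bigr| \;=\; \bigl|\{\, h \in S : \Phi(a) \prec_a h \,\}\bigr| \;=\; rk\!\left(\Phi(a), \fr{\succ_a}{S}\right) - 1,
\]
where $\fr{\succ_a}{S}$ denotes the restriction of the linear order $\succ_a$ to $S$, and the last equality uses that $\Phi$ is a bijection from $A$ onto $S$.

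Building on this identity, the algorithm iterates over all $\binom{m}{n}$ size-$n$ subsets $S \subseteq H$. For each such $S$ I would form the balanced subinstance $\mathcal{I}_S := (A, S, \fr{\succ}{S}; k)$, in which the only available houses are those of $S$ and each agent's preference is $\succ_a$ restricted to $S$. This is an instance with $|S| = |A| = n$, so \Cref{prop:mequalsn-uha} applies verbatim and returns, in polynomial time, the minimum total envy over all bijections $A \to S$ (as a minimum-cost perfect matching whose edge costs are exactly $rk(h, \fr{\succ_a}{S}) - 1$). The algorithm accepts iff some $S$ yields minimum total envy at most $k$.

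For correctness I would argue both directions via the envy identity above. In the forward direction, if $\mathcal{I}$ admits $\Phi$ with $\kappa^\star(\Phi) \le k$, then setting $S = \Phi(A)$ the same bijection is a feasible matching of cost $\kappa^\star(\Phi) \le k$ in $\mathcal{I}_S$, so that iteration succeeds. Conversely, any bijection $A \to S$ of cost at most $k$ returned for some $S$ extends to an allocation of the full instance by leaving $H \setminus S$ unallocated, and by the identity this allocation has total envy equal to the matching cost, hence at most $k$. The running time is $\binom{m}{n} \le 2^m$ iterations times the polynomial cost of \Cref{prop:mequalsn-uha}, i.e.\ $O^\star(2^m)$; and since $m \ge n$ this also bounds the dependence on the number of houses, giving fixed-parameter tractability. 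The only point that genuinely requires care -- and the single obstacle worth flagging -- is that envy must be measured against the \emph{allocated} set $S$ rather than against all of $H$; this is precisely why the restricted orders $\fr{\succ_a}{S}$ are fed to \Cref{prop:mequalsn-uha}, and the displayed envy identity is exactly what certifies that the restricted and unrestricted notions of envy coincide.
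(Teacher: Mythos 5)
Your proposal is correct and follows exactly the paper's route: guess the set $S$ of $n$ allocated houses (a $\binom{m}{n} \leq 2^m$ overhead), restrict to the balanced subinstance, and invoke \Cref{prop:mequalsn-uha} via minimum-cost perfect matching. Your explicit identity showing that envy depends only on the allocated set $S$ (so that ranks must be taken in the restricted order $\succ_a|_S$) is a careful spelling-out of the step the paper leaves implicit, and it is exactly right.
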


\begin{table}
\begin{tabular}{c|ccc|ccc|c}
\multicolumn{1}{l|}{($n, m, n^\star$)} &
  \multicolumn{3}{c|}{OHA} &
  \multicolumn{2}{c}{EHA} &
  \multicolumn{1}{l|}{} &
  \multicolumn{1}{l}{\begin{tabular}[c]{@{}l@{}}Time/Instance\\ OHA, EHA\\ (in sec.)\end{tabular}} \\ \hline
\multicolumn{1}{l|}{} &
  \begin{tabular}[c]{@{}c@{}}Env.\\  Agents\\ ($\kappa^\star(\Phi))$\end{tabular} &
  \begin{tabular}[c]{@{}c@{}}Max \\ Envy\end{tabular} &
  \begin{tabular}[c]{@{}c@{}}Total\\  Envy\end{tabular} &
  \multicolumn{1}{l}{\begin{tabular}[c]{@{}l@{}}Env.\\ Agents\end{tabular}} &
  \begin{tabular}[c]{@{}c@{}}Max \\ Envy\\ ($\kappa^\dagger(\Phi)$)\end{tabular} &
  \multicolumn{1}{l|}{\begin{tabular}[c]{@{}l@{}}Total\\ Envy\end{tabular}} &
  \multicolumn{1}{l}{} \\ \hline
(30, 30, 1)    & 15.11 & 14.89 & 216.71 & 15.11 & 14.89 & 216.71 & 0.003, 0.002 \\
(30, 30, 5)    & 0.95  & 8.78  & 12.33  & 9.07  & 7.56  & 11.29  & 0.19, 0.10   \\
(30, 30, 15)   & 0     & 0     & 0      & 0     & 0     & 0      & 0.20, 0.33   \\
(30, 40, 1)    & 10.18 & 19.82 & 191.76 & 20.18 & 9.82  & 188.16 & 0.01, 0.003  \\
(60, 60, 1)    & 30.36 & 29.64 & 888.08 & 30.36 & 29.64 & 888.08 & 0.006, 0.002 \\
(60, 60, 15)   & 0.01  & 0.31  & 0.31   & 0.21  & 0.21  & 0.21   & 0.57, 0.10   \\
(60, 60, 30)   & 0     & 0     & 0      & 0     & 0     & 0      & 2.00, 4.21   \\
(120, 120, 1)  & 59.45 & 60.55 & 3567.8 & 59.45 & 60.55 & 3567.8 & 0.01, 0.002  \\
(120, 120, 5)  & 3.83  & 57.07 & 218.79 & 66.62 & 51.07 & 200.63 & 0.11, 0.20   \\
(120, 120, 15) & 0     & 0     & 0      & 0     & 0     & 0      & 1.98, 4.26   \\
(120, 130, 5 ) & 0     & 0     & 0      & 0     & 0     & 0      & 0.11, 0.10   \\ \hline
\end{tabular}
\caption{A summary of the results, averaged over 100 instances of each type. The OHA column corresponds to the solution from OHA ILP and the max-envy and total envy in that column shows those values when the number of envious agents is minimized. Similarly for the EHA column.}
\label{table:ilp}
\end{table}

\section{Experiments}
\label{sec:exp}

We implemented the ILP for OHA and EHA over synthetic datasets of house allocation problems generated uniformly at random. We used Gurobi Optimizer version $9.5.1$\footnote{The code can be accessed at https://github.com/anonymous1203/House-Allocation}.
The average was taken over $100$ trials for each instance. A summary is recorded in \Cref{table:ilp}. For a fixed number of houses and agents, notice that as the number of agent types, $n^\star$ increases, the number of envious agents and the maximum envy decreases. Instances with identical valuations (where $n^\star = 1$) seem to admit more envy than the other extreme (where $n^\star = n$). This is due to the fact when valuations are identical, there is more contention on the specific subset of goods. On the contrary, for instances with $m^\star=1$, envy-free allocations always exist. Indeed, when $m^\star=1$, all houses are of the same type, which means that an agent either likes all the houses or dislikes all of them and in either case, she is envy-free no matter which house she gets.
Also note that, when we increase the number of houses, for a constant number of agents and agent types, the envy decreases, which is as expected, because of the increase in the number of choices and the fact that some houses (the more contentious ones) remain unallocated.

\section{Price of Fairness}
\label{sec:pof}
In this section, in addition to the envy-minimization, we will focus on the social welfare of an allocation, as captured by the sum of the individual agent utilities. An allocation is considered more efficient when it results in a higher level of social welfare. Minimizing the envy objectives can lead to inefficient allocations with poor social welfare. 
Indeed, our algorithms for OHA, EHA and UHA first check if there are enough (more than $n$) dummy houses and if so, allocate these dummy houses to everyone, potentially leading to an envy-free solution, but with no social welfare gain. Quantifying this welfare loss, incurred as the cost of minimizing envy is, therefore, an imperative consideration. In particular, we discuss the worst-case welfare loss under different scenarios when any of the envy objectives is supposed to be minimized, and give tight bounds for the same.

We first define the Price of Fairness in the house allocation setting as follows.
We use the notation $PoF_{OHA}$ to denote the fact that the fairness notion under consideration is the minimum number of envious agents. $PoF_{EHA}$ and $PoF_{UHA}$ are defined analogously. When the meaning is clear from the context, we drop the subscript and simply write $PoF$.

%\JM{NEW: We should perhaps specify in the above sentence that we're  defining PoF w.r.t. the number of envious agents. Maybe we should even use the notation $PoF_{OHA}$ or something like that. Just to emphasize and avoid confusion. And we can add a line saying that the other ones are defined analogously, and when the meaning is clear from the context, we drop the subscript and simply write $PoF$. \AS{Done.} Also, we don't need the $k$ in $\mathcal{I}:= (A,H,\mathcal{P}; k)$ in the definition below. And the supremum is over all instances with $m$ houses and $n$ agents, right? We should specify that also.} \AS{Added a line post the definition to clarify this.}

\begin{definition}
    For a house allocation instance $\mathcal{I}:= (A,H,\mathcal{P})$ with $n$ agents and $m$ houses, consider an allocation $\Phi^\star$ that maximizes the social welfare, denoted by $SW(\Phi^\star)$. Let $\Phi$ be the allocation that minimizes the number of envious agents and $SW(\Phi)$ be the social welfare of $\Phi$. Then, the price of fairness $PoF_{OHA}$ is defined as $$PoF_{OHA} = \sup_\mathcal{I} \frac{SW(\Phi^\star)}{SW(\Phi)} = \sup_\mathcal{I} \frac{\sum_{i \in A} u_i(\Phi^\star(i))}{\sum_{i \in A} u_i(\Phi(i))},$$
where the supremum is taken over all instances with $n$ agents and $m$ houses.
\end{definition}

 We say that an instance with binary valuations is normalized if every agent likes an equal number of houses. Moreover, if every house is also liked by an equal number of agents, then we say that the instance is doubly normalized. We now present the bounds for $PoF$. We show that if $m=n$, then $PoF=1$ for all the three envy minimization objectives. When $m>n$, if the instance is doubly normalized, then $PoF=1$ but it can be as large as $n$ if we drop the double normalization assumption. The results are summarized in \Cref{tab:pof_table}.

% Please add the following required packages to your document preamble:
% \usepackage{multirow}

% Please add the following required packages to your document preamble:
% \usepackage{multirow}
\begin{table}
\centering
\begin{tabular}{|c|c|c|c|}
\hline
                                                            & $\mathbf{m=n}$ & \multicolumn{2}{c|}{$\mathbf{m>n}$} \\ \hline
                                                            & \begin{tabular}[c]{@{}c@{}}\textbf{Non}\\ \textbf{Normalized}\end{tabular} & \begin{tabular}[c]{@{}c@{}}\textbf{Doubly}\\ \textbf{Normalized}\end{tabular} & \textbf{Normalized} \\ \hline
\begin{tabular}[c]{@{}c@{}} \textbf{\textsc{OHA / EHA /}} \\ \textbf{\textsc{UHA}}\end{tabular} & {\cellcolor[HTML]{B3E7B4}{\color[HTML]{343434}$1$}} & {\cellcolor[HTML]{B3E7B4}{\color[HTML]{343434}$1$}} & {\cellcolor[HTML]{ECCECE}{\color[HTML]{343434}$\frac{n}{2} \leq \text{PoF} \leq n$}} \\ \hline
\end{tabular}
\caption{Price of minimizing the number of envious agents, the maximum envy, and the total envy for binary valuations.}
\label{tab:pof_table}
\end{table}

Towards proving our results for the case when $m = n$, we first present the following lemma based on the characterization that a matching $M$ in a graph $G$ is maximum if and only if $G$ has no $M$-augmenting path. An $M$-augmenting path is a path in $G$ that starts and ends at unmatched vertices (vertices not included in $M$) and alternates between edges in the matching $M$ and edges that are not in $M$.

\begin{proposition}[folklore]
\label{lem:max-matching-saturating-vertices}
Let $G$ be a graph. For any $A' \subseteq V(G)$, if $G$ contains a matching that saturates $A'$, then $G$ contains a maximum matching that saturates $A'$. 
\end{proposition}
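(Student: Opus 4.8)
The plan is to exploit the augmenting-path characterization of maximum matchings recalled just above. Let $N$ be a matching that saturates $A'$, and among all \emph{maximum} matchings of $G$, let $M$ be one that saturates the largest number of vertices of $A'$. I claim this $M$ already saturates all of $A'$, which proves the proposition. Suppose not, and fix a vertex $v \in A'$ that is unsaturated by $M$. Since $N$ saturates $A'$, the vertex $v$ is saturated by $N$.

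The key tool is the symmetric difference $M \triangle N$: every vertex has degree at most two in it, so its connected components are alternating paths and even alternating cycles (alternating between edges of $M$ and edges of $N$). Because $v$ is unsaturated by $M$ yet incident to an $N$-edge of $M \triangle N$, it is an endpoint of an alternating path $P$ whose first edge lies in $N \setminus M$. The first step is to pin down the other endpoint $w$ of $P$. If $P$ ended with an edge of $N \setminus M$, then $w$ would also be unsaturated by $M$ and $P$ would be an $M$-augmenting path, contradicting the maximality of $M$. Hence $P$ ends with an edge of $M \setminus N$, so $w$ is saturated by $M$; and since the $P$-edge incident to $w$ is an $M$-edge and $w$ has degree one in $M \triangle N$, the vertex $w$ carries no $N$-edge at all, i.e.\ $w$ is unsaturated by $N$. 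As $N$ saturates $A'$, this forces $w \notin A'$.

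The second step is to augment along $P$. Setting $M' := M \triangle E(P)$ preserves the matching property: each internal vertex of $P$ merely switches its matched edge from its $M$-edge to its $N$-edge, the vertex $v$ becomes matched via its $N$-edge (it had no $M$-edge), and $w$ loses its $M$-edge. Since $P$ contains equally many $M$-edges and $N$-edges, we get $|M'| = |M|$, so $M'$ is again a maximum matching. Comparing saturations, $M'$ agrees with $M$ off $P$, now saturates $v \in A'$, and un-saturates only $w \notin A'$; hence $M'$ saturates strictly more vertices of $A'$ than $M$, contradicting the choice of $M$. Therefore $M$ saturates all of $A'$.

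I expect the main subtlety to be the endpoint analysis of $P$: correctly invoking the absence of $M$-augmenting paths to conclude that $P$ terminates with an $M$-edge, and then reading off that its far endpoint $w$ is unsaturated by $N$ and so lies outside $A'$. Once that is settled, the cardinality bookkeeping for the augmentation (that $|M'| = |M|$ and that the net effect on $A'$ is a gain of one saturated vertex) is routine.
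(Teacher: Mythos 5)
Your proof is correct, but it runs in the opposite direction from the paper's. The paper starts from the matching $N$ that saturates $A'$ and repeatedly augments it: since every augmenting path begins and ends at unsaturated vertices, its endpoints lie outside $A'$, and augmenting only ever enlarges the set of saturated vertices; when no augmenting path remains, the resulting matching is maximum and still saturates $A'$. You instead start from a maximum matching $M$ chosen to saturate as many vertices of $A'$ as possible and perform a single exchange along an alternating path in $M \triangle N$, using the extremal choice of $M$ to reach a contradiction. Your endpoint analysis is sound: the path out of the $M$-unsaturated vertex $v \in A'$ cannot end in an $N$-edge (that would be an $M$-augmenting path), so it ends at a vertex $w$ that is $M$-saturated but $N$-unsaturated and hence outside $A'$, and the swap trades $w$ for $v$ at no cost in cardinality. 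The paper's version is slightly more economical in that it needs only the one observation that augmentation never unsaturates a vertex, at the price of an (implicit) termination argument; yours avoids iteration entirely by the extremal choice, at the price of the symmetric-difference case analysis. Both are standard and complete.
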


\begin{proof} Let $M'$ be the matching that saturates $A'$ in $G$. If $M'$ is itself a maximum matching, then we are done. Suppose not. Then there exists an $M'$-augmenting path $P$ in $G$ which starts and ends at a vertex in $A \setminus A'$. Replacing the edges of $M$ in $P$ by the other edges in $P$ gives a strictly larger matching $M$ (with exactly one more edge) which saturates $A'$ and in addition, saturates two vertices from $A \setminus A'$. For every such augmentation, the set of saturated vertices increases by two, keeping the original set of saturated vertices intact. When no augmenting path exists, then by the characterization of maximum matchings, the resulting matching $M$ must be a maximum matching that saturates $A'$. 
\end{proof}

We now consider the case when the number of houses is equal to the number of agents, and show that $PoF = 1$ in this case. Recall that we are under the assumption that every agent values at least one house. Hence, when $m = n$ (and the valuations are binary), in any allocation, an agent either receives a house she values or she is envious; in particular, if agent $a$ is envious, then she envies exactly $d(a)$ other agents, where $d(a)$ is the degree of $a$ in the associated preference graph. We will crucially rely on this fact to prove that $PoF = 1$. Our arguments will also use the correspondence between matchings in the preference graph and allocations: For a matching $M$ in the preference graph, we denote the allocation corresponding to $M$ by $\Phi_M$, which allocates the house $M(a)$ to the agent $a$ and allocates houses to the remaining unmatched agents arbitrarily. Notice that the allocation $\Phi_M$ need not be unique. 
% \JM{NEW: I added a few lines in the above paragraph. Please  read it over once.} 

% \JM{Maybe we should separate the following theorem into three results depending on the fairness objective. The current phrasing could be needlessly confusing. The first result should say, there is an allocation that simultaneously maximizes welfare and minimizes the number of agents. Second one about welfare and maximum envy and like that. We can have these three lemmas, and then one theorem that simply says PoF = 1 for all three fairness concepts.}
% \JM{Also, maybe we should avoid using the terms OHA-optimal, EHA-optimal, UHA-optimal. We can simply say allocation that minimizes the number of envious agents (or maximum envy or total envy). I mean, the term optimal could be confusing, and even if it is not, could invite a negative comment from reviewers. Let's not give them another reason to demand that we change the names of the problems.}
% \AS{Sure, done}

\begin{lemma}
\label{lem:welfareOHA} 
For an instance of house allocation with $m=n$ and binary valuations, there exists an allocation that simultaneously maximizes the social welfare and minimizes the number of envious agents.
\end{lemma}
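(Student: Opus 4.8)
The plan is to exploit the fact, noted in the paragraph preceding the statement, that when $m = n$ and valuations are binary (with every agent valuing at least one house), an agent $a$ is envy-free under an allocation $\Phi$ if and only if $a$ receives a house she values, i.e.\ $u_a(\Phi(a)) = 1$; otherwise she is forced to envy all $d(a) \geq 1$ recipients of the houses she values. Consequently, the number of envy-free agents under $\Phi$ coincides \emph{exactly} with its social welfare, since $SW(\Phi) = \sum_{a \in A} u_a(\Phi(a))$ counts precisely the agents who receive a house they value. Minimizing the number of envious agents and maximizing social welfare therefore become the \emph{same} optimization problem, and so a single allocation should achieve both optima.

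To make this rigorous, first I would observe that for any allocation $\Phi$, the set $\{(a, \Phi(a)) : u_a(\Phi(a)) = 1\}$ is a matching in the preference graph $G$ (it is a set of edges of $G$, and $\Phi$ is injective). Hence $SW(\Phi)$ is at most the size of a maximum matching $M^\star$ in $G$, giving the upper bound $SW(\Phi^\star) \le |M^\star|$ on the optimal welfare. For the matching lower bound, I would take the allocation $\Phi_{M^\star}$ that assigns $M^\star(a)$ to each matched agent and distributes the remaining houses arbitrarily among the unmatched agents.

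The key step is to verify that $\Phi_{M^\star}$ attains welfare exactly $|M^\star|$ and simultaneously has the minimum possible number of envious agents. Every matched agent receives a valued house, contributing $|M^\star|$ to the welfare. No unmatched agent can receive a valued house under $\Phi_{M^\star}$: such an agent is assigned an $M^\star$-unsaturated house, and if she valued it then $M^\star$ could be augmented, contradicting its maximality. Hence $SW(\Phi_{M^\star}) = |M^\star| = SW(\Phi^\star)$, so $\Phi_{M^\star}$ maximizes welfare. The same computation shows that exactly the $|M^\star|$ matched agents are envy-free and the remaining $n - |M^\star|$ agents are envious; by \Cref{prop:mequalsn-oha}, $n - |M^\star|$ is exactly $\kappa^\#(\mathcal{I})$, the minimum number of envious agents. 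Thus $\Phi_{M^\star}$ simultaneously maximizes social welfare and minimizes the number of envious agents.

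I do not anticipate a genuine obstacle here: the entire argument rests on the single structural coincidence that, for $m = n$ and binary valuations, welfare and the count of envy-free agents are the \emph{same} quantity. The only point requiring care is confirming that the arbitrary completion of the maximum matching neither inflates the welfare nor spares any extra agent from envy, which follows immediately from the maximality of $M^\star$.
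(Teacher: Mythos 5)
Your proposal is correct and follows essentially the same route as the paper: both identify the social welfare with the number of envy-free agents (which, by \Cref{prop:mequalsn-oha}, equals the size of a maximum matching in the preference graph when $m = n$), and both conclude that the allocation induced by a maximum matching optimizes the two objectives simultaneously. Your version merely spells out in more detail the verification that the arbitrary completion adds no welfare, which the paper leaves implicit.
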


\begin{proof}
By \Cref{prop:mequalsn-oha}, we know that any allocation that minimizes the number of envious agents has exactly $|M|$ envy-free agents, where $M$ is the maximum matching in the associated preference graph $G$. It is easy to see that the maximum social welfare of the instance is also exactly $|M|$, hence any allocation, say $\Phi_M$, corresponding to a maximum matching $M$ in $G$ maximizes the welfare and minimizes the number of envious agents.
\end{proof}

\begin{lemma}
\label{lem:welfareEHA} 
For an instance of house allocation with $m=n$ and binary valuations, there exists an allocation that simultaneously maximizes social welfare and minimizes the maximum envy.
\end{lemma}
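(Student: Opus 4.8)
The plan is to combine the characterization of the minimum maximum envy implicit in~\Cref{prop:mequalsn-eha} with the matching-augmentation result of~\Cref{lem:max-matching-saturating-vertices}. The key structural observation, valid precisely because $m = n$ forces every allocation to be a bijection, is that if an agent $a$ is envious under an allocation $\Phi$, then all $d(a)$ houses she values must have been assigned to other agents, so she envies exactly $d(a)$ agents; that is, $\kappa_\Phi(a) = d(a)$ when $a$ is envious and $\kappa_\Phi(a) = 0$ otherwise. Consequently the maximum envy of $\Phi$ equals the largest degree among its envious agents, and an allocation renders an agent envy-free exactly when it assigns her a house she values.

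First I would pin down the optimum value. Write $k^\star := \kappa^\dagger(\mathcal{I})$ for the minimum maximum envy and set $A^\star := \set{a \in A : d(a) \geq k^\star + 1}$. By the argument underlying~\Cref{prop:mequalsn-eha}, an allocation has maximum envy at most $k^\star$ if and only if every agent of degree at least $k^\star + 1$ receives a house she values; equivalently, the preference graph $G$ contains a matching that saturates $A^\star$. Since $k^\star$ is by definition achievable, such a matching exists. On the welfare side, because valuations are binary the social welfare of any allocation is just the number of agents receiving a valued house, which forms a matching in $G$; hence the maximum social welfare equals the size $\card{M}$ of a maximum matching $M$ in $G$, attained by the allocation $\Phi_M$.

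Now I would invoke~\Cref{lem:max-matching-saturating-vertices} with $A' = A^\star$: because $G$ has \emph{some} matching saturating $A^\star$, it has a \emph{maximum} matching, say $M'$, that still saturates $A^\star$. Consider the corresponding allocation $\Phi_{M'}$. Its social welfare is $\card{M'} = \card{M}$, so it is welfare-optimal. Moreover, every agent in $A^\star$ is matched by $M'$ and is therefore envy-free, so every envious agent under $\Phi_{M'}$ has degree at most $k^\star$ and hence envies at most $k^\star$ others; thus $\kappa^\dagger(\Phi_{M'}) \leq k^\star$, and by optimality of $k^\star$ we get $\kappa^\dagger(\Phi_{M'}) = k^\star$. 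Therefore $\Phi_{M'}$ simultaneously maximizes social welfare and minimizes the maximum envy, which proves the lemma.

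The only real obstacle is producing a \emph{single} matching that is at once maximum and saturates the set $A^\star$ of high-degree agents: on their own we know that a maximum matching exists and that a matching saturating $A^\star$ exists, but there is no a priori reason an arbitrary maximum matching should saturate $A^\star$. This is exactly what~\Cref{lem:max-matching-saturating-vertices} delivers, since its augmenting-path argument enlarges a matching while leaving its already-saturated vertices saturated, so once that proposition is in hand the argument goes through cleanly.
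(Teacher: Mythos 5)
Your proof is correct and follows essentially the same route as the paper's: both arguments reduce the problem to finding a single maximum matching that also saturates a critical set of high-degree agents, and both obtain it from \Cref{lem:max-matching-saturating-vertices}. The only cosmetic difference is that the paper identifies this critical set via the degree ordering and the first index $p_1$ whose prefix cannot be saturated (thereby also computing the optimum $k^\star = d(a_{p_1})$ explicitly), whereas you define it directly as $\set{a : d(a) \geq k^\star + 1}$ using the characterization from \Cref{prop:mequalsn-eha}; the two descriptions yield the same conclusion.
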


\begin{proof} Let $M$ be a maximum matching in the associated preference graph $G$. Since $m=n$, all the agents that are unmatched under $M$ are envious under $\Phi_M$. This implies that at least $n - |M|$ agents are envious. In particular, if $|M| = n$, then under the allocation $\Phi_M$, all agents are envy-free, and $\Phi_M$ simultaneously maximizes welfare and minimizes the maximum envy, and thus the lemma trivially holds. So, assume that $|M| < n$. As $M$ is a maximum matching, we can conclude that there is no matching that saturates all the agents. 

Let us now order the agents in the non-increasing order of their degrees, that is, $a_1, a_2,\ldots, a_n$ such that  $d(a_1) \geq d(a_2) \geq \cdots \geq d(a_n)$. Let $p_1 \in [n]$ be the least index such that there is no matching that saturates all of $a_1, a_2,...,a_{p_1}$. This implies that there is a matching that saturates $a_1, a_2, \ldots a_{p_1-1}$ but none that saturates the agents $a_1, a_2,...,a_{p_1}$. That is, in any allocation, at least one agent among $a_1, \ldots a_{p_1}$ is envious. Thus, the maximum envy of the instance is at least $d(a_{p_1})$. But there is a matching that saturates $a_1, a_2,\ldots, a_{p_1-1}$, which can be extended to a corresponding allocation; and under such an allocation, $a_{p_1}$ would be the first envious agent (first in the ordering $a_1, a_2,\ldots, a_n$). Therefore, we can conclude that the maximum envy of the instance is precisely equal to $d(a_{p_1})$. Now, we only have to prove that there is indeed a \emph{maximum} matching that saturates $a_1, a_2,\ldots,a_{p_1-1}$; and \Cref{lem:max-matching-saturating-vertices} guarantees that $G$ does contain such a maximum matching (we simply need to apply \Cref{lem:max-matching-saturating-vertices} with $A' = \set{a_1, a_2,\ldots, a_{p_1-1}}$). This implies that there is a welfare-maximizing allocation that also minimizes the maximum envy. 
\end{proof}

\begin{lemma}
\label{lem:welfareUHA} 
For an instance of house allocation with $m=n$ and binary valuations, there exists an allocation that simultaneously maximizes social welfare and minimizes total envy.
\end{lemma}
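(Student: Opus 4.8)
The plan is to reduce total-envy minimization to a weighted matching question in the preference graph $G$ and then show that the weighted optimum is automatically cardinality-maximum. First I would record the exact value of the total envy of an allocation. Since $m = n$ and every agent values at least one house, all houses are allocated, so an \emph{envious} agent $a$ — one receiving a house she does not value — envies exactly the $d(a)$ distinct agents holding the houses she values, where $d(a)$ is the degree of $a$ in $G$. Writing $S_\Phi$ for the set of envy-free agents under $\Phi$, this yields
\[
\kappa^\star(\Phi) = \sum_{a \notin S_\Phi} d(a) = \Big(\sum_{a \in A} d(a)\Big) - \sum_{a \in S_\Phi} d(a).
\]
As $\sum_{a \in A} d(a)$ is a constant independent of $\Phi$, minimizing total envy is equivalent to \emph{maximizing} $\sum_{a \in S_\Phi} d(a)$.

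Next I would exploit the correspondence between allocations and matchings. The envy-free agents of an allocation are precisely those matched to houses they value, so the sets $S_\Phi$ that can arise are exactly the subsets of $A$ saturable by some matching of $G$. Thus it suffices to pick a matchable set $S^\star \subseteq A$ maximizing $\sum_{a \in S^\star} d(a)$, fix a matching $M^\star$ saturating $S^\star$, and output the associated allocation $\Phi_{M^\star}$; by construction this minimizes total envy.

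The crux — and the step I expect to be the main obstacle — is showing that this same $\Phi_{M^\star}$ also maximizes social welfare, i.e.\ that $\card{S^\star}$ equals the size of a maximum matching of $G$ (which is exactly the maximum achievable social welfare, since the welfare of any allocation equals its number of envy-free agents). I would establish this by an augmenting-path exchange. Suppose $S^\star$ were not of maximum size; then a matching $M^\star$ saturating $S^\star$ is not maximum, so $G$ admits an $M^\star$-augmenting path $P$. Augmenting along $P$ keeps every previously saturated agent saturated and additionally saturates the previously unmatched agent endpoint $a_0$ of $P$ (in the bipartite graph $P$ has odd length, so its two unsaturated endpoints lie on opposite sides, one being an agent). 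The resulting matchable set $S^\star \cup \set{a_0}$ then satisfies $\sum_{a \in S^\star \cup \{a_0\}} d(a) = \sum_{a \in S^\star} d(a) + d(a_0) > \sum_{a \in S^\star} d(a)$, since $d(a_0) \geq 1$ by our standing assumption that every agent values at least one house — contradicting the maximality of $S^\star$.

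Hence $\card{S^\star}$ is the maximum matching size, and (invoking \Cref{lem:max-matching-saturating-vertices} with $A' = S^\star$ if one wishes to exhibit a maximum matching saturating $S^\star$ explicitly) the allocation $\Phi_{M^\star}$ simultaneously attains the maximum social welfare and the minimum total envy. The essential ingredient that makes the exchange go through is the strict positivity of every weight $d(a)$, guaranteed precisely by the assumption that no agent has an empty set of valued houses; without it, a welfare-suboptimal set could tie for the weighted maximum.
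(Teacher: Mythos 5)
Your proof is correct and follows essentially the same route as the paper's: both hinge on the observation that an augmenting path keeps every previously saturated agent saturated while newly saturating an agent of positive degree, so a total-envy-optimal allocation must correspond to a maximum matching in the preference graph. Your explicit reformulation of total envy as $\sum_{a \notin S_\Phi} d(a)$ and of the optimization as maximizing $\sum_{a \in S} d(a)$ over matchable sets $S$ is a clean packaging that the paper leaves implicit, but the underlying exchange argument is identical.
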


\begin{proof}
Consider an allocation, say $\Phi$, that minimizes the total envy. We claim that the matching $M$ corresponding to $\Phi$ in the associated preference graph $G$ is a maximum matching. If $M$ is not a maximum matching, then we have an augmenting path, and we get a strictly larger matching $M^\prime$ such that the vertices saturated by $M$ are also saturated under $M^\prime$. Now consider the allocation corresponding to $M^\prime$, say $\Phi_{M^\prime}$. Since $|M|<|M^\prime|$, the number of envious agents under $\Phi_{M^\prime}$ is strictly less than those under $\Phi$; also all envy-free agents under $\Phi$ remain envy-free under $\Phi_{M^\prime}$. These arguments imply that the total envy under $\Phi_{M^\prime}$ is less than the total envy under $\Phi$. This is a contradiction to the fact that $\Phi$ minimizes the total envy. 
\end{proof}

The following result now follows from $\Cref{lem:welfareOHA}$, $\Cref{lem:welfareEHA}$ and $\Cref{lem:welfareUHA}$.
\begin{theorem}
\label{thm:PoF-m-equals-n}
     For an instance of house allocation with $m=n$ and binary valuations, $PoF = 1$ for all the three envy-minimization objectives.
\end{theorem}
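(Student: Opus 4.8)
The final statement, \Cref{thm:PoF-m-equals-n}, is an immediate corollary of the three preceding lemmas (\Cref{lem:welfareOHA}, \Cref{lem:welfareEHA}, and \Cref{lem:welfareUHA}), and the plan is simply to stitch these together. Recall that $PoF$ is defined as a supremum over instances of the ratio $SW(\Phi^\star)/SW(\Phi)$, where $\Phi^\star$ maximizes welfare and $\Phi$ minimizes the relevant envy measure. The plan is to argue that for each of the three envy objectives, the ratio is exactly $1$ on every instance, which forces the supremum to be $1$ as well. Since welfare is nonnegative and $\Phi^\star$ maximizes it, the ratio is always at least $1$; so the entire content is to show it never exceeds $1$, i.e.\ that the welfare-minimization gap closes completely.

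The key observation is that each of the three lemmas asserts precisely the existence of a single allocation that \emph{simultaneously} maximizes social welfare and minimizes the corresponding envy measure. First I would invoke \Cref{lem:welfareOHA}: it guarantees an allocation $\Psi$ that is both welfare-maximizing and minimizes the number of envious agents. Because $\Psi$ minimizes the number of envious agents, it is a valid choice for the denominator's $\Phi$ in the definition of $PoF_{OHA}$; and because $\Psi$ also maximizes welfare, we have $SW(\Psi) = SW(\Phi^\star)$, so the ratio achieved by this instance is exactly $1$. Repeating the identical reasoning with \Cref{lem:welfareEHA} for $PoF_{EHA}$ and with \Cref{lem:welfareUHA} for $PoF_{UHA}$ yields ratio $1$ in each case. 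Taking the supremum over all instances with $m = n$ and binary valuations then gives $PoF = 1$ for all three objectives.

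There is essentially no obstacle here, since the hard work has already been done in establishing the three simultaneity lemmas via the matching-theoretic arguments (in particular \Cref{lem:max-matching-saturating-vertices}). The only point that warrants a sentence of care is the direction of the definition: one must note that the minimizer $\Phi$ in the $PoF$ denominator is a \emph{choice} of envy-minimizing allocation, and the lemmas hand us a \emph{specific} envy-minimizing allocation that happens to also be welfare-optimal; since we are free to let $\Phi$ be that allocation, the denominator equals the numerator. I would also explicitly remark that the standing assumption ``every agent values at least one house'' is in force, as the lemmas rely on it. A compact writeup follows.

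\begin{proof}
Fix an instance $\mathcal{I} = (A, H, \mathcal{P})$ with $m = n$ and binary valuations, and let $\Phi^\star$ be a welfare-maximizing allocation. For each envy measure, the corresponding $PoF$ ratio $SW(\Phi^\star)/SW(\Phi)$ is at least $1$ by the optimality of $\Phi^\star$, so it suffices to exhibit, for each objective, an envy-minimizing allocation whose welfare equals $SW(\Phi^\star)$.

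For $PoF_{OHA}$, \Cref{lem:welfareOHA} provides an allocation $\Psi$ that simultaneously maximizes social welfare and minimizes the number of envious agents. Choosing $\Phi = \Psi$ in the definition of $PoF_{OHA}$, we have $SW(\Phi) = SW(\Psi) = SW(\Phi^\star)$, so the ratio for $\mathcal{I}$ equals $1$. Identically, \Cref{lem:welfareEHA} yields an allocation that maximizes welfare and minimizes the maximum envy, giving ratio $1$ for $PoF_{EHA}$, and \Cref{lem:welfareUHA} yields an allocation that maximizes welfare and minimizes total envy, giving ratio $1$ for $PoF_{UHA}$.

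Since the ratio equals $1$ on every such instance $\mathcal{I}$, and the $PoF$ is the supremum of these ratios over all instances with $m = n$ and binary valuations, we conclude that $PoF = 1$ for all three envy-minimization objectives.
\end{proof}
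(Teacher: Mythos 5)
Your proposal is correct and matches the paper's approach exactly: the paper derives \Cref{thm:PoF-m-equals-n} directly from \Cref{lem:welfareOHA}, \Cref{lem:welfareEHA} and \Cref{lem:welfareUHA}, just as you do. Your extra remark about the minimizer $\Phi$ in the $PoF$ definition being a choice is a reasonable clarification but not a departure from the paper's argument.
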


\Cref{thm:PoF-m-equals-n} tells us that when $m = n$, there is an allocation that simultaneously maximizes welfare and minimizes any \emph{one} of the three measures of envy. This does raise the following question: Can we simultaneously maximize welfare and minimize \emph{all} three measures of envy? We show that we can indeed do this; that is, there is an allocation that simultaneously minimizes the number of envious agents, the maximum and total envy while maximizing social welfare. 

% \JM{NEW: I added a couple of sentences above. And added the second sentnce in the theorem statement below.}

\begin{theorem}
For an instance of house allocation with $m=n$ and binary valuations, there is an allocation that simultaneously minimizes the number of envious agents, the maximum and total envy, while maximizing social welfare. Moreover, we can compute such an allocation in polynomial time. 
\end{theorem}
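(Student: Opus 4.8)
The plan is to exhibit a single maximum matching in the preference graph $G$ whose set of unsaturated agents is simultaneously optimal for all three envy measures, and then read off the allocation from it. The starting point is the observation used throughout this section: since $m = n$ and valuations are binary, in any allocation an agent is either envy-free (she received a house she values) or envious, and in the latter case she envies \emph{exactly} $d(a)$ agents, where $d(a)$ is her degree in $G$. Thus, for the purposes of all three objectives, an allocation is completely described by its set $T$ of envy-free agents, which must be saturable by a matching; writing $U = A \setminus T$ for the complementary (envious) set, the social welfare is $|T|$, the number of envious agents is $|U|$, the total envy is $\sum_{a \in U} d(a)$, and the maximum envy is $\max_{a \in U} d(a)$. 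The lemmas \Cref{lem:welfareOHA,lem:welfareEHA,lem:welfareUHA} already establish that each objective \emph{in isolation} is optimized by some maximum matching; the crux is to produce one maximum matching that is optimal for all of them at once.

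First I would invoke the standard fact that the subsets of $A$ saturable by a matching in $G$ are precisely the independent sets of the transversal matroid of $G$ on ground set $A$; its bases are the maximum saturable sets, all of size $|M|$, where $M$ is a maximum matching. I would then run the matroid greedy algorithm, processing the agents in non-increasing order of degree $d(a_1) \ge d(a_2) \ge \cdots \ge d(a_n)$ and adding $a_i$ to the current set $A'$ whenever $A' \cup \{a_i\}$ remains saturable (which is testable in polynomial time by a single augmenting-path search, with \Cref{lem:max-matching-saturating-vertices} ensuring saturability is preserved when passing to maximum matchings). Because the weights $w(a) = d(a)$ are non-negative and the greedy order is by non-increasing weight, the resulting $A'$ is a \emph{maximum-weight base}: it has size $|M|$ and maximizes $\sum_{a \in A'} d(a)$. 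Taking $T = A'$ and $U = A \setminus A'$ thus already yields maximum welfare $|M|$, the minimum number of envious agents $n - |M|$, and---since $\sum_{a \in U} d(a) = \sum_{a \in A} d(a) - \sum_{a \in A'} d(a)$ is correspondingly minimized---the minimum total envy.

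It remains to check that this \emph{same} base also minimizes the maximum envy, and this is where the degree ordering pays off. Let $p_1$ be the least index for which $\{a_1, \dots, a_{p_1}\}$ is \emph{not} saturable, so that by (the proof of) \Cref{lem:welfareEHA} the optimum maximum envy equals $d(a_{p_1})$. Since $\{a_1, \dots, a_{p_1 - 1}\}$ is saturable and the greedy processes agents in exactly this order, an easy induction shows it adds every one of $a_1, \dots, a_{p_1 - 1}$ (each prefix stays independent) and then necessarily fails to add $a_{p_1}$ (as $\{a_1, \dots, a_{p_1}\}$ is not saturable). Hence $a_{p_1} \in U$, while every agent of degree strictly larger than $d(a_{p_1})$ lies in $\{a_1, \dots, a_{p_1 - 1}\} \subseteq A'$ and is saturated. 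Consequently $\max_{a \in U} d(a) = d(a_{p_1})$, matching the optimum, so the single base $A'$ is optimal for all four quantities at once.

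Finally I would turn $A'$ into an allocation $\Phi$: fix a maximum matching $M$ saturating $A'$, give each $a \in A'$ the house $M(a)$ she values, and assign the remaining houses to the agents of $U$ arbitrarily. As $A'$ is a maximal saturable set, no agent of $U$ values any unmatched house (otherwise that agent could be added to $A'$, contradicting maximality), so every agent of $U$ indeed receives a house she does not value and contributes exactly $d(a)$ to the envy, as required. Computing degrees, sorting, the $n$ incremental augmenting-path tests, and the final assignment all run in polynomial time. The main obstacle is precisely the simultaneity---the three lemmas optimize the objectives separately and through \emph{different} matchings---and the idea that resolves it is that minimizing total envy is a maximum-weight-base problem for the degree weights, whose greedy solution, when agents are processed in non-increasing degree order, automatically saturates the heaviest agents and hence minimizes the maximum envy as well.
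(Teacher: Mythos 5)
Your proof is correct, and your algorithm is in fact \emph{identical} to the paper's: the indices $p_1 < p_2 < \cdots < p_k$ that the paper excludes are exactly the agents your matroid greedy rejects when processing agents in non-increasing degree order, and the final matching $M'$ saturating the complement is the same. Where you diverge is in how you certify optimality of the total envy. The paper proves a bespoke combinatorial claim (for every $i$, any matching leaves at least $i$ agents of $\{a_1,\dots,a_{p_i}\}$ unsaturated) and then runs an explicit exchange argument comparing $\sum_i d(a_{p_i})$ against $\sum_i d(a_{q_i})$ for a hypothetical better allocation. You instead observe that the saturable agent sets form the independent sets of a transversal matroid, so the greedy run in non-increasing weight order returns a maximum-weight base, and minimizing $\sum_{a \in U} d(a)$ over complements of bases is immediate; the welfare and count objectives come for free since all bases have size $|M|$, and the max-envy claim follows from the prefix argument exactly as in the paper's \Cref{lem:welfareEHA}. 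Your route is shorter and more conceptual --- it makes clear \emph{why} one ordering optimizes both the sum and the max simultaneously --- at the cost of importing the transversal-matroid and greedy-optimality machinery; the paper's argument is longer but entirely self-contained. One small point worth making explicit in your write-up: the reduction of total envy to a weight-of-complement problem uses that every envious agent envies exactly $d(a)$ others, which relies on $m=n$ (all houses allocated) and on every agent valuing at least one house --- you state this, and both facts are indeed available in this setting.
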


\begin{proof} Let $M$ be a  maximum matching in the associated preference graph $G$. First, the corresponding allocation $\Phi_M$ maximizes social welfare. Since $m=n$, all the agents that remain unmatched under $M$ are envious under $\Phi_M$. This implies that at least $n - |M|$ agents are envious under any allocation. Thus, if $\card{M} = n$, then every agent is envy-free under $\Phi_M$, and thus, the theorem trivially holds. So, assume from now on that $\card{M} < n$. 

We first order the agents in the non-increasing order of their degrees, that is, $a_1, a_2,\ldots, a_n$ such that  $d(a_1) \geq d(a_2) \geq \cdots \geq d(a_n)$. Let $p_1 \in [n]$ be the least index such that $G$ does not contain a matching that saturates $\{a_1, a_2,...,a_{p_1}\}$. That is, there is a matching that saturates all of $a_1, a_2, \ldots a_{p_1-1}$ but none that saturates all of  $a_1, a_2,...,a_{p_1}$. Now, let $p_2 \in [n]$  be the  least index such that $p_2 > p_1$ and $G$ does not contain a matching that saturates $\{a_1, a_2, \ldots a_{p_2}\} \setminus \{a_{p_1} \}$ (if such an index $p_2$ exists); In general, having defined $p_1, p_2,\ldots, p_{i - 1}$, we define $p_i$ to be the least index in $[n]$ such that $p_i > p_{i - 1}$ and $G$ does not contain a matching that saturates $\{a_1, a_2, \ldots a_{p_i}\} \setminus \{a_{p_1}, a_{p_2},\ldots, a_{p_{i - 1}} \}$. Let $p_1, p_2,\ldots, p_k$ (for some $k \geq 1$), be the indices defined this way.  By their definition, $G$ contains a matching, say $M'$, that saturates $\{a_1, a_2, \ldots a_n\} \setminus \{a_{p_1}, a_{p_2}, \ldots a_{p_k}\}$. Consider the corresponding allocation $\Phi_{M^\prime}$; recall that $\Phi_{M^\prime}$ allocates houses along the edges of $M^\prime$ and the remaining agents, i.e., $a_{p_1} \ldots a_{p_k}$, receive the remaining houses in an arbitrary manner. %The total envy under $\Phi_{M^\prime}$ is $d(a_{p_1})+d(a_{p_2}) + \ldots d(a_{p_k})$.

Before we proceed further, let us observe that we can indeed compute $M^\prime$ and hence the corresponding allocation $\Phi_{M^\prime}$ in polynomial time. The arguments we have used so far are constructive. To compute $M^\prime$, we only need to identify the indices $p_1, p_2,\ldots, p_k$ and find a matching that saturates $\set{a_1, a_2,\ldots, a_{n}} \setminus \set{a_{p_1}, a_{p_2},\ldots, a_{p_k}}$. Notice that to find each $p_i$, we only need to check if $G$ contains a matching that saturates $\set{a_1, a_2,\ldots, a_j} \setminus \set{a_{p_1}, a_{p_2},\ldots, a_{p_{i - 1}}}$ for each $j > p_{i -1}$, which we can do in polynomial time. 

We will now show that the allocation $\Phi_{M^\prime}$ satisfies all the properties required by the statement of the theorem; that is, $\Phi_{M^\prime}$ maximizes welfare and minimizes the number of envious agents, the maximum envy and the total envy. To that end, observe first that $M^\prime$ does not saturate any of the agents $a_{p_1}, a_{p_2},\ldots, a_{p_k}$. Otherwise, let $p_i$ be the least index in $\set{p_1, p_2,\ldots, p_k}$ such that $M^\prime$ saturates $a_{p_i}$. In particular, $M^\prime$ is a matching that saturates $\set{a_1, a_2,\ldots, a_{p_i}} \setminus \set{a_{p_1}, a_{p_2},\ldots, a_{p_{i - 1}}}$, which contradicts the definition of $p_i$. By the same reasoning, we can also conclude that that $M^\prime$ is indeed a maximum matching. If not, then there is a larger matching, say $M''$, which also saturates at least one of the $a_{p_i}$ in addition to saturating $\{a_1, a_2, \ldots a_n\} \setminus \{a_{p_1}, a_{p_2}, \ldots a_{p_k}\}$ (by \Cref{lem:max-matching-saturating-vertices}), which again will lead to a contradiction.  %assume without loss of generality that $p_i$ is the least index among $\set{p_1, p_2,\ldots, p_k}$ such that $M''$ saturates $a_{p_i}$. In particular, $M''$ saturates $\set{a_1, a_2,\ldots, a_{p_i}} \setminus \set{a_{p_1}, a_{p_2},\ldots, a_{p_{i - 1}}}$, which contradicts the definition of $p_i$. \JM{NEW: Notation: Used $M''$ here because we already used $M$ in the beginning.}

Now, the fact that $M^\prime$ is a maximum matching immediately implies that the corresponding allocation $\Phi_{M^\prime}$  maximizes social welfare and minimizes the number of envious agents. We will now use the same arguments we used in the proof of~\Cref{lem:welfareEHA} to show that $\Phi_{M^\prime}$ minimizes the maximum envy. 
Notice that the set of agents that are \emph{not} saturated by $M'$ is precisely $\set{a_{p_1}, a_{p_2}, \ldots, a_{p_k}}$. This fact, along with the fact that $M^\prime$ is a maximum matching, implies that the set of envious agents under $\Phi_{M^\prime}$ is precisely $\set{a_{p_1}, a_{p_2}, \ldots, a_{p_k}}$. In particular, $a_{p_1}$ is the first envious agent under $\Phi_{M^\prime}$ (again, first in the ordering $a_1, a_2,\ldots, a_n$). Thus, the maximum envy of the allocation $\Phi_{M^\prime}$ is $d(a_{p_i})$. By the definition of $p_1$, in any allocation, at least one agent among $a_1, \ldots a_{p_1}$ is envious, and hence the maximum envy in any allocation is at least $d(a_{p_i})$ (because $d(a_1) \geq d(a_2) \geq \cdots \geq d(a_n)$). From these arguments, we can conclude that $\Phi_{M^\prime}$ minimizes the maximum envy. 

To complete the proof, now we only need to argue that $\Phi_{M^\prime}$ minimizes the total envy. To that end, we first prove the following claim. 
\begin{claim}
\label{claim:at-least-i-unmatched}
For every $i \in [k]$, every matching in $G$ does \emph{not} saturate at least $i$ agents from the set $\set{a_1, a_2,\ldots, a_{p_i}}$. Or equivalently, for every $i \in [k]$, at least $i$ agents from the set $\set{a_1, a_2,\ldots, a_{p_i}}$ are envious under every allocation. 
\end{claim}
\begin{proof}
Suppose for a contradiction that $G$ contains a matching $M'''$ such that the number of agents from $\set{a_1, a_2,\ldots, a_{p_i}}$ that are not saturated by $M'''$ is strictly less than $i$. Therefore, $M'''$ saturates at least one agent from the set $\set{a_{p_1}, a_{p_2},\ldots, a_{p_i}}$; 
let $p_j$ be the least such index such that $M'''$ saturates $a_{p_j}$. But then $M'''$ saturates $\set{a_1, a_2,\ldots, a_{p_j}} \setminus \set{a_{p_1}, a_{p_2},\ldots, a_{p_{j - 1}}}$, which contradicts the definition of $p_j$. 
Notice that the second sentence in the statement of the claim is merely a restatement of the first, because of the correspondence between matchings and allocations: Under any allocation, the envy-free agents and the houses they received form a matching in $G$.  
\end{proof}
% \JM{NEW: Please do double check the claim and its proof.}

Now, to complete the proof of the theorem, recall that the set of envious agents under $\Phi_{M^\prime}$ is precisely $\set{a_{p_1}, a_{p_2}, \ldots, a_{p_k}}$, and thus the total envy under $\Phi_{M^\prime}$ is precisely $\sum_{i = 1}^k d(a_{p_i})$. 
Also, as we have already argued, $\Phi_{M^\prime}$ minimizes the number of envious agents, and hence we can conclude that at least $k$ agents are envious under \emph{every} allocation. 
Assume now for a contradiction that $\Phi_{M^\prime}$ does not minimize the total envy. 
Let $\Phi$ be an allocation that minimizes the total envy. 
Again, at least $k$ agents are envious under $\Phi$; let $1 \leq q_1 < q_2 <\cdots < q_k \leq n$ be such that the agents $a_{q_1}, a_{q_2},\ldots, a_{q_k}$ are the first $k$ envious agents under $\Phi$ (first in the ordering $a_1, a_2,\ldots, a_n$). 
Thus the total envy under $\Phi$ is at least $\sum_{i = 1}^k d(a_{q_i})$. 
Now, by our assumption, $\Phi_{M'}$ does not minimize the total envy and $\Phi$ does, and thus $\sum_{i = 1}^k d(a_{p_i}) > \sum_{i = 1}^k d(a_{q_i})$; we will derive a contradiction from this. Since $\sum_{i = 1}^k d(a_{p_i}) > \sum_{i = 1}^k d(a_{q_i})$, there exists an index $i \in [k]$ such that $d(a_{p_i}) > d(a_{q_i})$; let $i \in [k]$ be the least index such that $d(a_{p_i}) > d(a_{q_i})$. Hence $p_i < q_i$, which implies that the set $\set{a_1, a_2,\ldots, a_{p_i}}$ contains at most $i - 1$ of the agents $a_{q_1}, a_{q_2},\ldots, a_{q_i}$. But by the definition of the $q_j$s, the agents $a_{q_1}, a_{q_2},\ldots, a_{q_k}$ are the first $k$ envious agents under $\Phi$. We can thus conclude that the set $\set{a_1, a_2,\ldots, a_{p_i}}$ contains at most $i - 1$ agents who are envious under $\Phi$, which contradicts Claim~\ref{claim:at-least-i-unmatched}. 

We have thus shown that $\Phi_{M'}$ minimizes the total envy, and this completes the proof of the theorem. 
\end{proof}

We now consider the case when $m>n$ and the setting of the doubly normalized valuations. The following result follows from the proof of Theorem $5$ in \cite{10.1007/978-3-031-43254-5_16}, where they show that for such structured valuations, every good can be assigned non-wastefully such that every agent derives the value of at least one. In our setting, we can find an envy-free allocation that is non-wasteful and hence, welfare maximizing as well, proving the following result.

\begin{corollary}
    For $m>n$ and doubly normalized binary valuations, $PoF = 1$ for all three envy-minimization objectives.
\end{corollary}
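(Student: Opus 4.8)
The plan is to exhibit a single allocation that is \emph{simultaneously} envy-free and welfare-maximizing; since an envy-free allocation has all three envy measures equal to zero (and hence minimizes each of them), this immediately yields $PoF = 1$ for all three objectives. First I would observe that for binary valuations the social welfare of any allocation is at most $n$, because each of the $n$ agents contributes utility either $0$ or $1$. It therefore suffices to produce an envy-free allocation under which every agent receives a house she values, as any such allocation has welfare exactly $n$ and is thus welfare-optimal.

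To construct such an allocation I would work with the preference graph $G = (A \cup H; E)$ and seek a matching $M$ that saturates $A$ with every agent matched to a house she values. The associated allocation $\Phi_M$ then gives each agent utility $1$, so no agent can envy another, i.e. $\Phi_M$ is envy-free, and it has welfare $n$. The existence of $M$ follows from Hall's condition, and this is precisely where the double-normalization hypothesis does the work. Let each agent value exactly $d$ houses and each house be valued by exactly $r$ agents; recall that every agent values at least one house, so $d \geq 1$ and hence $r \geq 1$. Double-counting the edges of $G$ gives $nd = mr$, that is, $d/r = m/n$.

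The key step is then the Hall-type estimate: for any $S \subseteq A$, the number of edges of $G$ incident to $S$ is exactly $|S| d$, and each house in $N(S)$ is incident to at most $r$ of these edges, so $|S| d \leq |N(S)| \, r$, whence $|N(S)| \geq |S| \cdot (d/r) = |S| \cdot (m/n) \geq |S|$, where the final inequality uses $m > n$. Thus Hall's condition holds, the saturating matching $M$ exists, and $\Phi_M$ is the desired allocation. I expect the main obstacle to be exactly this counting step---turning the combinatorial regularity of doubly normalized instances (via $nd = mr$ together with $m > n$) into Hall's condition---after which the conclusion is immediate. This is also precisely the mechanism underlying Theorem~5 of the cited work, where every good is assigned non-wastefully so that each agent derives value at least one; here that non-wasteful assignment is simultaneously envy-free and welfare-maximizing, giving $PoF = 1$.
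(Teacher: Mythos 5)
Your proposal is correct, and it establishes exactly what the paper needs: a single allocation that is simultaneously envy-free (hence optimal for all three envy measures at once) and welfare-maximizing. The paper itself does not spell out the combinatorial argument --- it simply defers to the proof of Theorem~5 of the cited work on doubly normalized valuations, asserting that every good can be assigned non-wastefully so that each agent derives value at least one. Your route reaches the same non-wasteful assignment, but makes it self-contained: the double-counting identity $nd = mr$ together with $m > n$ gives $|N(S)| \geq |S|\cdot(d/r) = |S|\cdot(m/n) \geq |S|$ for every $S \subseteq A$, so Hall's condition yields a matching saturating all agents with houses they value, and under binary valuations an agent with utility $1$ envies nobody. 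This is the mechanism the citation hides, so your version is arguably the more complete proof; the only caveat worth recording is the implicit non-degeneracy assumption $d \geq 1$ (equivalently, that not all valuations are zero), without which the ratio defining $PoF$ is vacuous --- you handle this by invoking the paper's standing assumption that every agent values at least one house, which is the right fix.
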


It is easy to see that when there are dummy houses, then there can be instances where the fair allocation can be highly inefficient. Consider the case when number of dummy houses is at least the number of agents. Then, irrespective of the individual valuations, every agent can get a dummy house and be envy-free, leading to no social welfare at all. But, the following result suggests that even when there aren't any dummy houses, there are instances where the price of fairness can be high.

\begin{theorem}
\label{lem:pofisn}
For $m>n$ and binary normalized valuations, $PoF  = \Theta(n)$, for all the three envy optimization objectives, even when there are no dummy houses to begin with.
\end{theorem}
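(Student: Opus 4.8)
The plan is to prove the two bounds separately: an upper bound $PoF \le n$ that holds uniformly for all three objectives via a short exchange argument, and a matching lower bound $PoF \ge (n-1)/2$ exhibited by a \emph{single} explicit instance that is simultaneously extremal for all three envy measures. Throughout I use that with binary valuations any allocation has welfare at most $n$, so $SW(\Phi^\star) \le n$.

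For the upper bound it suffices to show that the fairness-optimal allocation can be taken to have welfare at least $1$; then $PoF \le n/1 = n$. I would argue that, in the absence of dummy houses, no welfare-$0$ allocation can be optimal for any of the three measures. Indeed, in a welfare-$0$ allocation $\Phi$ every agent holds a house she does not value; since no house is dummy, some allocated house $h$ is valued by an agent $a$ who did not receive it, so $a$ envies the holder $b$ of $h$ (note $b$ does not value $h$, as $SW(\Phi)=0$). Exchanging $h$ and $\Phi(a)$ between $a$ and $b$ leaves the set of allocated houses unchanged, makes $a$ value $1$ (hence non-envious), and cannot raise the envy of $b$ or of anyone else, since no value-$0$ agent gains a newly valued allocated house. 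Thus $\kappa^\star$ strictly decreases and $\kappa^\#$, $\kappa^\dagger$ do not increase, so for each objective there is an optimal allocation of welfare $\ge 1$, giving $PoF \le n$.

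For the lower bound, fix $n \ge 6$, put $r = n-3$, and build the following instance. Take two \emph{rich} agents $a_1,a_2$ and $n-2$ \emph{poor} agents $a_3,\dots,a_n$; take three disjoint house sets $A_1, A_2, U$, each of size $r$. Let $a_1$ value exactly $A_1$, let $a_2$ value exactly $A_2$, and let every poor agent value exactly $U$. Every agent has degree $r$ (so the instance is normalized), every house is valued (no dummy), and $m = 3(n-3) > n$. I would then verify three facts. (i) \emph{Maximum welfare is $n-1$}: matching $a_1,a_2$ into $A_1,A_2$ and $n-3$ poor agents into $U$ gives welfare $2+(n-3)=n-1$, and since the $n-2$ poor agents value only the $n-3$ houses of $U$ there is no perfect matching, so $SW(\Phi^\star)=n-1$. (ii) \emph{There is an envy-free allocation of welfare $2$}: give $a_1,a_2$ one house each from $A_1,A_2$, distribute the remaining $2(r-1)=2n-8 \ge n-2$ houses of $A_1\cup A_2$ among the poor agents, and leave $U$ unallocated; the rich are at value $1$ (never envious), while each poor agent holds a house valued only by a rich agent and has her only valued houses ($U$) unallocated, so no poor agent envies. (iii) \emph{Every envy-free allocation has welfare at most $2$}: if some poor agent received a house $h \in U$, then because all $n-2 \ge 2$ poor agents value $h$ and at most $|U| = n-3$ of them can be satisfied, some poor agent at value $0$ would envy the holder of $h$; hence no poor agent is ever satisfied in an envy-free allocation, and only $a_1,a_2$ can attain value $1$.

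Since the welfare-$2$ allocation of (ii) has zero envy, it is optimal for $\kappa^\#$, $\kappa^\dagger$ and $\kappa^\star$ alike, and (iii) shows the best fair welfare is exactly $2$; combined with (i) this yields $PoF = (n-1)/2 = \Omega(n)$ for each objective, so with the upper bound $PoF = \Theta(n)$. The main obstacle is precisely the design and verification of this lower-bound instance: one must reconcile four competing requirements (normalized, dummy-free, $m>n$, and no perfect matching) while still forcing every fairness-optimal allocation down to welfare $O(1)$. The tension is that normalization and the no-dummy condition tend to force ``private'' houses that can be matched for free, which would provide high welfare even in a fair allocation; the construction sidesteps this by concentrating all of the poor agents' value on a single shared, under-supplied set $U$, so that satisfying even one poor agent necessarily creates envy, leaving the two rich agents to contribute only the unavoidable welfare of $2$.
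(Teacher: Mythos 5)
Your proposal is correct and follows essentially the same strategy as the paper: the upper bound $PoF\le n$ via an exchange argument showing a welfare-$0$ allocation cannot be fairness-optimal when there are no dummy houses, and the lower bound via an instance with two ``rich'' agents holding private house blocks and many ``poor'' agents sharing a single under-supplied block that forms a Hall violator and must go unallocated in any envy-free allocation. Your parameterization ($n$ agents yielding $(n-1)/2$) is a slightly cleaner packaging of the paper's $2n$-agent construction, but the idea and the verification are the same.
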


\begin{proof} 
We first show that $PoF \geq \nicefrac{n}{2}$. We say that a house is allocated non-wastefully if the receiving agent values it at $1$. Consider an instance with $2n$ agents and $3n$ houses, such that $n>2$. Agent $1$ likes the first $n$ houses, $2$ likes the next $n$ houses, and all of the remaining $2n-2$ agents collectively like the last $n$ houses. The instance does not have any dummy houses. Note that any welfare-maximizing allocation $\Phi^\star$ allocates $n+2$ houses non-wastefully. So $SW(\Phi^\star) \geq n+2$. But in this instance, there exists an envy-free allocation $\Phi$ that allocates exactly $2$ houses non-wastefully to the first two agents. The last $n$ houses remain unallocated under $\Phi$, as they create Hall's violator for the set of last $2n-2$ agents. Therefore, the last $2n-2$ agents receive the houses they don't like, and the set of their liked houses remains unallocated. This implies that $SW(\Phi) \leq 2$. The PoF for this particular instance is, therefore, $\nicefrac{(n+2)}{2} \approx n$. Since the instance had $2n$ agents to begin with, we have that in general, $PoF \geq \nicefrac{n}{2}$.

We now show that $PoF \leq n$. Under any arbitrary instance with $n$ agents and binary valuations, the maximum possible social welfare under any allocation is $n$. Consider an envy-free allocation $\Phi$, if it exists. We claim that there is at least one house that is allocated non-wastefully under $\Phi$. Suppose not. Then every house is allocated wastefully, and hence, $SW(\Phi) = 0$. But since there are no dummy houses, there is at least one agent $a$ who likes (a subset of) the allocated houses. But since welfare is zero, $a$ is an envious agent, which contradicts the fact that we started with an envy-free allocation. Therefore, $SW(\Phi) \geq 1$. This implies that $PoF \leq n$.  

If an envy-free allocation does not exist, consider $\Phi$ to be an allocation that minimizes the number of envious agents. Suppose the number of envious agents under $\Phi$ is $k$ and let $a$ be one such envious agent. If $SW(\Phi) = 0$, then everyone, including $a$ receives a house that they do not like. Now $a$ is envious because one of his liked houses, say $h$ is allocated to some agent $\Phi^\prime$ and that too wastefully. Consider the re-allocation of the house $h$ to $a$ and the house $\Phi(a)$ to $a^\prime$. Then $a$ becomes envy-free, and no agent (including $a^\prime$) becomes newly envious of this re-allocation, since the set of allocated houses is exactly the same and $SW(\Phi) = 0$. This re-allocation has $k-1$ envious agents, which contradicts the fact that we started with an optimal allocation. Therefore, $SW(\Phi) \geq 1$. This implies that $PoF \leq n$. The argument for EHA and UHA is analogous. This settles our claim.
\end{proof}

In the proof of $\Cref{lem:pofisn}$,  although the number of agent types in the lower bound instance is $3$, the $PoF$ is linear in the number of agents. This implies that we do not expect better bounds for instances with few agent types.

\section{Conclusion}
\label{sec:con}

We studied three kinds of natural quantifications of envy to be minimized in the setting of house allocation: OHA, EHA, and UHA. Most of our results are summarized in~\Cref{tab:my-table}. We also show that both OHA and EHA are \FPT{} when parameterized by the total number of house types or agent types. We also look at the price of fairness in the context of house allocation and give tight bounds for the same. We leave several questions related to UHA open. The complexity of OHA for strict rankings also remains open. Also, finding interesting classes of structured input---beyond extremal intervals---for which these problems are tractable is an important direction for future work.

\newpage
\bibliography{main}
\end{document}